\author{Houssam Yassin}
\title{The Geostrophic Turbulence of Boundary Buoyancy Anomalies}
\abstract{Quasigeostrophic flows are induced by spatial variations in interior potential vorticity and boundary buoyancy. 
In the first part of this dissertation, we develop the geostrophic turbulence theory of boundary buoyancy anomalies in a quasigeostrophic fluid with vanishing potential vorticity. 
We find that the vertical stratification controls both the interaction range of boundary buoyancy anomalies and the dispersion of boundary-trapped Rossby waves. 
Buoyancy anomalies generate longer range velocity fields and more dispersive Rossby waves over decreasing stratification [$\mathrm{d}N(z)/\mathrm{d}{z} \leq 0$, where $N(z)$ is the buoyancy frequency] than over increasing stratification [$\mathrm{d}N(z)/\mathrm{d}{z} \geq 0$].  
Consequently, the surface kinetic energy spectrum is steeper over decreasing (mixed-layer like) stratification than in the classical uniformly stratified model. 
We therefore suggest that this steepening of the spectrum over mixed-layer like stratification accounts for the $k^{-2}$ spectrum found in the wintertime upper ocean. 
This suggestion is consistent with numerical and observational evidence indicating that surface geostrophic velocities over wintertime extratropical currents are largely induced by surface buoyancy anomalies. 

We also find that, under certain conditions, the nonlinear interplay of boundary-trapped Rossby waves with the turbulence spontaneously reorganizes the flow into homogenized zones of surface buoyancy separated by surface buoyancy discontinuities, with sharp eastward jets centered at the discontinuities and weaker westward flows in between. Jet dynamics then depend on the vertical stratification. Over decreasing stratification, we obtain straight jets perturbed by dispersive eastward propagating waves. Over increasing stratification, we obtain meandering jets whose shape evolves in time due to westward propagating weakly dispersive waves.

In the second part of this dissertation, we investigate normal modes in the presence of boundary-confined restoring forces, with the ultimate aim of obtaining an energy-conserving modal truncation of the quasigeostrophic equations. Such a modal truncation would generalize classical $N$-layer models to account for non-isentropic boundaries. Although we obtain orthogonal sets of vertical modes that diagonalize the energy and potential enstrophy in the presence of non-isentropic boundaries, we find that the loss of a crucial symmetry in the vertical coupling between the modes prevents modal truncations from conserving energy. Consequently, energy conserving modal truncations are not possible in the presence of non-isentropic boundaries.}
\newtheorem{theorem}{Theorem}[section]
\newtheorem{proposition}[theorem]{Proposition}
\newtheorem{lemma}[theorem]{Lemma}
\newtheorem{definition}[theorem]{Definition}
\newenvironment{abstractchapter}
 {\normalsize
  \begin{center}
  \bfseries \abstractname\vspace{-.5em}\vspace{0pt}
  \end{center}
  \list{}{%
    \setlength{\leftmargin}{7.5mm}
    \setlength{\rightmargin}{\leftmargin}%
  }%
  \item\relax}
 {\endlist}
\def\@makechapterhead#1{%
  \vspace*{50\p@}%
  {\parindent \z@ \raggedleft \normalfont
    \ifnum \c@secnumdepth >\m@ne
        \Huge \bfseries \textsc{\@chapapp}\space \thechapter
        \par\nobreak
        \vskip 10\p@
        \rule[.5ex]{\textwidth}{.4pt}%
        \vskip 10\p@
    \fi
    \interlinepenalty\@M
    \huge \normalfont \textsc{#1}\par\nobreak
    \vskip 40\p@
  }}
\newcommand{\abs}[1]{\left \lvert #1 \right\rvert} 
\newcommand{\inner}[2]{\left< #1, #2 \right>} 
\renewcommand{\vec}[1]{\boldsymbol{#1}  } 
\renewcommand{\d}[2]{\frac{\mathrm{d} #1}{\mathrm{d} #2}} 
\newcommand{\dd}[2]{\frac{\mathrm{d}^2 #1}{\mathrm{d} #2^2}} 
\newcommand{\pd}[2]{\frac{\partial #1}{\partial #2}} 
\newcommand{\grad}[1]{\vec{\nabla} #1} 
\newcommand{\lap}{\nabla^2} 
\newcommand{\laptwo}{\nabla^4} 
\newcommand{\unit}[1]{\hat{\vec{#1}}}
\newcommand{\ov}[1]{{#1}^\ast }
\newcommand{\wh}[1]{\widehat{#1} }
\newcommand{\e}{\mathrm{e}}
\newcommand{\irm}{\mathrm{i}}
\newcommand{\di}[1]{\mathrm{d}#1}
\newcommand{\Esc}{\mathscr{E}}
\newcommand{\Ssc}{\mathscr{S}}
\newcommand{\Ksc}{\mathscr{K}}
\newcommand{\Psc}{\mathscr{P}}
\newcommand{\J}[2]{ \mathrm{J}\left(#1,#2\right) }
\newcommand{\psik}{\hat \psi_{\vec k}}
\newcommand{\thetak}{\hat \theta_{\vec k}}
\newcommand{\keps}{k_\varepsilon}
\newcommand{\krh}{k_\mathrm{Rh}}
\newcommand{\kr}{k_r}
\newcommand{\Ac}{\mathcal{A}}
\newcommand{\Bc}{\mathcal{B}}
\newcommand{\Cc}{\mathcal{C}}
\newcommand{\Lc}{\mathcal{L}}
\newcommand{\Dc}{{\mathcal{D}}}
\newcommand{\Qf}{\left(Q,R_1,R_2\right)}
\newcommand{\C}{\mathbb{C}}
\newcommand{\Pb}{\mathscr{P}}
\newcommand{\intz}{\int_{z_1}^{z_2}}
\begin{document}

\nocite{yassin_surface_2022,yassin_normal_nodate,yassin_qg,yassin_jets_2022}

\setcounter{page}{11}

\chapter{Introduction}\label{Ch-intro}

    \section{Vertical structure and geostrophic turbulence}
    
    \subsection{Quasigeostrophy and baroclinic instability}
    
	Quasigeostrophy is a regime of fluid motion that emerges in the limit of rapid rotation and strong stratification, with a dynamical state uniquely determined by the potential vorticity, $q$, in the fluid interior along with the buoyancy anomalies, $b$, at the fluid's lower and upper boundaries \citep[][chapter 5]{Vallis2017}. The geostrophic flow is then recovered by inverting a diagnostic relation between the potential vorticity and the geostrophic streamfunction, $\psi$, with boundary conditions determined by the boundary buoyancy anomalies. In the ocean, the quasigeostrophic equations describe motion at horizontal scales ranging from a few kilometres to a few hundred kilometres and timescales longer than one day \citep{charney_oceanic_1981,lapeyre_surface_2017}. In contrast, quasigeostrophic flows in the atmosphere have horizontal scales of thousands of kilometres \citep[][chapter 5]{Vallis2017}.
		
	The spatial distribution of potential vorticity and boundary buoyancy is set up by external mechanical and thermodynamical forcing. 
	Certain commonly occurring spatial configurations of potential vorticity and boundary buoyancy are unstable to baroclinic instability
	\textemdash{} an instability spawning eddies that transport potential vorticity and boundary buoyancy anomalies so as to destroy these unstable spatial configurations \citep[][chapter 9]{Vallis2017}. A dynamical balance is ultimately attained as external forcing maintains unstable configurations of potential vorticity and boundary buoyancy against the destructive tendencies of baroclinic eddies. 
	
	\begin{figure}
	\centerline{\includegraphics[width=\textwidth]{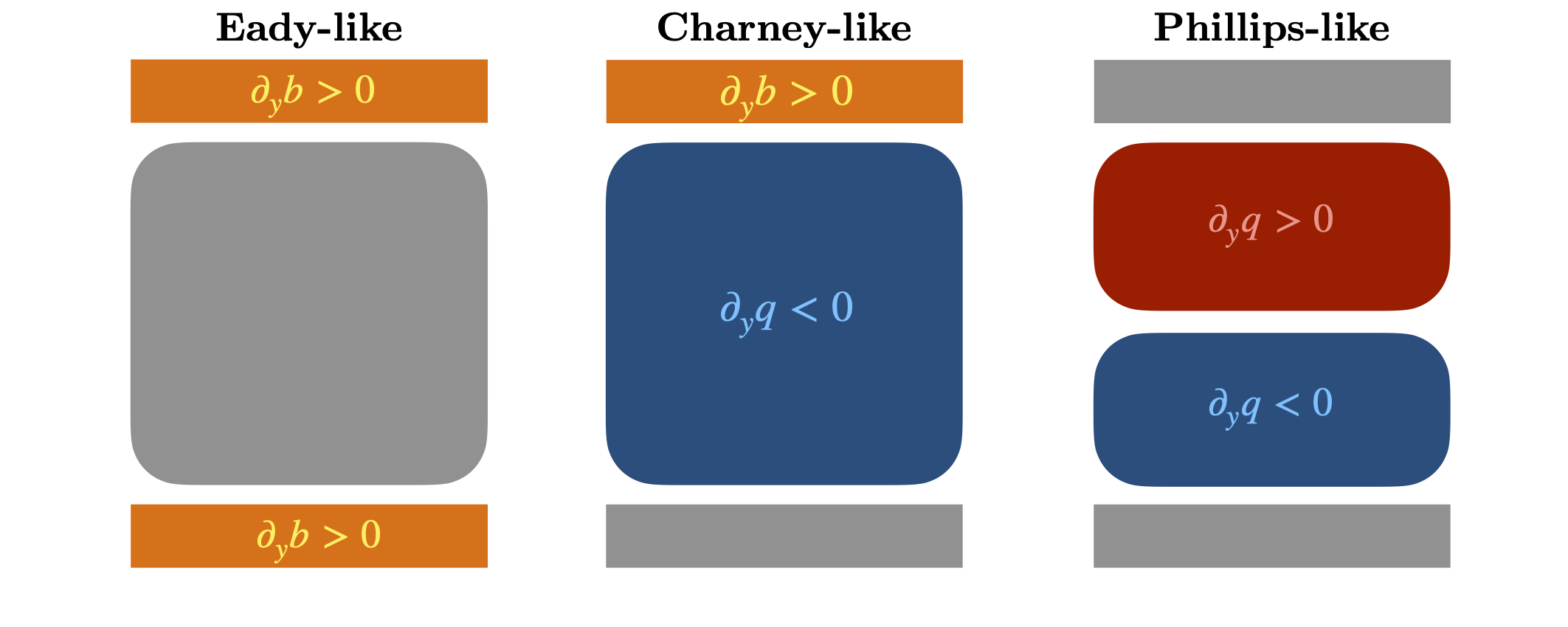}}
	\caption{The three classes of baroclinic instability.}
	\label{F-BC}
  	\end{figure}
	
	There are three classes of baroclinic instability as classified by which linear stability conditions are violated (figure \ref{F-BC}).
	The first is an Eady-like instability \citep{Eady1949}, which occurs if the boundary buoyancy gradients have the same sign at the lower and upper boundaries. 
	 The second is a Charney-like instability \citep{Charney1947}, which occurs if the horizontal potential vorticity gradient has the opposite sign to the buoyancy gradient at the fluid's upper boundary (or the same sign as the buoyancy gradient at the lower boundary). 
	 Finally, there is a Phillips-like instability \citep{phillips_energy_1954}, which occurs if the horizontal potential vorticity gradient switches sign in the fluid interior. 
	 Baroclinic instability in the atmosphere is either an Eady-type instability, with the buoyancy gradient at the tropopause having the same sign as the buoyancy gradient at the Earth's surface \citep{lapeyre_surface_2017}, or a Charney-type instability, with the horizontal potential vorticity gradient (dominated by the planetary $\beta$-effect) having the same sign as the buoyancy gradient at the Earth's surface \citep[][chapter 9]{Vallis2017} .  
	 
	 In the ocean, studies mapping out potential vorticity and boundary buoyancy have found that the energetically dominant instability in the Southern Ocean is an Eady-like instability, which is intensified at the lower and upper boundaries \citep{feng_four_2021}. In contrast, the energetically dominant instability in the Gulf Stream and Kuroshio is a Phillips-like instability, which reaches deep into the water column \citep{smith_geography_2007,tulloch_scales_2011,feng_four_2021}. 	
	 The Charney-like instability, which is intensified at the upper boundary, is mainly found in the subtropical oceans. Moreover, unlike the Eady-like and Phillips-like instabilities, which have characteristic time scales on the order of weeks, the Charney-like instability grows more slowly, with characteristic time scales on the order of months \citep{tulloch_scales_2011,feng_four_2021}.
	
	 However, over much of the World Ocean, the fastest growth rates (on the order of days) are due to a smaller scale mixed-layer baroclinic instability \citep{smith_geography_2007,boccaletti_mixed_2007}. Mixed-layer baroclinic instability is a Charney-like instability occurring because the potential vorticity gradient at the mixed-layer base has the opposite sign to the surface buoyancy gradient \citep{Callies2016}. Unlike the energetically dominant baroclinic instability at larger scales, mixed-layer instability is seasonal, with a seasonality following that of mixed-layer depth \citep{mensa_seasonality_2013, sasaki_impact_2014,callies_seasonality_2015}. Mixed-layer instability is most active in high eddy kinetic energy regions with strong buoyancy gradients and deep mixed-layers \textemdash{} in particular, major extratropical currents such as the Gulf stream and Kuroshio \citep{sasaki_regionality_2017,khatri_role_2021}.

	 \subsection{Truncated models with isentropic boundaries}
	 Traditionally, boundary buoyancy anomalies were neglected in physical oceanography, and the Phillips two-layer model, as well as more general $N$-layer models, were the main paradigm for ocean geostrophic turbulence. These layered models may be equivalently thought of as vertical modal truncations of the quasigeostrophic equations with isentropic boundaries\footnote{In quasigeostrophy, isentropic boundaries are those with uniform buoyancy and no topographic gradients.}. This equivalency can be seen in the following manner. First, the baroclinic modes are obtained by solving a Sturm-Liouville problem for the vertical structure of Rossby waves in a quiescent ocean with isentropic boundaries; there are infinitely many modes $\{\phi_n(z)\}_{n=0}^\infty$ satisfying $\phi_n'(z)=0$ at the lower and upper boundaries, where $\phi_n'(z)$ denotes the vertical derivative of $\phi_n(z)$, and these modes form an orthonormal set,
	 \begin{equation}
	     \int_{-H}^0 \phi_m \, \phi_n \, \mathrm{d}z = \delta_{mn},
	 \end{equation}
	 where $\delta_{mn}$ is the Kronecker delta \citep[][chapter 6]{Vallis2017}.
	 Given a quasigeostrophic streamfunction, $\psi$, with vanishing boundary buoyancy anomalies ($\partial_z \psi = 0$ at $z=-H,0$), we expand such a streamfunction as
	 \begin{equation}\label{eq:expansion_baroclinic}
	 	\psi(\vec x ,z,t) = \sum_{\vec k} \sum_{n=0}^\infty \psi_{\vec k n}(t)\, \phi_n(z) \, \mathrm{e}^{\mathrm{i} \vec k \cdot \vec x},
	 \end{equation}
	 where we have assumed a doubly periodic domain in the horizontal.
	 In the above expression, $\vec x =(x,y)$ is the horizontal position vector, $\vec k=(k_x,k_y)$ is the horizontal wavevector, and $t$ is the time coordinate. 
	 Substituting such an expansion into the time-evolution equation for potential vorticity,
	 \begin{equation}
	 	\pd{q}{t} + \beta \, \pd{\psi}{x} +  \J{\psi}{q} = 0,
	 \end{equation}
	 where $\J{\psi}{q} = \partial_x \psi \, \partial_y q -  \partial_y \psi \, \partial_x q$ is the Jacobian operator and $\beta$ is the latitudinal vorticity gradient, then yields a time-evolution equation for the modal amplitudes \citep{flierl_models_1978},
	 \begin{equation}\label{eq:time_modal}
		\d{q_{\vec k n}}{t} + \mathrm{i} \, \beta \, k_x \, \psi_{\vec k n}+  \sum_{\vec a , \vec b} \sum_{ l m} A_{\vec a \vec b \vec k} \, \varepsilon_{l m n} \, \psi_{\vec a l} \, q_{\vec b m} = 0.
	\end{equation}
	In this equation,  the modal potential vorticity amplitude, $q_{\vec k n}$, is related to the modal streamfunction amplitude, $\psi_{\vec k n}$, through
	\begin{equation}
		q_{\vec k n} = -(k^2 + \lambda_n) \psi_{\vec k n},
	\end{equation}
	where $k=\abs{\vec k}$ is the horizontal wavenumber and $\lambda_n$ is the eigenvalue corresponding the eigenfunction $\phi_n$.
	
	If the amplitudes, $\psi_{\vec k n}$, are small, then the time-evolution equation \eqref{eq:time_modal} for each mode decouples and we obtain non-interacting linear Rossby waves. More generally, the time-evolution equation for the modal amplitudes \eqref{eq:time_modal} allows us to view quasigeostrophic dynamics (with isentropic boundaries) as the nonlinear interaction of vertical modes. The horizontal coupling coefficient
	\begin{equation}
		A_{\vec a \vec b \vec k} = - \unit z \cdot \left(\vec a \times \vec b\right) \, \delta_{\vec a + \vec b, \vec k}
	\end{equation}
	specifies that three waves will interact only if the sum of the horizontal wavevectors of the two incoming waves are equal to the wavevector of the outgoing wave. The vertical coupling coefficient,
	\begin{equation}\label{eq:vertical_coupling}
		\varepsilon_{lmn} = \int_{-H}^0 \phi_l \, \phi_m \, \phi_n \, \mathrm{d}z,
	\end{equation}
	indicates that modal interactions generally depend on the vertical structures of the modes (and hence the stratification). Because the $n=0$ mode is barotropic  [i.e., $\phi_0(z)=1$], we obtain 
	\begin{equation}\label{eq:coulping_barotropic}
		\varepsilon_{0mn} = \delta_{mn},
	\end{equation} 
	which states that, if one wave is barotropic, then the other two waves must have the same vertical mode number (i.e., $m=n$) for an interaction to occur.
	
	To obtain an $(N+1)$-layer model, truncate the series expansion \eqref{eq:expansion_baroclinic} at $n=N$. However, although the original untruncated system conserves total energy and potential enstrophy, there is no reason to expect that truncated models conserve a truncated form of the total energy and potential enstrophy in general. For instance, the conservation of a truncated energy implies that, if a quasigeostrophic state is initialized with energy only in the first $N$ vertical modes, this energy will remain in the first $N$ vertical modes (despite the nonlinear interactions) for all time. In the case of truncations with the baroclinic modes, this ``trapping'' of the initial energy at low modes is a non-trivial consequence of a symmetry in the vertical coupling coefficient, $\varepsilon_{lmn}$.	
	  Multiplying the modal time-evolution equation \eqref{eq:time_modal} by the complex conjugate $\psi_{\vec k n}^*$, taking the real part, and summing over $\vec k$ and $n$ gives the energy equation 
	\begin{equation}\label{eq:energy}
		\d{}{t} \left(\sum_{\vec k n} \frac{1}{2} (k^2+\lambda_n)  \abs{\psi_{\vec k n}}^2  \right)
		+ \sum_{\vec a \vec b \vec k }\sum_{l m n } A_{\vec a \vec b \vec k} \, \varepsilon_{l m n} \, \Re\left\{\psi_{\vec a l} \, q_{\vec k - \vec a \, m} \psi_{\vec k n}^* \right\} = 0,
	\end{equation}
	where $\Re\{A\}$ denotes the real part of $A$.
	After truncating at $n=N$, the nonlinear sum vanishes because it is a contraction between a symmetric tensor, $ \varepsilon_{l m n} \, \Re\left\{\psi_{\vec a l} \, q_{\vec k - \vec a \, m} \psi_{\vec k n}^* \right\}$, and an anti-symmetric tensor, $A_{\vec a \vec b \vec k}$, and so the $N$-truncated energy is conserved. Physically, the truncated energy is conserved because the interaction,
	\begin{equation}
		(\vec a, l) + (\vec b, m) \rightarrow (\vec k, n),
	\end{equation} 
	has the opposite energetic contribution to
	\begin{equation}
		(\vec k, n) + (\vec b, m) \rightarrow (\vec a, l)
	\end{equation}
	in the nonlinear sum in the energy equation \eqref{eq:energy}.
	As we find in chapter \ref{Ch-conc}, this symmetry is lost once we allow for non-isentropic boundaries.
	
	Truncating the modal expansion \eqref{eq:expansion_baroclinic} at $n=1$, using the form of the vertical coupling coefficient with the barotropic mode \eqref{eq:coulping_barotropic}, and transforming back to physical space, we obtain the two-layer quasigeostrophic model \citep[e.g.,][]{smith_scales_2001},
	\begin{align}
		\label{eq:time-barotropic}
		\pd{q_0}{t} + \beta \, \pd{\psi_0}{x} + \J{\psi_0}{q_0} + \J{\psi_1}{q_1} =0, \\
		\label{eq:time-baroclinic}
		\pd{q_1}{t} + \beta \, \pd{\psi_1}{x} + \J{\psi_0}{q_1} + \J{\psi_1}{q_0}+ \varepsilon_{111} \, \J{\psi_1}{q_1} =0.
	\end{align}
	In these two equations, the barotropic potential vorticity is given by $q_0 = \lap \psi_0$, where $\psi_0$ is the barotropic streamfunction, and the first mode baroclinic potential vorticity is given by $q_1 = \left(\lap -\lambda_1\right)\psi_1$, where $\psi_1$ is the first mode baroclinic streamfunction.
	
	\subsection{Geostrophic turbulence with isentropic boundaries}

	Examining the velocity induced by an isolated $n$-th mode potential vorticity anomaly exposes the dynamical distinction between the barotropic mode ($n=0$) and the baroclinic modes ($n>0$). 
	Suppose we have a point potential vorticity anomaly, $q_n \sim \delta(\abs{\vec x})$, where $\delta(\abs{\vec x})$ is the Dirac delta and $\abs{\vec x}$ is the horizontal distance from the anomaly. Then, for $n>0$, the resulting streamfunction is proportional to \citep{polvani_two-layer_1989} 
	\begin{equation}\label{eq:streamfunction_baroclinic}
		\psi_n(\abs{\vec x}) \sim \frac{ \mathrm{e}^{- \abs{\vec x}/L_n}}{\sqrt{\abs{\vec x}/L_n}},
	\end{equation}
	where $L_n = 1/\sqrt{\lambda_n}$ is the $n$-th mode deformation radius. Therefore, the interaction range of this potential vorticity anomaly is determined by the deformation radius; at $\abs{\vec x} \gg L_n$, the velocity induced by the anomaly essentially vanishes. 
	However, in the singular limit $\lambda \rightarrow 0$ (or $L \rightarrow \infty$) of the barotropic mode, we obtain
	\begin{equation}
		\psi_0(\abs{\vec x}) \sim \frac{\mathrm{log}\left(\abs{\vec x}\right)}{2\pi},
	\end{equation}
	which implies an extremely long range velocity field and an infinite interaction range (see figure \ref{F-BTBC}).
	
			\begin{figure}
	\centerline{\includegraphics[width=\textwidth]{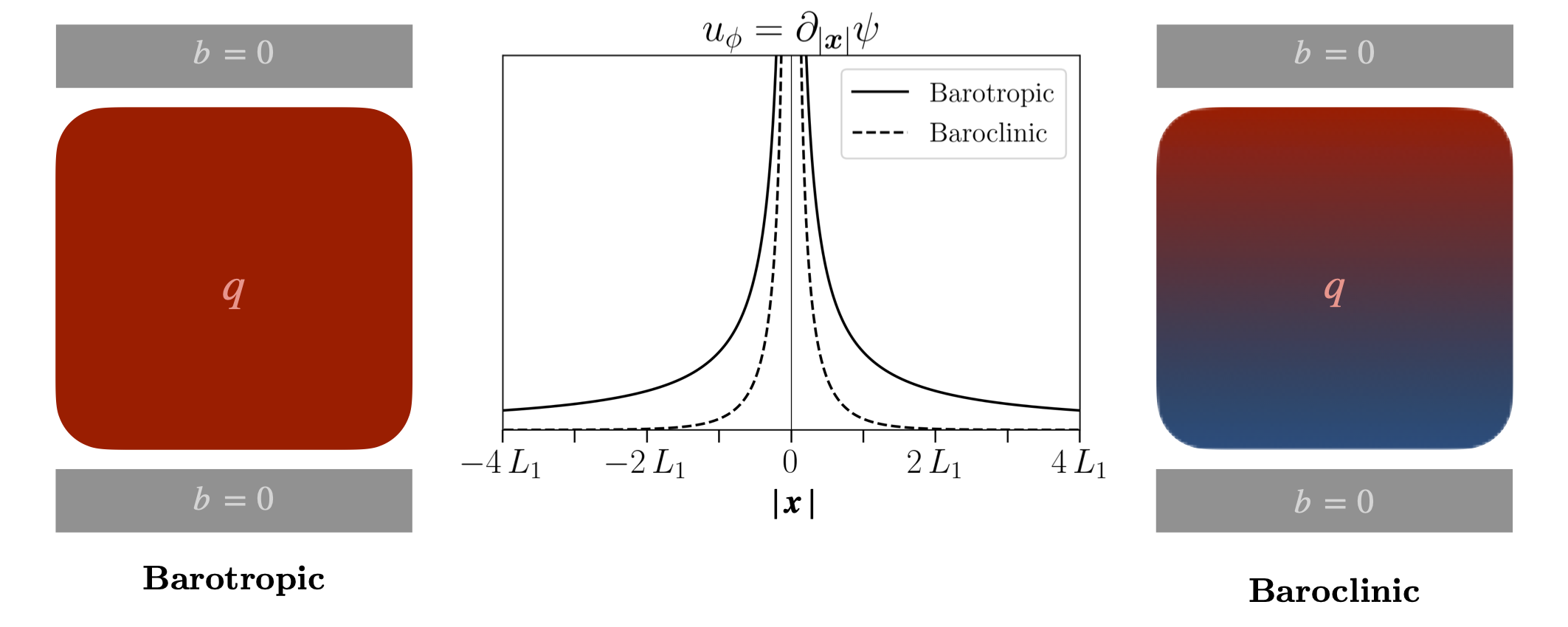}}
	\caption{The azimuthal velocity, $u_\phi = \partial_{\abs{\vec x}} \psi$, for a barotropic point vortex and a baroclinic point vortex.}
	\label{F-BTBC}
  	\end{figure}
	
	 \begin{figure}
	\centerline{\includegraphics[width=\textwidth]{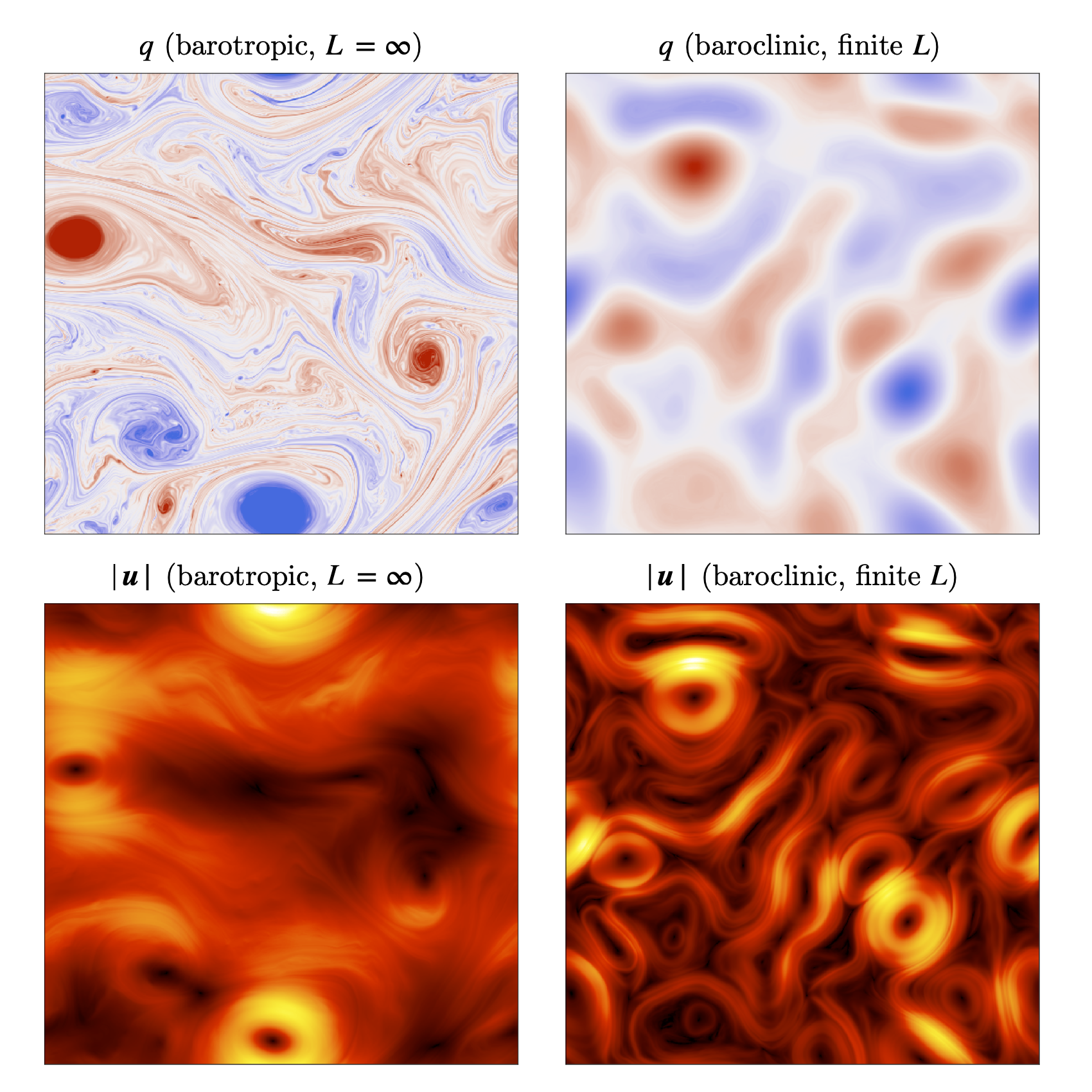}}
	\caption{The potential vorticity (upper row) and horizontal speed (bottom row) in the geostrophic turbulence of a single barotropic ($L=\infty$, left column) and baroclinic mode ($L\neq \infty$, right column), where $L$ is the deformation radius. Both simulations are forced at a horizontal scale equal to one quarter of the domain. The potential vorticity and the horizontal speed are normalized by the maximum value in the snapshot.}
	\label{F-BTBC_snap}
  	\end{figure}

  	As illustrated in figure \ref{F-BTBC_snap}, the interaction range of potential vorticity anomalies modifies the structure of the resulting geostrophic turbulence. In the turbulence of a single barotropic mode ($L_0=\infty$), vorticity anomalies generate long range velocity fields that subject the vorticity field itself to large-scale strain; this straining then leads to the thin vorticity filaments that characterize barotropic turbulence. In contrast, for a baroclinic mode ($L_n \neq \infty$), potential vorticity anomalies generate short range velocity fields that are more efficient at mixing away small-scale inhomogeneities. As a result, the potential vorticity field lacks thin filamentary structures and instead appears spatially diffuse. Moreover, in the case of a single baroclinic mode, the presence of a distinguished length scale (the deformation radius, $L_n$) leads to the emergence of plateaus of homogenized potential vorticity surrounded by kinetic energy ribbons \citep{arbic_coherent_2003}.

	The striking distinction between the dynamics of the barotropic mode and the higher baroclinic modes permits a simplified description of geostrophic turbulence  with isentropic boundaries \citep{rhines_dynamics_1977,salmon_baroclinic_1980}. In this turbulence, large-scale baroclinic instability generates a baroclinic eddy field whose energy cascades to smaller horizontal scales towards the deformation radius and then, at the deformation radius, to larger vertical scales. The barotropic mode is energized by these baroclinic transfers and large-scale quasigeostrophic turbulence resembles a two-dimensional barotropic fluid advecting a nearly passive baroclinic eddy field \citep{larichev_eddy_1995,smith_scales_2002}. 
	Fundamentally, the long reaching velocity fields generated by the barotropic eddies, along with the short range of baroclinic eddies, together allow for the barotropic mode to dominate the time-evolution of the flow and effectively reduces the problem to that of two-dimensional turbulence. Indeed, an elegant parametrization of two-layer quasigeostrophic turbulence has recently been proposed based on the dominance of barotropic vortices \citep{gallet_vortex_2020,gallet_quantitative_2021}.
	
	\subsection{Buoyancy anomalies in the atmosphere}
	
	The atmosphere has no upper boundary. However, the sharp gradient in vertical stratification at the tropopause is a dynamical upper boundary for the troposphere \citep[e.g.,][]{Eady1949,tulloch_theory_2006}. Just as with buoyancy anomalies along a rigid boundary, buoyancy anomalies along a stratification discontinuity induce their own geostrophic velocities that attenuate with vertical distance from the stratification discontinuity \citep{juckes_quasigeostrophic_1994,held_surface_1995}.
	Assuming that the tropopause is a stratification discontinuity between the troposphere and the stratosphere, \cite{juckes_quasigeostrophic_1994} derived a relation between tropopause buoyancy anomalies and vertical displacements of the tropopause; Juckes then showed that this relationship is satisfied in atmospheric general circulation models, indicating the relevance of buoyancy induced flows near the tropopause. 
	
	The dynamics of tropopause buoyancy anomalies was then invoked to account for the buoyancy variance spectra and horizontal kinetic energy spectra observed near the tropopause \citep{nastrom_climatology_1985}. These empirically derived spectra exhibit a steep -3 spectral slope at large horizontal scales ($1000-3000$ km) and a shallower -5/3 spectral slope at smaller horizontal scales (10-200 km). At sufficiently small scales, buoyancy anomalies at the tropopause are expected to have an -5/3 spectral slope in both buoyancy variance and kinetic energy; \cite{tulloch_theory_2006} proposed that the transition to the -3 spectral slope at large horizontal scales occurs because once tropopause buoyancy anomalies are large enough to feel the Earth's surface, their dynamics becomes similar to vorticity anomalies in a barotropic model. 
	
	However, later studies found that atmospheric Rossby numbers are too large for this mechanism to be valid. In regions of slowly varying background stratification, quasigeostrophy is valid if both the Rossby and Froude numbers are much smaller than one. In contrast, near a sharp vertical stratification gradient (like the tropopause), the Rossby and Froude numbers must be smaller than $h/H$ (a more stringent condition because $h/H$ is generally smaller than one), where $h$ is the vertical scale of the sharp stratification gradient and $H$ is the characteristic vertical length scale of the flow \citep{asselin_quasigeostrophic_2016}. By varying the Rossby number in an idealized Boussinesq model of the tropopause, \cite{asselin_boussinesq_2018} showed that tropopause buoyancy anomalies can account for the Nostrum-Gage spectrum only at unrealistically small values of the Rossby number; at more realistic Rossby number values for the atmosphere, the Nostrum-Gage spectrum is best accounted for through unbalanced motion.
	
	In contrast to the relatively large Rossby numbers in the atmosphere, oceanic Rossby numbers remain smaller than one at small horizontal scales. Even at horizontal scales of 10 km, the rotational component of the flow can have a Rossby number  as low as 0.3 \citep{callies_time_2020}. For this reason, we expect boundary buoyancy anomalies to play a more significant role in the ocean, and we focus on oceanic applications for the remainder of the dissertation.

	\subsection{Vertical structure and satellite altimetry in the ocean}
	
	Oceanic observations were first interpreted within the paradigm of layered models and hence in terms of the baroclinic modes. Using satellite altimeter observations, \cite{stammer_global_1997} found correlations between surface eddy scales and the first mode deformation radius and  proposed that the surface altimeter signal is related to processes with a first baroclinic mode vertical structure. Separately, \cite{wunsch_vertical_1997} partitioned the kinetic energy obtained from current meter observations into the baroclinic modes; this partition was justified using the fact that the baroclinic modes are ``complete'' and so can represent any quasigeostrophic state \citep{ferrari_distribution_2010,lacasce_surface_2012,rocha_galerkin_2015}. Wunsch found that most regions are dominated by a combination of the barotropic and first baroclinic modes, and that the surface altimeter signal primarily reflects the first baroclinic mode because of its near surface intensification. These results were supported by the theory and numerical simulations of \cite{smith_scales_2001,smith_scales_2002} who found that, in surface-intensified stratification, energy concentrates in the first baroclinic mode because energy transfers between the baroclinic modes ($n>0$) and the barotropic mode ($n=0$) become less efficient.
	However, despite the claims that the baroclinic modes are complete, this interpretation of the observations neglects the contribution of boundary buoyancy anomalies. 
	
	It was subsequently discovered that boundary buoyancy anomalies can induce significant velocities in the upper ocean and so cannot generally be neglected \citep{lacasce_estimating_2006,lapeyre_dynamics_2006}. Using a numerical model of the North Atlantic, \cite{isernfontanet_three-dimensional_2008} reconstructed the geostrophic velocity field from sea surface temperature in winter; they found spatial correlations  between the reconstructed velocity and model velocity at the ocean's surface exceeds 0.7 over most of the North Atlantic. This correlation implies that, at least in the wintertime North Atlantic, a significant portion of the surface geostrophic flow is induced by surface buoyancy anomalies rather than interior potential vorticity.

	 As a consequence of these findings, \cite{lapeyre_what_2009} questioned the interpretation of the altimeter signal in terms of the baroclinic modes. 
	 Lapeyre noted that the baroclinic modes cannot be complete because they assume vanishing buoyancy anomalies at the lower and upper boundaries; as such, they cannot be used to represent arbitrary quasigeostrophic states but only those with vanishing boundary buoyancy anomalies.
	 Over uniform stratification [i.e., $N(z)=\mathrm{constant}$], geostrophic buoyancy anomalies generate a streamfunction decaying exponentially away from the ocean's surface with a vertical attenuation determined by the magnitude of the local stratification \citep{held_surface_1995}.
 	 Lapyere then appended an upper surface quasigeostrophic mode to the baroclinic modes and partitioned the flow obtained from a numerical ocean model into this expanded set of vertical structures. He found that, over most of the North Atlantic, the surface quasigeostrophic mode dominates, with the only exception being the eastern recirculating branch of the North Atlantic gyre. Lapeyre then concluded that the satellite altimeter signal over the North Atlantic must primarily be due to the surface quasigeostrophic mode rather than the first baroclinic mode.

	Aside from the surface buoyancy induced contribution to the geostrophic velocity, there is yet another interpretation of the vertical structure of ocean eddies. \cite{de_la_lama_vertical_2016} revisited the vertical partition of kinetic energy using a larger current meter dataset than in \cite{wunsch_vertical_1997}. Instead of partitioning the kinetic energy into baroclinic modes, \cite{de_la_lama_vertical_2016} computed the  vertical empirical orthogonal functions; they found that the leading empirical orthogonal function is monotonically decaying from the ocean's surface towards the ocean's bottom boundary. A similar result was found by \cite{wunsch_vertical_1997} who interpreted this vertical structure as the sum of a barotropic and a first baroclinic mode. Instead, \cite{de_la_lama_vertical_2016} noted that this leading empirical orthogonal function resembles the zeroth Rossby wave mode with a vanishing bottom pressure boundary condition (rather than a vanishing bottom buoyancy boundary condition, as in the baroclinic modes, see figure \ref{F-BCSF}). A vanishing bottom pressure boundary condition is expected over steep bottom topography \citep{rhines_edge_1970}; in this limit, to leading order, bottom boundary dynamics decouples from interior dynamics due to propagation of fast dispersive bottom-trapped topographic waves \citep{lacasce_geostrophic_1998,lacasce_geostrophic_2000}. 
	
	 \begin{figure}
	\centerline{\includegraphics[width=\textwidth]{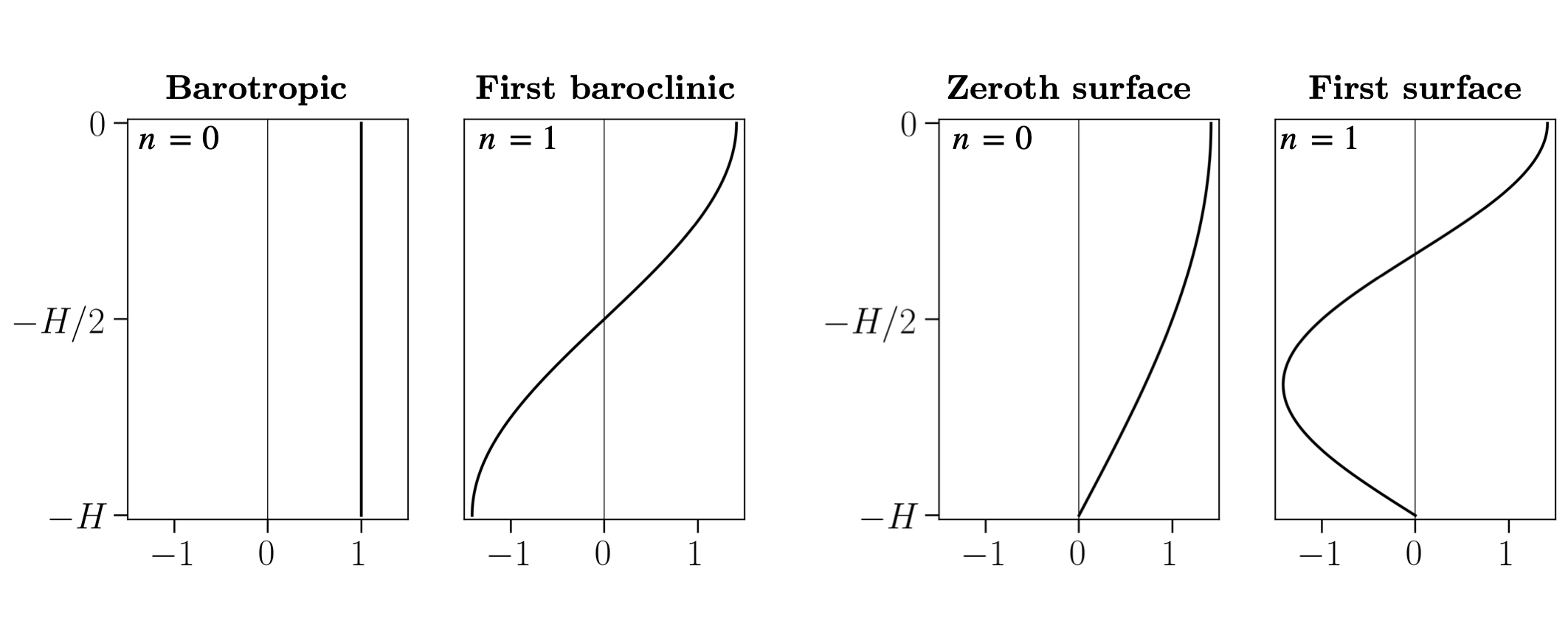}}
	\caption{The zeroth and first baroclinic modes and surface modes in constant stratification.}
	\label{F-BCSF}
  	\end{figure}

	Subsequently, to determine the appropriate bottom boundary condition, \cite{lacasce_prevalence_2017} solved the eigenvalue problem for Rossby wave vertical structure while taking bottom topography into account over the World Ocean. He found that, nearly everywhere in the ocean, bottom topography is steep enough so that the vertical modes nearly vanish at the bottom. Consequently, LaCasce suggested that the surface modes \textemdash{} which he defined as the solution to the eigenvalue problem for Rossby wave vertical structure with a vanishing bottom pressure boundary condition (see figure \ref{F-BCSF}) \textemdash{} are to be preferred to the baroclinic modes almost everywhere in the ocean. LaCasce also suggested that the barotropic mode may not exist in the ocean.  Instead, over steep topography, the vertical inverse cascade is halted at the gravest surface mode, which monotonically decays toward the ocean bottom.	

	\section{Overview of the dissertation}
	
	The debate over the vertical structure of ocean eddies is fundamentally a debate about oceanic geostrophic turbulence.
	What regime of geostrophic turbulence is present in the ocean? 
	Does it consist of long-range barotropic eddies advecting a nearly passive baroclinic flow field? 
	Or does it consist of short-range surface-intensified eddies weakly interacting with bottom-intensified flows? 
	Does the sea surface height measured by satellite altimeters correspond to potential vorticity anomalies and thermocline dynamics (i.e., the first baroclinic mode)? 
	Or does it correspond to surface-trapped motion induced by surface  buoyancy anomalies (i.e., the surface quasigeostrophic mode)?
	
	\begin{figure}
	\centerline{\includegraphics[width=\textwidth]{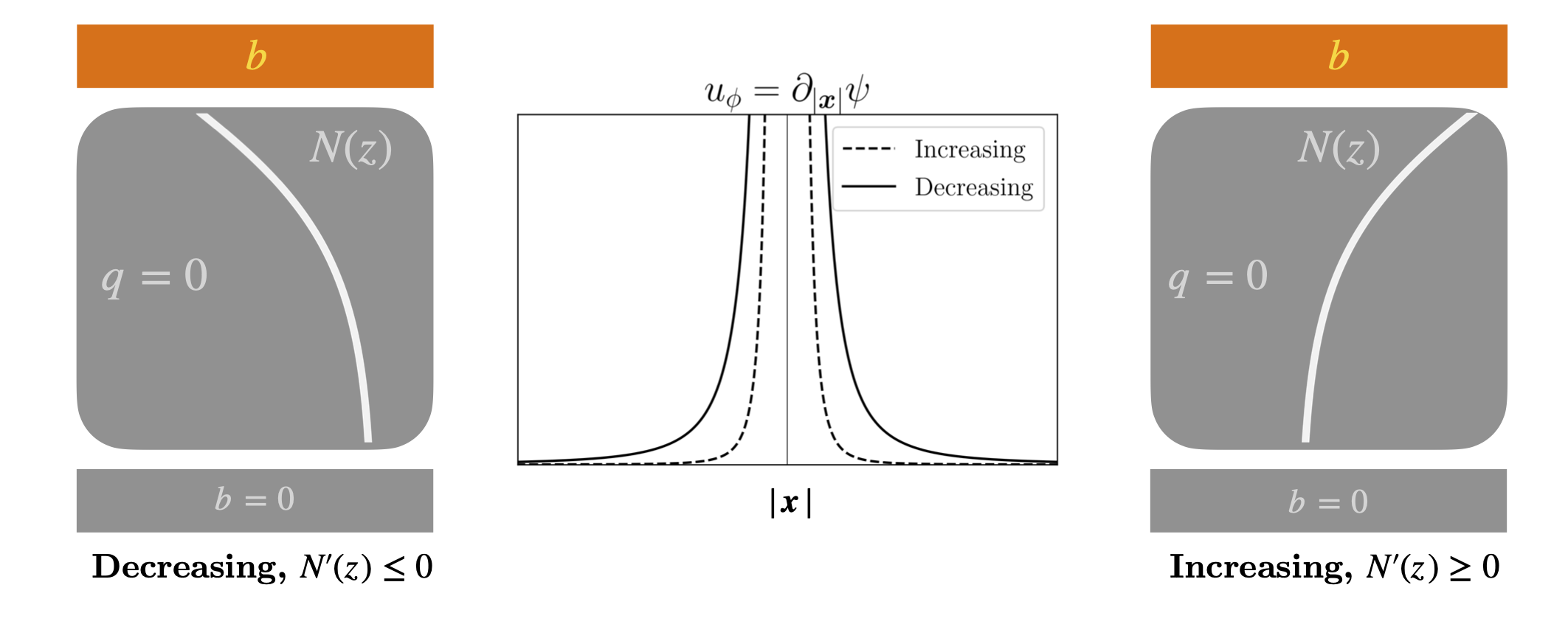}}
	\caption{The azimuthal velocity, $u_\phi = \partial_{\abs{\vec x}} \psi$, for a point buoyancy anomaly at the upper boundary over vertically decreasing and vertically increasing stratification. $N(z)$ is the buoyancy frequency.}
	\label{F-INDE}
  	\end{figure}
	
	\subsection{The first part of the dissertation}
	
	This dissertation contributes to these overarching questions in two ways. The first part of this dissertation, consisting of chapters \ref{Ch-SQG} and \ref{Ch-jets}, develops the geostrophic turbulence theory of boundary buoyancy anomalies in variable stratification. 
	The surface quasigeostrophic model was formulated by \cite{held_surface_1995}; this model describes the dynamics induced by boundary buoyancy anomalies in a uniformly stratified fluid with zero potential vorticity  \citep[although the dynamics induced by boundary buoyancy anomalies has a long history in the atmospheric dynamics literature, e.g.,][]{Charney1947,Eady1949,blumen_uniform_1978}. Chapter 2 extends the surface quasigeostrophic model to account for variable stratification. We find that the vertical stratification controls the interaction range of surface buoyancy anomalies; a surface buoyancy anomaly $b|_{z=0} \sim \delta\left(\abs{\vec x} \right)$  generates an approximate streamfunction of
	\begin{equation}\label{eq:range_sqg}
		\psi\left( \abs{\vec x} \right) \sim \frac{1}{\abs{\vec x}^{2-\alpha}},
	\end{equation}
	for $0<\alpha<2$, where the parameter $\alpha$ is determined by the stratification's vertical structure (figure \ref{F-INDE}). In uniform stratification, we have $\alpha=1$ \citep{pierrehumbert_spectra_1994,held_surface_1995}. However, if the stratification is decreasing towards the upper boundary [$N'(z) \leq 0$, where $N(z)$ is the buoyancy frequency] then we obtain longer range flows with $\alpha >1$.
	 In contrast, over increasing stratification [$N'(z) \geq 0$], we obtain shorter range buoyancy anomalies with $\alpha <1$. 
	In the limit that surface stratification is much larger than deep ocean stratification (e.g., exponential stratification), then surface buoyancy anomalies become extremely local, with an induced streamfunction similar to that induced by a baroclinic mode \eqref{eq:streamfunction_baroclinic}, and with a deformation radius determined by the stratification's approximate $\mathrm{e}$-folding depth. By applying the theory to the North Atlantic, we find an approximate value of $\alpha \approx \nicefrac{3}{2}$ in winter and $\alpha \approx \nicefrac{1}{2}$ in summer. 
	
	\begin{figure}
	\centerline{\includegraphics[width=\textwidth]{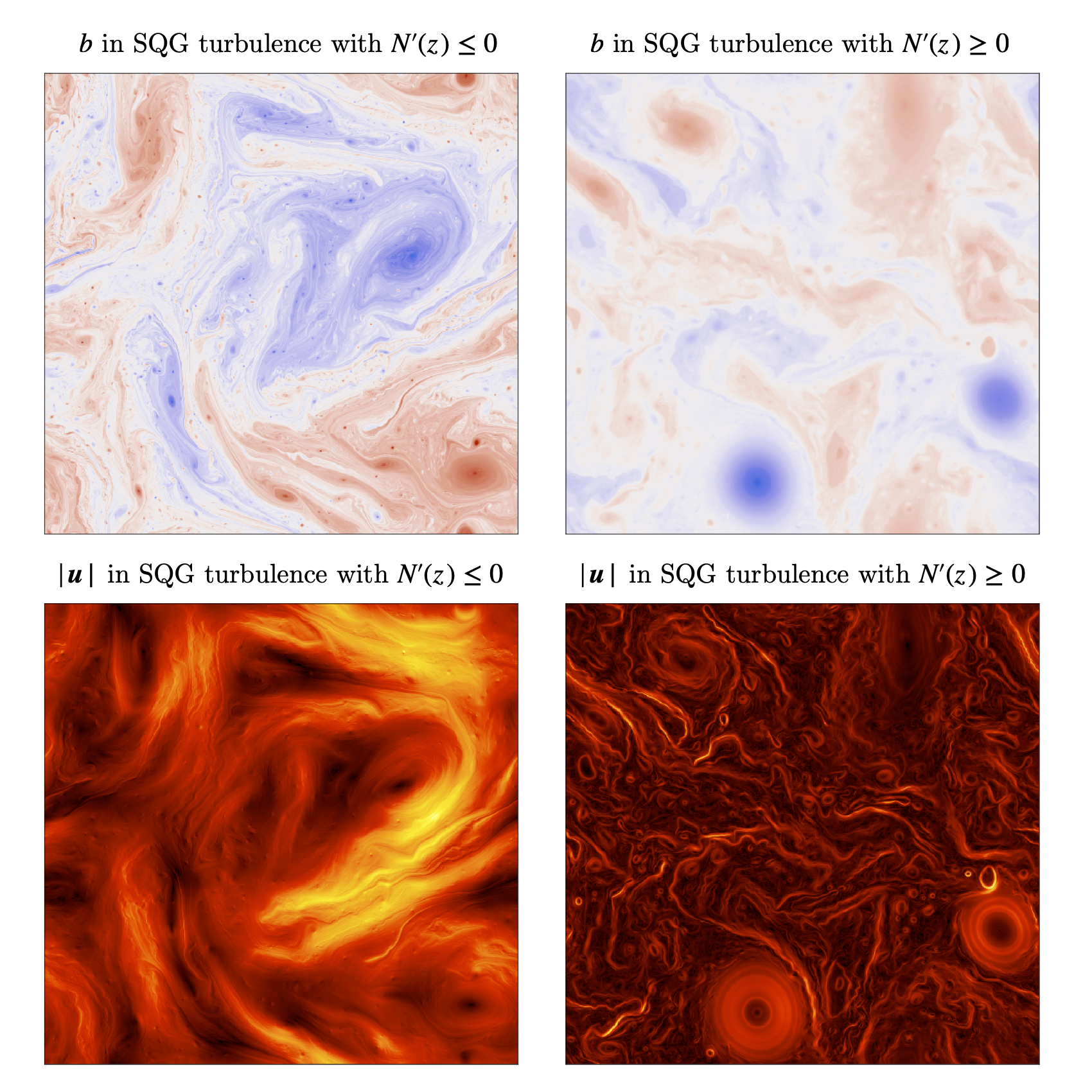}}
	\caption{The upper boundary buoyancy anomaly (upper row) and horizontal speed (bottom row) in the geostrophic turbulence of a surface quasigeostrophic (SQG) mode in decreasing [$N'(z) \leq 0$, left column] and increasing  [$N'(z) \geq 0$, right column] stratification, where $N(z)$ is the buoyancy frequency. Both simulations are forced at a horizontal scale equal to one quarter of the domain. The upper boundary buoyancy and the horizontal speed are normalized by the maximum value in the snapshot.}
	\label{F-SQG_snap}
  	\end{figure}
  	
  	Figure \ref{F-SQG_snap} shows how surface quasigeostrophic turbulence differs over decreasing [$N'(z) \leq 0$] and increasing [$N'(z) \geq 0$]  stratification. Over decreasing stratification, buoyancy anomalies generate long range velocity fields which strain the surface buoyancy field into thin buoyancy filaments. However, unlike the vorticity filaments in barotropic turbulence, these thin buoyancy filaments are unstable to a secondary instability in which they roll up into small scale vortices \citep{pierrehumbert_spectra_1994,held_surface_1995}. As a result, surface quasigeostrophic turbulence over decreasing stratification is characterized by the simultaneous presence of both thin buoyancy filaments along with vortices having a wide range of scales. In contrast, over increasing stratification, buoyancy anomalies generate shorter range velocity fields that are more efficient at mixing away small-scale inhomogeneities. Consequently, the buoyancy field lacks thin buoyancy filaments and instead appears spatially diffuse. 
	
	The dependence of the interaction range on vertical stratification implies that the surface kinetic energy spectrum must also depend on the vertical stratification. 
	There is a considerable body of literature suggesting that over major extratropical currents in winter, especially over regions where mixed-layer instability is active, the surface geostrophic flow observed by satellite altimeters is due to surface buoyancy anomalies \citep{isernfontanet_three-dimensional_2008,lapeyre_what_2009,gonzalez-haro_global_2014,qiu_reconstructability_2016,qiu_reconstructing_2020,miracca-lage_can_2022}. 
	However, uniformly stratified surface quasigeostrophic theory predicts a surface kinetic energy spectrum that is too shallow to be consistent with the spectra found in numerical models and observations \citep{mensa_seasonality_2013,sasaki_impact_2014,callies_seasonality_2015}. 
	Our finding that mixed-layer like stratification steepens the surface kinetic energy spectrum reconciles these two bodies of literature. 
	It also suggests that mixed-layer baroclinic instability acts to energize the surface buoyancy induced portion of the flow so that the wintertime surface velocity is dominated by contributions from surface buoyancy anomalies. In contrast, the summertime mixed-layer is shallow and mixed-layer instability is either weak or non-existent; observations suggest that the surface buoyancy induced velocity no longer dominates the surface geostrophic flow \citep[although it remains a significant component in some locations, see][]{gonzalez-haro_global_2014}.

	In chapter \ref{Ch-jets}, we consider surface quasigeostrophic turbulence in the presence of a meridional buoyancy gradient, which supports the existence of westward propagating surface-trapped Rossby waves \citep{held_surface_1995}. We find that the vertical stratification controls the dispersion of surface-trapped Rossby waves; for a range of horizontal scales, the Rossby wave dispersion relation can be approximated as 
	\begin{equation}
		\omega(\vec k) \approx - \frac{\Lambda \, k_x }{k^\alpha},
	\end{equation}
	where $\Lambda$ is proportional to the vertical shear at the surface, and the parameter $\alpha$ is the same as the $\alpha$ appearing in the streamfunction expression \eqref{eq:range_sqg}.
	Therefore, Rossby wave dispersion is related to the interaction range of surface buoyancy anomalies. 
	Over decreasing stratification, buoyancy anomalies have a longer interaction range (with $\alpha >1$) and we obtain highly dispersive waves, whereas over increasing stratification, buoyancy anomalies have a shorter interaction range (with $\alpha <1$) and Rossby waves are only weakly dispersive. 
	In the limit where the surface stratification is much larger than the deep ocean stratification, then Rossby waves become non-dispersive (i.e., $\alpha \rightarrow 0$).

	\begin{figure}
	\centerline{\includegraphics[width=\textwidth]{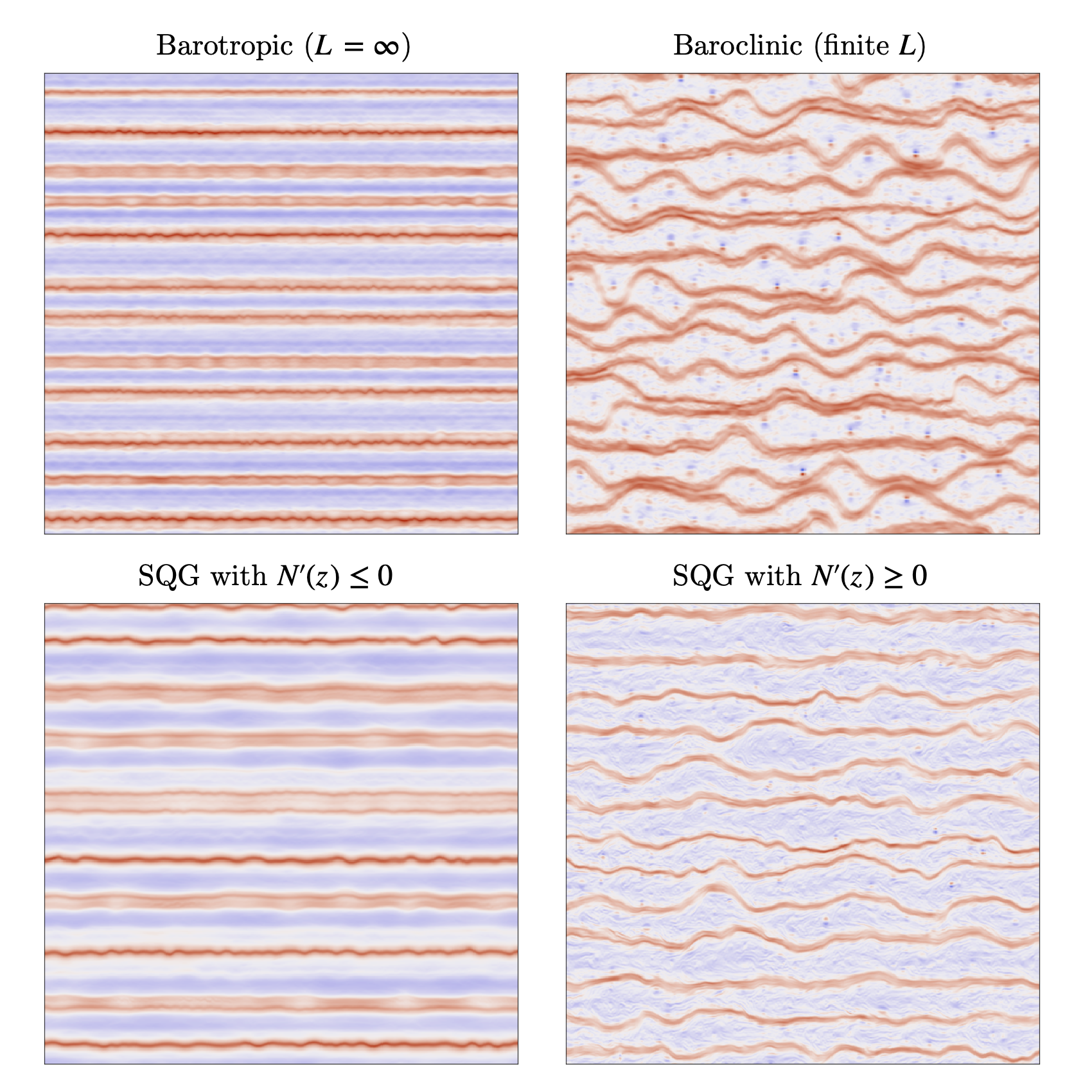}}
	\caption{Jets in two-dimensional turbulence with an infinite and finite deformation radius, $L$, (top row) along with jets in surface quasigeostrophic (SQG) turbulence over decreasing and increasing stratification (bottom row). The zonal velocity is shown, with red values indicating positive (eastward) velocities and blue values indicating negative (westward) velocities. All four simulations are forced at a horizontal scale equal to one eightieth (\nicefrac{1}{80}) of the domain width. The zonal velocity is normalized by its maximum value in each snapshot. }
	\label{F-2D_SQG_jets}
  	\end{figure}

	On a $\beta$-plane, the interaction of Rossby waves with geostrophic turbulence results in latitudinally inhomogeneous mixing that, under certain conditions, spontaneously reorganizes the flow into a staircase structure, consisting of latitudinal zones of homogenized potential vorticity separated by isolated potential vorticity discontinuities \citep{dritschel_multiple_2008,scott_structure_2012,scott_zonal_2019}. In this limit, we obtain sharp eastward jets centred at the potential vorticity discontinuities with westward flows in between. 
	Finite values of the deformation radius result in latitudinally meandering eastward jets having a fixed shape, with the jet width determined by the deformation radius \citep[figure \ref{F-2D_SQG_jets};][]{dritschel_jet_2011,scott_zonal_2019,scott_spacing_2022}.
	In chapter \ref{Ch-jets}, we extend this analysis to a surface quasigeostrophic fluid with a meridional surface buoyancy gradient. Analogously, we find that, under certain conditions, the flow spontaneously reorganizes into a staircase structure consisting of latitudinal zones of homogenized surface buoyancy separated by isolated surface buoyancy discontinuities. 
 	Over decreasing stratification, we obtain straight jets perturbed by highly dispersive, eastward propagating, along jet waves, similar to jets in $\beta$-plane barotropic turbulence. 
 	In contrast, over increasing stratification, we obtain meandering jets whose shape evolves in time due to the westward propagation of weakly dispersive along jet waves (figure \ref{F-2D_SQG_jets}).
	
	\subsection{The second part of the dissertation}
	
	The second part of the dissertation, consisting of chapters \ref{Ch-modes} and \ref{Ch-QG}, concerns an investigation into vertical normal modes in the presence of boundary-confined restoring forces. The ultimate aim of this part of the dissertation is to generalize $N$-layer quasigeostrophic models to account for non-isentropic boundaries. Such a modal truncation would provide a simple model in which to investigate the coupling between boundary-trapped buoyancy induced flows and potential vorticity induced flows in the fluid interior.
	Resolving  boundary buoyancy dynamics in quasigeostrophic models typically requires high vertical resolution near the boundaries \citep{tulloch_note_2009}. Although \cite{tulloch_quasigeostrophic_2009} developed a four-mode model consisting of two surface quasigeostrophic modes nonlinearly coupled to a barotropic and baroclinic mode, because these four modes do not form an orthogonal set, this model does not conserve energy. 
	
	We begin, in chapter \ref{Ch-modes}, by examining the mathematical structure of geophysical waves in the presence of both volume-permeating and boundary-confined restoring forces.  
	If the boundaries are dynamically inert, the resulting eigenvalue problem typically has a Sturm-Liouville form and the properties of such problems are well-known (e.g., the standard baroclinic mode eigenvalue problem with flat boundaries).
	 However, if restoring forces are also present at the boundaries, then the equations of motion contain a time-derivative in the boundary conditions, and this leads to an eigenvalue problem where the eigenvalue correspondingly appears in the boundary conditions.
	 Chapter \ref{Ch-modes} develops the theory of such problems, explores the properties of wave problems with dynamically active boundaries, and provides a precise meaning of what it means for a set of vertical modes to be complete. We then apply the theory to two Boussinesq gravity wave problems as well a Rossby wave problem over topography.
	 
	 Chapter \ref{Ch-QG} then applies the mathematical formalism of chapter \ref{Ch-modes} to obtain all possible discrete normal modes in quasigeostrophic theory that diagonalize the energy and the potential enstrophy. There are two classes of quasigeostrophic normal modes. If the eigenvalue parameter does not appear in the boundary conditions, then we obtain normal modes analogous to the baroclinic modes or the surface modes. That these modes cannot be used to represent every possible quasigeostrophic state can be seen in the following manner. An arbitrary quasigeostrophic state is uniquely determined by specifying the potential vorticity in the fluid interior as well the boundary buoyancy anomalies. However, although we can project the potential vorticity onto the baroclinic modes or the surface modes, we are unable to project the boundary buoyancy onto these modes; either the series expansion does not converge to the buoyancy anomaly or, if it does converge, the resulting series expansion is not differentiable. Either case is physically unacceptable. In contrast, the second class of modes can be used to project an arbitrary potential vorticity profile along with boundary buoyancy anomalies, and the resulting series expansions are differentiable. Consequently, we are able to expand the Bretherton potential vorticity \citep{bretherton_critical_1966} \textemdash{} consisting of $\delta$-sheet potential vorticity contributions at the boundaries \textemdash{} in terms of quasigeostrophic modes. 
	 
	Although the aim behind the analysis of chapters \ref{Ch-modes} and \ref{Ch-QG} was to formulate a modal truncation of the quasigeostrophic equations that accounts for non-isentropic boundaries, we show in chapter \ref{Ch-conc} that no such truncation is possible using discrete quasigeostrophic normal modes. This is because a crucial symmetry in the vertical coupling between vertical modes is lost in the presence of non-isentropic boundaries. As a consequence, if energy is initialized in the first $N$ modes, then energy exchanges are possible with the higher modes, and so, finite modal truncations fail to conserve energy. This argument holds for all possible discrete normal modes that diagonalize the energy and the potential enstrophy (from chapter 5) and so no energy conserving discrete modal truncation for the quasigeostrophic equations is possible in the presence of non-isentropic boundaries. 

\chapter{Surface Quasigeostrophic Turbulence in Variable Stratification}\label{Ch-SQG}

\begin{abstractchapter}
	Numerical and observational evidence indicates that, in regions where mixed-layer instability is active, the surface geostrophic velocity is largely induced by surface buoyancy anomalies. Yet, in these regions, the observed surface kinetic energy spectrum is steeper than predicted by uniformly stratified surface quasigeostrophic theory. By generalizing surface quasigeostrophic theory to account for variable stratification, we show that surface buoyancy anomalies can generate a variety of dynamical regimes depending on the stratification's vertical structure. Buoyancy anomalies generate longer range velocity fields over decreasing stratification and shorter range velocity fields  over increasing stratification. As a result, the surface kinetic energy spectrum is steeper over decreasing stratification than over increasing stratification. An exception occurs if the near surface stratification is much larger than the deep ocean stratification. In this case, we find an extremely local turbulent regime with surface buoyancy homogenization and a steep surface kinetic energy spectrum, similar to equivalent barotropic turbulence. By applying the variable stratification theory to the wintertime North Atlantic, and assuming that mixed-layer instability acts as a narrowband small-scale surface buoyancy forcing, we obtain a predicted surface kinetic energy spectrum between $k^{-4/3}$ and $k^{-7/3}$, which is consistent with the observed wintertime $k^{-2}$ spectrum. We conclude by suggesting a method of measuring the buoyancy frequency's vertical structure using satellite observations.
\end{abstractchapter}

\section{Introduction}\label{S-intro}

\subsection{Geostrophic flow induced by surface buoyancy}

Geostrophic flow in the upper ocean is induced by either interior potential vorticity anomalies, $q$, or surface buoyancy anomalies, $b|_{z=0}$.
At first, it was assumed that the surface geostrophic flow observed by satellite altimeters is due to interior potential vorticity \citep{stammer_global_1997,wunsch_vertical_1997}.
It was later realized, however, that under certain conditions, upper ocean geostrophic flow can be inferred using the surface buoyancy anomaly alone \citep{lapeyre_dynamics_2006,lacasce_estimating_2006}. 
Subsequently, \cite{lapeyre_what_2009} used a numerical ocean model to show that the surface buoyancy induced geostrophic flow dominates the $q$-induced geostrophic flow over a large fraction of the North Atlantic in January.
Lapeyre then concluded that the geostrophic velocity inferred from satellite altimeters in the North Atlantic must usually be due to surface buoyancy anomalies instead of interior potential vorticity.

Similar conclusions have been reached in later numerical studies using the effective surface quasigeostrophic  \citep[eSQG, ][]{lapeyre_dynamics_2006} method. The eSQG method aims to reconstruct three-dimensional velocity fields in the upper ocean: it assumes that surface buoyancy anomalies generate an exponentially decaying streamfunction with a vertical attenuation determined by the buoyancy frequency, as in the uniformly stratified surface quasigeostrophic model \citep{held_surface_1995}.
 Because the upper ocean does not typically have uniform stratification, an "effective" buoyancy frequency is used, which is also intended to account for interior potential vorticity anomalies \citep[][]{lapeyre_dynamics_2006}. 
In practice, however, this effective buoyancy frequency is chosen to be the vertical average of the buoyancy frequency in the upper ocean. 
 \cite{qiu_reconstructability_2016} derived the surface streamfunction from sea surface height in a $\nicefrac{1}{30}^\circ$ model of the Kuroshio Extension region in the North Pacific and used the eSQG method to reconstruct the three-dimensional vorticity field. They found correlations of 0.7-0.9 in the upper 1000 m between the reconstructed and model vorticity throughout the year. This result was also found to hold in a $\nicefrac{1}{48}^\circ$ model with tidal forcing \citep{qiu_reconstructing_2020}.


A clearer test of whether the surface flow is induced by surface buoyancy is to reconstruct the geostrophic flow directly using the sea surface buoyancy or temperature \citep{isernfontanet_potential_2006}. 
This approach was taken by \cite{isernfontanet_three-dimensional_2008} in the context of a $\nicefrac{1}{10}^\circ$ numerical simulation of the North Atlantic. 
When the geostrophic velocity is reconstructed using sea surface temperature, correlations between the reconstructed velocity and the model velocity exceeded 0.7 over most of the North Atlantic in January. 
Subsequently, \cite{miracca-lage_can_2022} used a reanalysis product with a spatial grid spacing of 10 km to reconstruct the geostrophic velocity using both sea surface buoyancy and temperature over certain regions in the South Atlantic. 
The correlations between the reconstructed streamfunctions and the model streamfunction had a seasonal dependence, with correlations of 0.7-0.8 in winter and 0.2-0.4 in summer. 

Observations also support the conclusion that a significant portion of the surface geostrophic flow may be due to surface buoyancy anomalies over a substantial fraction of the World Ocean.
 \cite{gonzalez-haro_global_2014} reconstructed the surface streamfunction using $\nicefrac{1}{3}^\circ$ satellite altimeter data (for sea surface height) and $\nicefrac{1}{4}^\circ$ microwave radiometer data (for sea surface temperature). 
If the surface geostrophic velocity is due to sea surface temperature alone, then the streamfunction constructed from sea surface temperature should be perfectly correlated with the streamfunction constructed from sea surface height. 
The spatial correlations between the two streamfunctions was found to be seasonal. 
For the wintertime Northern hemisphere, high correlations (exceeding 0.7-0.8) are observed near the Gulf Stream and Kuroshio whereas lower correlations (0.5-0.6) are seen in the eastern recirculating branch of North Atlantic and North Pacific gyres [a similar pattern was found by \cite{isernfontanet_three-dimensional_2008} and \cite{lapeyre_what_2009}]. In summer, correlations over the North Atlantic and North Pacific plummet to 0.2-0.5, again with lower correlations in the recirculating branch of the gyres.
In contrast to the strong seasonality observed in the northern hemisphere, correlation over the Southern Ocean typically remain larger than 0.8 throughout the year.

Another finding is that more of the surface geostrophic flow is due to surface buoyancy anomalies in regions with high eddy kinetic energy, strong thermal gradients, and deep mixed layers \citep{isernfontanet_three-dimensional_2008, gonzalez-haro_global_2014, miracca-lage_can_2022}. 
These are the same conditions under which we expect mixed-layer baroclinic instability to be active \citep{boccaletti_mixed_2007,mensa_seasonality_2013,sasaki_impact_2014,callies_seasonality_2015}. Indeed, one model of mixed-layer instability consists of surface buoyancy anomalies interacting with interior potential vorticity anomalies at the base of the mixed-layer \citep{Callies2016}. 
The dominance of the surface buoyancy induced velocity in regions of mixed-layer instability suggests that, to a first approximation, we can think of mixed-layer instability as energizing the surface buoyancy induced part of the flow through vertical buoyancy fluxes and the concomitant release of kinetic energy at smaller scales. 


\subsection{Surface quasigeostrophy in uniform stratification}

The dominance of the surface buoyancy induced velocity suggests that a useful model for upper ocean geostrophic dynamics is the surface quasigeostrophic model \citep{held_surface_1995}, which describes the dynamics induced by surface buoyancy anomalies over uniform stratification.
 The primary difference between surface quasigeostrophic dynamics and two-dimensional barotropic dynamics \citep{kraichnan_inertial_1967} is that surface quasigeostrophic eddies have a shorter interaction range than their two-dimensional barotropic counterparts. 
 One consequence of this shorter interaction range is a flatter kinetic energy spectrum \citep{pierrehumbert_spectra_1994}. Letting $k$ be the horizontal wavenumber, then two-dimensional barotropic turbulence theory predicts a kinetic energy spectrum of $k^{-5/3}$ upscale of small-scale forcing and a $k^{-3}$ spectrum downscale of large-scale forcing \citep{kraichnan_inertial_1967}. 
 If both types of forcing are present, then we expect a spectrum between $k^{-5/3}$ and $k^{-3}$, with the realized spectrum depending on the relative magnitude of small-scale to large-scale forcing \citep{lilly_two-dimensional_1989,maltrud_energy_1991}. 
 In contrast, the corresponding spectra for surface quasigeostrophic turbulence are $k^{-1}$ (upscale of small-scale forcing) and $k^{-5/3}$ (downscale of large-scale forcing) \citep{blumen_uniform_1978}, both of which are flatter than the corresponding two-dimensional barotropic spectra.\footnote{The uniformly stratified geostrophic turbulence theory of \cite{charney_geostrophic_1971} provides spectral predictions similar to the two-dimensional barotropic theory \citep[See ][]{callies_interpreting_2013}.} 
 
The above considerations lead to the first discrepancy between the surface quasigeostrophic model and ocean observations. 
As we have seen, we expect wintertime surface geostrophic velocities near major extratropical currents to be primarily due to surface buoyancy anomalies.
Therefore, the predictions of surface quasigeostrophic theory should hold.
If we assume that mesoscale baroclinic instability acts as a large-scale forcing and that mixed-layer baroclinic instability acts as a small-scale forcing to the upper ocean \citep[we assume a narrowband forcing in both cases, although this may not be the case, see][]{khatri_role_2021}, then we expect a surface kinetic energy spectrum between $k^{-1}$ and $k^{-5/3}$.
However, both observations and numerical simulations of the Gulf Stream and Kuroshio find kinetic energy spectra close to $k^{-2}$ in winter  \citep{sasaki_impact_2014,callies_seasonality_2015,vergara_revised_2019}, which is steeper than predicted.


The second discrepancy relates to the surface transfer function implied by uniformly stratified surface quasigeostrophic theory. The surface transfer function, $\mathcal{F}(\vec k)$, is defined as \citep{isern-fontanet_transfer_2014}
\begin{equation}\label{eq:transfer_func}
	\mathcal{F}(\vec k) = \frac{\hat \psi_{\vec k}}{\hat b_{\vec k}},
\end{equation}
where $\hat \psi_{\vec k}$ and $\hat b_{\vec k}$ are the Fourier amplitudes of the geostrophic streamfunction, $\psi$, and the buoyancy, $b$, at the ocean's surface, and $\vec k$ is the horizontal wavevector. Uniformly stratified surface quasigeostrophic theory predicts an isotropic transfer function $\mathcal{F}(k)\sim k^{-1}$ \citep{held_surface_1995}. Using a $\nicefrac{1}{12}^\circ$ ocean model and focusing on the western coast of Australia, \cite{gonzalez-haro_ocean_2020} confirmed that the transfer function between sea surface temperature the sea surface height is indeed isotropic but found that the transfer function is generally steeper than $k^{-1}$. In another study using a $\nicefrac{1}{16}^\circ$ model of the Mediterranean Sea, \cite{isern-fontanet_transfer_2014} found that the transfer function below 100 km has seasonal dependence closely related to mixed-layer depth, fluctuating between $k^{-1}$ and $k^{-2}$. 

In the remainder of this chapter, we account for these discrepancies by generalizing the uniformly stratified surface quasigeostrophic model \citep{held_surface_1995} to account for variable stratification (section \ref{S-inversion}). Generally, we find that the kinetic energy spectrum in surface quasigeostrophic turbulence depends on the stratification's vertical structure (section \ref{S-turbulence}); we recover the \cite{blumen_uniform_1978} spectral predictions only in the limit of uniform stratification. Stratification controls the kinetic energy spectrum by modifying the interaction range of surface quasigeostrophic eddies, and we illustrate this dependence by examining the turbulence under various idealized stratification profiles (section \ref{S-idealized}). We then apply the theory to the North Atlantic in both winter and summer, and find that the surface transfer function is seasonal, with a $\mathcal{F}(k) \sim k^{-3/2}$ dependence in winter and a $\mathcal{F}(k) \sim k^{-1/2}$ dependence in summer. Moreover, in the wintertime North Atlantic, the theory predicts a surface kinetic energy spectrum between $k^{-4/3}$ and $k^{-7/3}$, which is consistent with both observations and numerical simulations (section \ref{S-ECCO}). Finally, in section 6, we discuss the validity of theory at other times and locations.

\section{The inversion function} \label{S-inversion}

\subsection{Physical space equations}

Consider an ocean of depth $H$ with zero interior potential vorticity $(q=0)$ so that the geostrophic streamfunction satisfies 
\begin{equation}\label{eq:zero_pv}
	\lap \psi  + \pd{}{z}\left(\frac{1}{\sigma^2}\pd{\psi}{z}\right) = 0 \quad \text{for } z\in(-H,0).
\end{equation}
In this equation, $\lap$ is the horizontal Laplacian, $\psi$ is the geostrophic streamfunction, and 
\begin{equation}
	\sigma(z) = N(z)/f,
\end{equation}
where $N(z)$ is the depth-dependent buoyancy frequency and $f$ is the constant local value of the Coriolis frequency. We refer to $\sigma(z)$ as the \emph{stratification} for the remainder of this chapter. The horizontal geostrophic velocity is obtained from $\vec u = \unit z \times \grad \vec \psi$ where $\unit z$ is the vertical unit vector.

The upper surface potential vorticity is given by \citep{bretherton_critical_1966}
\begin{equation}\label{eq:theta_def}
	\theta = - \frac{1}{\sigma_0^2} \pd{\psi}{z}\bigg|_{z=0},
\end{equation}
where $\sigma_0 = \sigma(0)$. The surface potential vorticity is related to the surface buoyancy anomaly through 
\begin{equation}
	b|_{z=0} = -f\,\sigma_0^2\,\theta.
\end{equation}
The time-evolution equation at the upper boundary is given by
\begin{equation}\label{eq:theta_equation}
	\pd{\theta}{t} + \J{\psi}{\theta} = F-D \quad \text{at } z=0,
\end{equation}
where $\J{\theta}{\psi} = \partial_x \theta \, \partial_y \psi - \partial_y \theta \, \partial_x \psi$ represents the advection of $\theta$ by the horizontal geostrophic velocity $\vec u$, $F$ is the buoyancy forcing at the upper boundary, and $D$ is the dissipation. 

We assume a bottom boundary condition of
\begin{equation}
	\psi \rightarrow 0 \text{ as } z\rightarrow-\infty,
\end{equation}
which is equivalent to assuming the bottom boundary, $z=-H$, is irrelevant to the dynamics. In section \ref{S-ECCO}, we find that this assumption is valid in the mid-latitude North Atlantic open ocean at horizontal scales smaller than $\approx 500$ km. We consider alternative boundary conditions in appendix A.

\subsection{Fourier space equations}

Assuming a doubly periodic domain in the horizontal prompts us to consider the Fourier expansion of $\psi$,
\begin{equation}\label{eq:fourier}
	\psi(\vec r, z,t) = \sum_{\vec k} \hat \psi_{\vec k}(t) \, \Psi_{k}(z) \, \e^{\irm \vec k \cdot \vec r},
\end{equation}
where $\vec r = (x,y)$ is the horizontal position vector, $\vec k = (k_x, k_y)$ is the horizontal wavevector,  and $k=\abs{\vec k}$ is the horizontal wavenumber. The wavenumber dependent non-dimensional vertical structure, $\Psi_{k}(z)$, is determined by the boundary-value problem\footnote{To derive the vertical structure equation \eqref{eq:Psi_equation}, substitute the Fourier representation \eqref{eq:fourier} into the vanishing potential vorticity condition \eqref{eq:zero_pv}, multiply by $\mathrm{e}^{-i\vec l \cdot \vec r}$, take an area average, and use the identity $$ \frac{1}{A} \, \int_{A} \mathrm{e}^{\irm \left(\vec k-\vec l\right)\cdot \vec r} \, \mathrm{d} {\vec r} =  \delta_{\vec k,\vec l} $$ where $\delta_{\vec k, \vec l}$ is the Kronecker delta, and $A$ is the horizontal area of the periodic domain.} 
\begin{equation}\label{eq:Psi_equation}
	- \d{}{z} \left(\frac{1}{\sigma^2} \d{\Psi_k}{z} \right) + k^2 \, \Psi_k = 0,
\end{equation}
with upper boundary condition
\begin{equation}\label{eq:Psi_upper}
	\Psi_k(0) = 1
\end{equation}
and bottom boundary condition 
\begin{equation}\label{eq:Psi_lower}
	\Psi_k \rightarrow 0 \text{ as } z\rightarrow-\infty.
\end{equation}
The upper boundary condition \eqref{eq:Psi_upper} is a normalization for the vertical structure, $\Psi_k(z)$, chosen so that
\begin{equation}\label{eq:fourier_zero}
	\psi(\vec r, z=0,t) = \sum_{\vec k} \hat \psi_{\vec k}(t) \, \e^{\irm \vec k \cdot \vec r}.
\end{equation}
Consequently, the surface potential vorticity \eqref{eq:theta_def} is given by
\begin{equation}\label{eq:theta_fourier}
	\theta(\vec r, t) = \sum_{\vec k} \hat \theta_{\vec k}(t) \, \e^{\irm \vec k \cdot \vec r},
\end{equation}
where
\begin{equation}\label{eq:theta_inversion}
	\thetak = - m(k) \, \psik,
\end{equation}
and the inversion function $m(k)$ (with dimensions of inverse length) is defined as
\begin{equation}\label{eq:mk}
	m(k)= \frac{1}{\sigma_0^2} \, \d{\Psi_k(0)}{z}.
\end{equation}
In all our applications, we find the inversion function to be a positive monotonically increasing function of $k$ [i.e., $m(k)>0$ and $\mathrm{d}m/\mathrm{d}k\geq0$]. 
The inversion function is related to the transfer function \eqref{eq:transfer_func} through 
\begin{equation}\label{eq:transfer_inversion}
	\mathcal{F}(k) = \frac{1}{f \, \sigma_0^2 \, m(k)} =  \left[ f \, \d{\Psi_k(0)}{z} \right]^{-1},
\end{equation}
which shows that the transfer function, evaluated at a wavenumber $k$, is related to the characteristic vertical scale of $\Psi_k(z)$.


\subsection{The case of constant stratification}\label{SS-constant_strat}

To recover the well-known case of the uniformly stratified surface quasigeostrophic model \citep{held_surface_1995}, set $\sigma = \sigma_0$. Then  solving the vertical structure equation \eqref{eq:Psi_equation} along with boundary conditions \eqref{eq:Psi_upper} and \eqref{eq:Psi_lower} yields the exponentially decaying vertical structure, 
\begin{equation}
	\Psi_k(z) = \e^{\sigma_0 \, k \, z}.
\end{equation}
Substituting $\Psi_k(z)$ into the definition of the inversion function \eqref{eq:mk}, we obtain
 \begin{equation}
	 m(k) = k/\sigma_0,
\end{equation}
and hence [through the inversion relation \eqref{eq:theta_fourier}] a linear-in-wavenumber inversion relation of 
\begin{equation}
	\hat \theta_{\vec k} = -(k/\sigma_0) \, \hat \psi_{\vec k}.
\end{equation}

In appendix A, we show that $m(k) \rightarrow k/\sigma_0$ as $k\rightarrow \infty$ for arbitrary stratification $\sigma(z)$. Therefore, at sufficiently small horizontal scales, surface quasigeostrophic dynamics behaves as in constant stratification regardless of the functional form of $\sigma(z)$.

\section{Surface quasigeostrophic turbulence}\label{S-turbulence} 

Suppose a two-dimensional barotropic fluid is forced in the wavenumber interval $[k_1,k_2]$. In such a fluid, \cite{kraichnan_inertial_1967} argued that two inertial ranges will form: one inertial range for $k<k_1$ where kinetic energy cascades to larger scales and another inertial range for $k>k_2$ where enstrophy cascades to smaller scales. Kraichnan's argument depends on three properties of two-dimensional vorticity dynamics. First, that there are two independent conserved quantities; namely, the kinetic energy and the enstrophy. Second, that turbulence is sufficiently local in wavenumber space   so that the only available length scale is $k^{-1}$. Third, that the inversion relation between the vorticity and the streamfunction is scale invariant. 

There are two independent conserved quantities in surface quasigeostrophic dynamics, as in Kraichnan's two-dimensional fluid; namely the total energy, $E$, and the surface potential enstrophy, $P$. 
However, the second and third properties of two-dimensional vorticity dynamics do not hold for surface quasigeostrophic dynamics. Even if the turbulence is local in wavenumber space, there are two available length scales at each wavenumber $k$; namely, $k^{-1}$ and $[m(k)]^{-1}$. Moreover, the inversion relation \eqref{eq:theta_inversion} is generally not scale invariant.\footnote{A function $m(k)$ is scale invariant if $m(\lambda k) = \lambda^s m(k)$ for all $\lambda$, where $s$ is a real number. In particular, note that power laws, $m(k) = k^\alpha$, are scale invariant.} Therefore, the arguments in \cite{kraichnan_inertial_1967} do not hold in general for surface quasigeostrophic dynamics.

Even so, in the remainder of this section we show that there is a net inverse cascade of total energy and a net forward cascade of surface potential enstrophy even if there are no inertial ranges in the turbulence. Then we consider the circumstances under which we expect inertial ranges to form. Finally, assuming the existence of an inertial range, we derive the spectra for the cascading quantities. We begin, however, with some definitions.

\subsection{Quadratic quantities}
The two quadratic quantities needed for the cascade argument are the volume-integrated total mechanical energy per mass per unit area, 
\begin{equation}\label{eq:total_energy}
\begin{aligned}
	E &= \frac{1}{2\,A} \int_V \left( \abs{\grad \psi}^2 + \frac{1}{\sigma^{2}} \abs{\pd{\psi}{z}}^2 \right)\mathrm{d}V \\
	  &=  - \frac{1}{2}\,\overline{\psi|_{z=0} \, \theta} = \frac{1}{2} \sum_{\vec k} m(k) \, \abs{\hat \psi_{\vec k}}^2,
\end{aligned}
\end{equation}
and the upper surface potential enstrophy, 
\begin{equation}\label{eq:potential_enstrophy}
	P = \frac{1}{2} \, \overline{\theta^2} = \frac{1}{2} \sum_{\vec k} \left[m(k)\right]^2 \abs{\hat \psi_{\vec k}}^2,
\end{equation}
where the overline denotes an area average over the periodic domain.
Both quantities are time independent in the absence of forcing and dissipation, as can be seen by multiplying the time-evolution equation \eqref{eq:theta_equation} by either $-\psi|_{z=0}$ or $\theta$ and taking an area average.




Two other quadratics we use are the surface kinetic energy
\begin{equation}\label{eq:kinetic_energy}
	K = \frac{1}{2} \overline{\abs{\grad \psi}_{z=0}^2} = \frac{1}{2} \sum_{\vec k} k^2 \, \abs{\hat \psi_{\vec k}}^2
\end{equation}
and the surface streamfunction variance
\begin{equation}\label{eq:stream_var}
	S = \frac{1}{2} \overline{\left(\psi|_{z=0}\right)^2} = \frac{1}{2} \sum_{\vec k} \abs{\hat \psi_{\vec k}}^2.
\end{equation}

The isotropic spectrum $\mathscr{A}(k)$ of a quantity $A$ is defined by
\begin{equation}
	A = \int_{0}^\infty \mathscr{A}(k) \, \di k,
\end{equation}
so that the isotropic spectra of $E, P, K,$ and $S$ are given by $\Esc(k), \Psc(k), \Ksc(k),$ and $\Ssc(k)$. The isotropic spectra are then related by
\begin{equation}\label{eq:variance_relations1}
	\Psc(k) = m(k) \, \Esc(k) = \left[m(k)\right]^2 \, \Ssc(k)
\end{equation}
and 
\begin{equation}\label{eq:variance_relations2}
	\Ksc(k) = k^2 \, \Ssc(k).
\end{equation}

For each spectral density $\mathscr{A}(k)$, there is a time-evolution equation of the form \citep{gkioulekas_new_2007}
\begin{equation}\label{eq:spectral_transfer}
	\pd{\mathscr{A}(k)}{t} + \pd{\Pi_A(k)}{k} =F_A(k) - D_A(k),
\end{equation}
where $\Pi_A(k)$ is the transfer of the spectral density $\mathscr{A}(k)$ from $(0,k)$ to $(k,\infty)$, and $D_A(k)$ and $F_A(k)$ are the dissipation and forcing spectra of $A$, respectively. In an inertial range where $A$ is the cascading quantity, then $\Pi_A(k) = \varepsilon_A$ where $\varepsilon_A$ is a constant and thus $\partial \Pi_A(k)/\partial k = 0$.

\subsection{The inverse and forward cascade}

For a fluid with the variable stratification inversion relation \eqref{eq:theta_inversion} that is forced in the wavenumber interval $[k_1,k_2]$, \cite{gkioulekas_new_2007} prove the following two inequalities for stationary turbulence,
\begin{align}
	\label{eq:gk_1}
	\int_{0}^k \d{m(k')}{k'} \,\Pi_{E}(k') \, \di{k'} < 0, \, \, \text{for all } k>k_2,\\
	\label{eq:gk_2}
	\int_{k}^\infty \d{m(k')}{k'} \, \frac{\Pi_P(k')}{[m(k')]^2} \,\di{k'} > 0, \, \, \text{for all } k<k_1.
\end{align}
These two inequalities do not require the existence of inertial ranges, only that the inversion function $m(k)$ is an increasing function of $k$.
 Therefore, if $\mathrm{d}m(k)/\mathrm{d}k>0$, then there is a net inverse cascade of total energy and a net forward cascade of surface potential enstrophy.

\subsection{When do inertial ranges form?}

The lack of scale invariance along with the presence of two length scales, $k^{-1}$ and $[m(k)]^{-1}$, prevents the use of the \cite{kraichnan_inertial_1967} argument to establish the existence of an inertial range. However, suppose that
in a wavenumber interval, $[k_a,k_b]$, the inversion function takes the power law form
\begin{equation}\label{eq:mk_power}
	m(k) \approx m_\alpha \, k^{\alpha}, 
\end{equation}
where $m_\alpha > 0$ and $\alpha >0$.
Then, in this wavenumber interval, the inversion relation takes the form of the $\alpha$-turbulence inversion relation \citep{pierrehumbert_spectra_1994}, 
\begin{equation}\label{eq:alpha_inv}
		\hat \xi_{\vec k} = - k^{\alpha} \, \hat \psi_{\vec k},
\end{equation}
with $\xi = \theta/m_\alpha$.
The inversion relation \eqref{eq:alpha_inv} is then scale invariant in the wavenumber interval $[k_a,k_b]$. Moreover, $k^{-1}$ is the only available length scale if the turbulence is sufficiently local in wavenumber space. It follows that if the wavenumber interval $[k_a,k_b]$ is sufficiently wide (i.e., $k_a \ll k_b$), then Kraichnan's argument applies to the turbulence over this wavenumber interval and inertial ranges are expected to form. 




\subsection{The Tulloch and Smith (2006) argument}

If we assume the existence of inertial ranges, then we can adapt the cascade argument of \cite{tulloch_theory_2006} to general surface quasigeostrophic fluids to obtain predictions for the cascade spectra.

In the inverse cascade inertial range, we must have $\Pi_E(k)= \varepsilon_E$ where  $\varepsilon_E$ is a constant. Assuming locality in wavenumber space, we have
\begin{equation}\label{eq:epsilon_E}
	\varepsilon_E \sim  \frac{k \, \Esc(k)}{\tau(k)},
\end{equation}
where $\tau(k)$ is a spectrally local timescale\footnote{A spectrally local timescale is appropriate so long as $m(k)$ grows less quickly than $k^2$. Otherwise, a non-local timescale must be used \citep{kraichnan_inertial-range_1971,watanabe_unified_2004}.}. If we further assume that the timescale $\tau(k)$ is determined by the kinetic energy spectrum, $\Ksc(k)$, then dimensional consistency requires 
\begin{equation}\label{eq:turbulent_time}
	\tau(k) \sim \left[k^3 \, \Ksc(k) \right]^{-1/2}.
\end{equation}
Substituting this timescale into equation \eqref{eq:epsilon_E} and using the relationship between the energy spectrum, $\Esc(k)$, and the streamfunction variance spectrum, $\Ssc(k)$, in equations \eqref{eq:variance_relations1} and  \eqref{eq:variance_relations2}, we obtain the total energy spectrum in the inverse cascade inertial range,
\begin{equation}\label{eq:Espec_inv}
	\Esc(k) \sim \varepsilon_E^{2/3} \, k^{-7/3} \, \left[m(k)\right]^{1/3}.
\end{equation}
Analogously, in the forward cascade inertial range, we must have $\Pi_P(k) = \varepsilon_P$ where $\varepsilon_P$ is a constant. 
A similar argument yields the surface potential enstrophy spectrum in the forward cascade inertial range,
\begin{equation}\label{eq:Pspec_forward}
	\Psc(k) \sim \varepsilon_P^{2/3} \, k^{-7/3} \, \left[m(k)\right]^{2/3}.
\end{equation}

The predicted spectra \eqref{eq:Espec_inv} and \eqref{eq:Pspec_forward} are not uniquely determined by dimensional analysis. Rather than assuming that the spectrally local timescale $\tau(k)$ is determined by the kinetic energy spectrum, $\Ksc(k)$, we can assume that $\tau(k)$ is determined by the total energy spectrum, $\Esc(k)$, or the surface potential enstrophy spectrum, $\Psc(k)$.\footnote{These assumptions lead to timescales of  $\tau(k) \sim \left[k^4 \, \Esc(k)\right]^{-1/2}$ and $\tau(k) \sim \left[k^3 \, \Psc(k)\right]^{-1/2}$, respectively.}
Either choice will result in cascade spectra distinct from \eqref{eq:Espec_inv} and \eqref{eq:Pspec_forward}. However, by assuming that the timescale $\tau(k)$ is determined by the kinetic energy spectrum, the resulting cascade spectra agree with the $\alpha$-turbulence predictions of \cite{pierrehumbert_spectra_1994} when the inversion function takes the power law form \eqref{eq:mk_power}.

For later reference, we provide the expressions for the inverse and forward cascade surface kinetic energy spectra.  Using either the inverse cascade spectrum \eqref{eq:Espec_inv} or forward cascade spectrum \eqref{eq:Pspec_forward} along with the relations between the various spectra [equations \eqref{eq:variance_relations1} and \eqref{eq:variance_relations2}], we obtain 
	\begin{equation}\label{eq:inverse_KE}
		\Ksc(k) \sim \varepsilon_E^{2/3} \, k^{-1/3} \, \left[m(k)\right]^{-2/3}
	\end{equation}
	in the inverse cascade and 
	\begin{equation}\label{eq:forward_KE}
			\Ksc(k) \sim \varepsilon_P^{2/3} \, k^{-1/3} \, \left[m(k)\right]^{-4/3}
	\end{equation}
	in the forward cascade. 

Finally, we note that the vorticity spectrum,
	\begin{equation}
		\mathscr{Z}(k) = k^2 \, \mathscr{K}(k),
	\end{equation}
	is an increasing function of $k$ if $m(k)$ is flatter than $k^{5/4}$. In particular, at small scales, we expect $m(k) \sim k$ [section \ref{SS-constant_strat}], implying a vorticity spectrum of $\mathscr{Z}(k) \sim k^{1/3}$. Such an increasing vorticity spectrum implies high Rossby numbers and the breakdown of geostrophic balance at small scales.

\section{Idealized stratification profiles}\label{S-idealized}

In this section we provide analytical solutions for $m(k)$ in the cases of an increasing and decreasing piecewise constant stratification profiles as well as in the case of exponential stratification. These idealized stratification profiles provide intuition for the inversion function's functional form in the case of an arbitrary stratification profile, $\sigma(z)$.

\subsection{Piecewise constant stratification}

	\begin{figure*}
	\centerline{\includegraphics[width=\textwidth]{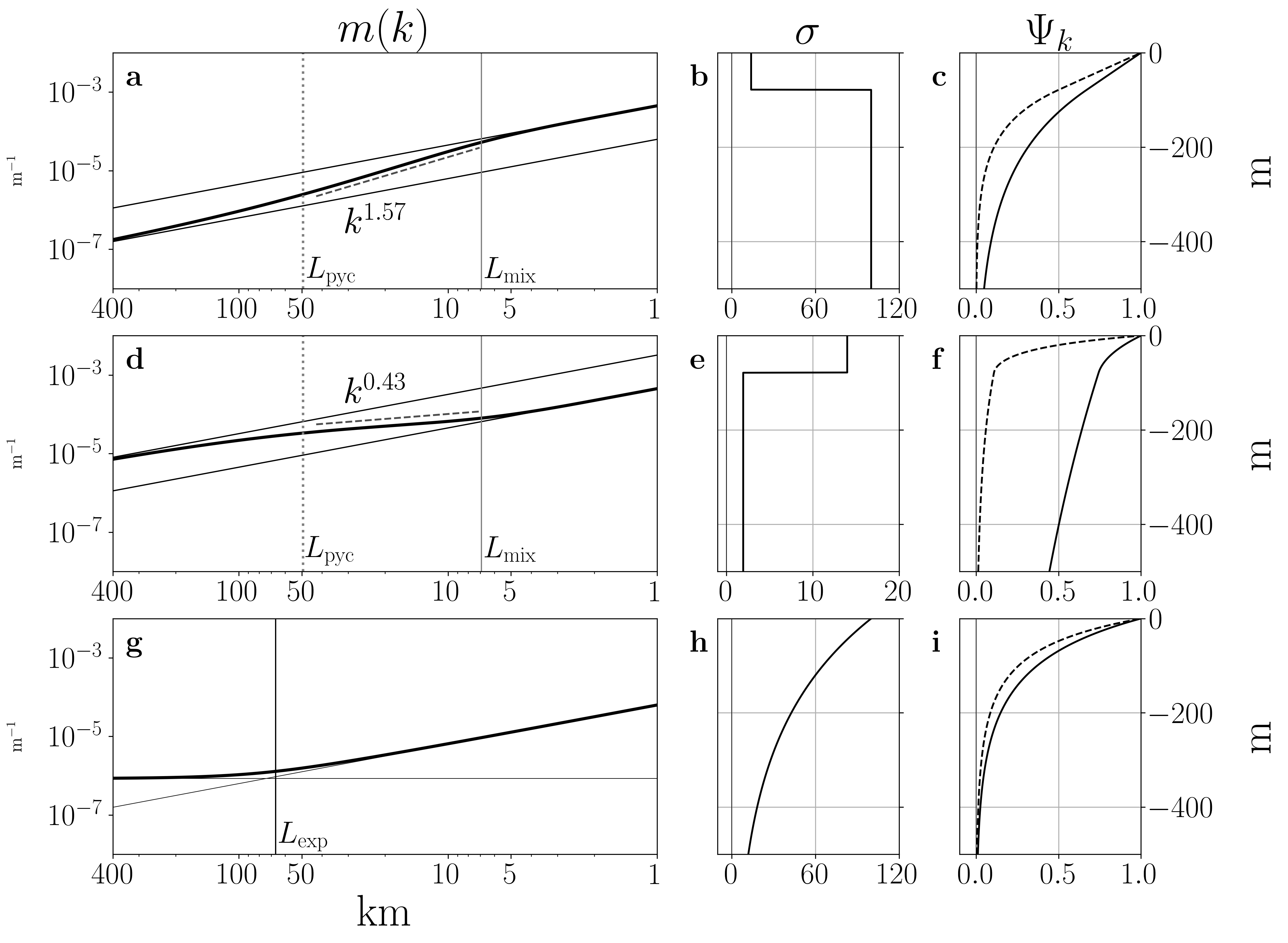}}
	\caption{Log-log plots of the inversion function, $m(k)$ [panels (a), (d), and (g)], for three stratification profiles [panels (b), (e), and (h)] and the resulting streamfunctions at the two horizontal length scales of 50 km (dashed) and 100 km (solid) [for panels (c) and (i)] or 2 km and 10 km [panel (f)]. In the first two inversion function plots [panels (a) and (d)], the thin solid diagonal lines represent the two linear asymptotic states of $k/\sigma_0$ and $k/\sigma_\mathrm{pyc}$. The vertical solid line is the mixed-layer length scale $L_\mathrm{mix}$, given by equation \eqref{eq:mixed_length}, whereas the vertical dotted line is the pycnocline length scale $L_\mathrm{pyc}$, given by equation \eqref{eq:therm_length}. The power $\alpha$, where $m(k)/k^{\alpha} \approx \mathrm{constant}$, is computed by fitting a straight line to the log-log plot of $m(k)$ between $2\pi/L_\mathrm{mix}$ and $2\pi/L_\mathrm{pyc}$. This straight line is shown as a grey dashed line in panels (a) and (d). In panel (g), the thin  diagonal line is the linear small-scale limit, $m(k)\approx k/\sigma_0$, whereas the thin horizontal line is the constant large-scale limit, $m(k) = 2/(\sigma_0^2 \, h_\mathrm{exp})$. Finally, the solid vertical lines in panel (g) indicate the horizontal length scale $L_\mathrm{exp} = 2\pi/k_\mathrm{exp}$ [equation \eqref{eq:kexp}] induced by the exponential stratification. Further details on the stratification profiles are in the text.}
	\label{F-mk_constlinconst}
  	\end{figure*}

	Consider the piecewise constant stratification profile, given by
		\begin{equation}\label{eq:step_strat}
		\sigma(z) = 
		\begin{cases}
			\sigma_0  \, &\text{for } -h < z \leq 0 \\
			\sigma_\mathrm{pyc} \, &\text{for } \infty < z \leq -h.
		\end{cases}
	\end{equation}
	This stratification profile consists of an upper layer of thickness $h$ with constant stratification $\sigma_0$ overlying an infinitely deep layer with constant stratification $\sigma_\mathrm{pyc}$.  If $\sigma_0 < \sigma_\mathrm{pyc}$, then this stratification profile is an idealization of a weakly stratified mixed-layer overlying an ocean of stronger stratification. 
     See panels (b) and (e) in figure \ref{F-mk_constlinconst} for an illustration.  
	
	
	For this stratification profile, an analytical solution is possible, with the solution provided in appendix B. The resulting inversion function is
	\begin{equation}
		m(k) = 
		\frac{k}{\sigma_0} \left[ \frac{\cosh\left(\sigma_0 h k\right) + \left(\frac{\sigma_\mathrm{pyc}}{\sigma_0}\right) \sinh\left(\sigma_0  h k\right) }{\sinh\left(\sigma_0  hk\right) + \left(\frac{\sigma_\mathrm{pyc}}{\sigma_0}\right) \cosh\left(\sigma_0  h k\right) } \right].
	\end{equation}
	At small horizontal scales, $2\pi/k \ll L_\mathrm{mix}$, where 
	\begin{equation}\label{eq:mixed_length}
		L_\mathrm{mix} = 2\,\pi \, \sigma_0 \, h,
	\end{equation}
	the inversion function takes the form $m(k) \approx k/\sigma_0$, as expected from the uniformly stratified theory \citep{held_surface_1995}. At large horizontal scales, $2\pi/k \gg L_\mathrm{pyc}$, where 
	\begin{equation}\label{eq:therm_length}
	 L_\mathrm{pyc} =  2 \, \pi 
	 \begin{cases}
	 		  \, \sigma_\mathrm{pyc} \, h  \, &\text{if } \sigma_0 \leq \sigma_\mathrm{pyc} \\
	 		\sigma_\mathrm{0}^2 \, h /\sigma_\mathrm{pyc}  \, &\text{if } \sigma_0 > \sigma_\mathrm{pyc},
	 \end{cases}
	\end{equation}
	then the inversion function takes the form $m(k) \approx k/\sigma_\mathrm{pyc}$, because at large horizontal scales, the ocean will seem to have constant stratification $\sigma_\mathrm{pyc}$.
	
	The functional form of the inversion function at horizontal scales between $L_\mathrm{mix}$ and $L_\mathrm{pyc}$ depends on whether $\sigma(z)$ is an increasing or decreasing function. If $\sigma(z)$ is a decreasing function, with $\sigma_0 < \sigma_\mathrm{pyc}$, then we obtain a mixed-layer like stratification profile and the inversion function steepens to a super linear wavenumber dependence at these scales. An example is shown in figure \ref{F-mk_constlinconst}(a)-(b). Here, the stratification abruptly jumps from a value of $\sigma_0 \approx 14$ to $\sigma_\mathrm{pyc} = 100$ at $z\approx-79$ m. Consequently, the inversion function takes the form $m(k) \sim k^{1.57}$ between $2\pi/L_\mathrm{pyc}$ and $2\pi/L_\mathrm{mix}$.
	In contrast, if $\sigma_0 > \sigma_\mathrm{pyc}$ then the inversion function flattens to a sublinear wavenumber dependence for horizontal scales between $L_\mathrm{mix}$ and $L_\mathrm{pyc}$. An example is shown in figure \ref{F-mk_constlinconst}(d)-(e), where the stratification abruptly jumps from $\sigma_0 \approx 14$ to $\sigma_\mathrm{pyc} \approx 2$ at $z\approx-79$ m. In this case, the inversion function has a sublinear wavenumber dependence, $m(k) \sim k^{0.43}$, between $2\pi/L_\mathrm{pyc}$ and $2\pi/L_\mathrm{mix}$.
		
	\begin{figure*}
	\centerline{\includegraphics[width=\textwidth]{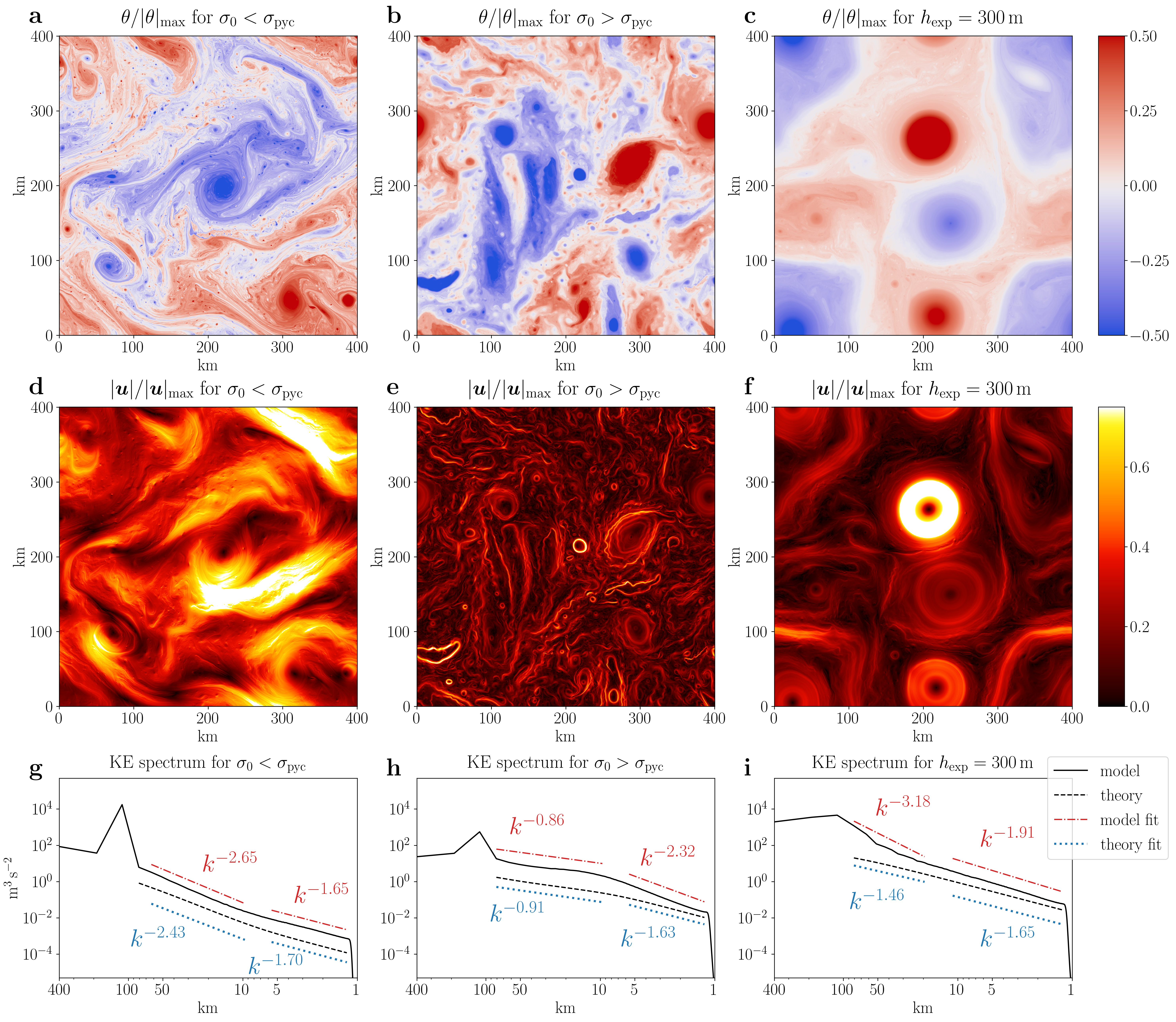}}
	\caption{Results of three pseudo-spectral simulations, forced at approximately 100 km, with $1024^2$ horizontal grid points. See appendix C for a description of the numerical model. The first simulation [panels (a), (d), and (g)] corresponds to the stratification profile and inversion function shown in figure \ref{F-mk_constlinconst}(a)-(b), the second simulations [panels (b), (e), and (h)] corresponds to the stratification profile and inversion function shown in figure \ref{F-mk_constlinconst}(d)-(e), and the third simulation corresponds to the stratification profile and inversion function shown in figure \ref{F-mk_constlinconst}(g)-(h). Plots (a), (b), and (c) are snapshots of the surface potential vorticity, $\theta$, normalized by its maximum value in the snapshot. Plots (d), (e), and (f) are snapshots of the horizontal speed $\abs{\vec u }$ normalized by its maximum value in the snapshot. Plots (g), (h), and (i) show the model kinetic energy spectrum (solid black line) along with the prediction given by equation \eqref{eq:forward_KE} (dashed black line). We also provide linear fits to the model kinetic energy spectrum (dash-dotted red line) and to the predicted spectrum (dotted blue line).}
	\label{F-model_runs_mixed}
	\end{figure*}

	By fitting a power law, $k^\alpha$, to the inversion function, we do not mean to imply that $m(k)$ indeed takes the form of a power law. Instead, the purpose of obtaining the estimated power $\alpha$ is to apply the intuition gained from $\alpha$-turbulence \citep{pierrehumbert_spectra_1994,smith_turbulent_2002,sukhatme_local_2009,burgess_kraichnanleithbatchelor_2015} to surface quasigeostrophic turbulence. In $\alpha$-turbulence, an active scalar $\xi$, defined by the power law inversion relation \eqref{eq:alpha_inv}, is materially conserved in the absence of forcing and dissipation [that is, $\xi$ satisfies the time-evolution equation \eqref{eq:theta_equation} with $\theta$ replaced by $\xi$]. The scalar $\xi$ can be thought of as a generalized vorticity; if $\alpha=2$ we recover the vorticity of two-dimensional barotropic model. If $\alpha=1$, $\xi$ becomes proportional to surface buoyancy in the uniformly stratified surface quasigeostrophic model. To discern how $\alpha$ modifies the dynamics, we consider a point vortex $\xi \sim \delta(r)$, where $r$ is the horizontal distance from the vortex and $\delta(r)$ is the Dirac delta. If $\alpha=2$, we obtain  $\psi(r) \sim \log(r)/2\pi$; otherwise, if $0 < \alpha < 2$, we obtain $\psi(r) \sim - C_\alpha/r^{2-\alpha}$ where $C_\alpha>0$ \citep{iwayama_greens_2010}. 
	Therefore, larger $\alpha$ leads to vortices with a longer interaction range whereas smaller $\alpha$ leads to a shorter interaction range. 
	
	More generally, $\alpha$ controls the spatial locality of the resulting turbulence. In two-dimensional turbulence ($\alpha = 2$), vortices induce flows reaching far from the vortex core and the combined contributions of distant vortices dominates the local fluid velocity. These flows are characterized by thin filamentary $\xi$-structures due to the dominance of large scale strain \citep{watanabe_unified_2004}.
		As we decrease $\alpha$, the turbulence becomes more spatially local, the dominance of large-scale strain weakens, and a secondary instability becomes possible in which filaments roll-up into small vortices; the resulting turbulence is distinguished by vortices spanning a wide range of horizontal scales, as in uniform stratification surface quasigeostrophic turbulence \citep{pierrehumbert_spectra_1994, held_surface_1995}. As $\alpha$ is decreased further the $\xi$ field becomes spatially diffuse because the induced velocity, which now has small spatial scales, is more effective at mixing small-scale inhomogeneities in $\xi$ \citep{sukhatme_local_2009}.	  		
	
	These expectations are confirmed in the simulations shown in figure \ref{F-model_runs_mixed}.
	The simulations are set in a doubly periodic square with side length 400 km and are forced at a horizontal scale of 100 km. Large-scale dissipation is achieved through a linear surface buoyancy damping whereas an exponential filter is applied at small scales. In the case of a mixed-layer like stratification, with $\sigma_0 < \sigma_\mathrm{pyc}$, the $\theta$-field exhibits thin filamentary structures (characteristic of the $\alpha=2$ case) as well as vortices spanning a wide range of horizontal scales (characteristic of the $\alpha=1$ case). In contrast, although the $\sigma_0 > \sigma_\mathrm{pyc}$ exhibits vortices spanning a wide range of scales, no large scale filaments are evident. Instead, we see that the surface potential vorticity is spatially diffuse. These contrasting features are consequences of the induced horizontal velocity field. The mixed-layer like case has a velocity field dominated by large-scale strain, which is effective at producing thin filamentary structures. In contrast the velocity field in the $\sigma_0> \sigma_\mathrm{pyc}$ case consists of narrow meandering currents, which are effective at mixing away small-scale inhomogeneities.

	Both the predicted [equation \eqref{eq:forward_KE}] and diagnosed surface kinetic energy spectra are plotted in figure \ref{F-model_runs_mixed}. In the $\sigma_0 > \sigma_\mathrm{pyc}$ case, the predicted and diagnosed spectrum are close, although the diagnosed spectrum is steeper at large scales \citep[a too steep spectrum is also observed in the $\alpha=1$ and $\alpha=2$ cases, see][]{schorghofer_energy_2000}. In the $\sigma_0 < \sigma_\mathrm{pyc}$ case, the large-scale spectrum agrees with the predicted spectrum. However, at smaller scales, the model spectrum is significantly steeper.
	
	\begin{figure*}
	\centerline{\includegraphics[width=\textwidth]{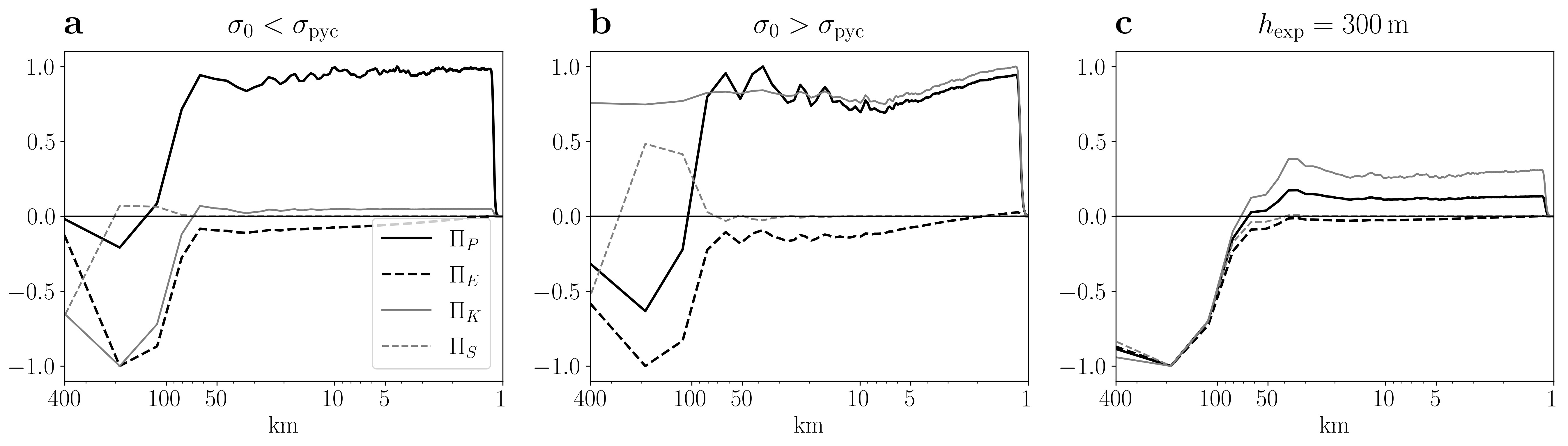}}
	\caption{Transfers of spectral densities, $\Pi_A$, for $A=E,P, K,S$ [see equation \eqref{eq:spectral_transfer}] normalized by their absolute maximum for the three simulations in figure \ref{F-model_runs_mixed}. }
	\label{F-model_runs_transfer}
	\end{figure*}
	
	The derivation of the predicted spectra in section \ref{S-turbulence} assumed the existence of an inertial range, which in this case means $\Pi_P(k)=$ constant. To verify whether this assumption holds, we show in figure \ref{F-model_runs_transfer} the transfer of the spectral densities $\Esc(k),\, \Psc(k),\, \Ksc(k)$ and $\Ssc(k)$. 
	In the mixed-layer like case, with $\sigma_0 < \sigma_\mathrm{pyc}$, an approximate inertial range forms with some deviations at larger scales. However, in the $\sigma_0 > \sigma_\mathrm{pyc}$ case, $\Pi_P$ is an increasing function at small scales, which indicates that the spectral density of surface potential enstrophy, $\Psc(k)$, is diverging at these scales. That is, at small scales, there is a depletion of $\Psc(k)$ and this depletion is causing the steepening of the kinetic energy spectrum at small-scales in figure \ref{F-model_runs_mixed}.
		
	\subsection{An exponentially stratified ocean}\label{SS-exp}
	
	Now consider the exponential stratification profile
	\begin{equation}\label{eq:sigma_exp}
		\sigma = \sigma_0 \, \mathrm{e}^{z/h_\mathrm{exp}}.
	\end{equation}
	Substituting the stratification profile \eqref{eq:sigma_exp} into the vertical structure equation \eqref{eq:Psi_equation} with boundary conditions \eqref{eq:Psi_upper} and \eqref{eq:Psi_lower} yields the vertical structure
	\begin{equation}\label{eq:Psi_exp}
		\Psi_k(z) = \mathrm{e}^{z/h_\mathrm{exp}} \, \frac{ I_1\left(\mathrm{e}^{z/h_\mathrm{exp}} \sigma_0 \, h_\mathrm{exp} \, k \right) }{I_1\left(\sigma_0 \, h_\mathrm{exp} \, k \right)},
	\end{equation}
	where $I_n(z)$ is the modified Bessel function of the first kind of order $n$.
	
	To obtain the inversion function, we substitute the vertical structure \eqref{eq:Psi_exp} into the definition of the inversion function \eqref{eq:mk} to obtain
	\begin{equation}\label{eq:mk_exp}
		m(k) = \frac{1}{\sigma_0^2 h_\mathrm{exp}} \, + \,   \frac{k}{2\sigma_0} \left[  \frac{ I_0\left( \sigma_0 h_\mathrm{exp}  k \right) }{I_1\left(\sigma_0  h_\mathrm{exp} k \right)} + \frac{ I_2\left( \sigma_0  h_\mathrm{exp}  k \right) }{I_1\left(\sigma_0  h_\mathrm{exp} k \right)}  \right].
	\end{equation}
	In the small-scale limit $k\gg 1/\left(\sigma_0\,h_\mathrm{exp}\right)$, the inversion function becomes $m(k) \approx k/\sigma_0$ as in constant stratification surface quasigeostrophic theory. In contrast, the large-scale limit $k\ll  1/\left(\sigma_0\,h_\mathrm{exp}\right)$ gives
	\begin{equation}\label{eq:mk_exp_large}
		m(k) \approx \frac{h_\mathrm{exp}}{4} \left( k_\mathrm{exp}^2 + k^2 \right),
	\end{equation}
	where $k_\mathrm{exp}$ is given by
	\begin{equation}\label{eq:kexp}
		k_\mathrm{exp} = \frac{2\, \sqrt{2}}{ \sigma_0 \, h_\mathrm{exp}}.
	\end{equation}
	As $k/k_\mathrm{exp} \rightarrow 0$, the inversion function asymptotes to a constant value and the vertical structure becomes independent of the horizontal scale $2\pi/k$, with $\Psi_k \rightarrow \Psi_0$ where
	\begin{equation}
	    \Psi_0(z) = \mathrm{e}^{2z/h_\mathrm{exp}}.
	\end{equation}
	Further increasing the horizontal scale no longer modifies $\Psi_k(z)$ and so vertical structure is arrested at $\Psi_0$.
 	
 	An example with $h_\mathrm{exp}=300$ m and $\sigma_0 = 100$ is shown in figure \ref{F-mk_constlinconst}(g)-(i). At horizontal scales smaller than $L_\mathrm{exp}=2\pi/k_\mathrm{exp}$, the inversion function rapidly transitions to the linear small-scale limit of $m(k)\approx k/\sigma_0$. In contrast, at horizontal scales larger than $L_\mathrm{exp}$, the large-scale approximation \eqref{eq:mk_exp_large} holds, and at sufficiently large horizontal scales, the inversion function asymptotes to constant value of $m(k) = h_\mathrm{exp}\,k_\mathrm{exp}^2/4$.
	
	The inversion relation implied by the inversion function \eqref{eq:mk_exp_large} is
	\begin{equation}\label{eq:theta_shallow}
		\hat \theta_{\vec k} \approx - \frac{h_\mathrm{exp}}{4} \left( k_\mathrm{exp}^2 + k^2 \right) \hat \psi_{\vec k},
	\end{equation}
	which is isomorphic to the inversion relation in the equivalent barotropic model \citep{larichev_weakly_1991}, with $k_\mathrm{exp}$ assuming the role of the deformation wavenumber. 
	Using the relations between the various spectra [equations \eqref{eq:variance_relations1} and \eqref{eq:variance_relations2}] with an inversion function of the form $m(k)\approx m_0 +m_1 k^2$, we obtain $\Esc(k) \approx m_0 \, \Ssc(k) + m_1 \, \Ksc(k) $ and $\Psc(k) \approx m_0^2 \, \Ssc(k) + 2\, m_0 \, m_1 \Ksc(k)$; solving for $\Ssc(k)$ and $\Ksc(k)$ then yields
\begin{equation}
	\Ssc(k) \approx \frac{2 \, m_0 \Esc(k) - \Psc(k)}{m_0^2},
\end{equation}
and
\begin{equation}
	\Ksc(k) \approx  \frac{\Psc(k) - m_0 \, \Esc(k)}{m_0\,m_1}.
\end{equation}
The inverse cascade of total energy then implies an inverse cascade of surface streamfunction variance, $S$; conversely, the forward cascade of surface potential enstrophy implies a forward cascade of surface kinetic energy, $K$. Moreover, using an argument analogous to that in \cite{larichev_weakly_1991}, we find that 
\begin{equation}
	\Ssc(k) \sim k^{-11/3}
\end{equation}
 in the inverse cascade inertial range whereas 
 \begin{equation}\label{eq:equiv_KE}
 	 \Ksc(k) \sim k^{-3}
 \end{equation}
 in the forward cascade inertial range.
	
	The implied dynamics are extremely local; a point vortex, $\theta(r) \sim \delta(r)$, leads to an exponentially decaying streamfunction, $\psi(r) \sim \exp(-k_\mathrm{exp}r)/\sqrt{k_\mathrm{exp}r}$ \citep{polvani_two-layer_1989}. Therefore, as for the $\sigma_0>\sigma_\mathrm{pyc}$ case above, we expect a spatially diffuse surface potential vorticity field and no large-scale strain. However, unlike the $\sigma_0>\sigma_\mathrm{pyc}$ case, the presence of a distinguished length scale, $L_\mathrm{exp}$, leads to the emergence of plateaus of homogenized surface potential vorticity surrounded by kinetic energy ribbons \citep{arbic_coherent_2003}. Both of these features can be seen in figure \ref{F-model_runs_mixed}.
	
	 The $k^{-3}$ surface kinetic energy spectrum \eqref{eq:equiv_KE} is only expected to hold at horizontal scales larger than $\sigma_0 \, h_\mathrm{exp}$; at smaller scales we should recover the $k^{-5/3}$ spectrum expected from uniformly stratified surface quasiogeostrophic theory.  Figure \ref{F-model_runs_mixed}(i) shows that there is indeed a steepening of the kinetic energy spectrum at horizontal scales larger than 20 km, although the model spectrum is somewhat steeper than the predicted $k^{-3}$. Similarly, although the spectrum flattens at smaller scales, the small-scale spectrum is also slightly steeper than the predicted $k^{-5/3}$.
	
	We can also examine the spectral transfer functions of $\Psc(k)$ and $\Ksc(k)$. At large-scales, we expect an inertial range in surface kinetic energy, so $\Pi_K(k) = $ constant, whereas at small scales, we expect an inertial range in surface potential enstrophy, so $\Pi_P(k)= $ constant. However, figure \ref{F-model_runs_transfer}(c) shows that although both $\Pi_K(k)$ and $\Pi_P(k)$ become approximately flat at small scales, we observe significant deviations at larger scales.
	
	\subsection{More general stratification profiles}
	
	These three idealized cases provide intuition for how the inversion function behaves for an arbitrary stratification profile, $\sigma(z)$. Generally, if $\sigma(z)$ is decreasing over some depth, then the inversion function will steepen to a super linear wavenumber dependence over a range of horizontal wavenumber whose vertical structure function significantly impinges on these depths. A larger difference in stratification between these depths leads to a steeper inversion function.
	 Analogously, if $\sigma(z)$ is increasing over some depth, then the inversion function will flatten to a sublinear wavenumber dependence, with a larger difference in stratification leading to a flatter inversion function.  
	Finally, if $\sigma(z)$ is much smaller at depth than near the surface, the inversion function will flatten to become approximately constant, and we recover an equivalent barotopic like regime, similar to the exponentially stratified example.
		
	\section{Application to the ECCOv4 ocean state estimate}\label{S-ECCO}
	
	We now show that, over the mid-latitude North Atlantic, the inversion function is seasonal at horizontal scales between 1-100 km, transitioning from $m(k) \sim k^{3/2}$ in winter to $m(k)\sim k^{1/2}$ in summer.	To compute the inversion function $m(k)$, we obtain the stratification profile $\sigma(z)=N(z)/f$ at each location from the Estimating the Circulation and Climate of the Ocean version 4 release 4 \citep[ECCOv4,][]{forget_ecco_2015} state estimate. We then compute $\Psi_k(z)$ using the vertical structure equation \eqref{eq:Psi_equation} and then use the definition of the inversion function \eqref{eq:mk} to obtain $m(k)$ at each wavenumber $k$.
		
	\subsection{The three horizontal length-scales}
	
	In addition to $L_\mathrm{mix}$ and $L_\mathrm{pyc}$ [defined in equations \eqref{eq:mixed_length} and \eqref{eq:therm_length}], we introduce the  horizontal length scale, $L_H$, the full-depth horizontal scale, defined by
	\begin{equation}\label{eq:LH}
		L_H = 2 \, \pi \, \sigma_\mathrm{ave} \, H,
	\end{equation}
	where $\sigma_\mathrm{ave}$ is the vertical average of $\sigma$ and $H$ is the local ocean depth. The bottom boundary condition becomes important to the dynamics at horizontal scales larger than $\approx L_H$.
	
	\begin{figure*}
		\centerline{\includegraphics[width=\textwidth]{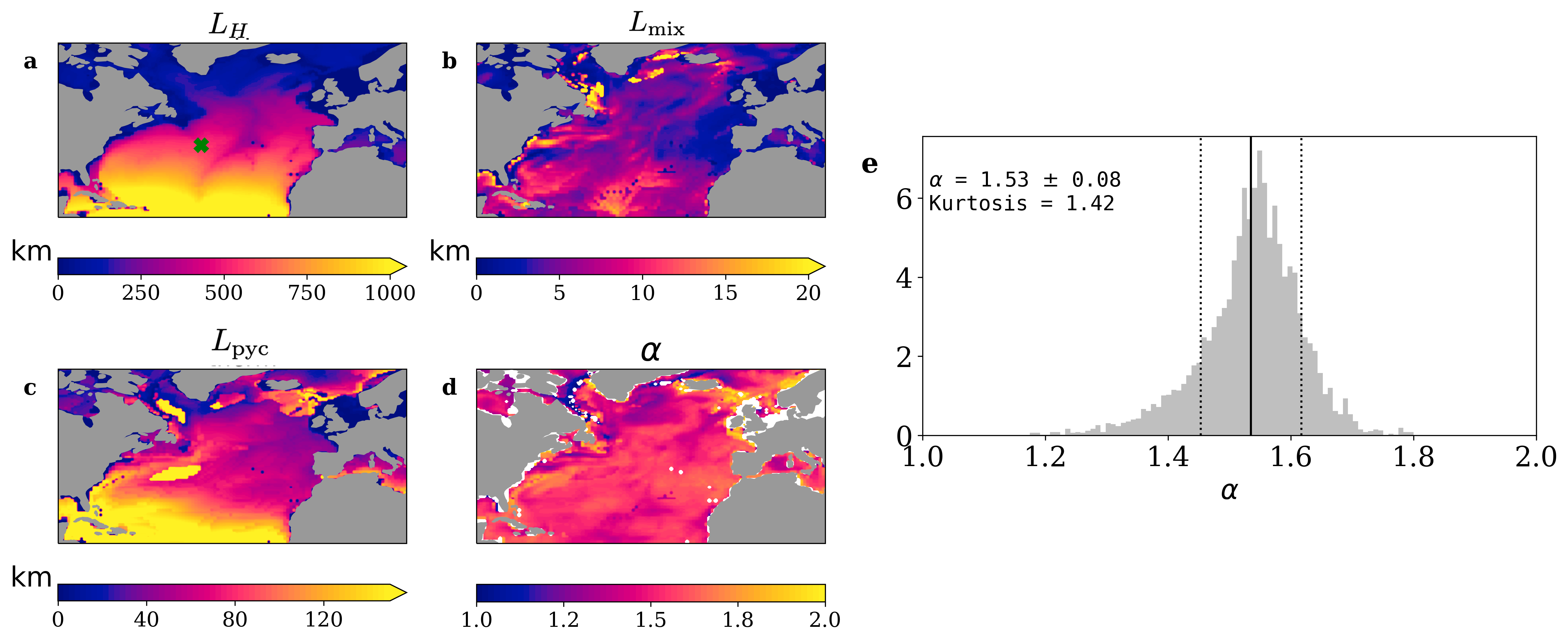}}
		\caption{Panels (a), (b), and (c) show the horizontal length scales $L_H$, $L_\textrm{mix}$, and $L_\textrm{pyc}$ as computed from 2017 January mean ECCOv4 stratification profiles, $\sigma(z) = N(z)/f$, over the North Atlantic. The green `x' in panel (a) shows the location chosen for the inversion functions of figure \ref{F-mk_realistic_small} and the model simulations of figure \ref{F-model_runs_ecco}. Panel (d) shows $\alpha$, defined by $m(k)/k^{\alpha} \approx \mathrm{constant}$, over the North Atlantic. We compute $\alpha$ by fitting a straight line to a log-log plot of $m(k)$ between $2\pi/L_\mathrm{mix}$ and $2\pi/L_\mathrm{pyc}$. Panel (e) is a histogram of the computed values of $\alpha$ over the North Atlantic. We exclude from this histogram grid cells with $L_H < 150$ km; these are primarily continental shelves and high-latitude regions. In these excluded regions, our chosen bottom boundary condition \eqref{eq:no-slip} may be influencing the computed value of $\alpha$.}
		\label{F-alpha_obs}
	\end{figure*}
	
	\begin{figure*}
		\centerline{\includegraphics[width=\textwidth]{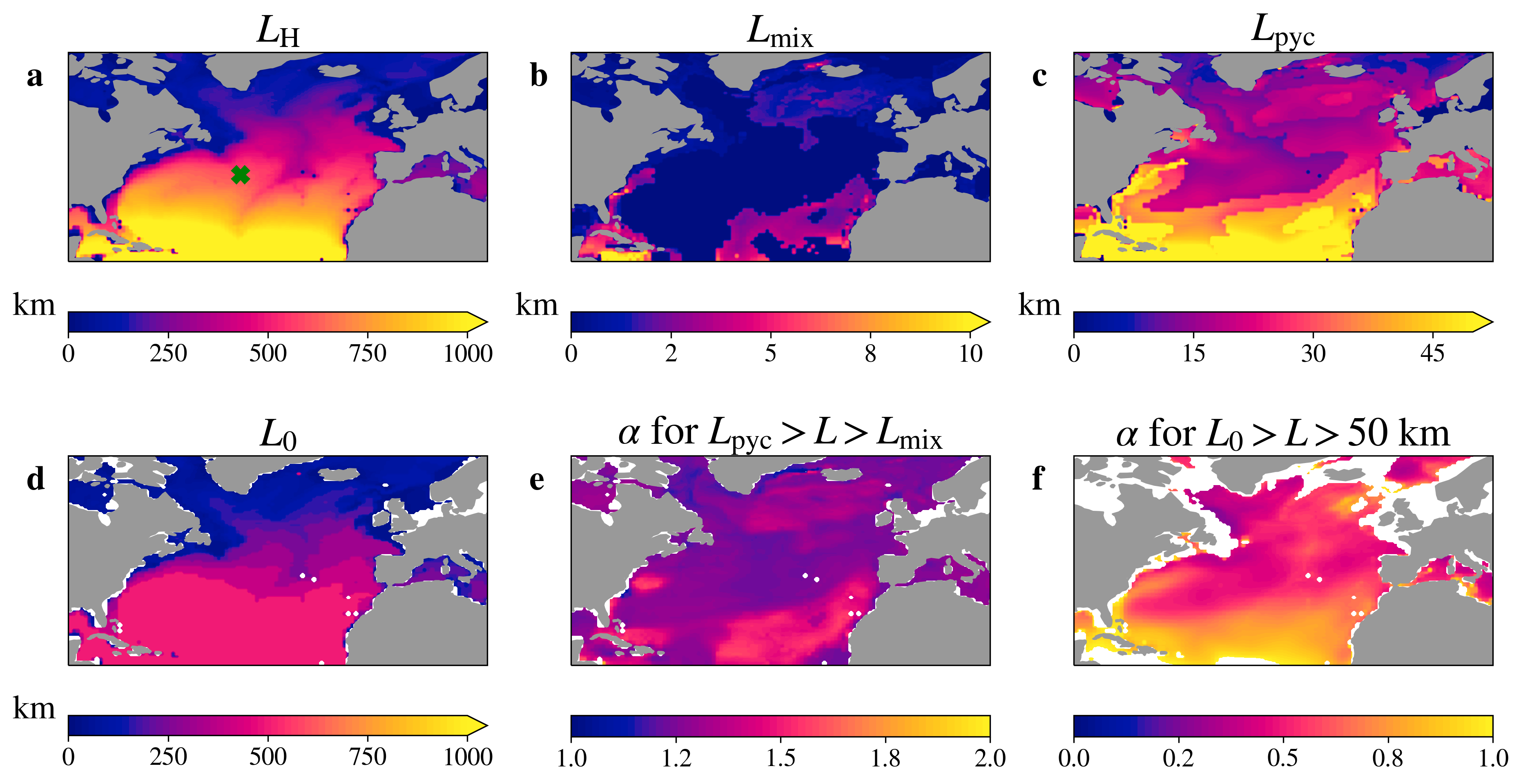}}
		\caption{Panels (a), (b), (c), and (e) are as in figure \ref{F-alpha_obs}(a)-(d), but computed from 2017 July mean stratification profiles. The calculation of $L_0$ in panel (d) is explained in the text. In panel (f), we show $\alpha$ but measured between $2\pi/(50 \, \mathrm{km})$ and $2\pi/L_0$. }
		\label{F-NorthAtlanticJul}
	\end{figure*} 
	
	We compute all three length scales using ECCOv4 stratification profiles over the North Atlantic, with results displayed in figures \ref{F-alpha_obs}(a)-(c) and \ref{F-NorthAtlanticJul}(a)-(c) for January and July, respectively. To compute the mixed-layer horizontal scale, $L_\mathrm{mix} = 2 \, \pi \sigma_0 \, h_\mathrm{mix}$, we set $\sigma_0$ equal to the stratification at the uppermost grid cell. The mixed-layer depth, $h_\mathrm{mix}$, is then defined as follows. We first define the pycnocline stratification $\sigma_\mathrm{pyc}$ to be the maximum of $\sigma(z)$. The mixed-layer depth $h_\mathrm{mix}$ is then the depth at which $\sigma(-h_\mathrm{mix})= \sigma_0 + \left(\sigma_\mathrm{pyc}-\sigma_0\right)/4$. Finally, the pycnocline horizontal scale, $L_\mathrm{pyc}$, is computed as $L_\mathrm{pyc}=2\, \pi \, \sigma_\mathrm{pyc} \, h_\mathrm{pyc}$, where $h_\mathrm{pyc}$ is the depth of the stratification maximum $\sigma_\mathrm{pyc}$. 

	Figures \ref{F-alpha_obs}(a) and \ref{F-NorthAtlanticJul}(a) show that $L_H$ is not seasonal, with typical mid-latitude open ocean values between $400-700$ km. On continental shelves, as well as high-latitudes, $L_H$ decreases to values smaller than $200$ km. As we approach the equator, the full-depth horizontal scale $L_H$ becomes large due to the smallness of the Coriolis parameter.
	
	Constant stratification surface quasigeostrophic theory is only valid at horizontal scales smaller than $L_\mathrm{mix}$. Figure \ref{F-alpha_obs}(b) shows that the wintertime $L_\mathrm{mix}$ is spatially variable with values ranging between $1-15$ km. In contrast, figure \ref{F-NorthAtlanticJul}(b) shows that the summertime $L_\mathrm{mix}$ is less than 2 km over most of the midlatitude North Atlantic.
	
	Finally, we expect to observe a superlinear inversion function for horizontal scales between $L_\mathrm{mix}$ and $L_\mathrm{pyc}$. The latter, $L_\mathrm{pyc}$, is shown in figures \ref{F-alpha_obs}(c) and \ref{F-NorthAtlanticJul}(c). Typical mid-latitude values range between $70-110$ km in winter but decrease to $15-30$ km in summer.
		
	\subsection{The inversion function at a single location}
	Before computing the form of the inversion function over the North Atlantic, we focus on a single location. However, we must first address what boundary conditions to use in solving the vertical structure equation $\eqref{eq:Psi_equation}$ for $\Psi_k(z)$. We cannot use the infinite bottom boundary condition \eqref{eq:Psi_lower} because the ocean has a finite depth. However,  given that figures \ref{F-alpha_obs}(a) and \ref{F-NorthAtlanticJul}(a) show that the bottom boundary condition should not effect the inversion function at horizontal scales smaller than 400 km in the mid-latitude open ocean (in the North Atlantic), we choose to use the no-slip bottom boundary condition
	\begin{align}\label{eq:no-slip}
		\Psi_k(-H) = 0.
	\end{align}
	The alternate free-slip boundary condition
	\begin{align}\label{eq:free-slip}
		\d{\Psi_k(-H)}{z} = 0
	\end{align}
	gives qualitatively identical results for horizontal scales smaller than 400 km, which are the scales of interest in this study [see appendix A for the large-scale limit of $m(k)$ under these boundary conditions]\footnote{ The no-slip boundary condition \eqref{eq:no-slip} is appropriate over strong bottom friction \citep{arbic_baroclinically_2004} or steep topography \citep{lacasce_prevalence_2017} whereas the free-slip boundary condition \eqref{eq:free-slip} is appropriate over a flat bottom.}.
	
	\begin{figure*}
	\centerline{\includegraphics[width=\textwidth]{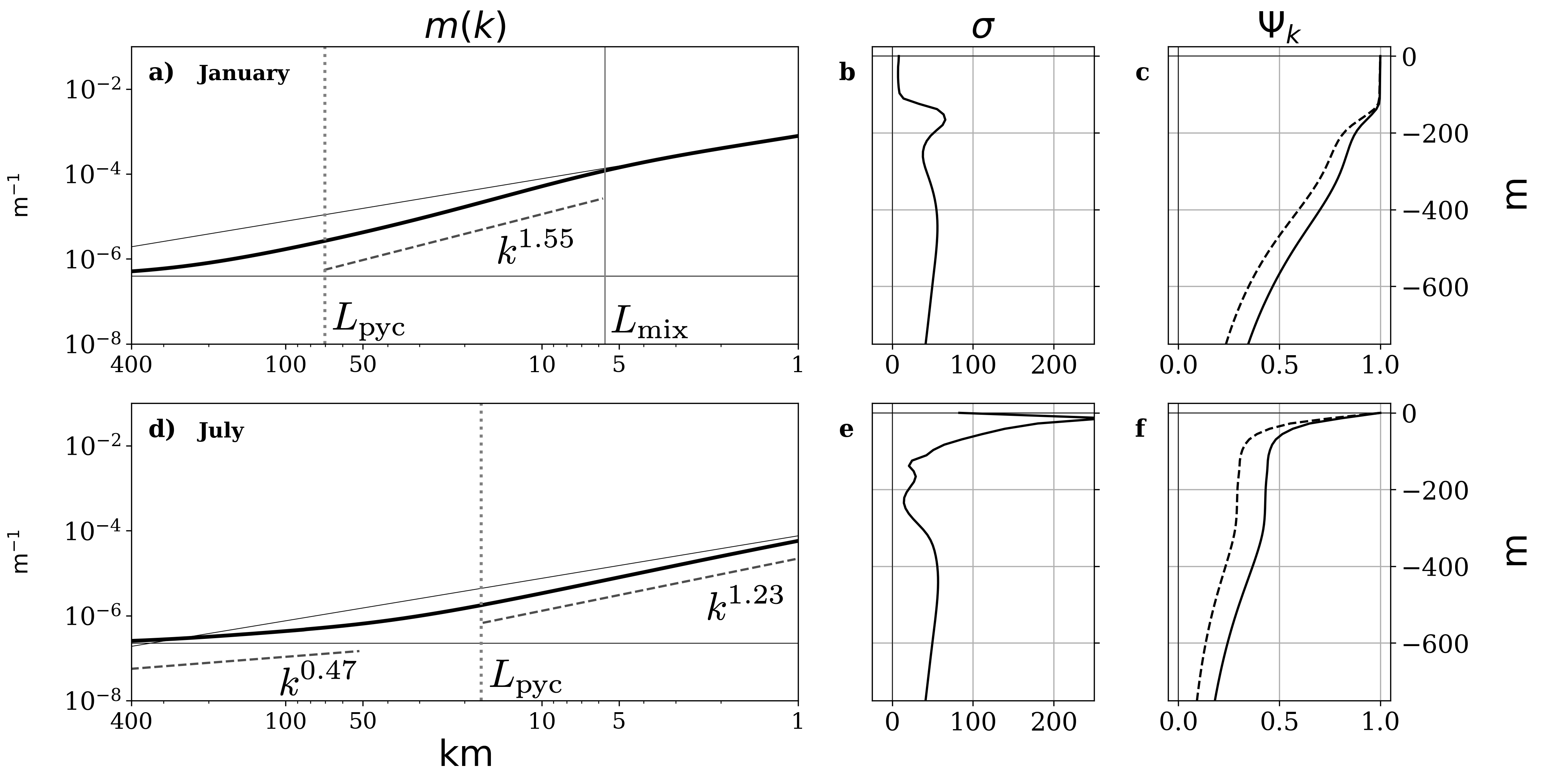}}
	\caption{As in figure \ref{F-mk_constlinconst} but for the mid-latitude North Atlantic location  ($38^\circ$ N, $45^\circ$ W)  in January [(a)-(c)] and July [(d)-(f)]. This location is marked by a green `x' in figure \ref{F-alpha_obs}(a). Only the upper 750 m of the stratification profiles and vertical structures are shown in panels (b), (c), (e) and (f).}
	\label{F-mk_realistic_small}
  	\end{figure*}

  	\begin{figure*}
	\centerline{\includegraphics[width=\textwidth]{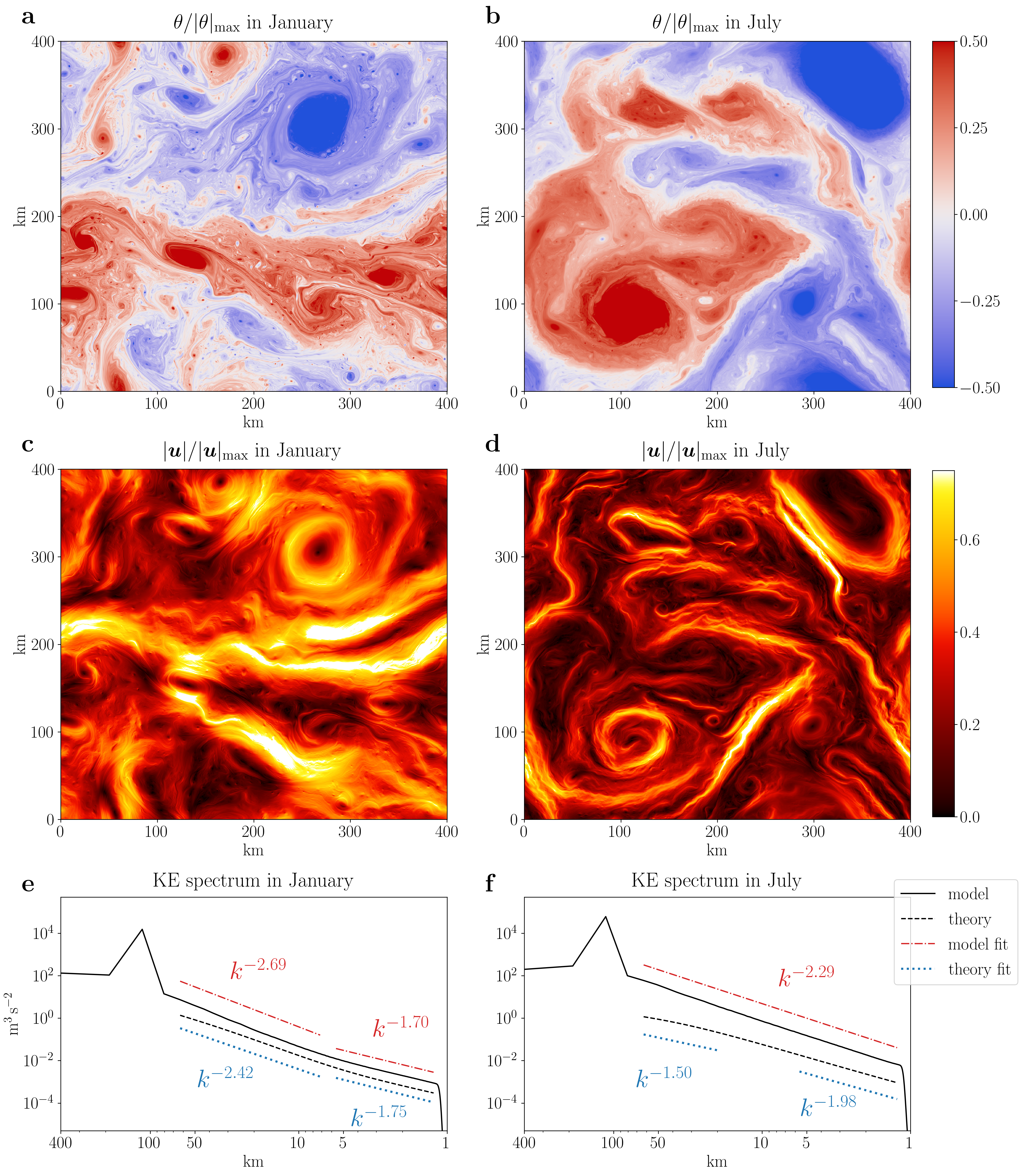}}
	\caption{Two pseudo-spectral simulations differing only in the chosen stratification profile $\sigma(z) = N(z)/f$. Both simulations use a monthly averaged 2017 stratification at the mid-latitude North Atlantic location ($38^\circ$ N,$45^\circ$ W) [see the green 'x' in figure \ref{F-alpha_obs}a] in January [(a), (c), (e)] and July [(b), (d), (f)]. The stratification profiles are obtained from the Estimating the Circulation and Climate of the Ocean version 4 release 4 \citep[ECCOv4,][]{forget_ecco_2015} state estimate. Otherwise as in figure \ref{F-model_runs_mixed}.}
	\label{F-model_runs_ecco}
 	\end{figure*}
	
	Figure \ref{F-mk_realistic_small} shows the computed inversion function in the mid-latitude North Atlantic at ($38^\circ$ N, $45^\circ$ W)  [see the green `x' in figure \ref{F-alpha_obs}(a)]. In winter, at horizontal scales smaller than $L_\mathrm{mix} \approx 5$ km, we recover the linear $m(k) \approx k/\sigma_0$ expected from constant stratification surface quasigeostrophic theory. However, for horizontal scales between  $L_\mathrm{mix} \approx 5$ km and $L_\mathrm{pyc} \approx 70$ km, the inversion function, $m(k)$, becomes as steep as a $k^{3/2}$ power law. Figure \ref{F-model_runs_ecco} shows a snapshot of the surface potential vorticity and the geostrophic velocity from a surface quasigeostrophic model using the wintertime inversion function. The surface potential vorticity snapshot is similar to the idealized mixed-layer snapshot of figure \ref{F-model_runs_mixed}(a), which is also characterized by $\alpha \approx 3/2$ (but at horizontal scales between 7-50 km). Both simulations exhibit a preponderance of small-scale vortices as well as thin filaments of surface potential vorticity. As expected, the kinetic energy spectrum [figure \ref{F-model_runs_ecco}(e)] transitions from an $\alpha\approx3/2$ regime to an $\alpha = 1$ regime near $L_\mathrm{mix} = 5$ km. Moreover, as shown in figure \ref{F-model_runs_real_transfer}, an approximate inertial range is evident between the forcing and dissipation scales.	
	
	\begin{figure*}
	\centerline{\includegraphics[width=27pc]{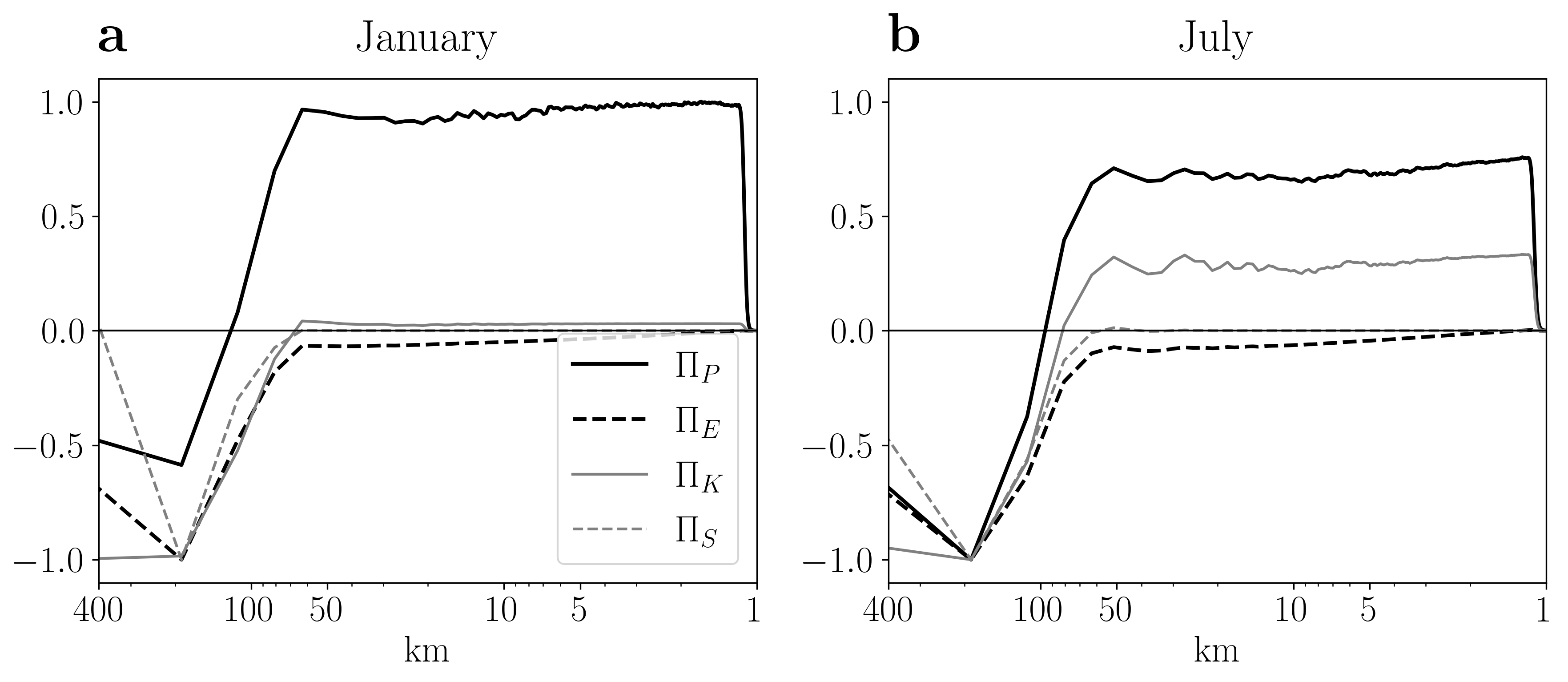}}
	\caption{Transfers of spectral densities, $\Pi_A$, for $A=E,P,K,S$ [see equation \eqref{eq:spectral_transfer}] normalized by their absolute maximum for the two simulations in figure \ref{F-model_runs_ecco}.}
	\label{F-model_runs_real_transfer}
	\end{figure*}

	In summer, the mixed-layer horizontal scale, $L_\mathrm{mix}$, becomes smaller than 1 km and the pycnocline horizontal scale, $L_\mathrm{pyc}$, decreases to 20 km. We therefore obtain a super linear regime, with $m(k)$ as steep as $k^{1.2}$, but only for horizontal scales between 1-20 km. Thus, although there is a range of wavenumbers for which $m(k)$ steepens to a super linear wavenumber dependence in summer, this range of wavenumbers is narrow, only found at small horizontal scales, and the steepening is much less pronounced than in winter. 
	At horizontal scales larger than $L_\mathrm{pyc}$, the summertime inversion function flattens, with the $m(k)$ increasing like a $k^{1/2}$ power law between 50-400 km. This flattening is due to the largely decaying nature of ocean stratification below the stratification maximum.

	As expected from a simulation with a sublinear inversion function at large scales, the surface potential vorticity appears spatially diffuse [figure \ref{F-model_runs_ecco}(d)] and comparable to the $\sigma_0 > \sigma_\mathrm{pyc}$ and the exponential simulations [figure \ref{F-model_runs_mixed}(b)-(c)]. However, despite having a sublinear inversion function, the July simulations is dynamically more similar to the exponential simulation rather than the  $\sigma_0 > \sigma_\mathrm{pyc}$ simulation. The July simulation displays approximately homogenized regions of surface potential vorticity surrounded by surface kinetic energy ribbons, as well as the steeper surface kinetic energy spectrum associated with these features. As a result, the surface kinetic energy spectrum does not follow the predicted spectrum \eqref{eq:forward_KE}.

	\subsection{The inversion function over the North Atlantic}
	
	We now present power law approximations to the inversion function $m(k)$ over the North Atlantic in winter and summer. In winter, we obtain the power $\alpha$, where $m(k)/k^\alpha \approx \mathrm{constant}$, by fitting a straight line to $m(k)$ on a log-log plot between $2\pi/L_\mathrm{mix}$ and $2\pi/L_\mathrm{pyc}$. A value of $\alpha =1$ is expected for constant stratification surface quasigeostrophic theory. A value of $\alpha = 2$ leads to an inversion relation similar to two-dimensional barotropic dynamics. 
	 However, in general, we emphasize that $\alpha$ is simply a crude measure of how quickly $m(k)$ is increasing; we do not mean to imply that $m(k)$ in fact takes the form of a power law. Nevertheless, the power $\alpha$ is useful because, as $\alpha$-turbulence suggests (and the simulations in section \ref{S-idealized} confirm), the rate of increase of the inversion function measures the spatial locality of the resulting flow. 

	\begin{figure}
	\centerline{\includegraphics[width=19pc]{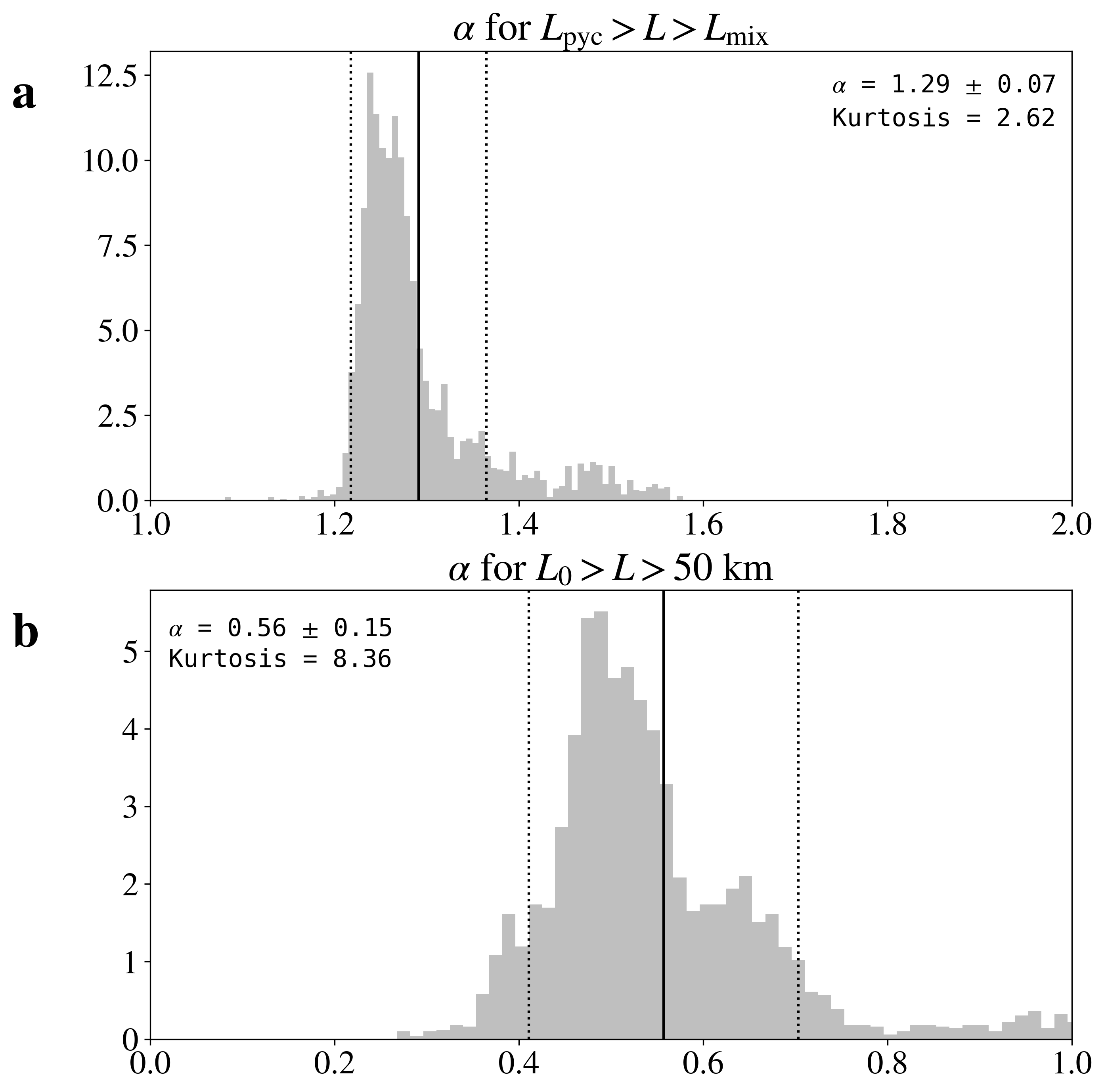}}
	\caption{Panel (a) is as in figure \ref{F-alpha_obs}(e), but with the additional restriction that $L_H < 750$ km to filter out the non-seasonal equatorial region. In panel (b), we instead plot $\alpha$ as obtained by fitting a straight line to a log-log plot of $m(k)$ between $2\pi/(50 \, \mathrm{km})$ and $2\pi/L_0$ with the same restrictions as in panel (a).}
	\label{F-alpha_dist_jul}
	\end{figure}
	
	Figure \ref{F-alpha_obs}(d) shows that we generally have $\alpha \approx 3/2$ in the wintertime open ocean. Deviations appear at high-latitudes (e.g., the Labrador sea and southeast of Greenland) and on  continental shelves where we find regions of low $\alpha$. However, both of these regions have small values of $L_H$ so that our chosen no-slip bottom boundary condition \eqref{eq:no-slip} may be influencing the computed $\alpha$ there. 
	
	A histogram of the computed values of $\alpha$ [figure \ref{F-alpha_obs}(e)] confirms that $\alpha \approx 1.53 \pm 0.08$ in the wintertime mid-latitude open ocean. This histogram only includes grid cells with $L_H >150$ km, which ensures that the no-slip bottom boundary condition \eqref{eq:no-slip} is not influencing the computed distribution.
	
	An inversion function of $m(k) \sim k^{3/2}$ implies a surface kinetic energy spectrum of $k^{-4/3}$ upscale of small-scale forcing [equation \eqref{eq:inverse_KE}] and a spectrum of $k^{-7/3}$ downscale of large-scale forcing [equation \eqref{eq:forward_KE}].  
	As we expect wintertime surface buoyancy anomalies to be forced both by large-scale baroclinic instability and by small-scale mixed-layer baroclinic instability, the realized surface kinetic energy spectrum should be between $k^{-4/3}$ and $k^{-7/3}$. Such a prediction is consistent with the finding that North Atlantic geostrophic surface velocities are mainly due to surface buoyancy anomalies \citep{lapeyre_what_2009,gonzalez-haro_global_2014}  and observational evidence of a $k^{-2}$ wintertime spectrum \citep{callies_seasonality_2015}. 
	
	 The universality of the $m(k) \sim k^{3/2}$ regime over the mid-latitudes is expected because it arises from a mechanism universally present over the mid-latitude ocean in winter; namely, the deepening of the mixed-layer. However, a comment is required on why this regime also appears at low latitudes where we do not observe deep wintertime mixed-layers. At low latitudes, the $m(k) \sim k^{3/2}$ regime emerges because there is a large scale separation between $L_\mathrm{mix}$ and $L_\mathrm{pyc}$. The smallness of the low latitude Coriolis parameter $f$ cancels out the shallowness of the low latitude mixed-layer depth resulting in values of $L_\mathrm{mix}$ comparable to the remainder of the mid-latitude North Atlantic, as seen in figure \ref{F-alpha_obs}(b). However, no similar cancellation occurs for $L_\mathrm{pyc}$ which reaches values of $\approx 500$ km due to the smallness of the Coriolis parameter $f$ at low latitudes. As a consequence, there is a non-seasonal $m(k)\sim k^{3/2}$ regime at low latitudes for horizontal scales between $10-500$ km.
	 
	 The analogous summertime results are presented in figure \ref{F-NorthAtlanticJul}(e) and figure \ref{F-alpha_dist_jul}(a). Near the equator, we obtain values close to $\alpha \approx 3/2$, as expected from the weak seasonality there. In contrast, the midlatitudes generally display $\alpha \approx 1.2-1.3$ but this superlinear regime is only present at horizontal scales smaller than  $L_\mathrm{pyc} \approx 20-30$ km. Figure \ref{F-alpha_dist_jul}(a) shows a histogram of the measured $\alpha$ values but with the additional restriction that $L_H < 750$ km to filter out the near equatorial region (where $\alpha \approx 3/2$). 
	 	 
	The summertime inversion function shown in figure \ref{F-mk_realistic_small}(d) suggests that the inversion function flattens at horizontal scales larger than 50 km, with $m(k)$ increasing like a $k^{1/2}$ power law.
	  We now generalize this calculation to the summertime midlatitude North Atlantic by fitting a straight line to $m(k)$ on a log-log plot between $2\pi/(50 \, \mathrm{km})$ and $2\pi/L_0$ where $L_0$ is defined by
	 \begin{equation}
	 	m\left(\frac{2\,\pi}{L_0} \right) = m_0 = \left[ \int_{-H}^0  \sigma^2(s) \di s \right] ^{-1}
	 \end{equation}
	 and $m_0$ is defined by the second equality. In this case, we solve for $m(k)$ using the free-slip boundary condition \eqref{eq:free-slip}. We made this choice because $m(k)$ must cross $m_0$ in the large-scale limit if we apply the free-slip boundary condition \eqref{eq:free-slip}. In contrast, $m(k)$ asymptotes to $m_0$ from above if we apply the no-slip boundary condition \eqref{eq:no-slip}. See appendix A for more details. In any case, if we use the free-slip boundary condition \eqref{eq:free-slip}, then  $L_0$ is a horizontal length scale at which the flattening of $m(k)$ ceases and $m(k)$ instead begins to steepen in order to attain the required $H\, k^2$ dependence at large horizontal scales [see equation \eqref{eq:mk_free}]. Over the mid-latitudes North Atlantic, $L_0$ has typical values of 200-500 km [figure \ref{F-NorthAtlanticJul}(d)].
	 
	 When $\alpha$ is measured between 50 km and $L_0$, we find typical midlatitude values close to $\alpha \approx 1/2$ [figure \ref{F-NorthAtlanticJul}(f)]. A histogram of these $\alpha$ values is provided in figure \ref{F-alpha_dist_jul}(b), where we only consider grid cells satisfying $L_H > 150$ km and $L_H < 750$ km (the latter condition filters out near equatorial grid cells). The distribution is broad with a mean of $\alpha = 0.56 \pm 0.15$ and a long tail of $\alpha > 0.8$ values. Therefore, $m(k)$ flattens considerably in response to the decaying nature of summertime upper ocean stratification. It is not clear, however, whether the resulting dynamics will be similar to the $\sigma_0 > \sigma_\mathrm{pyc}$ case or the exponentially stratified case in section \ref{S-idealized}. As we have seen, the summertime simulation (in figure \ref{F-model_runs_ecco}) displayed characteristics closer to the idealized exponential case than the $\sigma_0 > \sigma_\mathrm{pyc}$ case. Nevertheless, the low summertime values of $\alpha$ indicate that buoyancy anomalies generate shorter range velocity fields in summer than in winter.
	 
	 \cite{isern-fontanet_transfer_2014} and \cite{gonzalez-haro_ocean_2020} measured the inversion function empirically, through equation \eqref{eq:transfer_func}, and found that the inversion function asymptotes to a constant at large horizontal scales (270 km near the western coast of Australia and 100 km in the Mediterranean Sea). They suggested this flattening is due to the dominance of the interior quasigeostrophic solution at large scales \citep[a consequence of equation 29 in][]{lapeyre_dynamics_2006}. 
	 We instead suggest this flattening is intrinsic to surface quasigeostrophy. In our calculation the inversion function does not become constant at horizontal scales smaller than 400 km. However, if the appropriate bottom boundary condition is the no-slip boundary condition \eqref{eq:no-slip}, then the inversion asymptotes to a constant value at horizontal scales larger than $L_H$ (appendix A).

	 \section{Discussion and conclusion}\label{S-dicussion}
	 
	 As reviewed in the introduction, surface geostrophic velocities over the Gulf Stream, the Kuroshio, and the Southern Ocean are primarily induced by  surface buoyancy anomalies in winter \citep{lapeyre_what_2009, isern-fontanet_diagnosis_2014, gonzalez-haro_global_2014, qiu_reconstructability_2016, miracca-lage_can_2022}.
	  However, the kinetic energy spectra found in observations and numerical models are too steep to be consistent with uniformly stratified surface quasigeostrophic theory \citep{blumen_uniform_1978,callies_interpreting_2013}.
	  By generalizing surface quasigeostrophic theory to account for variable stratification, we have shown that surface buoyancy anomalies can generate a  variety of dynamical regimes depending on the stratification's vertical structure. 
	  Buoyancy anomalies generate longer range velocity fields over decreasing stratification [$\sigma'(z)\leq0$] and shorter range velocity fields over increasing stratification  [$\sigma'(z)\geq0$]. As a result, the surface kinetic energy spectrum is steeper over decreasing stratification than over increasing stratification. An exception occurs if there is a large difference between the surface stratification and the deep ocean stratification (as in the exponential stratified example of section \ref{S-idealized}). In this case, we find regions of approximately homogenized surface buoyancy surrounded by kinetic energy ribbons \citep[similar to][]{arbic_coherent_2003} and this spatial reorganization of the flow results in a steep kinetic energy spectrum. By applying the variable stratification theory to the wintertime North Atlantic and assuming that mixed-layer instability acts as a narrowband small-scale surface buoyancy forcing, we find that the theory predicts a surface kinetic energy spectrum between $k^{-4/3}$ and $k^{-7/3}$, which is consistent with the observed wintertime $k^{-2}$ spectrum \citep{sasaki_impact_2014,callies_seasonality_2015,vergara_revised_2019}. There remains the problem that mixed-layer instability may not be localized at a certain horizontal scale but is forcing the surface flow at a wide range of scales \citep{khatri_role_2021}. In this case we suggest that the main consequence of this broadband forcing is again to flatten the $k^{-7/3}$ spectrum.	  
	  
	  Over the summertime North Atlantic, buoyancy anomalies generate a more local velocity field and the surface kinetic energy spectrum is flatter than in winter. This contradicts the $k^{-3}$ spectrum found in observations and numerical models \citep{sasaki_impact_2014,callies_seasonality_2015}. However, observations also suggest that the surface geostrophic velocity is no longer dominated by the surface buoyancy induced contribution, suggesting the importance of interior potential vorticity for the summertime surface velocity \citep{gonzalez-haro_global_2014,miracca-lage_can_2022}. As such, the surface kinetic energy predictions of the present model, which neglects interior potential vorticity, are not valid over the summertime North Atlantic.
	  
	
	The situation in the North Pacific is broadly similar to that in the North Atlantic. In the Southern Ocean, however, the weak depth-averaged stratification leads to values of $L_H$ close to 150-200 km. As such, the bottom boundary becomes important at smaller horizontal scales than in the North Atlantic. Regardless of whether the appropriate bottom boundary condition is no-slip \eqref{eq:no-slip} or free-slip \eqref{eq:free-slip}, in both cases, the resulting inversion function implies a steepening to a $k^{-3}$ surface kinetic energy spectrum (appendix A). The importance of the bottom boundary in the Southern Ocean may explain the observed steepness of the surface kinetic energy spectra [between $k^{-2.5}$ to $k^{-3}$ \citep{vergara_revised_2019}] even though the surface geostrophic velocity seems to be largely due to surface buoyancy anomalies throughout the year \citep{gonzalez-haro_global_2014}.
	
	The claims made in this chapter can be explicitly tested in a realistic high-resolution ocean model; this can be done by finding regions where the surface streamfunction as reconstructed from sea surface height is highly correlated to the surface streamfunction as reconstructed from sea surface buoyancy \citep[or temperature, as in][]{gonzalez-haro_global_2014}. Then, in regions where both streamfunctions are highly correlated, the theory predicts that the inversion function, as computed from the stratification [equation \eqref{eq:mk}], should be identical to the inversion function computed through the surface streamfunction and buoyancy fields [equations \eqref{eq:transfer_func} and \eqref{eq:transfer_inversion}]. Moreover, in these regions, the model surface kinetic energy spectrum must be between the inverse cascade and forward cascade kinetic energy spectra [equations \eqref{eq:inverse_KE} and \eqref{eq:forward_KE}].
		
	 Finally the vertical structure equation \eqref{eq:Psi_equation} along with the inversion relation \eqref{eq:theta_fourier} between $\thetak$ and $\psik$ suggest the possibility of measuring the buoyancy frequency's vertical structure, $N(z)$, using satellites observations. This approach, however, is limited to regions where the surface geostrophic velocity is largely due to surface buoyancy anomalies. By combining satellite measurements of sea surface temperature and sea surface height, we can use the inversion relation \eqref{eq:theta_fourier} to solve for the inversion function. Then we obtain $N(z)$ by solving the inverse problem for the vertical structure equation \eqref{eq:Psi_equation}. How practical this approach is to measuring the buoyancy frequency's vertical structure remains to be seen.
	 			 	 	 
\begin{subappendices}

\section{The small- and large-scale limits}\label{A-small_large}
	   	 
	\subsection{The small-scale limit}
	Let $h$ be a characteristic vertical length scale associated with $\sigma(z)$ near $z=0$. Then, in the small-scale limit, $k\, \sigma_0 \, h \gg 1$, the infinite bottom boundary condition \eqref{eq:Psi_lower} is appropriate.
	With the substitution
	\begin{align}
		\Psi(z) = \sigma(z) \, P(z),
	\end{align}
	we transform the vertical structure equation \eqref{eq:Psi_equation} into a Schrödinger equation
	\begin{equation}\label{eq:app-P_equation}
		\dd{P}{z} = \left[- \frac{1}{\sigma}\dd{\sigma}{z} + 2 \left(\frac{1}{\sigma}\d{\sigma}{z}\right)^2 + k^2 \, \sigma^2 \right] P,
	\end{equation}
	with a lower boundary condition
	\begin{equation}
		\sigma \, P \rightarrow 0 \quad \text{ as } \quad z\rightarrow -\infty.
	\end{equation} 
	In the limit $k\, \sigma_0 \, h \gg 1$, the solution to the Schrödinger equation equation \eqref{eq:app-P_equation} is given by
	 \begin{equation}
	  	\Psi_k(z) \approx \sqrt{\frac{\sigma(z)}{\sigma_0}} \, \exp\left({k\, \int_0^z \sigma(s) \di{s}}\right).
	  \end{equation}
	  On substituting $\Psi_k(z)$ into the definition of the inversion function \eqref{eq:mk}, we obtain $m(k) \approx k/\sigma_0$ to leading order in $(k\sigma_0 h)^{-1}$. Therefore, the inversion relation in the small-scale limit coincides with the familiar inversion relation of constant stratification surface quasigeostrophic theory \citep{blumen_uniform_1978,held_surface_1995}.
	  
	 \subsection{The large-scale free-slip limit}
	Let $k_H= 2\pi/L_H$, where the horizontal length scale $L_H$ is defined in equation \eqref{eq:LH}. Then, in the large-scale limit, $k/k_H\ll 1$, we assume a solution of the form
	\begin{equation}\label{eq:Psi_series_large}
		\Psi_k(z) = \Psi_k^{(0)}(z) + \left(\frac{k}{ k_H}\right)^2 \, \Psi_k^{(1)}(z) + \cdots.
	\end{equation}
	Substituting the series expansion \eqref{eq:Psi_series_large} into the vertical structure equation \eqref{eq:Psi_equation} and applying the free-slip bottom boundary condition \eqref{eq:free-slip} yields
	\begin{equation}\label{eq:Psi_free_sol}
		\Psi_k(z) \approx A \left[1 + k^2 \int_{-H}^z \sigma^2(s) \, \left(s + H\right) \di\,s + \cdots \right],
	\end{equation}
	where $A$ is a constant determined by the upper boundary condition \eqref{eq:Psi_upper}.
	To leading order in $k/k_H$, the large-scale vertical structure is independent of depth.
	
	Substituting the solution \eqref{eq:Psi_free_sol} into the definition of the inversion function \eqref{eq:mk} gives
	\begin{equation}\label{eq:mk_free}
		m(k) \approx H \, k^2.
	\end{equation}
	Therefore, over a free-slip bottom boundary, the large-scale dynamics resemble two-dimensional vorticity dynamics, generalizing the result of \cite{tulloch_theory_2006} to arbitrary stratification $\sigma(z)$.

	 \subsection{The large-scale no-slip limit}

	 Substituting the expansion \eqref{eq:Psi_series_large} into the vertical structure equation \eqref{eq:Psi_equation} and applying the no-slip lower boundary condition \eqref{eq:no-slip} yields
	\begin{align}\label{eq:Psi_no_sol}
		\Psi_k (z) \approx B \Bigg[ \int_{-H}^z \, \sigma^2(s) \, \di \, s \, + k^2 \, \int_{-H}^z \sigma^2(s_3) \int_{-H}^{s_3} \int_{-H}^{s_2} \sigma^2(s_1) \, \di s_1 \, \di s_2 \, \di s_3 \Bigg],
	\end{align}
	where $B$ is a constant determined by the upper boundary condition \eqref{eq:Psi_upper}. Substituting the solution \eqref{eq:Psi_no_sol} into the definition of the inversion function \eqref{eq:mk} gives
	\begin{equation}\label{eq:mk_no}
		m(k) \approx m_1 \left(k_\sigma^2 +  k^2 \right),
	\end{equation}
	where $k_\sigma = \sqrt{m_0/m_1}$ is analogous to the deformation wavenumber, the constant $m_0$ is given by
	\begin{equation}
		m_0 = \left[ \int_{-H}^0  \sigma^2(s) \di s \right] ^{-1}.
	\end{equation}
	and $m_1$ is some constant determined by integrals of $\sigma(z)$. If $\sigma(z)$ is positive then both $m_0$ and $m_1$ are also positive. Therefore, over a no-slip bottom boundary, the large-scale dynamics resemble those of the equivalent barotropic model.

\section{Inversion function for piecewise constant stratification}

We seek a solution to the vertical structure equation \eqref{eq:Psi_equation} for the piecewise constant stratification \eqref{eq:step_strat} with upper boundary condition \eqref{eq:Psi_upper} and the infinite lower boundary condition \eqref{eq:Psi_lower}. The solution has the form 
\begin{equation}\label{eq:Psi_sharp}
	\Psi_k(z) = \cosh\left(\sigma_0 \, k\,z\right) + a_2 \sinh\left(\sigma_0 \, k\, z\right),
\end{equation}
for $-h \leq z \leq 0$, and
\begin{equation}
	\Psi_k(z) = a_3\,  e^{\sigma_\mathrm{pyc} k(z+h)},
\end{equation}
for $-\infty < z < -h$. 
To determine $a_2$ and $a_3$, we require $\Psi_k(z)$ to be continuous across $z=-h$ and that its derivative satisfy
\begin{equation}
	\frac{1}{\sigma_0^2} \, \d{\Psi_k(-h^+)}{z} = \frac{1}{\sigma_\mathrm{pyc}^2} \, \d{\Psi_k(-h^-)}{z},
\end{equation}
where the $-$ and $+$ superscripts indicate limits from the below and above respectively. Solving for $a_2$ and substituting equation \eqref{eq:Psi_sharp} into the definition of the inversion function \eqref{eq:mk} then yields $m(k)$.

\section{The numerical model}

We solve the time-evolution equation \eqref{eq:theta_equation} using the pseudo-spectral \texttt{pyqg} model \citep{abernathey_pyqgpyqg_2019}. To take stratification into account, we use the inversion relation \eqref{eq:theta_inversion}. Given a stratification profile $\sigma(z)$ from ECCOv4, we first interpolate the ECCOv4 stratification profile with a cubic spline onto a vertical grid with 350 vertical grid points. We then numerically solve the vertical structure equation \eqref{eq:Psi_equation}, along with boundary conditions \eqref{eq:Psi_upper} and either \eqref{eq:no-slip} or \eqref{eq:free-slip}, and obtain the vertical structure at each wavevector $\vec k$. Using the definition of the inversion function \eqref{eq:mk} then gives $m(k)$.

 We apply a large-scale forcing, $F$, between the (non-dimensional) wavenumbers $3.5<k<4.5$ in all our simulations, corresponding to horizontal length scales 88 - 114 km. Otherwise, the forcing $F$ is as in \cite{smith_turbulent_2002}. The dissipation term can be written as
 \begin{equation}
     D = r_d \, \theta + \mathrm{ssd}
 \end{equation}
 where $r_d$ is a damping rate and $\mathrm{ssd}$ is small-scale dissipation. Small-scale dissipation is through an exponential surface potential enstrophy filter as in \cite{arbic_coherent_2003}. 

\end{subappendices}



\chapter{The Buoyancy Staircase Limit in Surface Quasigeostrophic Turbulence}\label{Ch-jets}

\begin{abstractchapter}
	Surface buoyancy gradients over a quasigeostrophic fluid permit the existence of surface-trapped Rossby waves.
The interplay of these Rossby waves with surface quasigeostrophic turbulence results in latitudinally inhomogeneous mixing that, under certain conditions, culminates in a surface buoyancy staircase: a meridional buoyancy profile consisting of mixed-zones punctuated by sharp buoyancy gradients, with eastward jets centred at the sharp gradients and weaker westward flows in between. 
In this article, we investigate the emergence of this buoyancy staircase limit in surface quasigeostrophic turbulence and we examine the dependence of the resulting dynamics on the vertical stratification.
Over decreasing stratification [$\mathrm{d}N(z)/\mathrm{d}z\leq0$, where $N(z)$ is the buoyancy frequency], we obtain flows with a longer interaction range (than in uniform stratification) and highly dispersive Rossby waves. In the staircase limit, we find straight jets that are perturbed by eastward propagating along jet waves, similar to two-dimensional barotropic $\beta$-plane turbulence. In contrast, over increasing stratification [$\mathrm{d}N(z)/\mathrm{d}z\geq0$], we obtain flows with shorter interaction range and weakly dispersive Rossby waves. In the staircase limit, we find sinuous jets with large latitudinal meanders whose shape evolves in time due to the westward propagation of weakly dispersive along jet waves.
These along jet waves have larger amplitudes over increasing stratification than over decreasing stratification, and, as a result, the ratio of domain-averaged zonal to meridional speeds is two to three times smaller over increasing stratification than over decreasing stratification. Finally, we find that, for a given Rhines wavenumber, jets over increasing stratification are closer together than jets over decreasing stratification.
\end{abstractchapter}

\section{Introduction}

Perturbations to a barotropic (i.e., depth-independent) fluid with a background potential vorticity gradient, $\beta>0$, propagate westward as Rossby waves.
In a turbulent flow, the non-linear interplay between Rossby waves and turbulence results in the latitudinally inhomogeneous mixing of potential vorticity, which, through a positive dynamical feedback, spontaneously reorganizes the flow into one characterized by eastward jets  \citep{dritschel_multiple_2008}.
 The ultimate limit of such inhomogeneous mixing, which can be achieved for sufficiently large values of $\beta$, is a potential vorticity staircase: a piecewise constant potential vorticity profile consisting well-mixed regions separated by isolated discontinuities, with eastward jets centred at the  discontinuities and westward flows in between \citep{danilov_scaling_2004,dunkerton_barotropic_2008,scott_structure_2012,scott_zonal_2019}. 

Analogously, a buoyancy gradient at the surface of a quasigeostrophic fluid supports the existence of surface-trapped Rossby waves that are less dispersive than their barotropic counterparts \citep{held_surface_1995,lapeyre_surface_2017}. The purpose of this chapter is to investigate the formation of zonal jets in the presence of a background surface buoyancy gradient and to examine the realizability of surface buoyancy staircases in the surface quasigeostrophic model.
 Although the present study is the first to systematically investigate the emergence of surface quasigeostrophic jets, there are previous studies which make use of the uniformly stratified surface quasigeostrophic model with a background buoyancy gradient. 
 These include \cite{smith_turbulent_2002}, who derive the dependence of the diffusion coefficient of a passive tracer in the presence a background buoyancy gradient. 
 Another is \cite{sukhatme_local_2009}, who, in their investigation of $\alpha$-turbulence models with a background gradient, note that, because of the decreased interaction range, surface quasigeostrophic jets in uniform stratification should be narrower than their counterparts in the barotropic model. 
 Finally, \cite{lapeyre_surface_2017} demonstrates that jets can indeed form in the uniformly stratified surface quasigeostrophic model.
 
We also investigate how surface quasigeostrophic jets depend on the underlying vertical stratification. Chapter \ref{Ch-SQG} shows that the vertical stratification modifies the interaction range of vortices in the surface quasigeostrophic model. Suppose we have an infinitely deep fluid governed by the time-evolution of geostrophic buoyancy anomalies at its upper boundary. Then if the stratification is decreasing [$N'(z)\leq0$, where $N(z)$ is buoyancy frequency] towards the fluid's surface (that is, the upper boundary), then the interaction range is longer than in the uniformly stratified model and the resulting turbulence is characterized by thin buoyancy filaments \textemdash{} analogous to the thin vorticity filaments in two-dimensional barotropic turbulence. Conversely, if the stratification is increasing [$N'(z)\geq0$] towards the surface, then the interaction range is shorter than in uniform stratification, and the buoyancy field appears spatially diffuse and lacks thin filamentary structures. In this chapter, we find  that the interaction range is related to Rossby wave dispersion: flows with a longer interaction range have more dispersive Rossby waves whereas flows with a shorter interaction range have less dispersive Rossby waves. One of our aims is to characterize the dependence of surface quasigeostrophic jets on the functional form of the vertical stratification.  

There are two motivations behind the present work. The first is its potential relevance to the upper ocean. Buoyancy anomalies at the ocean's surface are governed by the surface quasigeostrophic model \citep{lapeyre_dynamics_2006,lacasce_estimating_2006,isernfontanet_potential_2006}. Both numerical \citep{isernfontanet_three-dimensional_2008,lapeyre_what_2009,qiu_reconstructability_2016,qiu_reconstructing_2020,miracca-lage_can_2022} as well as observational \citep{gonzalez-haro_global_2014} studies indicate that a significant fraction of the surface geostrophic velocity is induced by sea surface buoyancy anomalies, especially over wintertime extratropical currents. Moreover, upper ocean turbulence has been found to be anisotropic \citep{maximenko_observational_2005,scott_zonal_2008}, with significant differences in anisotropy between major extratropical currents and other regions in the ocean \citep{wang_anisotropy_2019}. However, our neglect of the planetary $\beta$ effect, as well our assumption of vanishing interior potential vorticity, may limit the direct relevance of this study to the upper ocean.

The second motivation is that the variable stratification surface quasigeostrophic model is a simple two-dimensional model in which we can investigate how jet dynamics depend on the stratification's vertical structure. 
Another such model is the equivalent barotropic model for which the deformation radius represents the rigidity of the free surface. Small values of the deformation radius lead to a pliable free surface allowing a significant degree of horizontal divergence. 
The resulting flow then has an exponentially short interaction range, with a horizontal attenuation on the order of the deformation radius \citep{polvani_two-layer_1989}, and with approximately non-dispersive Rossby waves. Consequently, for a finite deformation radius, we obtain jets whose width is on the order of the deformation radius with a fixed meandering shape \citep{scott_spacing_2022}. 
In contrast, for the variable stratification surface quasigeostrophic model, rather than just specifying a constant (i.e., the deformation wavenumber), one instead has to specify the stratification's functional form, $N(z)$. 
Over decreasing stratification $[N'(z) <0]$, because of the longer interaction range and the more dispersive waves, we obtain jets similar to the two-dimensional barotropic model. 
Conversely, over increasing stratification $[N'(z)>0]$, the shorter interaction range along with the weakly dispersive waves lead to sinuous jets whose shape evolves in time through the propagation of weakly dispersive along jet waves. 
Moreover, because of these along jet waves, a smaller fraction of the total energy is contained in the zonal mode over increasing stratification (with a shorter interaction range) than over decreasing stratification (with a longer interaction range). 

The remainder of this chapter is organized as follows.
 Section \ref{S-theory} introduces the variable stratification surface quasigeostrophic model and shows how the stratification's vertical structure controls both the interaction range of point vortices as well as the dispersion of surface-trapped Rossby waves. 
 Then, in section \ref{S-staircase_theory}, we introduce two wavenumbers, $\keps$ and $k_r$, whose ratio, $\keps/k_r$, forms the key non-dimensional parameter of this study; here, $\keps$ is a wavenumber depending on the energy injection rate whereas $k_r$ is a wavenumber depending on surface damping rate. This non-dimensional number is a generalization of the non-dimensional number used in previous studies \citep{danilov_rhines_2002,sukoriansky_arrest_2007,scott_structure_2012}. 
 By considering an idealized buoyancy staircase, we also investigate  how the Rhines wavenumber relates to the jet spacing under decreasing, increasing, and uniform stratification.
 Section \ref{S-numerical} then presents numerical experiments detailing the emergence of the staircase limit as $\keps/k_r$ is increased for various stratification profiles. 
 In addition, we also present experiments where we fix the external parameters and vary the vertical stratification alone.
 Finally, we conclude in section \ref{S-conclusion_jets}.

\section{The interaction range and wave dispersion}\label{S-theory}

\subsection{Equations of motion}
 Consider an infinitely deep fluid with zero interior potential vorticity. The geostrophic streamfunction, $\psi$, then satisfies
 \begin{equation}\label{eq:zero_pv}
  \pd{}{z}\left(\frac{1}{\sigma^2} \pd{\psi}{z} \right) + \lap \psi = 0
 \end{equation}
 in the fluid interior, $z \in (-\infty,0)$. The horizontal Laplacian is denoted by $\lap = \partial_x^2 + \partial_y^2$ and the non-dimensional stratification is given by 
 \begin{equation}
 	 \sigma(z) = N(z)/f,
 \end{equation}
 where $N(z)$ is the buoyancy frequency and $f$ is the constant local value of the Coriolis parameter.
  Time-evolution is  determined by the material conservation of surface potential vorticity \citep{bretherton_critical_1966},
   \begin{equation}\label{eq:theta_inversion}
 	\theta = -\frac{1}{\sigma_0^2} \pd{\psi}{z}\Big|_{z=0},
  \end{equation}
  at the upper boundary, $z=0$, where $\sigma_0 = \sigma(0)$. Explicitly, the time-evolution equation is
 \begin{equation}\label{eq:time-evolution}
 	\pd{\theta}{t} + J(\psi,\theta) + \Lambda \, \partial_x \theta = F-D,
 \end{equation}
 at $z=0$, where $J(\psi,\theta) = \partial_x \psi \, \partial_y \theta - \partial_x \theta \, \partial_y \, \psi$ represents the advection of $\theta$ by the geostrophic velocity, $\vec u = \unit z \times \grad \psi$. The frequency, $\Lambda$, is given by
 \begin{equation}
 	\Lambda = \frac{1}{\sigma_0^2} \d{U}{z}\Big|_{z=0},
 \end{equation}
 where $U(z)$ is a background zonal geostrophic flow. Without loss of generality, we have assumed that $U(0)=0$ in the time-evolution equation \eqref{eq:time-evolution} to eliminate a constant advective term. The dissipation, $D$, consists of linear damping and small-scale dissipation,
 \begin{equation}
 	D = r \, \theta + \mathrm{ssd},
 \end{equation}
 where $r$ is the damping rate.
 The forcing, $F$, and the small-scale dissipation, $\mathrm{ssd}$, are described in section \ref{S-numerical}.
 
  The surface buoyancy anomaly, $b|_{z=0}$, is related to the surface potential vorticity, $\theta$, through 
  \begin{equation}
  	 b|_{z=0} = - f\,\sigma_0^2 \, \theta.
  \end{equation}
  Therefore, the time-evolution equation \eqref{eq:time-evolution} equivalently states that surface buoyancy anomalies are materially conserved in the absence of forcing and dissipation. In addition, the frequency, $\Lambda$, corresponds to a meridional buoyancy gradient,
 \begin{equation}\label{eq:dBdy}
 	\d{B}{y}\Big|_{z=0} = - f \, \sigma_0^2 \, \Lambda,
 \end{equation}
 where $B(y,z) $ is the buoyancy field that is in geostrophic balance with background zonal velocity, $U(z)$.
 
 If we further assume a doubly periodic domain in the horizontal, then we can expand the streamfunction as
 \begin{equation}\label{eq:fourier_psi}
 	\psi(\vec r, z,t) = \sum_{\vec k} \hat \psi_{\vec k}(t) \, \Psi_k(z)\, \mathrm{e}^{\mathrm{i} \vec k \cdot \vec{x}},
 \end{equation}
 where $\vec x=(x,y)$ is the horizontal position vector, $z$ is the vertical coordinate, $\vec k = (k_x,k_y)$ is the horizontal wavevector, $k=\abs{\vec k}$ is the horizontal wavenumber, and $t$ is the time coordinate. The non-dimensional wavenumber-dependent vertical structure, $\Psi_k(z)$, is determined by the boundary value problem (chapter \ref{Ch-SQG})
 \begin{equation}\label{eq:vertical_structure}
 	-\d{}{z} \left(\frac{1}{\sigma^2}\d{\Psi_k}{z}\right) + k^2 \, \Psi_k(z) = 0,
 \end{equation}
with the upper boundary condition
\begin{equation}\label{eq:upper}
	\Psi_k(0)=1,
\end{equation}
 and lower boundary condition
 \begin{equation}\label{eq:lower}
 	\Psi_k \rightarrow 0 \quad \textrm{as } \quad z \rightarrow -\infty.
 \end{equation}
 The upper boundary condition \eqref{eq:upper} is a normalization for the vertical structure, $\Psi_k(z)$, chosen so that 
 \begin{equation}
 	 \psi(\vec r, z=0,t) = \sum_{\vec k} \hat \psi_{\vec k}(t) \, \mathrm{e}^{\mathrm{i} \vec k \cdot \vec{x}}.
 \end{equation}
  The corresponding Fourier expansion of the surface potential vorticity is given by
  \begin{equation}
  	\theta(\vec r, t) = \sum_{\vec k} \hat \theta_{\vec k}(t) \, \mathrm{e}^{\mathrm{i} \vec k \cdot \vec{x}},
  \end{equation}
  where 
  \begin{equation}\label{eq:inversion}
  	\hat \theta_{\vec k} = - m(k) \, \hat \psi_{\vec k},
  \end{equation}
  and the function $m(k)$ is given by
  \begin{equation}\label{eq:mk}
  	m(k) = \frac{1}{\sigma_0^2} \d{\Psi_k(0)}{z}.
  \end{equation}
  The function $m(k)$ relates $\hat \theta_{\vec k}$ to $\hat \psi_{\vec k}$ in the Fourier space inversion relation \eqref{eq:inversion} and so we call $m(k)$ the \emph{inversion function}.
  
	To recover the well-known case of the uniformly stratified quasigeostrophic model \citep{held_surface_1995}, set $\sigma(z) = \sigma_0$. Then the vertical structure equation \eqref{eq:vertical_structure} along with boundary conditions \eqref{eq:upper} and \eqref{eq:lower} yield the exponentially decaying vertical structure $\Psi_k(z) = \exp\left({\sigma_0\,k\,z}\right)$. On substituting $\Psi_k(z)$ into equation \eqref{eq:mk}, we obtain a linear inversion function 
	\begin{equation}\label{eq:mk_linear}
			m(k) = \frac{k}{\sigma_0}
	\end{equation}
	 and hence [from the inversion relation \eqref{eq:inversion}] a linear-in-wavenumber inversion relation $\hat \theta_{\vec k} = - (k/\sigma_0) \, \hat \psi_{\vec k}$.
	
	\subsection{The inversion function and spatial locality}
	
	\begin{figure}
  \centerline{\includegraphics[width=1.\columnwidth]{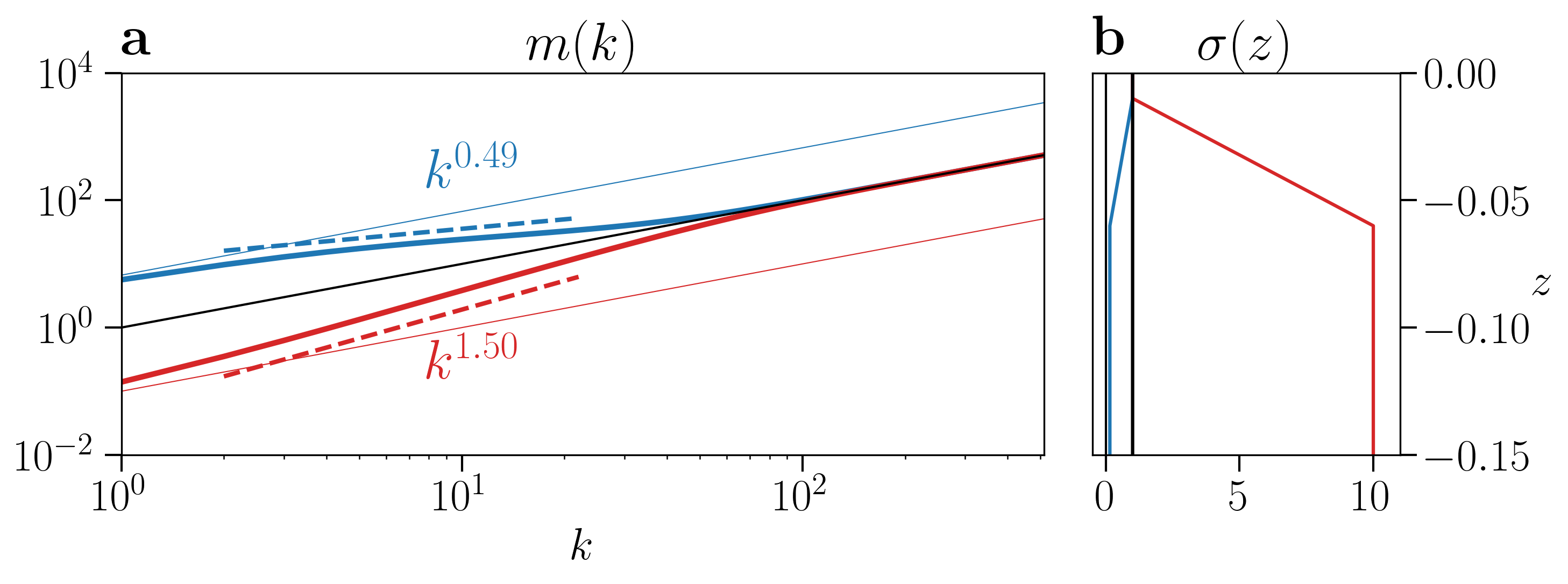}}
  \caption{The inversion functions, $m(k)$ [in panel (a)] for two stratification profiles [panel (b)] given by the piecewise stratification profile \eqref{eq:sigma_piece}. One stratification profile is increasing [$\sigma'(z)\geq 0$, blue], with $\sigma_0=1$, $\sigma_\mathrm{pyc}=0.15$, $h_\mathrm{mix} = 0.01$, and $h_\mathrm{lin}=0.05$. The other stratification profile is decreasing [$\sigma'(z)\leq 0$, red] with $\sigma_0=1$, $\sigma_\mathrm{pyc}=10$, $h_\mathrm{mix} = 0.01$, and $h_\mathrm{lin}=0.05$. The thin black line is given by $k/\sigma_0$ where $\sigma_0=1$, whereas the blue and red lines are given by $k/\sigma_\mathrm{pyc}$ with $\sigma_\mathrm{pyc}=0.15$ for the thin blue line and $\sigma_\mathrm{pyc}=10$ for the thin red line.}
  \label{F-mk_sigma}
	\end{figure}
	
	The inversion function $m(k)$, which is determined by the stratification's vertical structure, controls the spatial locality of the resulting turbulence. We illustrate this point with the following piecewise stratification profile,
	\begin{equation}\label{eq:sigma_piece}
		\sigma(z) = 
		\begin{cases}
			\sigma_0  \quad &\text{for} \quad -h_\mathrm{mix} < z < 0\\
			\sigma_0 + \Delta \sigma \left(\frac{z+h_\mathrm{mix}}{h_\mathrm{lin}}\right) \quad &\text{for} \quad -(h_\mathrm{mix} + h_\mathrm{lin}) < z < -h_\mathrm{mix}\\
			\sigma_\mathrm{pyc} \quad &\text{for}  \quad -\infty < z < -(h_\mathrm{mix} + h_\mathrm{lin}) ,
		\end{cases}	
	\end{equation}
	where $\Delta \sigma =\sigma_0 - \sigma_\mathrm{pyc}$. 
	At small horizontal scales, where $k\gg k_s$, and 
	\begin{equation}
		k_s=1/\left(\sigma_0\,h_\mathrm{mix}\right),
	\end{equation}
	then $m(k) \approx k/\sigma_0$, as in the uniformly stratified model of \cite{held_surface_1995}. Likewise, in the large-scale limit, where $k \ll k_\mathrm{pyc}$, and 
	\begin{equation}
		k_\mathrm{pyc} = 
		\begin{cases}
			1/\left(\sigma_\mathrm{pyc} \, h_\mathrm{mix}\right) \quad &\text{for} \quad \Delta \sigma \leq 0 \\
			\sigma_\mathrm{pyc}/\left(\sigma_0^2 \, h_\mathrm{mix}\right) \quad &\text{for} \quad \Delta \sigma>0,
		\end{cases}
	\end{equation}
	then $m(k) \approx k/\sigma_\mathrm{pyc}$. However, for wavenumbers between $k_\mathrm{pyc} \lesssim k \lesssim k_s$, the inversion function takes an approximate power law form
	\begin{equation}\label{eq:mk_power}
		m(k) \approx m_0 \, k^{\alpha},
	\end{equation}
	where $m_0>0$ and $\alpha \geq 0$.
	The power $\alpha$ depends on the ratio $\sigma_\mathrm{pyc}/\sigma_0$ between the deep and surface stratification. If the stratification decreases towards the surface [$\sigma^\prime(z) \leq 0$, or $\sigma_\mathrm{pyc}/\sigma_0>1$] then $\alpha > 1$, with $\sigma_\mathrm{pyc}/\sigma_0 \rightarrow \infty$ sending $\alpha \rightarrow 2$. In contrast, if the stratification increases towards the surface [$\sigma^\prime(z) \geq 0$, or $\sigma_\mathrm{pyc}/\sigma_0 < 1$] then $\alpha <1$, with  $\sigma_\mathrm{pyc}/\sigma_0 \rightarrow 0$ sending $\alpha \rightarrow 0$.	
	Thus, for wavenumbers $k_\mathrm{pyc} \lesssim k \lesssim k_s$, the inversion relation \eqref{eq:inversion} has the approximate form
	\begin{equation}
		\hat \xi_{\vec k} = - k^\alpha \, \hat \psi_{\vec k},
	\end{equation}
	where $\hat \xi_{\vec{k}}= \hat \theta_{\vec{k}}/m_0 $, which is the inversion relation for $\alpha$-turbulence \citep{pierrehumbert_spectra_1994,smith_turbulent_2002,sukhatme_local_2009}. Figure \ref{F-mk_sigma} provides two examples, one with decreasing stratification (with $\alpha \approx 1.50$) and another with increasing stratification (with $\alpha \approx 0.49$). 
	
	To see how the parameter $\alpha$ modifies the resulting dynamics, consider a point vortex at the origin, given by $\xi =  \delta(\abs{\vec x})$, where $\abs{\vec x}$ is the horizontal distance from the vortex centre, and $\delta(\abs{\vec x})$ is the Dirac delta. If $\alpha=2$, then the streamfunction induced by the point vortex is logarithmic, $\psi(\abs{\vec x}) = \log(\abs{\vec x})/(2\pi)$. If $0 < \alpha < 2$, then $\psi(\abs{\vec x}) = -C_\alpha/\abs{\vec x}^{2-\alpha}$ where $C_\alpha>0$ is a constant \citep{iwayama_greens_2010}. Smaller $\alpha$ leads to  vortices with velocities decaying more quickly with the horizontal distance $\abs{\vec x}$, and hence a shorter interaction range. Thus, the vertical stratification modifies the relationship between a surface buoyancy anomaly and its induced velocity field: a surface buoyancy anomaly over decreasing stratification [$\sigma'(z)\leq 0$] generates a longer range velocity field than an identical buoyancy anomaly over increasing stratification [$\sigma'(z)\geq 0$].

	\subsection{Wave dispersion in variable stratification}
	
	The background gradient term, $\Lambda$, in the time-evolution equation \eqref{eq:time-evolution} allows for the propagation of surface-trapped Rossby waves. Substituting a wave solution of the form $\psi(x,z,t) =  \Psi_{k}(z) \, \exp \left[{\mathrm{i} \left(\vec k \cdot \vec r - \omega t \right)}\right]$, where the vertical structure $\Psi_k(z)$ satisfies the boundary value problem \eqref{eq:vertical_structure}\textendash \eqref{eq:lower}, into the time-evolution equation \eqref{eq:time-evolution} yields the angular frequency 
	\begin{equation}\label{eq:dispersion}
		\omega(\vec k) = - \frac{\Lambda\,k_x}{m(k)}.
	\end{equation}
	Given the relationship \eqref{eq:dBdy} between the  meridional surface buoyancy gradient $\mathrm{d}B/\mathrm{d}y|_{z=0}$ and the frequency $\Lambda$, a poleward decreasing buoyancy gradient ($f\mathrm{d}{B}/\mathrm{d}{y}<0$) implies westward propagating $(\omega<0)$ Rossby waves.
	
	The dispersion relation \eqref{eq:dispersion} shows that Rossby wave dispersion is coupled to the flow's interaction range and hence the stratification's vertical structure. If we approximate the inversion function as a power law \eqref{eq:mk_power} between $k_\mathrm{pyc} \lesssim k \lesssim k_s$, then the zonal phase speed, $c=\omega/k_x$, becomes $c\sim 1/k^\alpha$. Therefore, at these horizontal scales, Rossby waves are more dispersive over decreasing stratification (with $\alpha >1$) than over increasing stratification (with $\alpha <1$). In the limit that $\sigma_0 \gg \sigma_{\mathrm{pyc}}$ in which $\alpha \rightarrow 0$, then $c \approx $ constant, and so Rossby waves become non-dispersive.
	
	\section{From edge waves to surface-trapped jets}\label{S-staircase_theory}
	
	The emergence of jets in barotropic $\beta$-plane turbulence is due to two properties of the potential vorticity \citep{dritschel_multiple_2008,scott_zonal_2019}. The first is the resilience of strong latitudinal potential vorticity gradients to mixing \citep[i.e., "Rossby wave elasticity", ][]{dritschel_multiple_2008}. Regions with weak latitudinal potential vorticity gradients are preferentially mixed, weakening the gradient in these regions and enhancing the gradient in regions where the latitudinal potential vorticity gradient is already strong \citep{dritschel_jet_2011}. The ultimate limit of such latitudinally inhomogeneous mixing is a potential vorticity staircase \citep{danilov_barotropic_2004,dritschel_multiple_2008,scott_structure_2012}, which consists of uniform regions of potential vorticity punctuated by sharp potential vorticity gradients. The second property is that, through potential vorticity inversion, strong (positive) latitudinal gradients in potential vorticity correspond to eastward jets. Therefore, inverting a potential vorticity staircase produces a flow with eastward zonal jets centred at the sharp frontal zones, with weaker westward flows in between \citep{scott_zonal_2019}.

	
	However, the limit of a  potential vorticity staircase is only achieved for sufficiently large values of the non-dimensional number $\keps/\krh$ \citep{scott_structure_2012}, which is a ratio of the forcing intensity wavenumber, $\keps$, to the Rhines wavenumber, $\krh$. The forcing intensity wavenumber is given by \citep{maltrud_energy_1991}
	\begin{equation}
		\keps = (\beta^3/\varepsilon_{\mathcal{K}})^{1/5},
	\end{equation}
	 where $\varepsilon_{\mathcal{K}}$ is the kinetic energy injection rate in the barotropic model, and is obtained by setting the turbulent strain rate equal to the Rossby wave frequency \citep{vallis_generation_1993}. The Rhines wavenumber is given by \citep{rhines_waves_1975}
	 \begin{equation}\label{eq:rhines_2d}
		 \krh = \sqrt{\beta/U_\mathrm{rms}},
	 \end{equation}
	  where $U_\mathrm{rms}$ is the rms velocity. 
	 \cite{scott_structure_2012} found that the ratio $\keps/\krh$ controls the structure of zonal jets in barotropic $\beta$-plane turbulence; as $\keps/\krh$ is increased, the zonal jet strength increases and the potential vorticity gradient at the jet core becomes larger, with the staircase limit approached as $\keps/\krh \sim O(10)$.
		  
	 Jet formation in surface quasigeostrophic turbulence proceeds similarly, with the surface buoyancy (which is proportional to $\theta$) taking the role of the potential vorticity and the frequency, $\Lambda$, taking the role of the potential vorticity gradient, $\beta$. In this section, we first derive a non-dimensional number analogous to $\keps/\krh$ for surface quasigeostrophy. Then we consider how vertical stratification (and the non-locality parameter $\alpha$) modifies jet structure in the buoyancy staircase limit, as well as how it modifies the relationship between the Rhines wavenumber and the jet spacing.
	 
	 Before proceeding, we comment on two differences between two-dimensional barotropic turbulence and its surface quasigeostrophic counterpart. First, in the absence of forcing and dissipation, the kinetic energy,
	  \begin{equation}\label{eq:kinetic}
	  	\mathcal{K} = - \frac{1}{2}\, \overline{\psi \lap \psi } = \frac{1}{2} \, \overline{\abs{u}^2},
	  \end{equation}
	  is a conserved constant in two-dimensional barotropic turbulence (the overline denotes an area average). With a constant kinetic energy injection rate, $\varepsilon_\mathcal{K}$, and a linear damping rate, $r$, the equilibrium kinetic energy is $\mathcal{K}=\varepsilon_{\mathcal{K}}/2r$. By definition, the rms velocity is given by 
	    $U_\mathrm{rms} = \sqrt{2\mathcal{K}}$. Combining this expression with the definition of the kinetic energy \eqref{eq:kinetic} and substituting into the definition of the Rhines wavenumber \eqref{eq:rhines_2d} yields a Rhines wavenumber expressed in terms of external parameters alone,  
	    \begin{equation}
	    	\krh = \beta^{1/2} (r/\varepsilon_\mathcal{K})^{1/4}.
	    \end{equation}
	   In contrast, in surface quasigeostrophy, the total energy,
	   \begin{equation}
	  	\mathcal{E} = - \frac{1}{2} \overline{\psi|_{z=0}\, \theta },
	  \end{equation}
	  is a conserved constant in the absence of forcing and dissipation and there is no general relationship between the rms velocity, $U_\mathrm{rms}$, and the equilibrium total energy, $\mathcal{E}=\varepsilon/2r$, where $\varepsilon$ is the total energy injection rate in the surface quasigeostrophic model. Therefore, we are not generally able to express the Rhines wavenumber in terms of the external parameters $\varepsilon$, $\Lambda$, and $r$.
	  Second, because $\mathcal{E}$ and $\mathcal{K}$ have different dimensions, the kinetic energy injection in the barotropic model, $\varepsilon_{\mathcal{K}}$, has different dimensions than the total energy injection rate in the surface quasigeostrophic model, $\varepsilon$. In particular, $\varepsilon$ has dimensions of $L^2/T^3$.
	  
	  

	  \subsection{The forcing intensity wavenumber}

	  	  
	  To obtain the forcing intensity wavenumber, $\keps$, we compare the Rossby wave frequency \eqref{eq:dispersion} to the turbulent strain rate, $\omega_s(k)$.
	  If the inversion function is not approximately constant (i.e., $\alpha \neq 0$) then the strain rate is (chapter \ref{Ch-SQG})
	  \begin{equation}\label{eq:strain}
	  	\omega_s(k) \sim \varepsilon^{1/3}\, k^{4/3} \,\left[m(k)\right]^{-1/3}.
	  \end{equation}
	  In particular, if $m(k) = m_0 \,k^{\alpha}$, then $\omega_s(k) \sim m_0^{1/3}\varepsilon^{1/3}\, k^{\left(4-\alpha\right)/3}$. Setting the absolute value of the Rossby wave frequency for waves with $k = k_x$ equal to the turbulent strain rate \eqref{eq:strain} yields the condition
	  \begin{equation}\label{eq:keps_condition}
	  	\keps \left[m(\keps)\right]^{2} \sim \frac{\abs{\Lambda}^3}{\varepsilon}.
	  \end{equation}
	 A solution to this equation always exists because $\mathrm{d}m/\mathrm{d}k \geq 0$. If the inversion function takes the power law form \eqref{eq:mk_power}, then we obtain
	 \begin{equation}\label{eq:keps_alpha}
	 	\keps = \left(\frac{\abs{\Lambda}^3}{m_0^2\,\varepsilon}\right)^{1/\left(2\alpha+1\right)},
	 \end{equation}
	 which is equivalent to a wavenumber derived in \cite{smith_turbulent_2002}.

	\subsection{The damping rate wavenumber and the Rhines wavenumber}

	Suppose the inversion function takes an approximate power law form, $m(k) \approx m_0 \, k^\alpha$, near the energy containing wavenumbers.
    Then the generalization of the Rhines wavenumber at these wavenumbers is 
	\begin{equation}\label{eq:rhines}
		\krh = \left(\frac{\Lambda}{m_0 \, U_\mathrm{rms}}\right)^{1/\alpha}.
	\end{equation}
	 However, unlike in two-dimensional barotropic turbulence where $U_\mathrm{rms}= \sqrt{2\mathcal{K}} = \sqrt{\varepsilon_{\mathcal{K}}/r}$, we do not have a general relationship between $U_\mathrm{rms}$ and the external parameters $r$ and $\varepsilon$ in surface quasigeostrophic turbulence. To obtain a second wavenumber that depends on the damping rate, $r$, we follow \cite{smith_turbulent_2002}. From dimensional considerations, the energy spectrum at small wavenumbers is
	\begin{equation}\label{eq:Espectrum}
		E_\Lambda(k) \sim \Lambda^2 \, k^{-(\alpha+3)}/m_0.
	\end{equation}
	Then, defining $k_r$ as the wavenumber at which the inverse cascade halts, we obtain
	\begin{equation}
		\frac{\varepsilon}{2r} \approx  \int_{k_r}^\infty E(k) \, \mathrm{d}k \approx \left(\frac{\Lambda^2 /m_0}{\alpha+2}\right) k_r^{-(\alpha+2)},
	\end{equation}
	where the second equality follows because the integral is dominated by its peak at low wavenumbers. Solving for $k_r$ and neglecting any non-dimensional coefficients, we obtain
	\begin{equation}\label{eq:kr}
		k_r = \left(\frac{\Lambda^2 \, r}{m_0 \, \varepsilon}\right)^{1/(\alpha+2)}.
	\end{equation}
	Note that the damping rate wavenumber, $k_r$, has the same dependence on $\Lambda$, $\varepsilon$, and $r$ as the Rhines wavenumber, $\krh$, only if $\alpha=2$. 
	
			 
	 \subsection{Surface potential vorticity inversion} 
	
	A perfect surface potential vorticity staircase consists of mixed zones of halfwidth $b$, where $\mathrm{d}\theta/\mathrm{d}y  = -\Lambda$, separated by jump discontinuities at which $\mathrm{d}\theta/\mathrm{d}y = \infty$. We find it more conveniant to work with the relative surface potential vorticity, $\theta$, rather than the total surface potential vorticity, $\theta + \Lambda \, y$. In this case, if the total surface potential vorticity,  $\theta + \Lambda \, y$, is a perfect staircase with step width $2b$, then the relative surface potential vorticity, $\theta$, is a $2b$-periodic sawtooth wave. 
	
	Our first question is whether such a staircase is possible for general $m(k)$. To answer this question, we consider the velocity field induced by a jump discontinuity in $\theta$. For a jump discontinuity in an infinite domain,
	 \begin{equation}
		\theta =  
		\begin{cases}
			\Delta \theta  \quad  &\text{for} \quad 0< y< \infty\\
			0 \quad &\text{for} \quad  -\infty < y < 0 ,
		\end{cases}
	\end{equation}
	the zonal velocity is given by
	\begin{equation}
		u  = \frac{\Delta \theta}{2\pi} \int_{-\infty}^\infty \frac{\mathrm{e}^{\mathrm{i} \, k_y y}}{m\left(\abs{k_y}\right)} \mathrm{d}k_y.
	\end{equation}
	If $m(k)=m_0\,k^\alpha$, then this expression is proportional to $\abs{y}^{\alpha -1}$ if $\alpha \neq 1$ and logarithmic otherwise, and so the zonal velocity diverges at $y=0$ if $\alpha \leq 1$.  Consequently, we expect that a perfect staircase should not be possible over constant or increasing stratification due to the divergence of the zonal velocity at a jump discontinuity.

	
	\begin{figure}
  \centerline{\includegraphics[width=0.9\columnwidth]{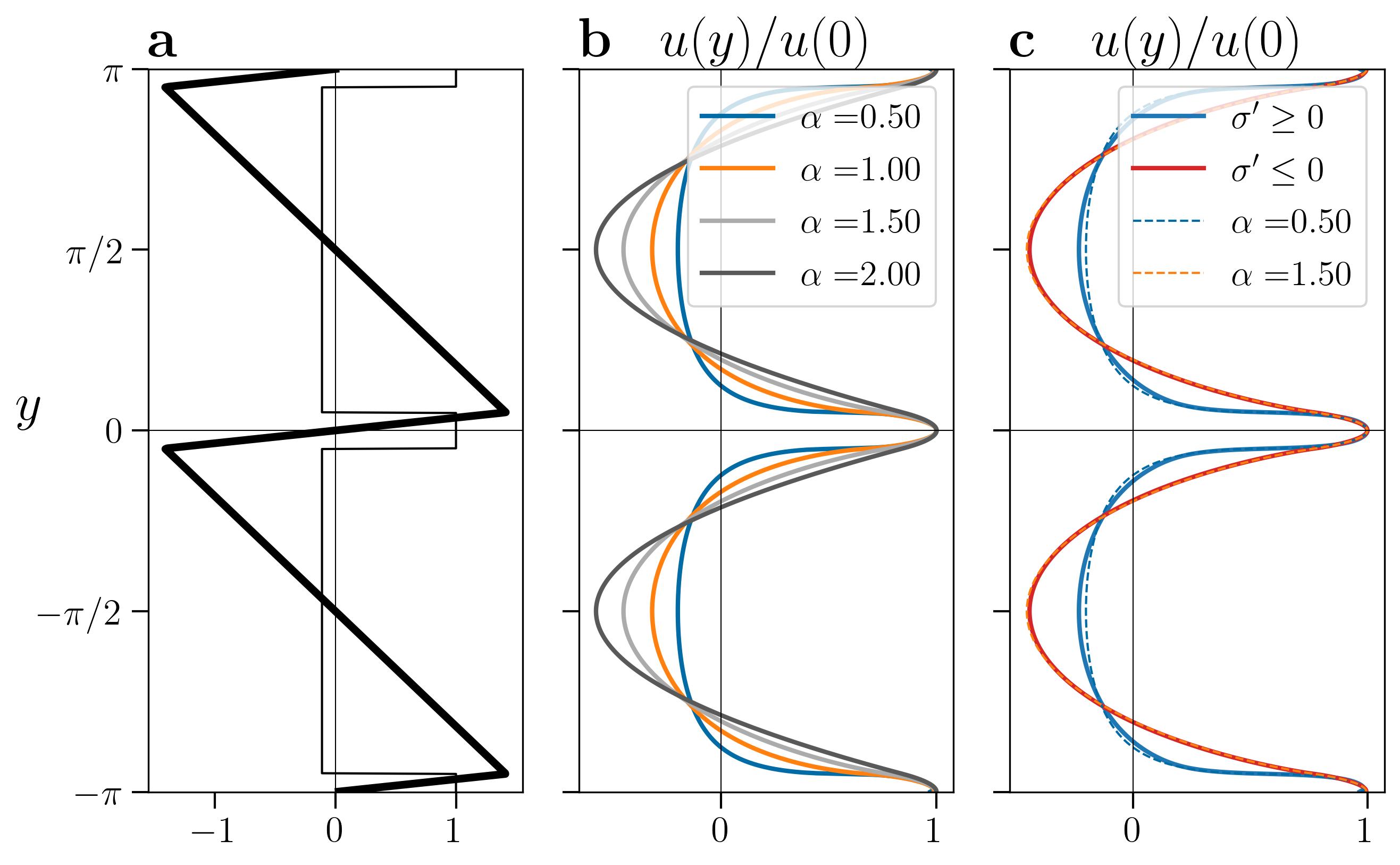}}
  \caption{Panel (a) shows a sloping sawtooth function (thick black line) along with its derivative (thin black line). Panel (b) shows the normalized zonal velocity induced by the sloping sawtooth function in panel (a) for various values of the parameter $\alpha$. Panel (c) shows the normalized zonal velocity induced by the sawtooth function in (a) in the increasing (blue line) and decreasing stratifications (red line) shown in figure \ref{F-mk_sigma}.}
  \label{F-sawtooth}
	\end{figure}
	
	We therefore consider the more general case of a sloping staircase, where there is a finite frontal zone of width $2a$ between the mixed zones. In this case, $\theta$ is a $2(a+b)$-periodic sloping sawtooth wave (see figure \ref{F-sawtooth}),  and is given by the periodic extension of
	\begin{equation}\label{eq:sloping_sawtooth}
		\theta = \Lambda
		\begin{cases}
			- \left[y-{(a+b)} \right] \quad &\text{for} \quad  {a}< y < {a+b} \\
			 \frac{\, b}{a} y \quad &\text{for} \quad \abs{y}\leq {a}\\
			- \left[y+ {(a+b)}\right] \quad &\text{for} \quad - {(a+b)}<y<- {a}.
		\end{cases}
	\end{equation}
	The meridional gradient $\mathrm{d}\theta/\mathrm{d}y$ is then a piecewise constant $2(a+b)$-periodic function
	\begin{equation}
		\d{\theta}{y} = \Lambda
		 \begin{cases}
			-1  \quad &\text{for} \quad  {a}< y < {a+b} \\
			 \frac{\, b}{a} \quad &\text{for} \quad \abs{y}\leq {a}\\
			-1  \quad &\text{for} \quad  -{(a+b)}<y<-{a}.
		\end{cases}
	\end{equation}
 	Therefore the gradient in the frontal zones exceeds the gradient in the mixed zones by a factor of $b/a$, which approaches infinity as $b/a \rightarrow \infty$ in the sawtooth wave limit.
 	
	The zonal velocity, $u=-\partial_y \psi$, is obtained by using the inversion relation \eqref{eq:inversion} to solve for the streamfunction. Alternatively, taking the meridional derivative of surface potential vorticity \eqref{eq:theta_inversion} gives
 	\begin{equation}
		\pd{\theta}{y} = \frac{1}{\sigma_0^2} \pd{u}{z}\Big|_{z=0}.
	\end{equation}
	Then in Fourier space [$\partial_y \rightarrow \mathrm{i} k_y$ and $\sigma_0^{-2}\partial_z|_{z=0} \rightarrow m(k)$] we obtain
	\begin{equation}
		\hat u_{\vec k} = \frac{1}{m(k)} \left( \mathrm{i}\,  k_y \, \hat \theta_{\vec k} \right),
	\end{equation}
	which shows that the induced zonal velocity is obtained by smoothing $\mathrm{d}\theta/\mathrm{d}y$ by the function $m(k)$. An immediate consequence is that the east-west asymmetry in the zonal velocity is fundamentally due to the east-west asymmetry in the gradient $\mathrm{d}\theta/\mathrm{d}y$.

	Figure \ref{F-sawtooth} shows an example of sloping sawtooth $\theta$ profile along with the induced zonal velocities. 
	For a power law inversion function, $m(k) = m_0 k^{\alpha}$, the parameter $\alpha$ modifies the zonal velocity in two ways. 
	First, in more local flows (with smaller $\alpha$), the zonal velocity decays more rapidly away from the jet centre, as expected. 
	Second, the degree of smoothing increases with $\alpha$, and so more local regimes (with smaller $\alpha$) are more east-west asymmetric, with the ratio $\abs{u_\mathrm{min}}/u_\mathrm{max}$ taking smaller values for smaller $\alpha$.
	Figure \ref{F-velocity_sawtooth}(b) shows $\abs{u_\mathrm{min}}/u_\mathrm{max}$ as a function of $a/b$ for $\alpha \in \{1/2,\,1,\,3/2,\,2\}$. For $\alpha = 2$, we obtain $\abs{u_\mathrm{min}}/u_\mathrm{max} \rightarrow 1/2$ in the limit $a/b \rightarrow 0$ so that eastward jets are only twice as strong as westward flows in the perfect staircase limit \citep{danilov_scaling_2004,dritschel_multiple_2008}. At $\alpha = 3/2$, we find $\abs{u_\mathrm{min}}/u_\mathrm{max} \approx 0.29$ in the $a/b \rightarrow 0$ limit so that eastward jets are now more than three time as strong as westward flows. Once $\alpha \leq 1$, then the maximum jet velocity diverges as $\alpha \rightarrow 0$ [figure \ref{F-velocity_sawtooth}(a)] and so $\abs{u_\mathrm{min}}/u_\mathrm{max} \rightarrow 0$ as $a/b \rightarrow 0$.
	
	If $m(k)$ is not a power law, then the results are similar so long as $m(k)$ can be approximated by a power law at small wavenumbers. 
	Figure \ref{F-sawtooth} shows the induced velocity for the inversion functions computed from idealized stratifications profiles (shown in figure \ref{F-mk_sigma}). Because these inversion functions can be approximated by power laws $m(k) \approx k^{0.49}$ and $m(k) \approx k^{1.50}$ at small wavenumbers, the induced velocity fields nearly coincide with the velocity fields computed from power law inversion functions with $\alpha =0.5$ and $\alpha = 1.5$.
	
	\begin{figure}
  \centerline{\includegraphics[width=0.75\columnwidth]{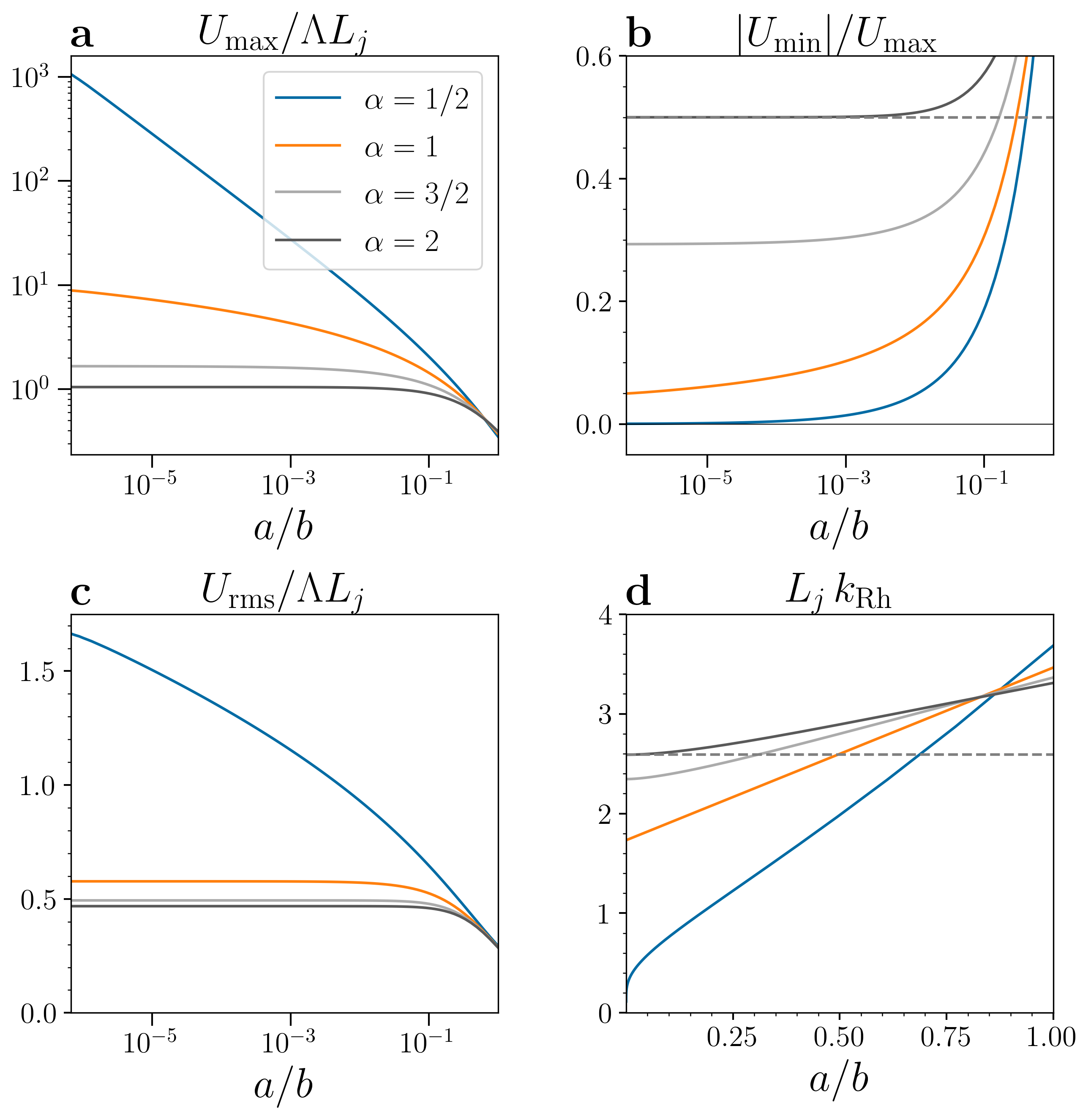}}
  \caption{Properties of zonal velocity profiles induced by sloping sawtooth profiles \eqref{eq:sloping_sawtooth} of $\theta$ as a function of the non-dimensional frontal zone width $a/b$ separating the mixed zones for four values of $\alpha$. Panel (a) shows the maximum zonal velocity, panel (b) shows the ratio of westward speed to eastward speed, panel (c) shows the rms zonal velocity, and panel (d) shows the product $L_j \krh$ where $L_j=a+b$ is the halfwidth separation (the distance between $U_\mathrm{mix}$ and $U_\mathrm{min}$) and $\krh$ is the Rhines wavenumber \eqref{eq:rhines}. }
  \label{F-velocity_sawtooth}
\end{figure}
	 	
 	Finally, we examine how the Rhines wavenumber, $\krh$, relates to jet spacing. Let
 	\begin{equation}
 			 L_j=a+b 
 	\end{equation}
	 be the half-separation between the jets, i.e., the half distance between consecutive zonal velocity maxima.
	For two-dimensional barotropic turbulence (i.e., the $\alpha = 2$ case), we have $L_j = 45^{1/4}/\krh \approx 2.59/\krh$
 	in the staircase limit \citep[i.e, for $a/b \rightarrow 0$,][]{dritschel_multiple_2008,scott_structure_2012}. This result is found by solving for the zonal velocity induced by a staircase with halfwidth $L_j=b$, taking the rms of the zonal velocity, and then substituting into the definition of the generalized Rhines wavenumber \eqref{eq:rhines}. As figure \ref{F-velocity_sawtooth}(d) shows, because the velocity field induced by a perfect staircase depends on the inversion function, $m(k)$, the relationship between $L_j$ and $\krh$ also depends on the inversion function. For $m(k) = k^{3/2}$, an analogous calculation gives $L_j  \approx 2.35/\krh$ in the staircase limit. For $\alpha = 1$, even though the maximum velocity diverges at $a/b \rightarrow 0$, the rms velocity asymptotes to a constant value, and so we obtain a half jet-separation of $L_j  \approx 1.73/\krh$ (figure \ref{F-velocity_sawtooth}). Finally in the $\alpha = 1/2$ case, although the rms speed has not converged by $a/b = 10^{-6}$, the product $L_j \, \krh$ is approaching values close to zero.
 		 
\section{Numerical Simulations}\label{S-numerical}
	
	\subsection{The numerical model}
	
	We use the \texttt{pyqg} pseudo-spectral model \citep{abernathey_pyqgpyqg_2019} which solves the time-evolution equation \eqref{eq:time-evolution} in a square domain with side length $L=2\pi$.  Time-stepping is through a third-order Adam-Bashforth scheme with small-scale dissipation achieved through a scale-selective exponential filter \citep{smith_turbulent_2002,arbic_coherent_2003},
	\begin{equation}
		\mathrm{ssd} =
		 \begin{cases}
			1 \quad \text{for} \quad  k \leq k_0 \\
			e^{-a(k-k_0)^4} \quad  \text{for} \quad k > k_0,
		\end{cases}
	\end{equation}
	with $a = 23.6$ and $k_0 = 0.65 k_\mathrm{Nyq}$ where $k_\mathrm{Nyq}=\pi$ is the Nyquist wavenumber. The forcing is isotropic, centred at wavenumber $k_f = 80$, and normalized so that the energy injection rate is $\varepsilon = 1$ \citep[see appendix B in][]{smith_turbulent_2002}. However, the effective energy injection rate, $\varepsilon_\mathrm{eff}$, is smaller than $\varepsilon$ due to dissipation.  To determine $\varepsilon_\mathrm{eff}$ from numerical simulations, we use $ \varepsilon_\mathrm{eff}= 2\, r \, \mathcal{E}$ where $\mathcal{E}$ is the equilibrated total energy diagnosed from the model. In what follows, we report values of $\keps/k_r$ using $\varepsilon_\mathrm{eff}$ instead of $\varepsilon$. The model is integrated forward in time until at least  $t=5/r$ to allow the fluid to reach equilibrium. All model runs use $1024^2$ horizontal grid points. 
	
	\subsection{For what values of $\keps/\kr$ do jets form?}\label{SS-sims}
	
	  \begin{figure}
  \centerline{\includegraphics[width=1\columnwidth]{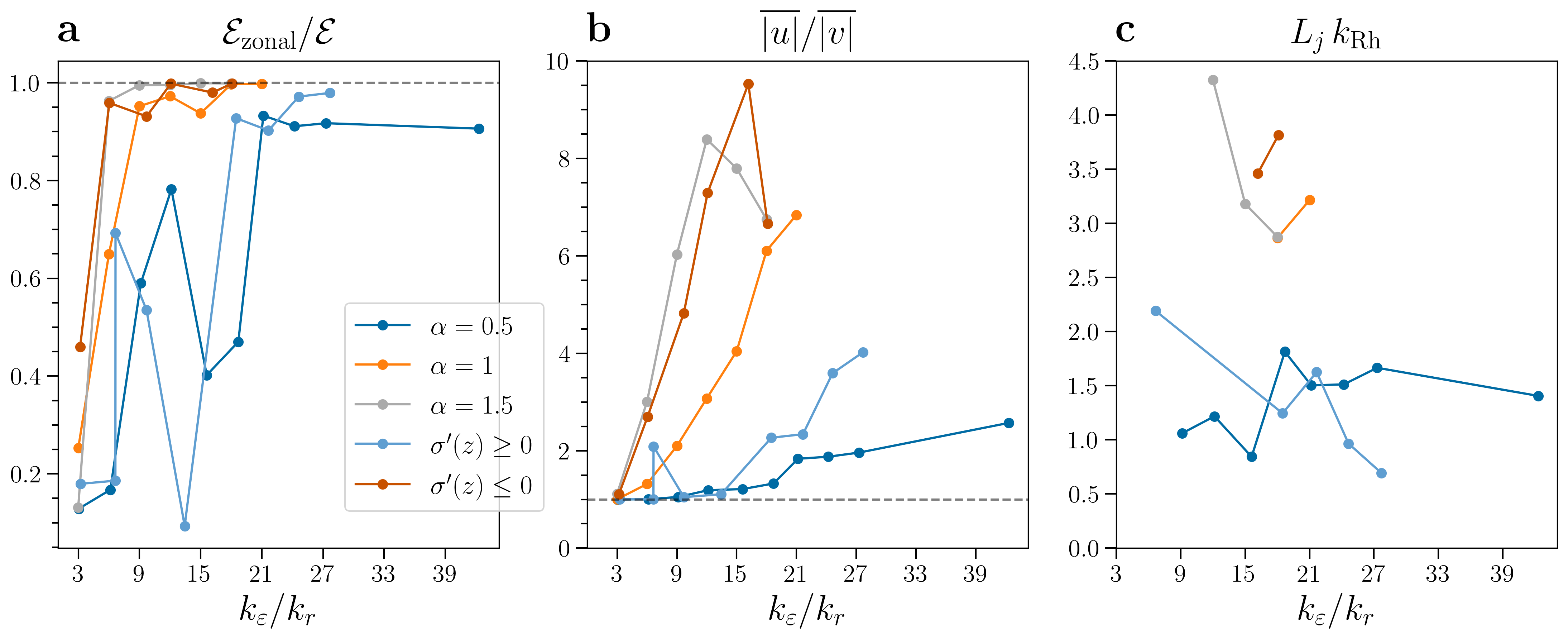}}
  \caption{Diagnostics from five series of simulations as a function of the non-dimensional number $\keps/k_r$. The first three series of simulations have inversion function $m(k)=k^\alpha$ with $\alpha \in \{\nicefrac{1}{2},\, 1,\, \nicefrac{3}{2}\}$. For the other two series, the inversion functions are shown in figure \ref{F-mk_sigma_alt}. Panel (a) shows the ratio of energy in the zonal mode to total energy. Panel (b) shows the ratio of domain averaged zonal speed to domain averaged meridional speed. Panel (c) shows the ratio of westward zonal speed to eastward zonal speed. Panel (d) shows the relationship between the halfwidth jet spacing, $L_j$, and the Rhines wavenumber, $\krh$.}
  \label{F-krh_velocity}
   \end{figure}
	
	For our first set of simulations, we vary $\keps/k_r$ over the values shown in figure \ref{F-krh_velocity}. We do so by fixing $k_r=8$ and varying $\keps$. For a given value of $\keps$, we choose $\Lambda$ and $r$ so as to maintain $\kr = 8$ (the energy injection rate, $\varepsilon$, is fixed at unity for all model runs). Given $\kr$ and $\keps$, we rearrange the definition of $k_r$ \eqref{eq:kr} to solve for $\gamma= r \, \Lambda^2$,
	\begin{equation}
		\gamma = m_0 \, \varepsilon \, k_r^{\alpha+2},
	\end{equation}
	then solve for $r$ in the implicit equation \eqref{eq:keps_condition} for $\keps$ ,
	\begin{equation}
		 r = \frac{\gamma}{\left( \varepsilon \, \keps \, [m(\keps)]^2 \right)^{2/3}},
	\end{equation}
	and finally use the definition $\gamma = r\, \Lambda^2$ to solve for $\Lambda$.

	 
	\subsubsection{Power law inversion functions}
	We first describe the results from three series of simulations with power law inversion functions, $m(k)=k^\alpha$, with $\alpha \in \{\nicefrac{1}{2},\, 1, \, \nicefrac{3}{2} \}$.
	Summary diagnostics from these simulations are shown in figure \ref{F-krh_velocity}.	
	In panel (a), we observe that the ratio of energy in the zonal mode to total energy, $\mathcal{E}_\mathrm{zonal}/\mathcal{E}$, increases with $\keps/k_r$, and that the majority of the total energy is in the zonal mode for sufficiently large $\keps/k_r$. For a fixed $\keps/k_r$, more of the total energy is zonal in more non-local flows (with larger $\alpha$) than in more local flows (with smaller $\alpha$); for $\alpha = \nicefrac{3}{2}$, we have $\mathcal{E}_\mathrm{zonal}/\mathcal{E} \approx 1$ by $\keps/k_r \approx 6$ as compared to $\keps/k_r \approx 12$ for $\alpha = 1$. Moreover, for $\alpha = \nicefrac{1}{2}$, we find that $\mathcal{E}_\mathrm{zonal}/\mathcal{E}$ asymptotes to approximately 0.9 once $\keps/k_r \approx 18$ with little subsequent change for larger values of $\keps/k_r$. In panel (b), we observe a striking contrast in the ratio $\overline{\abs{u}}/\overline{\abs{v}}$ between different values of $\alpha$ (the overline denotes a domain average). For $\alpha = \nicefrac{3}{2}$, the domain averaged zonal speed, $\overline{\abs{u}}$, is approximately eight times larger than the domain averaged  meridional speed, $\overline{\abs{v}}$, for large $\keps/k_r$.  In contrast, for $\alpha = \nicefrac{1}{2}$, $\overline{\abs{u}}$ only exceeds  $\overline{\abs{v}}$ by a multiple of two for large $\keps/k_r$.
		
	\begin{figure}
  \centerline{\includegraphics[width=\textwidth]{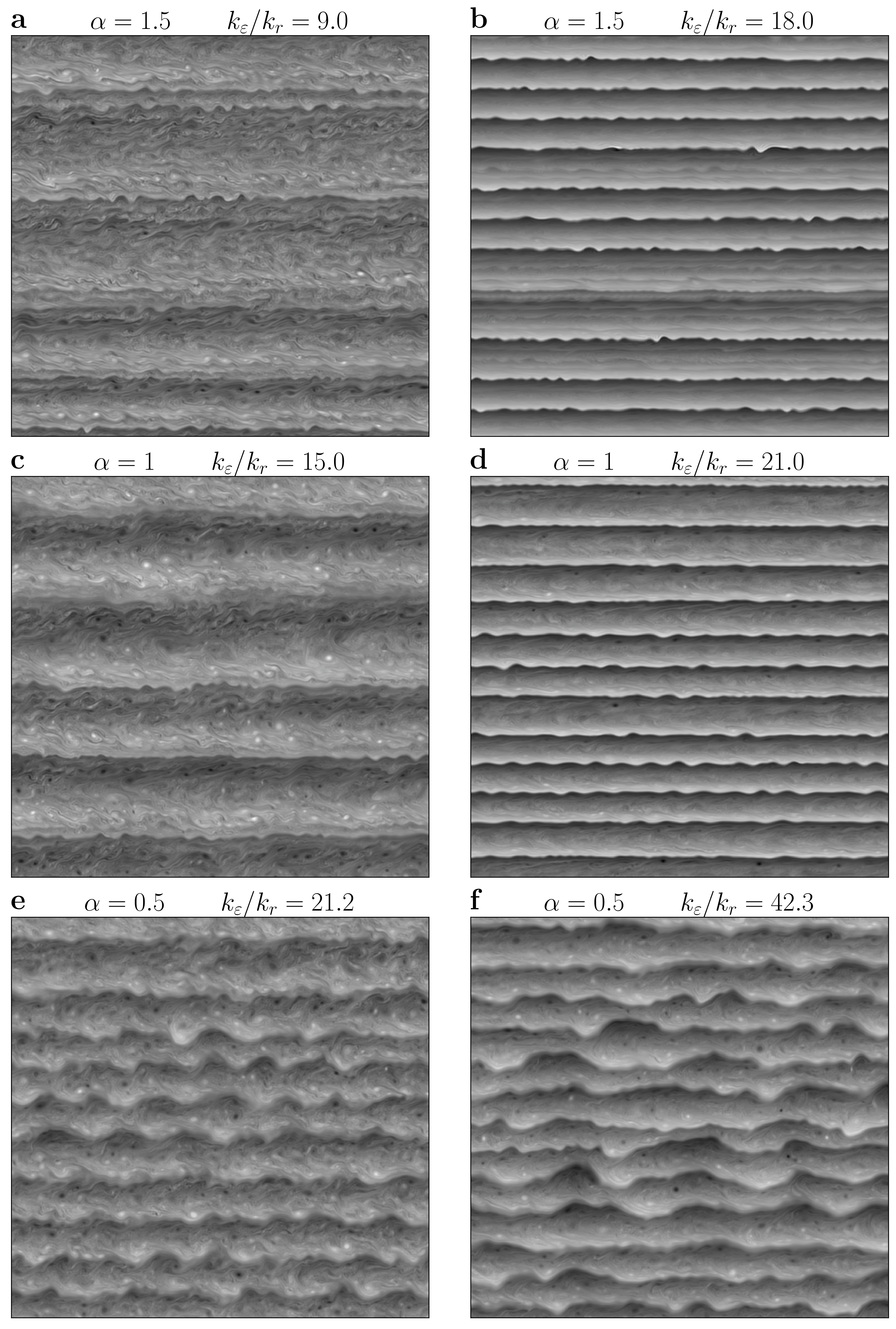}}
  \caption{Snapshots of the relative surface potential vorticity, $\theta$, for simulations with power law inversion functions, $m(k)=k^\alpha$. In each snapshot, the $\theta$ field is normalized by its maximum value in the snapshot. Only one quarter of the domain is shown (i.e., $512^2$ grid points). }
  \label{F-alpha_jets}
   \end{figure}
   
   	\begin{figure}
  \centerline{\includegraphics[width=1\columnwidth]{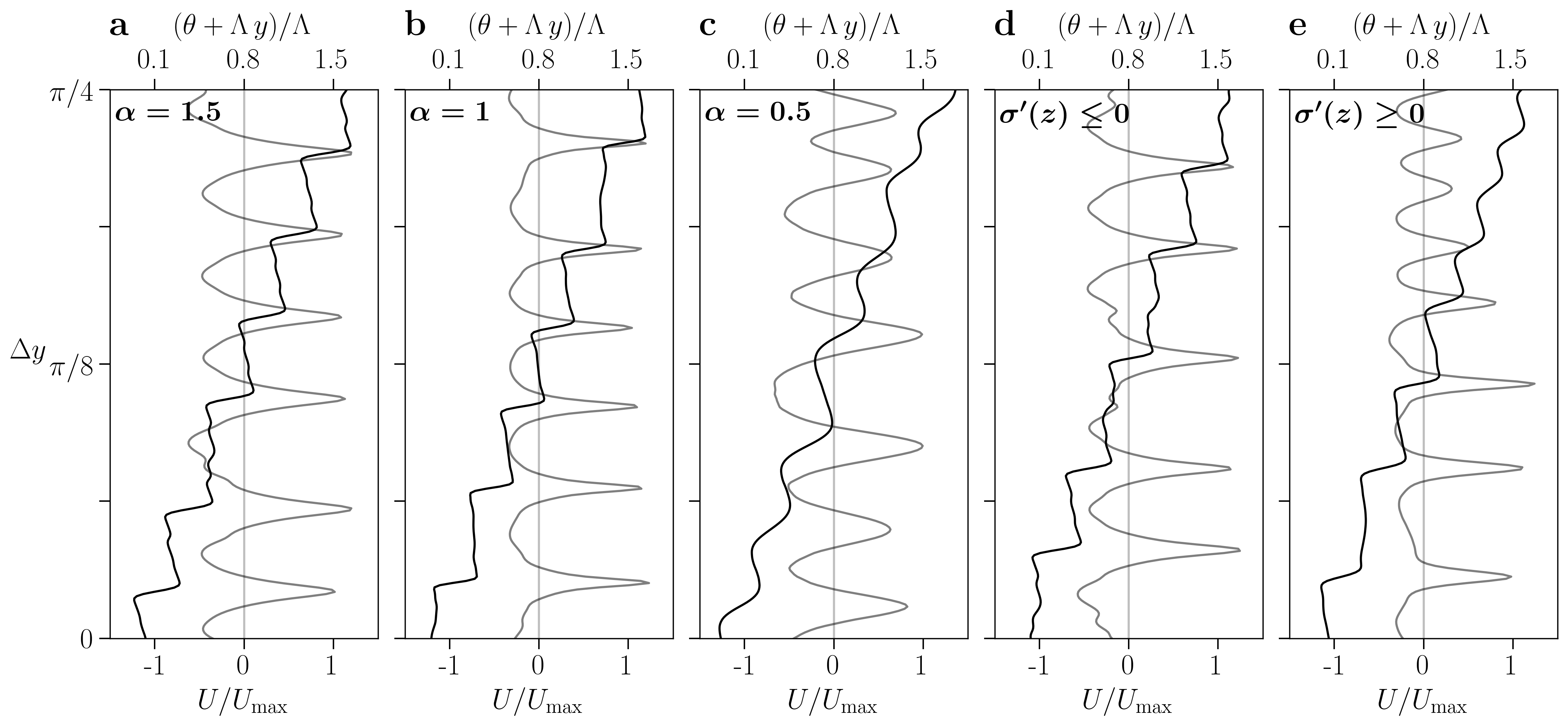}}
  \caption{The zonal mean total surface potential vorticity, $\theta + \Lambda \, y$, in black and the zonal mean zonal velocity, $U$, in grey. }
  \label{F-model_staircase}
	\end{figure}
 
	Next, we examine the jet structure for different $\alpha$ as a function of $\keps/k_r$. Figure \ref{F-alpha_jets} shows $\theta$-snapshots from model runs with $m(k)=k^\alpha$. For each value of $\alpha$, two model runs are shown: one where jets have just become visible in the $\theta$-snapshot and another with the largest value of $\keps/k_r$, which we expect to be  closest to the staircase limit. The jets are visible in these snapshots as the regions with strong gradients. Because these are $\theta$-snapshots rather than $(\theta+\Lambda\,y)$-snapshots, the $(\theta + \Lambda \, y)$-staircase is instead a $\theta$-sawtooth, and the mixed zones between the jets are approximately linear in $\theta$. 
	 We confirm this to be the case in figure \ref{F-model_staircase}, where the zonal averages of the total surface potential vorticity, $\theta+\Lambda \, y$, and the zonal velocity are shown. For the  $\alpha = \nicefrac{3}{2}$ and $\alpha =1$ cases, we observe an approximate staircase structure with nearly uniform mixed zones separated by frontal zones, and with jets centred at sharp $\theta$ gradients. As expected from the idealized staircases of section \ref{S-staircase_theory}, close to the staircase limit, the $\alpha=1$ jets are narrower than the $\alpha=\nicefrac{3}{2}$ jets, and the ratio of maximum westward speed to maximum eastward speed, $|U_\mathrm{min}|/|U_\mathrm{max}|$, is smaller at $\alpha=1$ than at $\alpha=\nicefrac{3}{2}$.
	 

	 In contrast to the $\alpha = \nicefrac{3}{2}$ and the $\alpha = 1$ series, the $\alpha=\nicefrac{1}{2}$ series approaches the staircase limit slowly with $\keps/k_r$. 
	The $\alpha=\nicefrac{1}{2}$ staircase remains smooth even at $\keps/k_r = 42$ [figure \ref{F-model_staircase}(c)]. The ratio of frontal zone width to mixed zone width, $a/b$, is between $0.5$ and  $0.65$ for $\alpha=\nicefrac{1}{2}$ jets. In contrast, this ratio is between $0.15$ and $0.2$ for the $\alpha =\nicefrac{3}{2}$ and $\alpha = 1$ jets.
	 In part, the broadness of the $\alpha=\nicefrac{1}{2}$ frontal zones is a consequence of zonal averaging in the presence of large amplitude undulations. However, it is evident from the $\theta$-snapshots of figure \ref{F-alpha_jets} that the $\alpha=\nicefrac{1}{2}$ frontal zones are indeed broader than the $\alpha=\nicefrac{3}{2}$ and $\alpha=1$ frontal zones [e.g., compare panels (a) and (d) with (f) in figure \ref{F-alpha_jets}], even without zonal averaging. 
	 
	
	 We now examine how the generalized Rhines wavenumber, $\krh$, relates to the jet spacing.
	 From figure \ref{F-velocity_sawtooth}(d), a ratio of $a/b \approx 0.2$ leads to a $L_j \, \krh \approx 2.2$ for $\alpha=\nicefrac{3}{2}$ and $L_j \, \krh \approx 2.0$ for $\alpha=1$. But as figure \ref{F-krh_velocity}(d) shows, we find values closer to $L_j \, \krh \approx 3$ for both of these cases. In contrast, for the $\alpha =\nicefrac{1}{2}$ jets, figure \ref{F-velocity_sawtooth}(d) predicts $1.98 \lesssim L_j \, \krh \lesssim 2.5$ for the observed range of $0.5 \lesssim a/b \lesssim 0.65$, but we find $L_j \, \krh \approx 1.5$ for $\keps/ k_r \geq 18$, which is smaller than predicted.  
	 
	 	 
	Returning to figure \ref{F-alpha_jets}, we observe that there are undulations along the jets, with smaller values of $\alpha$ corresponding to larger amplitude undulations. 
	 These undulations propagate as waves and are less dispersive for smaller $\alpha$, propagating eastward for $\alpha=\frac{3}{2}$, westward for $\alpha=\nicefrac{1}{2}$,  and are nearly stationary for $\alpha = 1$. Moreover, the waves in the $\alpha=\nicefrac{1}{2}$ case maintain their shape as they propagate for a significant fraction of the domain, although they eventually disperse or merge with other along jet waves. 
	 That we obtain larger amplitude along jet undulations for smaller $\alpha$ is a consequence of the more local inversion operator \eqref{eq:inversion} at smaller $\alpha$. A jet in a highly local flow (with small $\alpha$) is ``a coherent structure that hangs together strongly while being easy to push sideways'' \citep[][in the context of equivalent barotropic jets]{mcintyre_potential-vorticity_2008}. However, although both an equivalent barotropic jet and an $\alpha=\nicefrac{1}{2}$ jet exhibit large meridional undulations, the undulations in the equivalent barotropic case are frozen in place \citep[because of a vanishing group velocity at large scales, ][]{mcintyre_potential-vorticity_2008} and so the equivalent barotropic jet behaves like a meandering river with a fixed shape. In contrast, the $\alpha=\nicefrac{1}{2}$ jet behaves like a flexible string whose shape evolves in time with the propagation of weakly dispersive waves. Another difference between the two cases is that an equivalent barotropic jet has a width given by the deformation radius. In contrast, there is no analogous characteristic scale for $\alpha=\nicefrac{1}{2}$ jets and, in principle, the jets should become infinitely thin as $\keps/k_r \rightarrow \infty$.
	 	 	
	\begin{figure}
  \centerline{\includegraphics[width=.9\columnwidth]{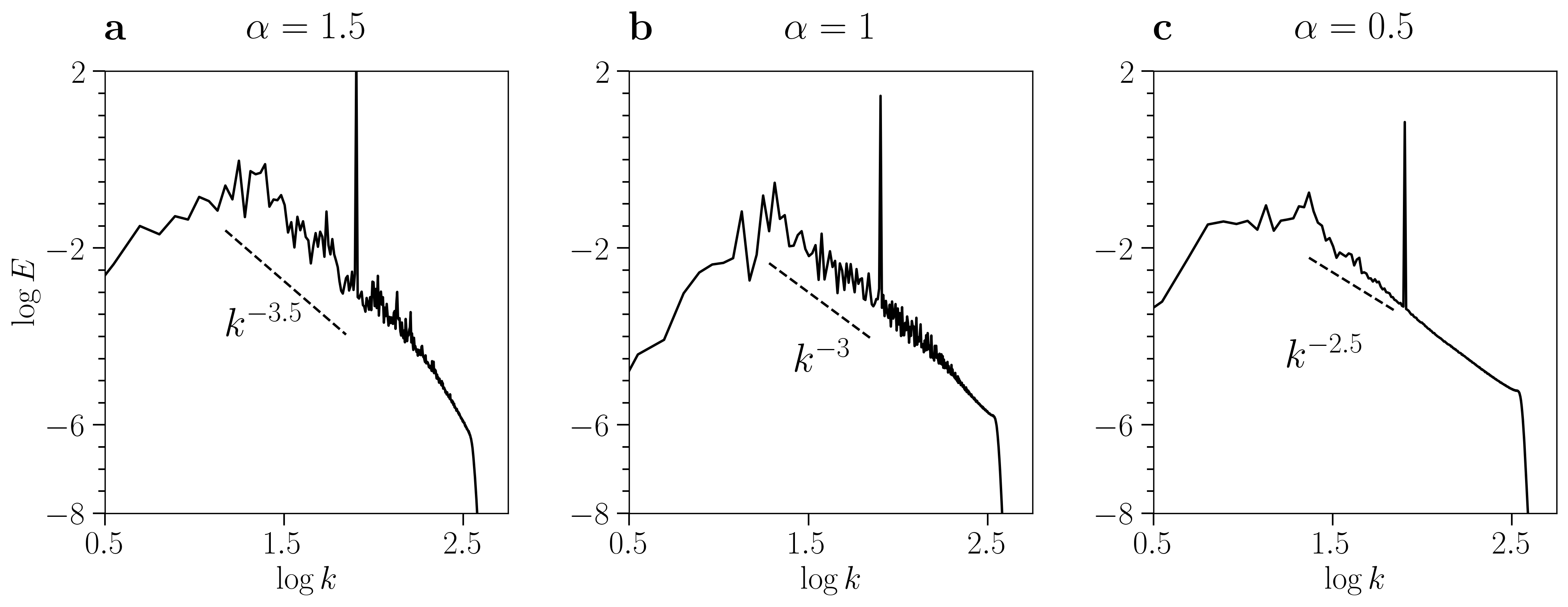}}
  \caption{The total energy spectrum, $E(k)$, as a function of the wavenumber, $k=k_x^2+k_x^2$, for three simulations with power law inversion functions, $m(k)=k^\alpha$. The values of $\keps/k_r$ are 18.0 for panel (a), 21.0 for panel (b), and 42.3 for panel (c).}
  \label{F-energy_spectrum}
	\end{figure}

	Energy spectra for the three power law simulations are shown in figure \ref{F-energy_spectrum}. The energy spectrum obtained from dimensional analysis  \eqref{eq:Espectrum} gives a $k^{-\alpha - 3}$ wavenumber dependence, which leads to the familiar $k^{-5}$ spectrum for beta-plane barotropic turbulence ($\alpha=2$). Although early investigations \citep{chekhlov_effect_1996,huang_anisotropic_2000,danilov_barotropic_2004} found a $k^{-5}$ spectrum in barotropic $\beta$-plane turbulence, \cite{scott_structure_2012} instead found a shallower $k^{-4}$ spectrum in the staircase limit \citep[suggested earlier by][]{danilov_barotropic_2004,danilov_scaling_2004}, which they explained as a consequence of the sharp discontinuities of the staircase. Generalizing their argument to the present case, a one dimensional $\theta(y)$ series with discontinuities implies a Fourier series with coefficients decaying as $k^{-1}$, leading to a $\theta^2$ spectrum of $k^{-2}$, and hence an energy spectrum 
	\begin{equation}
		E(k) \sim k^{-2}\left[m(k)\right]^{-1}. 
	\end{equation}
	If $m(k) \sim k^{\alpha}$, then we obtain a spectrum $E(k) \sim k^{-\alpha - 2}$, which yields the $k^{-4}$ spectrum observed in \cite{scott_structure_2012}, where $\alpha=2$. For $\alpha=\nicefrac{3}{2}$, $\alpha=1$, and $\alpha=\nicefrac{1}{2}$, the predicted spectrum is proportional to $k^{-3.5}$, $k^{-3}$, and $k^{-2.5}$, respectively.
	The diagnosed spectra shown in figure \ref{F-energy_spectrum} are consistent with these shallow spectra, instead of energy spectrum  \eqref{eq:Espectrum} obtained from dimensional considerations.

	\subsubsection{Inversion functions from $\sigma(z)$}
	
	\begin{figure}
  \centerline{\includegraphics[width=0.9\textwidth]{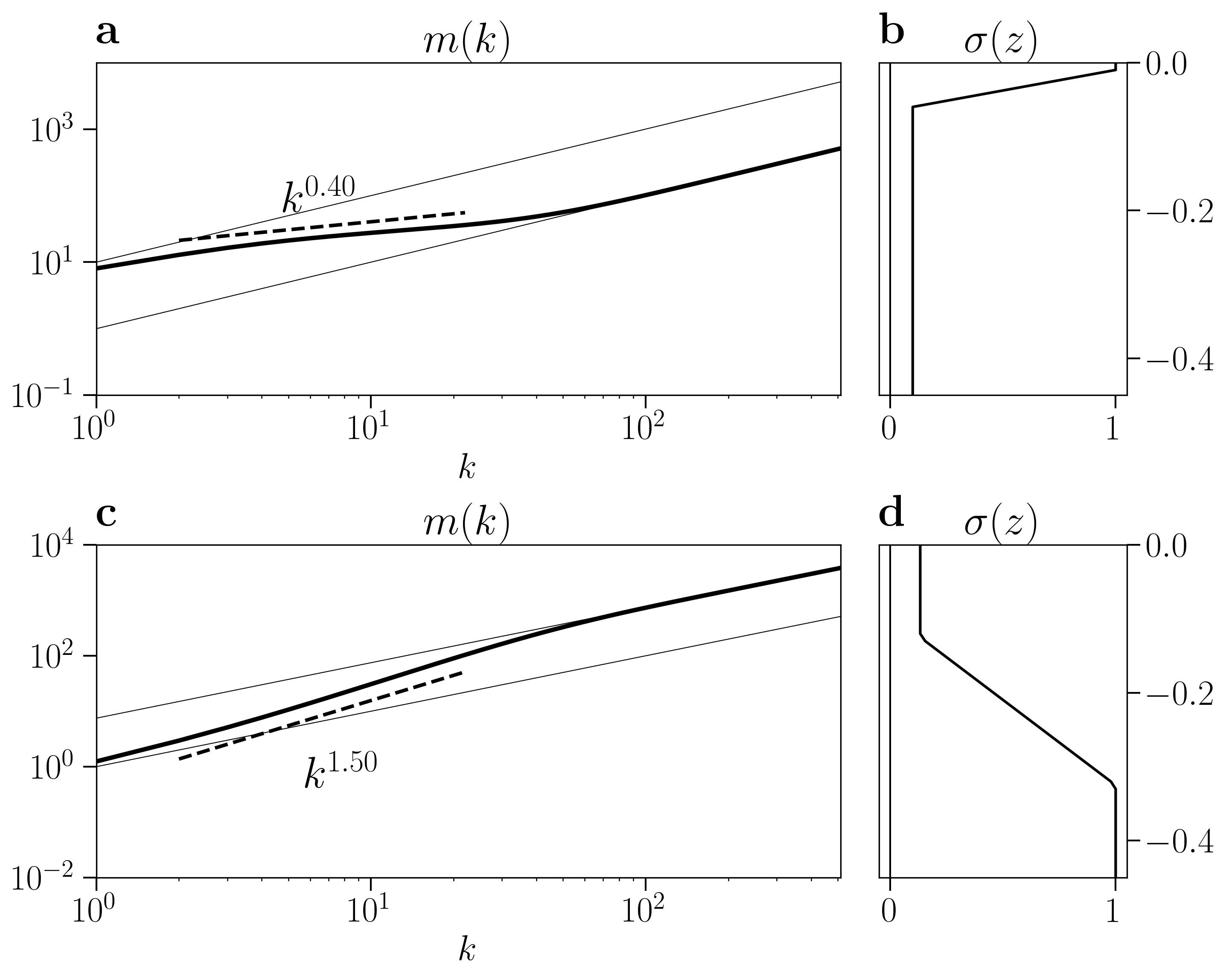}}
  \caption{Inversion functions  [panel (a) and (c)] along with their corresponding stratification profiles [panels (b) and (d), respectively]. The stratification profiles are given by the piecewise function \eqref{eq:sigma_piece}. For panel (a), we have $\sigma_0 = 1$, $\sigma_\mathrm{pyc}=0.1$, $h_\mathrm{mix}=0.01$, and $h_\mathrm{lin}=0.05$. For panel (c), we have $\sigma_0 = 0.133$, $\sigma_\mathrm{pyc}=1$, $h_\mathrm{mix}= 0.125$, and $h_\mathrm{lin}=0.2$. The thin grey lines in panels (a) and (c) are given by $k/\sigma_0$ and $k/\sigma_\mathrm{pyc}$.   }
  \label{F-mk_sigma_alt}
   \end{figure}
   
   We also ran two series of simulations where we specified a piecewise stratification profile \eqref{eq:sigma_piece}, and then obtained $m(k)$ by solving the boundary value problem \eqref{eq:vertical_structure}\textendash \eqref{eq:lower} at each wavenumber. The stratification profiles and the resulting inversion functions are shown in figure \ref{F-mk_sigma_alt}.
	 One case consists of an increasing stratification  profile [$\sigma^\prime(z) \geq 0$] with $\sigma_0 = 1$, $\sigma_\mathrm{pyc} = 0.1$, $h_\mathrm{mix} =0.01$ and $h_\mathrm{lin}=0.05$. The resulting $m(k)$ is approximately linear for $k \gtrsim 70$ and transitions to an approximate sub-linear wavenumber dependence $m(k) \sim k^{0.40}$ for wavenumbers $5 \lesssim k \lesssim 50$. The second case consists of a decreasing stratification profile [$\sigma^\prime(z) \leq 0$] with $\sigma_0 = 0.13$, $\sigma_\mathrm{pyc} = 1$, $h_\mathrm{mix} =0.125$ and $h_\mathrm{lin}=0.2$. The resulting $m(k)$ is approximately linear at wavenumbers $k \gtrsim 60$ and transitions to an approximate super linear wavenumber dependence $m(k) \sim k^{1.50}$ between $3 \lesssim k \lesssim 60$.
	 
	As seen in figure \ref{F-krh_velocity}, the $\sigma'(z) \leq 0$ case is similar to the $\alpha=\nicefrac{3}{2}$ case, with the various diagnostics close to the $\alpha=\nicefrac{3}{2}$ counterpart. In contrast, there are significant differences between the $\sigma'(z) \geq 0$ simulations and the $\alpha=\nicefrac{1}{2}$ simulations.
	 In the  $\sigma'(z) \geq 0$ series, the ratio of energy in the zonal mode to total energy continues to increase as $\keps/k_r$ is increased, whereas it asymptotes to a constant in the $\alpha=\nicefrac{1}{2}$ series. 
	Moreover, the ratio of domain average zonal speed to domain averaged meridional speed, $\overline{|u|}/\overline{|v|}$, is generally larger in the $\sigma' \geq 0$ series than in the $\alpha=\nicefrac{1}{2}$ series.  Finally, for the largest values of $\keps/k_r$, the product $L_j\, \krh$ reaches smaller values in the $\sigma' \geq 0$ simulations than in the $\alpha=\nicefrac{1}{2}$ simulations.

	\begin{figure}
  \centerline{\includegraphics[width=\textwidth]{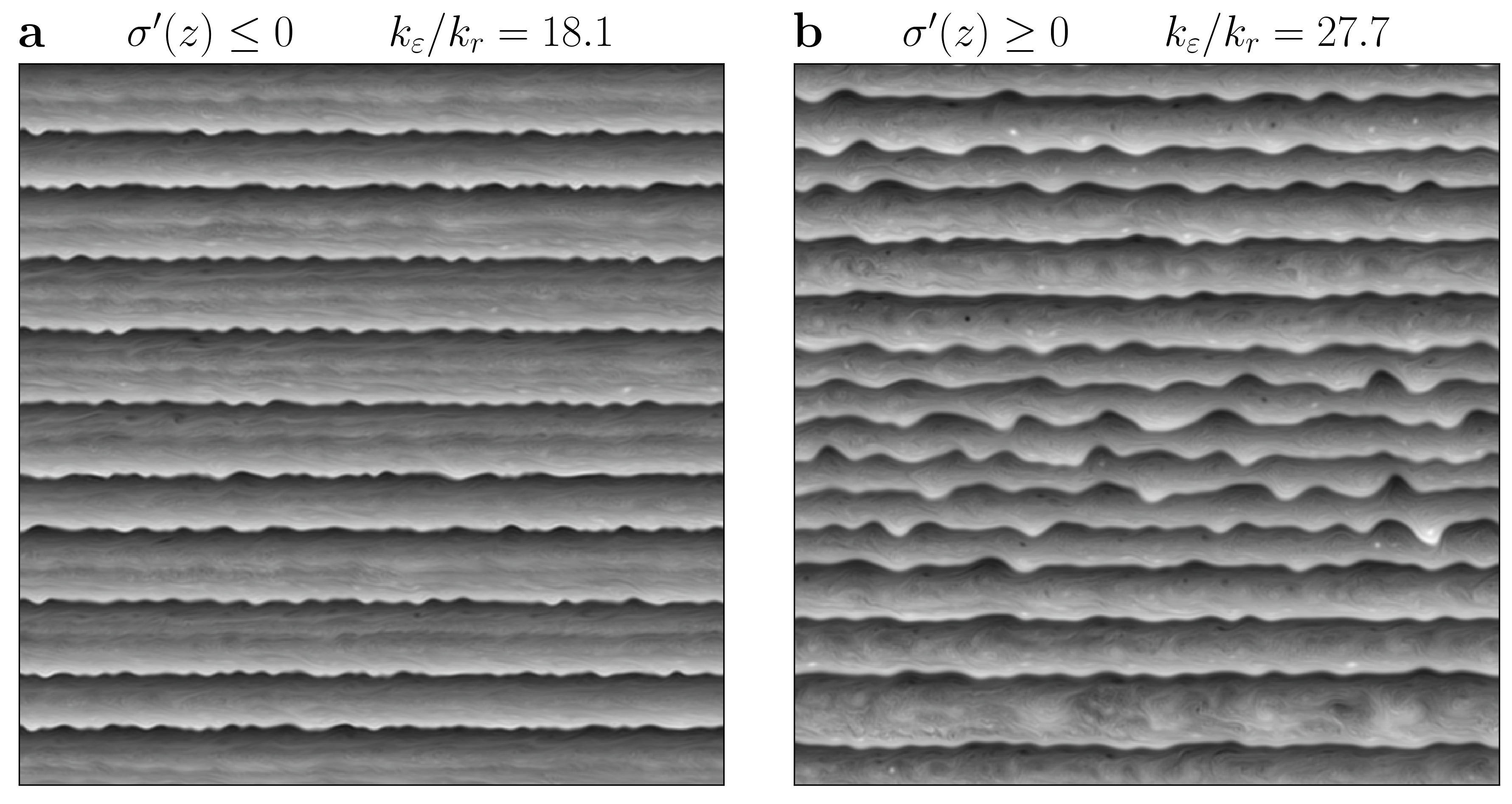}}
  \caption{Snapshots of the relative surface potential vorticity, $\theta$, normalized by its maximum value in the snapshot,  for simulations with inversion functions shown in figure \ref{F-mk_sigma_alt}.  Only one quarter of the domain is shown (i.e., $512^2$ grid points). }
  \label{F-alpha_jets_mixed}
   \end{figure}
   
   These differences can be explained by the snapshots of figure \ref{F-alpha_jets_mixed} as well as the zonal averages of figure \ref{F-model_staircase}. As expected from the model diagnostics, both the snapshots and the zonal average from the $\sigma^\prime \leq 0$  simulation are qualitatively similar to the $\alpha=\nicefrac{3}{2}$ simulation. In contrast, the $\sigma^\prime \geq 0$ snapshot is evidently closer to the staircase limit than the $\alpha=\nicefrac{1}{2}$ snapshot: the mixed zones are more homogeneous and the frontal zones are sharper. The zonal average of the $\sigma^\prime \geq 0$ simulation in figure \ref{F-model_staircase} also shows how the $\sigma^\prime \geq 0$ simulation is closer to the staircase limit than the $\alpha=\nicefrac{1}{2}$ simulation, although, again, zonal averaging in the presence of large amplitude undulations is artificially smoothing the jets. Therefore, the differences in the diagnostics between the $\sigma^\prime \geq 0$ series and the $\alpha=\nicefrac{1}{2}$ series stem from the more rapid approach (i.e., at smaller $\keps/k_r$) of the $\sigma^\prime \geq 0$ series to the staircase limit.

   \subsection{Simulations with fixed parameters} 
   
   The dependence of the non-dimensional number $\keps/k_r$ on the external parameters $\varepsilon, \Lambda$, and $r$ depends on the functional form of $m(k)$. For example, if $m(k) \sim k^\alpha$, then 
   \begin{equation}\label{eq:J_power}
   	 \keps/{k_r} = |\Lambda|^\frac{4-\alpha}{(2\alpha+1)(\alpha+2)}\, \varepsilon^\frac{\alpha-1}{(1+2\alpha)(\alpha+2)} \, r^\frac{-1}{\alpha+2}.
   \end{equation}
	Because the forcing intensity wavenumber, $\keps$, is obtained by solving the implicit equation for $\keps$ \eqref{eq:keps_condition}, an analogous expression for $\keps/k_r$ is not possible for general $m(k)$. However, at sufficiently large $\keps$, the inversion function asymptotes to $m(\keps) \approx \keps/\sigma_0$ and so, using $\alpha$-turbulence expression for $\keps$ \eqref{eq:keps_alpha} with $\alpha=1$, we obtain
	\begin{equation}\label{eq:J_strat}
		\keps/k_r \approx  |\Lambda|^\frac{\alpha}{\alpha+2} \, \varepsilon^\frac{1-\alpha}{3\alpha + 6} \, r^\frac{-1}{\alpha+2}\, m_0^\frac{1}{\alpha+2} \, \sigma_0^{2/3}
	\end{equation}
	for large $\keps$, where $\alpha$ is the approximate power law dependence of $m(k)$ near $k_r$. 
	
	Therefore, simulations with identical $\keps/k_r$ but distinct inversion functions cannot be directly compared because they have different values of $\Lambda$ and $r$. 
	Here, we investigate how the stratification modifies jet structure as all other parameters are held fixed.
	We therefore run two additional series of simulations with the stratification profiles and inversion functions shown in figure \ref{F-mk_sigma}. 
	The stratification profiles were chosen so that they both have identical stratification at the upper boundary. One case corresponds to an increasing stratification profile, $\sigma^\prime \geq 0$, with an approximate power law dependence of $m(k) \sim k^{0.49}$ at small wavenumbers.  The second case consists of a decreasing stratification profile, $\sigma^\prime \leq 0$, with a $m(k) \sim k^{1.50}$ at small wavenumbers.
	Aside from the different stratification profiles, these two series of simulations are run under the same conditions as the constant stratification ($\alpha=1$) simulations of section \ref{SS-sims}, with identical values of $\Lambda$, $\varepsilon$, and $r$.
	
	\begin{figure}
  \centerline{\includegraphics[width=1\columnwidth]{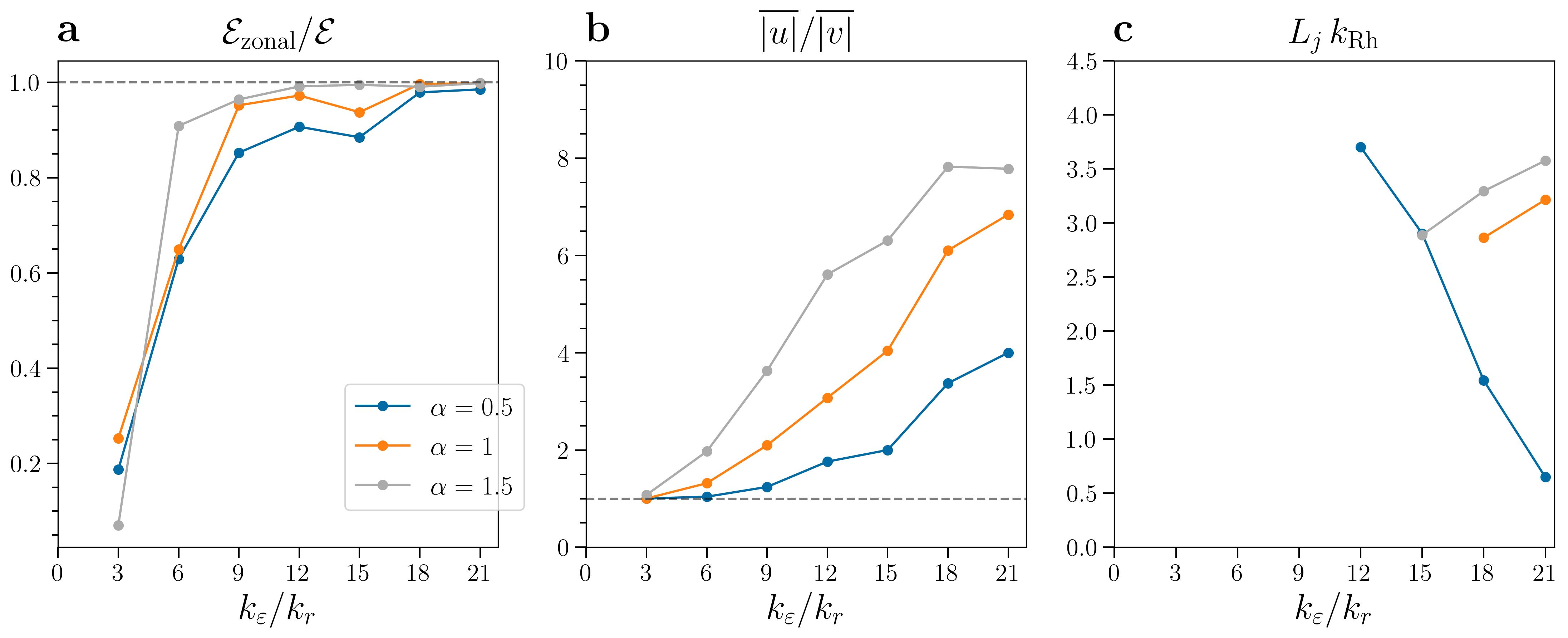}}
  \caption{As in figure \ref{F-krh_velocity}, but the $\sigma'\geq 0$ and $\sigma'\leq 0$ series now only differ from the $\sigma'=0$ (i.e., $\alpha=1$) series only in the vertical stratification (and hence the inversion function).}
  \label{F-krh_velocity_fixed}
   \end{figure}
	
	Summary diagnostics are shown in figure \ref{F-krh_velocity_fixed}. We see that, at a fixed value of $\Lambda$ and $r$, more of the total energy is in the zonal mode in the $\sigma'(z) \leq 0$ simulation than in the constant stratification simulation, which in turn is larger than the $\sigma'(z) \geq 0$ simulation (and similarly for the ratio of area averaged zonal to meridional speeds, $\overline{\abs{u}}/\overline{\abs{v}}$). 
	Therefore, increased non-locality (larger  $\alpha$) promotes anisotropy in the velocity field and leads to larger zonal velocities relative to meridional velocities. Indeed, figure \ref{F-alpha_jets_fixed} shows $\theta$ snapshots from these simulations; the more local, $\sigma'\geq 0$, simulations have larger meridional undulations along the jets.
	Moreover, compared to the $\keps/k_r= 15$ constant stratification simulation in figure \ref{F-alpha_jets}(c), the $\sigma'(z) \leq 0$ simulation in figure \ref{F-alpha_jets_fixed}(a) is closer to the staircase limit whereas the frontal zones in the $\sigma'(z) \geq 0$ simulation [figure \ref{F-alpha_jets_fixed}(c)] remain broad. Finally, we show values of the product $L_j \, \krh$, relating the Rhines wavenumber to the half spacing between the jets,  in figure \ref{F-krh_velocity_fixed}(c). These values are similar to those in shown in figure \ref{F-krh_velocity}(c).
	 
   \begin{figure}
  \centerline{\includegraphics[width=\textwidth]{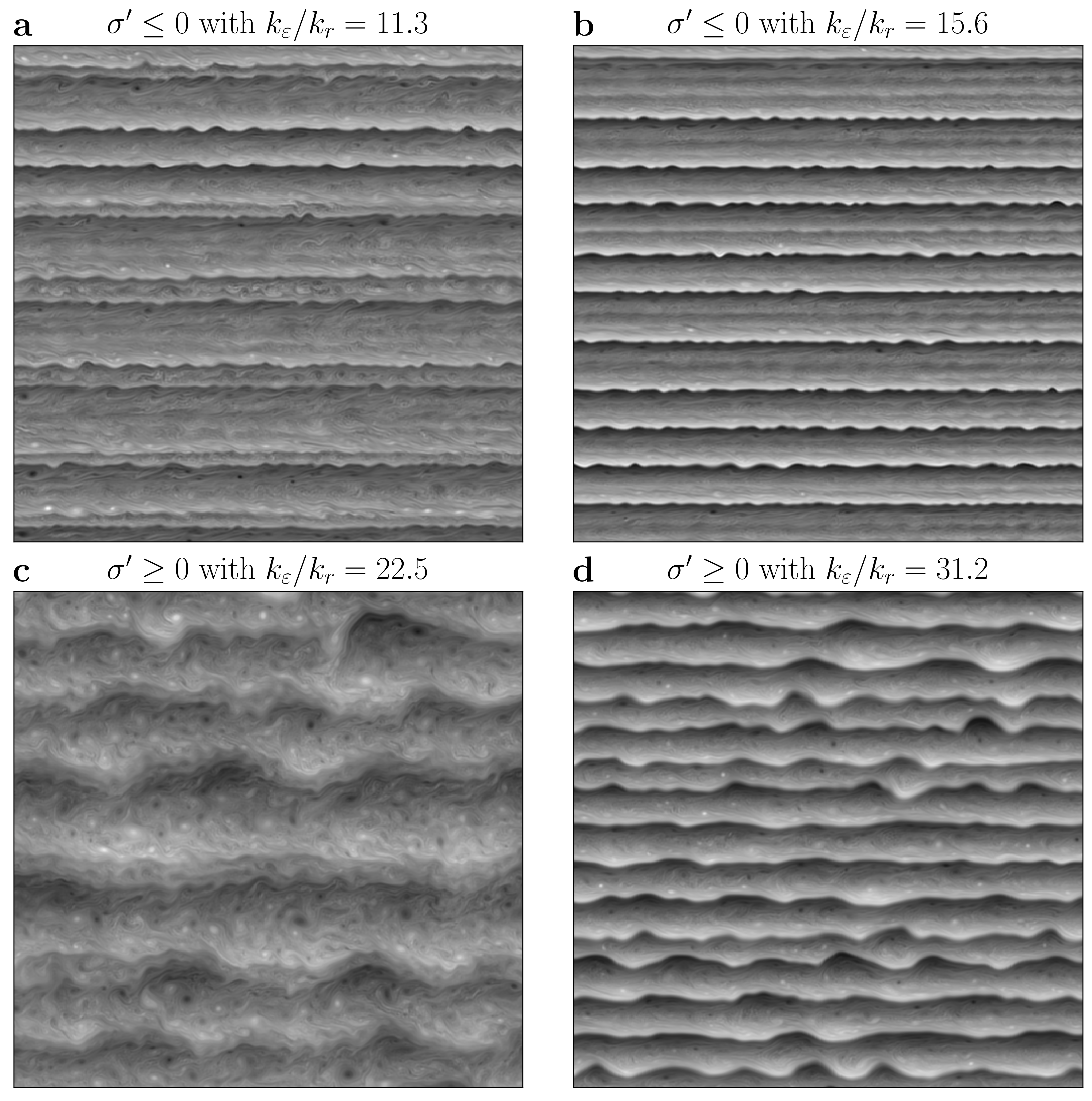}}
  \caption{Snapshots of relative surface potential vorticity, $\theta$, where $\theta$ is normalized by its maximum value in the snapshot. Panels (a) and (c) are from simulations with identical $\Lambda$, $r$, and $\varepsilon$ as the $\alpha=1$ simulation shown in figure \ref{F-alpha_jets}(c), whereas panels (b) and (d) are from simulations with identical $\Lambda$, $r$, and $\varepsilon$ as the $\alpha=1$ simulation shown in figure \ref{F-alpha_jets}(d). Only one quarter of the domain is shown (i.e., $512^2$ grid points).}
  \label{F-alpha_jets_fixed}
   \end{figure}

\section{Conclusion} \label{S-conclusion_jets}

We have examined the emergence of staircase-like buoyancy structures in surface quasigeostrophic turbulence with a mean background buoyancy gradient.
We found that the stratification's vertical structure controls the locality of the inversion operator and the dispersion of surface-trapped Rossby waves. 
As we go from decreasing stratification profiles [$\sigma'(z) \leq 0$] to increasing stratification profiles [$\sigma'(z) \geq 0$], the inversion operator becomes more local and Rossby wave less dispersive. 
In all cases, we find that the non-dimensional ratio, $\keps/k_r$, controls the extent of inhomogeneous buoyancy mixing. Larger $\keps/k_r$ correspond to sharper buoyancy gradients at jet centres with larger peak jet velocities that are separated by more homogeneous mixed-zones.
Moreover, we found that the staircase limit is reached at smaller $\keps/k_r$ in more non-local flows; the staircase limit is reached by $\keps/k_r \approx 15$ for our $\sigma \leq 0$ simulations compared to $\keps/k_r \approx 25$ for our $\sigma \geq 0$ simulations.

In addition, once the staircase limit is reached, the dynamics of the jets depends on the locality of the inversion operator and, hence, on the stratification's vertical structure. In flows with a more non-local inversion operator [or decreasing stratification, $\sigma'(z) \leq 0$], we obtain straight jets that are perturbed by dispersive, eastward propagating, along jet waves. In contrast, for more local flows [or over increasing stratification, $\sigma'(z) \geq 0$], we obtain jets with latitudinal meanders on the order of the jet spacing. The shape of these jets evolves in time as these meanders propagate westwards as weakly dispersive waves. 

The inversion operator's locality is also reflected in two more aspects of the dynamics. First, the domain-averaged zonal speed exceeds the domain-averaged meridional speed by approximately a factor of eight in our most non-local simulations, whereas this ratio is merely two in our most local simulations. This observation is consistent with the fact that jets are narrower and exhibit larger latitudinal meanders in more local flows. Second, for a given Rhines wavenumber, jets in more local flows are closer together. Indeed, we found $L_j \, \krh \approx 3-4$ in our most non-local simulations, where $L_j$ is the jet half spacing, as compared to $L_j \, \krh \approx 0.5-1.5$ in our most local simulations.

Several open questions remain. First, we have not examined the dynamics of the along jet waves. As we observed, these waves propagate eastwards in our most non-local simulations [with $\sigma'(z) \leq 0$] but westwards for our most local simulations [with $\sigma'(z) \geq 0$]. These waves are not described by the dispersion relation \eqref{eq:dispersion}; rather, the relevant model is that of freely propagating edge waves along a buoyancy discontinuity \citep{mcintyre_potential-vorticity_2008}. However, the difficulty here is that a jump discontinuity in the buoyancy field results in infinite velocities over constant or increasing stratification. In addition, the relationship of the along jet waves in the staircase limit to the non-linear zonons found by \cite{sukoriansky_nonlinear_2008} remains unclear.

The divergence of the velocity at a buoyancy discontinuities raises a second question. Is there a limit to how close the staircase limit can be approached?
In barotropic dynamics, the velocity remains finite at a jump continuity in the vorticity, and, in this case, \cite{scott_structure_2012} report that a vorticity staircase case can be approached arbitrarily. Whether this result continues to hold for arbitrarily sharp buoyancy gradients and arbitrarily large zonal velocities is not clear. Because the rms velocity seems to converge for arbitrarily sharp staircases, even for the most local inversion relations we considered, there may not be any energetic reason precluding arbitrarily sharp buoyancy gradients.

Finally, there remains the question of how relevant these results are for the upper ocean, which, in addition to surface buoyancy gradients, has interior potential vorticity gradients as well. In particular, our neglect of the $\beta$-effect limits the direct relevance of this model to the upper ocean. Whether surface buoyancy staircases can emerge under more realistic oceanic conditions requires further investigation.

\chapter{Normal Modes With Boundary Dynamics in Geophysical Fluids}\label{Ch-modes}

\begin{abstractchapter}
	Three-dimensional geophysical fluids support both internal and boundary-trapped waves. To obtain the normal modes in such fluids we must solve a differential eigenvalue problem for the vertical structure (for simplicity, we only consider horizontally periodic domains). If the boundaries are dynamically inert (e.g., rigid boundaries in the Boussinesq internal wave problem, flat boundaries in the quasigeostrophic Rossby wave problem) the resulting eigenvalue problem typically has a Sturm-Liouville form and the properties of such problems are well-known. However, when restoring forces are also present at the boundaries, then the equations of motion contain a time-derivative in the boundary conditions and this leads to an eigenvalue problem where the eigenvalue correspondingly appears in the boundary conditions. In certain cases, the eigenvalue problem can be formulated as an eigenvalue problem in the Hilbert space $L^2\oplus \C$ and this theory is well-developed. Less explored is the case when the eigenvalue problem takes place in a Pontryagin space, as in the Rossby wave problem over sloping topography. This article develops the theory of such problems and explores the properties of wave problems with dynamically-active boundaries. The theory allows us to solve the initial value problem for quasigeostrophic Rossby waves in a region with sloping bottom (we also apply the theory to two Boussinesq problems with a free-surface). For a step-function perturbation at a dynamically-active boundary, we find that the resulting time-evolution consists of waves present in proportion to their projection onto the dynamically-active boundary. 
\end{abstractchapter}

\section{Introduction}\label{S-intro}

An important tool in the study of wave motion near a stable equilibrium is the separation of variables. When applicable, this elementary technique transforms a linear partial differential equation into an ordinary differential eigenvalue problem for each coordinate \cite[e.g.,][]{hillen_partial_2012}. Upon solving the differential eigenvalue problems, one obtains the normal modes of the physical system. The normal modes are the fundamental wave motions for the given restoring forces, each mode represents an independent degree of freedom in which the physical system can oscillate, and any solution of the wave problem may be written as a linear combination of these normal modes. 

To derive the normal modes, we must first linearize the dynamical equations of motion about some equilibrium state. We then encounter linearized restoring forces of two kinds: 
\begin{itemize}
	\item[\emph{1.}] volume-permeating forces experienced by fluid particles in the interior, and
	\item[\emph{2.}] boundary-confined forces only experienced by fluid particles at the boundary.
\end{itemize} 
Examples of volume-permeating forces include the restoring forces resulting from continuous density stratification and continuous volume potential vorticity gradients. These restoring forces respectively result in internal gravity waves \citep{sutherland_internal_2010} and Rossby waves \citep{Vallis2017}. Examples of boundary-confined restoring forces include the gravitational force at a free-surface (i.e., at a jump discontinuity in the background density), forces arising from gradients in surface potential vorticity \citep{schneider_boundary_2003}, and the molecular forces giving rise to surface tension. These restoring forces respectively result in surface gravity waves \citep{sutherland_internal_2010}, topographic/thermal waves \citep{hoskins_use_1985}, and capillary waves \citep{lamb_hydrodynamics_1975}. 

In the absence of  boundary-confined restoring forces, we can often apply Sturm-Liouville theory \cite[e.g.,][]{hillen_partial_2012,zettl_sturm-liouville_2010} to the resulting eigenvalue problem. We thus obtain a countable infinity of waves whose vertical structures form a basis of $L^2$, the space of square-integrable functions (see section \ref{math-section}), and, given some initial vertical structure, we know how to solve for the subsequent time-evolution as a linear combination for linearly independent waves. Moreover, a classic result of Sturm-Liouville theory is that the $n$th mode has $n$ internal zeros.

In the presence of boundary-confined restoring forces, the governing equations have a time-derivative in the boundary conditions. The resulting eigenvalue problem correspondingly contains the eigenvalue parameter in the boundary conditions. Sturm-Liouville theory is inapplicable to such problems.

In this chapter, we present a general method for solving these problems by delineating a generalization of Sturm-Liouville theory. Some consequences of this theory are the following. There is a countable infinity of waves whose vertical structures form a basis of $L^2\oplus\C^s$, where $s$ is the number of dynamically-active boundaries; thus, each boundary-trapped wave, in mathematically rigorous sense, provides an additional degree of freedom to the problem. The modes satisfy an orthogonality relation involving boundary terms, the modes may have a negative norm, and the modes may have finite jump discontinuities at dynamically-active boundaries (although the \emph{solutions} are always continuous, see section  \ref{S-Boussinesq-rot}). When negative norms are possible (as in quasigeostrophic theory), there is a new expression for the Fourier coefficients that one must use to solve initial value problems [see equation \eqref{expansion}]. We can also expand boundary step-functions (representing some boundary localized perturbation) as a sum of modes. Moreover, the $n$th mode may not have $n$ internal zeros; indeed, depending on physical parameters in the problem, two or three linearly independent modes with an identical number of internal zeros may be present.  

We also show that the eigenfunction expansion of a function is term-by-term differentiable, with the derivative series converging uniformly on the whole interval, regardless of the boundary condition the function satisfies at the dynamically-active boundaries. This property is in contrast with a traditional Sturm-Liouville eigenfunction expansion where the term-by-term derivative converges uniformly only if the function satisfies the same boundary condition as the eigenfunctions.

We apply the theory to three geophysical wave problems. The first is that of a Boussinesq fluid with a free-surface; we find that the $n$th mode has $n$ internal zeros. The second example is that of a rotating Boussinesq fluid with a free-surface where we assume that the stratification suppresses rotational effects in the interior but not at the upper boundary. We find that there are two linearly independent modes with $M$ internal zeros, where the integer $M$ depends on the ratio of the Coriolis parameter to the horizontal wavenumber, and that the eigenfunctions have a finite jump discontinuity at the upper boundary. The third application is to a quasigeostrophic fluid with a sloping lower boundary. We find that modes with an eastward phase speed have a negative norm whereas modes with a westward phase speed have a positive norm (the sign of the norm has implications for the relative phase of a wave and for series expansions). Moreover, depending on the propagation direction, there can be two linearly independent modes with no internal zeros. For all three examples, we outline the properties of the resulting series expansions and provide the general solution. We also consider the time-evolution resulting from a vertically localized perturbation at a dynamically-active boundary; we idealize such a perturbation as a boundary step-function. The step-function perturbation induces a time-evolution in which the amplitude of each constituent wave is proportional to the projection of that wave onto the boundary. \label{qg-time-evolution}

To our knowledge, most of the above results cannot be found in the literature [however, the gravity wave orthogonality relation has been noted before, e.g., \cite{gill_atmosphere-ocean_2003} and \cite{kelly_vertical_2016} for the hydrostatic case and  \cite{olbers_internal_1986} and \cite{early_fast_2020} for the non-hydrostatic case]. For instance, we provide the only solution to the initial value problem for Rossby waves over topography in the literature [equation \eqref{qg-time-evolution}]. Moreover, many of the properties we discuss arise in practical problems in physical oceanography. The number of internal zeros of Rossby waves is also a useful quantity in observational physical oceanography [e.g., \cite{clement_vertical_2014} and \cite{de_la_lama_vertical_2016}]. In addition, the question of whether the quasigeostrophic baroclinic modes are complete is a controversial one. 
\cite{lapeyre_what_2009} has suggested that the baroclinic modes are incomplete because they assume a vanishing surface buoyancy anomaly. Consequently, \cite{smith_surface-aware_2012} address this issue by deriving an $L^2\oplus \C^2$ basis for quasigeostrophic theory. Yet many authors, citing completeness theorems from Sturm-Liouville theory, insist that the baroclinic modes are indeed complete and can represent all quasigeostrophic states \citep{ferrari_distribution_2010,lacasce_surface_2012,rocha_galerkin_2015}. This chapter shows that, by including boundary-confined restoring forces, we obtain a set of modes with additional degrees-of-freedom. These degrees-of-freedom manifest in the behaviour of eigenfunction expansions at the boundaries.
In addition, the distinction between $L^2$ and $L^2\oplus \C^s$ bases that we present here is useful for equilibrium statistical mechanical calculations where one must decompose fluid motion onto a complete set of modes \citep{bouchet_statistical_2012, venaille_catalytic_2012}.

The plan of the chapter is the following. We formulate the mathematical theory in section \ref{math-section}. We then apply the theory to the two Boussinesq wave problems, in section \ref{S-Boussinesq}, and to the quasigeostrophic wave problem, in section \ref{S-QG}. We consider the time-evolution of a localized perturbation at a dynamically-active boundary in section \ref{S-step-forcing}. We then conclude in section  \ref{S-conclusion_modes}.

\section{The eigenvalue problem} \label{math-section}
	
In this section, we outline the theory of the differential eigenvalue problem,
\begin{align} \label{EigenDiff_modes}
	-(p \, \phi')' + q \, \phi &= \lambda \, r \, \phi \quad \text{for } \quad z\in \left(z_1,z_2\right)\\
	\label{EigenB1}
	- \left[a_1 \, \phi(z_1) - b_1 \, (p \, \phi')(z_1)\right] &= \lambda \left[c_1 \, \phi(z_1) - d_1 \, (p \, \phi')(z_1)\right] \\
	\label{EigenB2}
	- \left[a_2 \, \phi(z_2) - b_2 \, (p \, \phi')(z_2)\right] &= \lambda \left[c_2 \, \phi(z_2) - d_2 \, (p \, \phi')(z_2)\right], 	   	
\end{align}
where $p^{-1},q$, and $r$ are real-valued integrable functions; $a_i,b_i,c_i$, and $d_i$ are real numbers with $i\in\{1,2\}$; and where $\lambda \in \C$ is the eigenvalue parameter. We further assume that $p>0$ and $r>0$, that $p$ and $r$ are twice continuously differentiable, that $q$ is continuous,  and that $(a_i,b_i)\neq (0,0)$ for $i\in\{1,2\}$.
The system of equations \eqref{EigenDiff_modes}\textendash\eqref{EigenB2} is an eigenvalue problem for the eigenvalue $\lambda \in \C$ and differs from a regular Sturm-Liouville problem in that $\lambda$ appears in the boundary conditions \eqref{EigenB1} and \eqref{EigenB2}.  That is, setting $c_{i}= d_{i} = 0$ recovers the traditional Sturm-Liouville problem. The presence of $\lambda$ as part of the boundary condition leads to some fundamentally new mathematical features that are the subject of this section and fundamental to the physics of this chapter. 

It is useful to define the two boundary parameters
    \begin{equation}\label{Di}
    	D_i = (-1)^{i+1} \left(a_i \, d_i - b_i \, c_i\right) \quad i=1,2. 
    \end{equation}
Just as the function $r$ acts as a weight for the interval $(z_1,z_2)$ in traditional Sturm-Liouville problems, the constants $D_i^{-1}$ will play analogous roles for the boundaries $z=z_i$ when $D_i\neq 0$.

\subsubsection*{Outline of the mathematics}

The right-definite case, when the $D_i \geq 0$ for $i\in\{1,2\}$, is well-known in the mathematics literature; most of the right-definite results in this section are due to \cite{evans_non-self-adjoint_1970}, \cite{walter_regular_1973}, and \cite{fulton_two-point_1977}. In contrast, the left-definite case, defined below, is much less studied. In this section, we generalize the right-definite results of \cite{fulton_two-point_1977} to the left-definite problem as well as provide an intuitive formulation \cite[in terms of functions rather than vectors, for a vector formulation see][]{fulton_two-point_1977} of the eigenvalue problem. 

In section \ref{S-Op-Form} we state the conditions under which we obtain real eigenvalues and a basis of eigenfunctions. We proceed, in section  \ref{S-Prop-of-Eig}, to explore the properties of eigenfunctions and eigenfunction expansions. Finally, in section \ref{S-oscillation}, we discuss oscillation properties of the eigenfunctions. Additional properties of the eigenvalue problem are found in appendix \ref{S-additional-properties} and a literature review, along with various technical proofs, is found in appendix \ref{S-math-app}.

\subsection{Formulation of the problem} \label{S-Op-Form}
\subsubsection{The function space of the problem}
We denote by $L^2$ the Hilbert space of square-integrable ``functions" $\phi$ on the interval $(z_1,z_2)$ satisfying 
\begin{equation}
	\intz \abs{\phi}^2 \, r \, \mathrm{d}z < \infty.
\end{equation}
To be more precise, the elements of $L^2$ are not functions but rather equivalence classes of functions \cite[e.g.,][ section I.3]{reed_methods_1980}. Two functions, $\phi$ and $\psi$, are equivalent in $L^2$ (i.e., $\phi=\psi$ in $L^2$) if they agree in a mean-square sense on $[z_1,z_2]$,
\begin{equation}\label{L2-equal}
	\intz \abs{\phi(z) - \psi(z)}^2 \, r \, \mathrm{d}z = 0.
\end{equation}
Significantly, we can have $\phi=\psi$ in $L^2$ but $\phi\neq \psi$ pointwise. 
	
Furthermore, as a Hilbert space, $L^2$ is endowed with a positive-definite inner product
\begin{equation}\label{L2-inner}
	\inner{\phi}{\psi}_\sigma = \intz \ov{\phi}\, \psi \, \mathrm{d}\sigma = \intz \ov{\phi} \, \psi \ r\, \mathrm{d}z,
\end{equation}
where the symbol $\ov{\ }$ denotes complex conjugation and the measure $\sigma$ associated $L^2$ induces a differential element $\mathrm{d}\sigma = r\,\mathrm{d}z$ (see appendix \ref{S-additional-properties}). The positive-definiteness is ensured by our assumption that $r>0$ (i.e., $\inner{\phi}{\phi}_{\sigma} > 0$ for $\phi \neq 0$ when $r>0$).

It is well-known that traditional Sturm-Liouville problems [i.e., equations \eqref{EigenDiff_modes}\textendash\eqref{EigenB2} with $c_i=d_i=0$ for $i=1,2$] are eigenvalue problems in some subspace of $L^2$ \citep{debnath_introduction_2005}. For the more general case of interest here, the eigenvalue problem occurs over a ``larger'' function space denoted by $L^2_\mu$ which we construct in appendix \ref{S-additional-properties}.

Let the integer $s\in\{0,1,2\}$ denote the number of $\lambda$-dependent boundary conditions and let $S$ denote the set
\begin{equation}\label{S-set}
	S = \{j \ | \ j \in \{1,2\} \text{ and } (c_j,d_j) \neq (0,0) \}.
\end{equation}
$S$ is one of $\emptyset,\{1\},\{2\},\{1,2\}$ and $s$ is the number of elements in the set $S$. In appendix \ref{S-additional-properties}, we show that $L^2_\mu$ is isomorphic to the space $L^2\oplus \C^s$ and is thus ``larger'' than $L^2$ by $s$ dimensions. 

We denote elements of $L^2_\mu$ by upper case letters $\Psi$; we define $\Psi(z)$ for $z\in[z_1,z_2]$ by
\begin{equation}\label{L2_mu-element}
	\Psi(z) = 
	\begin{cases}
		\Psi(z_i) \quad & \textrm{at } z=z_i, \textrm{ for }\, i \in S, \\
		\psi(z) \quad & \textrm{otherwise},
	\end{cases}
\end{equation}
where $\Psi(z_i) \in \C$ are constants, for $i\in S$, and the corresponding lower case letter $\psi$ denotes an element of $L^2$. 
Two elements $\Phi$ and $\Psi$ of $L^2_\mu$ are equivalent in $L^2_\mu$ if and only if
\begin{itemize}
	\item [\emph{1.}] $\Phi(z_i) = \Psi(z_i)$ for $i\in S$, and
	\item [\emph{2.}] $\phi(z)$ and $\psi(z)$ are equivalent in $L^2$ [i.e., as in equation \eqref{L2-equal}].
\end{itemize}
Here, $\Phi$, as an element of $L^2_\mu$, is defined as in equation \eqref{L2_mu-element}. The primary difference between $L^2$ and $L^2_\mu$ is that $L^2_\mu$ discriminates between functions that disagree at $\lambda$-dependent boundaries.

The measure $\mu$ associated with $L^2_\mu$ (see appendix \ref{S-additional-properties}) induces a differential element
\begin{equation} \label{dmu}
	\mathrm{d}\mu(z) = \left[ r(z) + \sum_{i\in S} D_i^{-1} \, \delta(z-z_i) \right] \mathrm{d}z,
\end{equation}
where $\delta(z)$ is the Dirac delta. The induced inner product on $L^2_\mu$ is
\begin{equation}\label{dmu_inner}
	\inner{\Phi}{\Psi} = \intz \ov{\Phi} \, {\Psi} \, \mathrm{d}\mu = \intz \ov {\Phi} \, {\Psi} \, r \, \mathrm{d}z + \sum_{i\in S} D_i^{-1} \, \ov{\Phi(z_i)} \, \Psi(z_i).
\end{equation}
If $D_i > 0$ for $i\in S$ then this inner product is positive-definite and $L^2_\mu$ is a Hilbert space. However, this is not the case in general.

Let $\kappa$ denote the number of negative $D_i$ for $i\in S$ (the possible values are $\kappa=0,1,2$). Then $L^2_\mu$ has a $\kappa$-dimensional subspace of elements $\Psi$ satisfying 
\begin{equation}
	\inner{\Psi}{\Psi} < 0.
\end{equation}
This makes $L^2_\mu$ a Pontryagin space of index $\kappa$ \citep{bognar_indefinite_1974}. If $\kappa=0$ then $L^2_\mu$ is again a Hilbert space. In the present case, $L^2_\mu$ also has an infinite-dimensional subspace of elements $\psi$ satisfying 
\begin{equation}
	\inner{\Psi}{\Psi} > 0.
\end{equation}

\subsubsection{Reality and completeness} \label{S-real-complete}

In appendix \ref{S-eigen-in-L2-mu}, we reformulate the eigenvalue problem \eqref{EigenDiff_modes}\textendash\eqref{EigenB2} as an eigenvalue problem of the form,
\begin{equation}\label{Op-Eigenproblem}
	\Lc \, \Phi = \lambda \, \Phi,
\end{equation}
in a subspace of $L^2_\mu$, where $\Lc$ is a linear operator and $\Phi$ an element of $L^2_\mu$. We also define the notions of right- and left-definiteness that are required for the reality and completeness theorem below. The following two propositions can be considered to define right- and left-definiteness for applications of the theory. Both propositions are obtained through straightforward manipulations (see appendix \ref{S-additional-properties}).

\begin{proposition}[Criterion for right-definiteness]\label{right-definite-prop}
	The eigenvalue problem \eqref{EigenDiff_modes}\textendash\eqref{EigenB2} is right-definite if $r>0$ and $D_i>0$ for $i\in S$.
\end{proposition}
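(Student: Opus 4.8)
The plan is to unwind the operator-theoretic definition of right-definiteness supplied in appendix \ref{S-additional-properties} and reduce the proposition to the positivity of the inner product \eqref{dmu_inner} on $L^2_\mu$. Recall that, once \eqref{EigenDiff_modes}\textendash\eqref{EigenB2} has been reformulated as the pencil $\Lc\,\Phi = \lambda\,\Phi$ of \eqref{Op-Eigenproblem}, right-definiteness is by definition the statement that the right-hand-side measure $\mu$ is a genuine (positive) measure, so that $L^2_\mu$ is a Hilbert space and $\inner{\cdot}{\cdot}$ is positive-definite. In the language introduced after \eqref{dmu_inner}, this is exactly the condition $r>0$ together with $\kappa = 0$, where $\kappa$ counts the negative $D_i$ for $i \in S$. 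Thus the whole content of the proposition is to verify that $r>0$ and $D_i>0$ for $i\in S$ force $\kappa=0$ and hence positive-definiteness.

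First I would write out the diagonal quadratic form obtained from \eqref{dmu_inner},
\begin{equation}
	\inner{\Phi}{\Phi} = \intz \abs{\phi}^2 \, r \, \mathrm{d}z + \sum_{i\in S} D_i^{-1}\, \abs{\Phi(z_i)}^2.
\end{equation}
Under the hypothesis $r>0$, the integral term is non-negative and vanishes if and only if $\phi = 0$ in $L^2$, by the defining equivalence \eqref{L2-equal}. Under the hypothesis $D_i>0$ for $i\in S$, each coefficient $D_i^{-1}$ is strictly positive, so every boundary term $D_i^{-1}\abs{\Phi(z_i)}^2$ is non-negative and vanishes if and only if $\Phi(z_i)=0$. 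Because two elements of $L^2_\mu$ are equivalent precisely when their interior parts agree in $L^2$ and their boundary values agree for every $i\in S$ [see the equivalence criterion following \eqref{L2_mu-element}], the sum $\inner{\Phi}{\Phi}$ vanishes if and only if $\Phi = 0$ in $L^2_\mu$. This establishes that $\inner{\cdot}{\cdot}$ is positive-definite, i.e. $\kappa=0$, which is exactly right-definiteness.

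The one genuinely delicate point — the step I would treat most carefully — is the identification of the boundary weight appearing in \eqref{dmu_inner} with $D_i^{-1}$, where $D_i$ is the determinant \eqref{Di} carrying the sign factor $(-1)^{i+1}$. This identification is where the reformulation of appendix \ref{S-additional-properties} does its work: one rearranges each $\lambda$-dependent boundary condition into the canonical boundary component of $\Lc\Phi$ and the boundary coordinate $\Phi(z_i)$, and checks that the boundary terms generated by the Lagrange identity for $-(p\,\cdot\,')' + q$ are cancelled by precisely the weight $D_i^{-1}$. This requires $a_i d_i - b_i c_i \neq 0$ so that the boundary reduction is invertible, and it is the sign of this determinant, packaged as $D_i$, that fixes the signature of the boundary coordinate. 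Once this bookkeeping is in place the hypothesis $D_i>0$ translates immediately into a positive boundary weight, the positivity argument above goes through verbatim, and the symmetry of $\Lc$ needed for the later reality and completeness theorem follows from the same boundary cancellation. There is no deeper obstacle here — consistent with the paper's remark that the proposition follows from straightforward manipulations.
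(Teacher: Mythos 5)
Your proof is correct and follows the same route the paper intends: under the definition in appendix \ref{S-additional-properties}, right-definiteness is positive-definiteness of the inner product \eqref{dmu_inner}, and the hypotheses $r>0$ and $D_i>0$ for $i\in S$ make each term of $\inner{\Phi}{\Phi}$ non-negative with the sum vanishing only for $\Phi=0$ in $L^2_\mu$. The closing paragraph about the Lagrange-identity bookkeeping is not needed here, since the weight $D_i^{-1}$ enters by the definition of the measure $\mu$ in \eqref{dmu} rather than as something to be derived for this proposition.
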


\begin{proposition}[Criterion for left-definiteness] \label{left-definite-prop}
	The eigenvalue problem \eqref{EigenDiff_modes}\textendash\eqref{EigenB2} is left-definite if the following conditions hold:
	\begin{itemize}
		\item [(i)] the functions $p,q$ satisfy $p>0, q \geq 0$,
		\item [(ii)] for the $\lambda$-dependent boundary conditions, we have 
			\begin{equation}
				\frac{a_i \, c_i}{D_i} \leq 0, \quad \frac{b_i \, d_i}{D_i} \leq 0, \quad (-1)^i \frac{a_i \, d_i}{D_i} \geq 0  \quad \text{for } i \in S.
			\end{equation}
		\item [(iii)] for the $\lambda$-independent boundary conditions, we have 
			\begin{equation}
				b_i = 0  \quad \text{or} \quad  (-1)^{i+1}\frac{a_i}{b_i} \geq 0 \quad \text{ if } b_i \neq 0   \quad \text{ for } i \in \{1,2\}\setminus S.
			\end{equation}
	\end{itemize}
\end{proposition}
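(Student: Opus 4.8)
The plan is to read the claim as a positivity statement about the \emph{energy form} naturally attached to $\mathcal{L}$. Just as right-definiteness (Proposition \ref{right-definite-prop}) asks that the weight measure make the inner product \eqref{dmu_inner} positive, left-definiteness asks that the differential operator itself make the sesquilinear form $[\Phi,\Psi]:=\langle\mathcal{L}\Phi,\Psi\rangle$ positive definite, so that $[\cdot,\cdot]$ can serve as an alternative inner product on the relevant subspace of $L^2_\mu$. Accordingly, the whole proof reduces to evaluating $[\Phi,\Phi]=\langle\mathcal{L}\Phi,\Phi\rangle$ on the operator domain and showing that it is nonnegative (and, after isolating the null case, positive) precisely under hypotheses (i)--(iii).

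First I would dispatch the interior and the $\lambda$-independent endpoints. Writing $\mathcal{L}\Phi=r^{-1}(-(p\phi')'+q\phi)$ on $(z_1,z_2)$, Green's identity turns the interior part of $\langle\mathcal{L}\Phi,\Phi\rangle$ into
\[
	\int_{z_1}^{z_2}\left(p\,\abs{\phi'}^2 + q\,\abs{\phi}^2\right)\mathrm{d}z - \Big[\,\phi\,\overline{p\phi'}\,\Big]_{z_1}^{z_2},
\]
where the bracket collects the endpoint contributions. Hypothesis (i), $p>0$ and $q\geq0$, makes the integral nonnegative, so the entire question is pushed onto the two endpoints. At a $\lambda$-independent endpoint ($i\notin S$) the boundary condition reduces to $a_i\phi(z_i)=b_i(p\phi')(z_i)$; substituting this into the surviving endpoint term turns it into $(-1)^{i+1}(a_i/b_i)\abs{\phi(z_i)}^2$ when $b_i\neq0$, and kills it (via $\phi(z_i)=0$) when $b_i=0$. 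Hypothesis (iii) is exactly the sign condition that renders this contribution nonnegative.

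The crux, and the step I expect to be the main obstacle, is the pair of $\lambda$-dependent endpoints ($i\in S$), where the operator also carries a boundary component. Here $[\Phi,\Phi]$ picks up both the endpoint piece from the integration by parts and the term $D_i^{-1}\,\overline{(\mathcal{L}\Phi)(z_i)}\,\Phi(z_i)$ from the inner product \eqref{dmu_inner}. Using the boundary encodings $\Phi(z_i)=c_i\phi(z_i)-d_i(p\phi')(z_i)$ and $(\mathcal{L}\Phi)(z_i)=-a_i\phi(z_i)+b_i(p\phi')(z_i)$ --- which are exactly what recast \eqref{EigenB1}--\eqref{EigenB2} as $\mathcal{L}\Phi=\lambda\Phi$ --- I would expand everything in $u_i=\phi(z_i)$ and $v_i=(p\phi')(z_i)$, then combine the cross terms from the integration-by-parts endpoint with those from the operator's boundary component, using $D_i^{-1}(a_id_i-b_ic_i)=(-1)^{i+1}$, collapsing them into $2C_i\,\Re(\overline{u_i}v_i)$. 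The combined endpoint contribution is then the Hermitian form $A_i\abs{u_i}^2+B_i\abs{v_i}^2+2C_i\,\Re(\overline{u_i}v_i)$ with $A_i=-a_ic_i/D_i$, $B_i=-b_id_i/D_i$, and $C_i=a_id_i/D_i$. This form is positive semidefinite iff $A_i\geq0$, $B_i\geq0$, and $A_iB_i-C_i^2\geq0$; the first two are the first two inequalities of (ii), and a short computation using $D_i=(-1)^{i+1}(a_id_i-b_ic_i)$ gives $A_iB_i-C_i^2=(-1)^i a_id_i/D_i$, nonnegative exactly by the third inequality of (ii). The delicate bookkeeping is getting these signs and the cross-term combination right, since the whole point is that $D_i$ may be negative here (unlike the right-definite case), so the form can still be left-definite even when $L^2_\mu$ is a genuine Pontryagin space.

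Assembling the three nonnegative pieces shows $[\Phi,\Phi]\geq0$. To finish I would upgrade this to the strict positivity that left-definiteness requires by inspecting the equality case: $[\Phi,\Phi]=0$ forces $\phi'\equiv0$ on $(z_1,z_2)$ together with the vanishing of each endpoint form, and feeding a constant $\phi$ back through the boundary conditions forces $\phi\equiv0$, the degenerate $q\equiv0$, all-Neumann situation being excluded by the hypotheses. This last verification should be routine relative to the endpoint computation above.
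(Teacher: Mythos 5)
Your core computation is exactly the paper's argument: the paper simply writes down the quadratic form $Q(\Phi,\Phi)=\inner{\Phi}{\Lc\,\Phi}$, records (via integration by parts) that it equals the interior Dirichlet integral plus the endpoint terms $(-1)^{i+1}(a_i/b_i)\abs{\phi(z_i)}^2$ for $i\notin S$ and the $2\times2$ Hermitian forms $-D_i^{-1}\bigl(a_ic_i\abs{u_i}^2+2a_id_i\,\Re(\ov{u_i}\,w_i)+b_id_i\abs{w_i}^2\bigr)$ for $i\in S$, and declares the proposition to follow by ``straightforward manipulations.'' Your expansion in $u_i=\phi(z_i)$, $v_i=(p\phi')(z_i)$ and the determinant identity $A_iB_i-C_i^2=(-1)^i a_id_i/D_i$ are precisely those manipulations, correctly carried out, so conditions (i)--(iii) deliver $Q(\Phi,\Phi)\geq 0$ term by term.

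The one problem is your closing paragraph. Left-definiteness in this paper is \emph{defined} as $Q(\Phi,\Phi)\geq 0$ on $D(\Lc)$ --- positive semidefiniteness, not strict positivity --- so the ``upgrade'' you attempt is not required; worse, it is false under hypotheses (i)--(iii). The all-Neumann, $q\equiv 0$ case is \emph{not} excluded: taking $a_1=a_2=0$ with $b_i\neq 0$ satisfies (iii) (the ratio $a_i/b_i$ is $0\geq 0$), and then any constant $\phi$ lies in $D(\Lc)$ and gives $Q(\Phi,\Phi)=0$. This is exactly the traditional baroclinic-mode problem, which is left-definite in the paper's sense but has $\lambda=0$ as an eigenvalue; the paper deals with this degeneracy not inside the proposition but as a separate hypothesis in Theorem \ref{real-complete} (``if the problem is not right-definite, we assume that $\lambda=0$ is not an eigenvalue''). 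Drop the final paragraph and your proof matches the paper's.
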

The notions of right and left-definiteness are not mutually exclusive. Namely, a problem can be neither right- or left-definite; both right- and left-definite; only right-definite; or only left-definite. In this chapter, we always assume that $p>0$ and $r>0$.


The reality of the eigenvalues and the completeness of the eigenfunctions in the space $L^2_\mu$ is given by the following theorem.
\begin{theorem}[Reality and completeness]\label{real-complete}

	Suppose the eigenvalue problem \eqref{EigenDiff_modes}\textendash\eqref{EigenB2} is either right-definite or left-definite. Moreover, if the problem is not right-definite, we assume that $\lambda=0$ is not an eigenvalue. Then the eigenvalue problem  \eqref{EigenDiff_modes}\textendash\eqref{EigenB2} has a countable infinity of real simple eigenvalues $\lambda_n$ satisfying 
	\begin{equation}
		\lambda_0 < \lambda_1 < \dots < \lambda_n < \dots \rightarrow \infty,
	\end{equation}
	with corresponding eigenfunctions $\Phi_n$. Furthermore, the set of eigenfunctions $\{\Phi_n\}_{n=0}^\infty$ is a complete orthonormal basis for $L^2_\mu$ satisfying
	\begin{equation}
		\inner{\Phi_m}{\Phi_n} = \pm \delta_{mn}.
	\end{equation} 
\end{theorem}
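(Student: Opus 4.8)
The plan is to recast the boundary-value problem \eqref{EigenDiff_modes}--\eqref{EigenB2} as a genuine operator eigenvalue problem $\Lc\,\Phi = \lambda\,\Phi$ on the space $L^2_\mu \cong L^2\oplus\C^s$, and then deploy spectral theory adapted to the (possibly indefinite) inner product \eqref{dmu_inner}. First I would define $\Lc$ so that its interior action is $r^{-1}[-(p\,\phi')' + q\,\phi]$, while the $s$ extra scalar components of $L^2_\mu$ absorb the data appearing in the $\lambda$-dependent boundary conditions; the precise construction is the content of appendix \ref{S-eigen-in-L2-mu}. The point of enlarging $L^2$ to $L^2_\mu$ is exactly that it converts the boundary conditions \eqref{EigenB1}--\eqref{EigenB2}, in which $\lambda$ appears, into part of the eigenvector equation, so that $\lambda$ no longer appears in the domain of $\Lc$.

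The central computation is to show that $\Lc$ is symmetric with respect to the inner product \eqref{dmu_inner}. Integrating by parts twice (the Lagrange identity for the Sturm--Liouville expression) produces boundary terms at $z_1$ and $z_2$; the weights $D_i^{-1}$ multiplying the delta functions in the measure \eqref{dmu} are chosen precisely so that these boundary terms cancel against the scalar contributions from the $\C^s$ components once the boundary relations \eqref{EigenB1}--\eqref{EigenB2} are used. This yields $\inner{\Lc\Phi}{\Psi} = \inner{\Phi}{\Lc\Psi}$, from which orthogonality of eigenfunctions for distinct eigenvalues, $\inner{\Phi_m}{\Phi_n}=0$ for $\lambda_m \neq \lambda_n$, follows in the usual way.

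Reality, discreteness, and completeness then follow by passing to the inverse operator, and here the hypotheses split into two cases. In the right-definite case ($D_i \geq 0$) the inner product is positive-definite, $L^2_\mu$ is a Hilbert space, and $\Lc^{-1}$ --- an integral operator whose kernel is the Green's function of the Sturm--Liouville expression augmented by finite-rank boundary corrections --- is compact and self-adjoint; the Hilbert-space spectral theorem then delivers a complete orthonormal basis of eigenfunctions with real eigenvalues accumulating only at $+\infty$. In the left-definite case the inner product \eqref{dmu_inner} is indefinite and $L^2_\mu$ is a Pontryagin space of index $\kappa$. The assumption that $\lambda=0$ is not an eigenvalue guarantees that $\Lc^{-1}$ exists; the sign conditions (i)--(iii) of Proposition \ref{left-definite-prop} ensure that the auxiliary ``energy'' form $\intz (p\,\abs{\phi'}^2 + q\,\abs{\phi}^2)\,\di z + (\text{boundary terms})$ is positive-definite, providing a second, genuinely positive-definite Hilbert-space structure on which a suitable power of $\Lc^{-1}$ is again compact and self-adjoint. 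Spectral theory in this energy space yields the same conclusions --- real, simple eigenvalues and completeness --- which I would then transfer back to the $L^2_\mu$ inner product. Simplicity of each eigenvalue is a standard Wronskian argument: \eqref{EigenDiff_modes} is second order, and the boundary conditions leave at most a one-dimensional eigenspace.

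Finally, the normalization $\inner{\Phi_m}{\Phi_n} = \pm\delta_{mn}$ records the sign of each $\inner{\Phi_n}{\Phi_n}$; exactly $\kappa$ of the modes carry negative self-inner-product, matching the index of the Pontryagin space. The main obstacle is this indefinite-metric (left-definite) case: the $L^2_\mu$ spectral theorem is unavailable once $\kappa>0$, so the argument must either invoke the spectral theory of self-adjoint operators in $\Pi_\kappa$ or, as sketched above, route through the positive-definite energy inner product and then verify that the two structures share the same eigenfunctions. Controlling the interplay of the two inner products --- and confirming that the energy form is positive-definite precisely under conditions (i)--(iii), so that $\lambda_n \to +\infty$ with only finitely many negative-norm modes --- is where the real work lies.
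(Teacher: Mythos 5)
Your proposal is correct in outline and its architecture---recasting \eqref{EigenDiff_modes}--\eqref{EigenB2} as $\Lc\,\Phi=\lambda\,\Phi$ on $L^2_\mu$, establishing symmetry of $\Lc$ with respect to \eqref{dmu_inner} via the Lagrange identity and the choice of weights $D_i^{-1}$, and passing to the compact inverse $\Lc^{-1}$---is exactly the paper's setup (appendix \ref{S-eigen-in-L2-mu} and \ref{S-real-proof}). Where you genuinely diverge is in the indefinite (left-definite, $\kappa>0$) case: the paper stays inside the Pontryagin space $L^2_\mu$ and applies a spectral theorem for \emph{positive compact operators in} $\Pi_\kappa$ (theorem \ref{positive-compact}, assembled from Bogn\'ar's and Azizov's results), whereas your primary sketch routes through the energy form $Q(\Phi,\Psi)=\inner{\Phi}{\Lc\,\Psi}$ as a second, positive-definite Hilbert structure. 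The Pontryagin route buys reality, semi-simplicity (definite eigenspaces forbid Jordan chains), completeness, and the exact count of $\kappa$ negative-norm eigenfunctions in one stroke; your energy route is the more classical left-definite argument and makes self-adjointness manifest in a genuine Hilbert space, but it obliges you to do separately what the Pontryagin theorem hands over for free: upgrade $Q\geq 0$ to $Q>0$ using the hypothesis that $\lambda=0$ is not an eigenvalue, prove compactness of $\Lc^{-1}$ in the energy norm and transfer completeness back to $L^2_\mu$ through the dense continuous embedding, and count the negative-norm modes (which must be extracted from the $\kappa$-dimensional negative subspace of \eqref{dmu_inner}, not from $Q$). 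One small caution: your Wronskian argument gives only \emph{geometric} simplicity; in an indefinite metric, algebraic simplicity (no generalized eigenvectors) is not automatic and is precisely what the paper gets from the definiteness of eigenspaces---though if you complete the energy-space route, self-adjointness in that genuine Hilbert space supplies it for you.
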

\begin{proof}
	See appendix \ref{S-real-proof}.
\end{proof}
Recall that $\kappa$ denotes the number of negative $D_i$ for $i\in S$. We then have the following corollary of the proof of theorem \ref{real-complete}.

\begin{proposition}\label{eigenvalue-sign}
	Suppose the eigenvalue problem \eqref{EigenDiff_modes}\textendash\eqref{EigenB2} is left-definite and that $\lambda=0$ is not an eigenvalue. Then there are $\kappa$ negative eigenvalues and their eigenfunctions satisfy
	\begin{equation}
		\inner{\Phi}{\Phi} < 0.
	\end{equation}
	The remaining eigenvalues are positive and their eigenfunctions satisfy
	\begin{equation}
		\inner{\Phi}{\Phi}>0.
	\end{equation}
\end{proposition}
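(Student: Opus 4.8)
The plan is to play the indefinite $L^2_\mu$ inner product $\inner{\cdot}{\cdot}$, whose negative index equals $\kappa$, against the auxiliary \emph{left} bilinear form that underlies the notion of left-definiteness. First I would recall that, in the left-definite setting, pairing the differential expression $-(p\,\phi')'+q\,\phi$ with $\ov{\Phi}$ and using the boundary conditions \eqref{EigenB1}--\eqref{EigenB2} together with the measure \eqref{dmu} produces a bilinear form
\begin{equation*}
	[\Phi,\Psi] := \inner{\Lc\,\Phi}{\Psi},
\end{equation*}
which, after integration by parts, reduces to an integral of $p\,\abs{\Phi'}^2+q\,\abs{\Phi}^2$ plus boundary contributions. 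Conditions (i)--(iii) of Proposition \ref{left-definite-prop} are arranged precisely so that this form is positive semidefinite, and its degeneracy is confined to $\ker\Lc$. The hypothesis that $\lambda=0$ is not an eigenvalue forces $\ker\Lc=\{0\}$ on the relevant domain, so no nonzero eigenfunction lies in the radical of the form; hence $[\Phi_n,\Phi_n]>0$ for every eigenfunction $\Phi_n$.

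The second step is the sign identity. Taking $\Psi=\Phi_n$ in the eigenvalue relation $\Lc\,\Phi_n=\lambda_n\,\Phi_n$ gives
\begin{equation*}
	[\Phi_n,\Phi_n] = \inner{\Lc\,\Phi_n}{\Phi_n} = \lambda_n\,\inner{\Phi_n}{\Phi_n},
\end{equation*}
where the last equality uses the reality of $\lambda_n$ from Theorem \ref{real-complete}. Since the left-hand side is strictly positive, the eigenvalue $\lambda_n$ and the norm $\inner{\Phi_n}{\Phi_n}$ must carry the same nonzero sign. This already establishes the qualitative correspondence: an eigenfunction has negative norm exactly when its eigenvalue is negative, and positive norm exactly when its eigenvalue is positive.

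The final and main step is the \emph{count}, namely that there are exactly $\kappa$ negative eigenvalues. Here I would invoke the Pontryagin-space structure of $L^2_\mu$. By Theorem \ref{real-complete} the eigenfunctions $\{\Phi_n\}$ form a complete orthonormal basis with $\inner{\Phi_m}{\Phi_n}=\pm\delta_{mn}$, so the negative-norm eigenfunctions span a negative-definite subspace, the positive-norm ones span a positive-definite subspace, and the two are orthogonal and together exhaust $L^2_\mu$. Because $L^2_\mu$ is a Pontryagin space of index $\kappa$, the dimension of any maximal negative-definite subspace is the fixed invariant $\kappa$; an orthogonal splitting of the whole space into definite pieces along a basis must therefore assign exactly $\kappa$ basis vectors to the negative part. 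Combined with the sign identity, this produces exactly $\kappa$ negative eigenvalues, each carrying a negative-norm eigenfunction, with all remaining eigenvalues positive.

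I expect the counting step to be the crux. The sign identity is a routine Rayleigh-quotient manipulation, and strict positivity of the left form is handled cleanly by excluding $\lambda=0$; but pinning down that a complete orthonormal basis of an index-$\kappa$ Pontryagin space contains precisely $\kappa$ vectors of negative norm relies on the invariance of the negative index under the choice of fundamental decomposition. This is the one genuinely non-elementary ingredient, which I would cite from the Pontryagin-space literature (e.g.\ \citealp{bognar_indefinite_1974}) rather than reprove here.
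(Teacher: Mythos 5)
Your proposal is correct and follows essentially the same route as the paper: the paper derives this proposition as a corollary of Theorem \ref{positive-compact} applied to $\Lc^{-1}$, whose proof consists of exactly your two ingredients — the Rayleigh-quotient identity $\lambda_n\inner{\Phi_n}{\Phi_n}=\inner{\Lc\,\Phi_n}{\Phi_n}\geq 0$ sharpened to strict inequality by excluding $\lambda=0$, and the count of exactly $\kappa$ negative-norm eigenvectors via the invariance of the negative index of the Pontryagin space (cited from \citealp{bognar_indefinite_1974}, theorem IX.1.4). The only cosmetic difference is that the paper phrases the positivity through the compact inverse operator rather than the quadratic form $Q$ directly.
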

In other words, proposition \ref{eigenvalue-sign} states that we have the relationship
\begin{equation}
	\lambda_n \inner{\Phi_n}{\Phi_n} > 0
\end{equation}
for left-definite problems.

\subsection{Properties of the eigenfunctions} \label{S-Prop-of-Eig}
For the remainder of section \ref{math-section}, we assume that the eigenvalue problem \eqref{EigenDiff_modes}\textendash\eqref{EigenB2} satisfies the requirements of theorem \ref{real-complete}.

\subsubsection{Eigenfunction expansions}

The eigenvalue problem \eqref{EigenDiff_modes}\textendash\eqref{EigenB2} has \emph{eigenfunctions} $\{\Phi_n\}_{n=0}^\infty$ as well as corresponding \emph{solutions} $\{\phi_n\}_{n=0}^\infty$. In other words, while the $\phi_n$ are the solutions to the differential equation defined by equations \eqref{EigenDiff_modes}\textendash\eqref{EigenB2} with $\lambda=\lambda_n$, the eigenfunctions required by the operator formulation of the problem [equation \eqref{Op-Eigenproblem}] are $\Phi_n$. The functions $\Phi_n$ and $\phi_n$ are related by equation \eqref{L2_mu-element}, with the boundary values $\Phi_n(z_i)$ of $\Phi_n$ determined by
\begin{equation}\label{discontinuity}
	\Phi_n(z_i) = \left[c_i \, \phi(z) - d_i \, (p \, \phi')(z)\right] \quad \text{for } i\in S.
\end{equation}
Thus, while the solutions $\phi_n$ are continuously differentiable over the closed interval $[z_1,z_2]$, the eigenfunctions $\Phi_n$ are continuously differentiable over the open interval $(z_1,z_2)$ but generally have finite jump discontinuities at the $\lambda$-dependent boundaries.
The eigenfunctions $\Phi_n$ are continuous in the closed interval $[z_1,z_2]$ only if $c_i=1$ and $d_i=0$ for $i\in S$. In this case, the eigenfunctions $\Phi_n$ coincide with the solutions $\phi_n$ on the closed interval $[z_1,z_2]$.

The boundary conditions of the eigenvalue problem \eqref{EigenDiff_modes}\textendash\eqref{EigenB2} are not unique. One can multiply each boundary condition by an arbitrary constant to obtain an equivalent problem. To uniquely specify the eigenfunctions in physical applications, the boundary coefficients $\{a_i,b_i,c_i,d_i\}$ of equations \eqref{EigenDiff_modes}\textendash\eqref{EigenB2} must be chosen so that $r \, \mathrm{d}z$ has the same dimensions as $D_i^{-1} \, \delta(z-z_i) \, \mathrm{d}z$ [recall that $\delta(z)$ has the dimension of inverse length]. In the quasigeostrophic problem, we must also invoke continuity and set $c_i=1$.

Since $\{\Phi_n\}_{n=0}^\infty$ is a basis for $L^2_\mu$, then any $\Psi \in L^2_\mu$ may be expanded in terms of the eigenfunctions \citep[][thereom IV.3.4]{bognar_indefinite_1974},
\begin{equation}\label{expansion}
	\Psi = \sum_{n=0}^\infty \frac{\inner{\Psi}{\Phi_n}}{\inner{\Phi_n}{\Phi_n}} \, \Phi_n.
\end{equation}
We emphasize that the above equality is an equality in $L^2_\mu$ and not a pointwise equality [see the discussion following equation \eqref{L2_mu-element}]. Some properties of $L^2_\mu$ expansions are given in appendix \ref{S-more-eigen-expansions}.

An important property that distinguishes the basis $\{\Phi_n\}_{n=0}^\infty$ of $L^2_\mu$ from an $L^2$ basis is its ``sensitivity'' to function values at boundary points $z=z_i$ for $i\in S$. See section \ref{S-step-forcing} for a physical application.

A natural question is whether the basis $\{\Phi_n\}_{n=0}^\infty$ of $L^2_\mu$ is also a basis of $L^2$. Recall that the set $\{\Phi_n\}_{n=0}^\infty$ is a basis of $L^2$ if every element $\psi \in L^2$ can be written \emph{uniquely} in terms of the functions $\{\Phi_n\}_{n=0}^\infty$. However, in general, this is not true. If $s>0$, the $L^2_\mu$ basis $\{\Phi_n\}_{n=0}^\infty$ is overcomplete in $L^2$ \citep{walter_regular_1973,russakovskii_matrix_1997}.

\subsubsection{Uniform convergence and term-by-term differentiability}\label{S-uniform}

Along with the eigenfunction expansion \eqref{expansion} in terms of the eigenfunctions $\{\Phi_n\}_{n=0}^\infty$, we also have the expansion
\begin{equation}\label{expansion-phi}
	\sum_{n=0}^\infty \frac{\inner{\Psi}{\Phi_n}}{\inner{\Phi_n}{\Phi_n}} \, \phi_n
\end{equation}
in terms of the solutions $\phi_n$. The two expansions differ in their behaviour at $\lambda$-dependent boundaries, $z=z_i$ for $i\in S$, but are otherwise equal. In particular, the $\Phi_n$ eigenfunction expansion \eqref{expansion} must converge to $\Psi(z_i)$ at $z=z_i$ for $i\in S$ as this equality is required for $\Psi$ to be equal to the series expansion \eqref{expansion} in $L^2_\mu$ [see the discussion following equation \eqref{L2_mu-element}]. Some properties of both expansions are given in appendix \ref{S-pointwise}. In particular, theorem \ref{pointwise} shows that the $\phi_n$ solution series \eqref{expansion-phi} does not generally converge to $\Psi(z_i)$ at $z=z_i$.

\begin{figure}
  \includegraphics[width=1.\columnwidth]{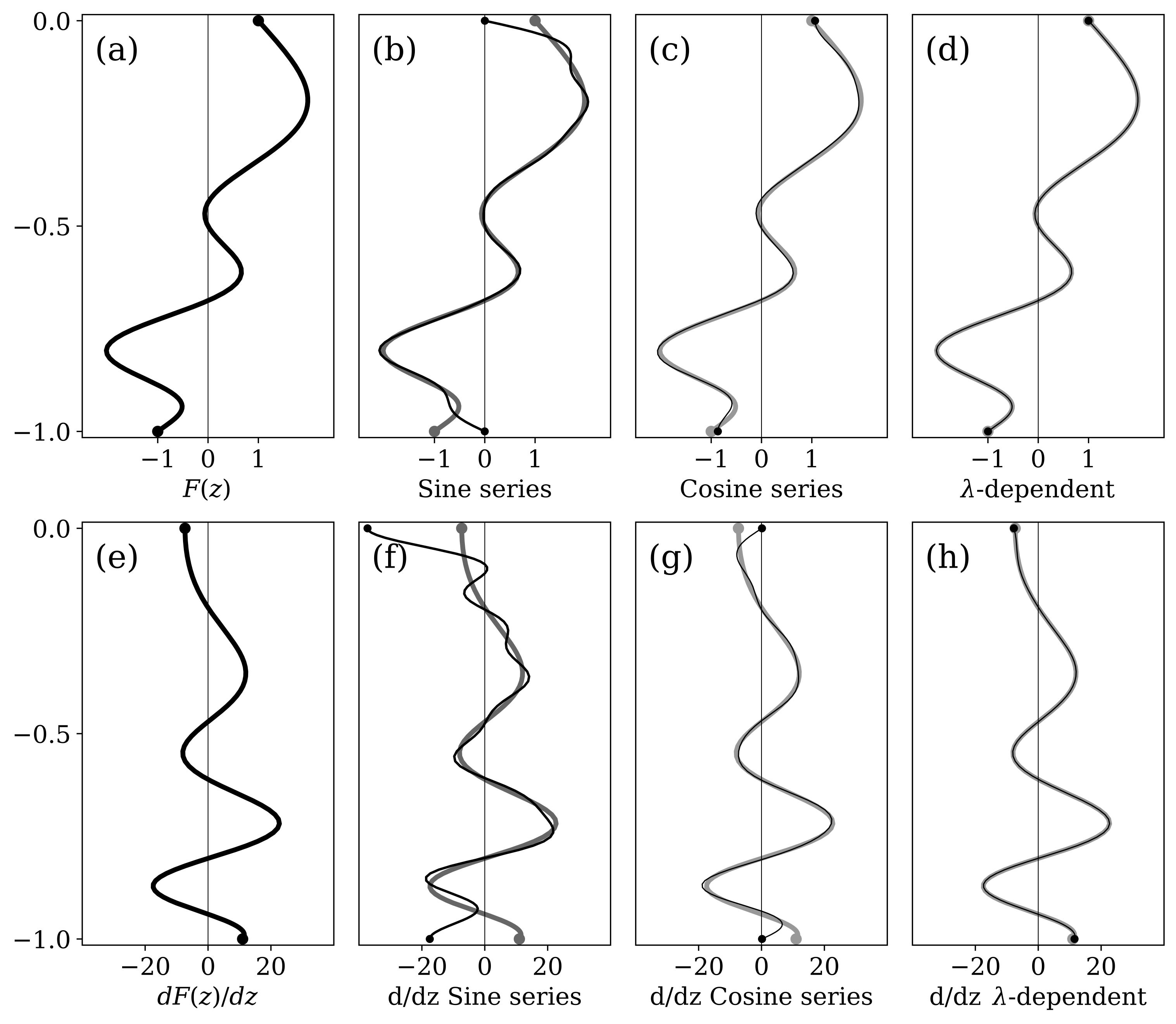}
  \caption{Convergence to a function $F(z) = 1 + 2z +(3/2)\sin(2\pi z)\cos(\pi^2z^2 + 3)$ for $z \in[-1,0]$, shown in panel (a), by various eigenfunction expansions of $-\phi''= \lambda \, \phi$ with fifteen terms, as discussed in section \ref{S-uniform}. Panel (b) shows the Fourier sine expansion of $F$. Since the sine eigenfunctions vanish at the boundaries $z=-1,0$, the series expansion will not converge to $F$ at the boundaries. Panel (c) shows the cosine expansion of $F$ which converges uniformly to $F$ on the closed interval $[-1,0]$. Panel (d) shows an expansion with boundary coefficients in equations \eqref{EigenB1}\textendash\eqref{EigenB2} given by $(a_1,b_1,c_1,d_1) = (-0.5,-5,1,0)$ and $(a_2,b_2,c_2,d_2) =  (0.5,-5,1,0)$. Since the $c_i=1$ and $d_i=0$, then $\Phi_n = \phi_n$ and the series expansions \eqref{expansion} and \eqref{expansion-phi} coincide. As with the cosine series, the expansion converges uniformly to $F$ on $[-1,0]$. The derivative of $F$ is shown in panel (e). Panel (f) show the derivative of the sine series expansion. In panel (g), we show the differentiated cosine series which does not converge to the derivative $F'$ at the boundaries $z=z_1,z_2$. In contrast, in panel (h), the differentiated series obtained from a problem with $\lambda$-dependent boundary conditions converges uniformly to the derivative $F'$.}
  \label{F-convergence}
\end{figure}

The following theorem is of central concern for physical applications. 

\begin{theorem}[Uniform convergence]\label{uniform}
 Let $\psi$ be a twice continuously differentiable function on $[z_1,z_2]$ satisfying all $\lambda$-independent boundary conditions of the eigenvalue problem \eqref{EigenDiff_modes}\textendash\eqref{EigenB2}. Define the function $\Psi$ on $z\in[z_1,z_2]$ by 
	\begin{equation}
	\Psi(z) = 
	\begin{cases}
		c_i \, \psi(z) - d_i \, (p \, \psi')(z) \quad & \textrm{at } z=z_i, \textrm{ for }\, i \in S, \\
		\psi(z) \quad & \textrm{otherwise}.
	\end{cases}
	\end{equation}
	Then 
	\begin{equation}\label{uniform-two-series}
		\psi(z) = \sum_{n=0}^\infty \frac{\inner{\Psi}{\Phi_n}}{\inner{\Phi_n}{\Phi_n}} \, \phi_n(z)  \quad \textrm{and} \quad \psi'(z) = \sum_{n=0}^\infty \frac{\inner{\Psi}{\Phi_n}}{\inner{\Phi_n}{\Phi_n}} \, \phi_n'(z) 
	\end{equation}
	with both series converging uniformly and absolutely on $[z_1,z_2]$.
\end{theorem}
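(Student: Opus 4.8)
The plan is to exploit the self-adjointness of the operator $\Lc$ in \eqref{Op-Eigenproblem} to show that the Fourier coefficients of $\Psi$ decay like $\lambda_n^{-1}$, and then to combine this decay with uniform bounds on the solutions $\phi_n$ and their derivatives $\phi_n'$. First I would observe that, because $\psi$ is twice continuously differentiable and satisfies all $\lambda$-independent boundary conditions, the element $\Psi$ constructed in the theorem lies in the domain of $\Lc$: its interior part is $C^2$ with $-(p\psi')'+q\psi \in L^2$, and its boundary values are fixed by the same rule \eqref{discontinuity} that defines the domain. Writing $G=\Lc\Psi\in L^2_\mu$ and using the symmetry of $\Lc$ together with $\Lc\Phi_n=\lambda_n\Phi_n$, I obtain
\[
\inner{\Psi}{\Phi_n} = \frac{1}{\lambda_n}\inner{G}{\Phi_n},
\]
valid for every $n$ with $\lambda_n\neq 0$; by hypothesis $\lambda=0$ is either excluded or, in the right-definite case, an isolated simple eigenvalue contributing a single smooth term to each series that may be set aside. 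Since $G\in L^2_\mu$, its expansion \eqref{expansion} converges in $L^2_\mu$, so the coefficients $c_n=\inner{G}{\Phi_n}$ are square-summable, $\sum_n\abs{c_n}^2<\infty$. Here the indefiniteness of $\inner{\cdot}{\cdot}$ is harmless: by proposition \ref{eigenvalue-sign} only the $\kappa\leq 2$ negative-norm modes carry a sign, and $\abs{\inner{\Phi_n}{\Phi_n}}=1$ throughout, so the coefficient of $\phi_n$ in \eqref{uniform-two-series} has absolute value $\abs{c_n}/\abs{\lambda_n}$.

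Next I would record the spectral asymptotics. The essential remark is that, after dividing \eqref{EigenB1}--\eqref{EigenB2} by $\lambda$, the $\lambda$-dependent boundary conditions degenerate as $\lambda\to\infty$ into the $\lambda$-independent Robin-type conditions $c_i\,\phi(z_i)-d_i\,(p\phi')(z_i)=0$ for $i\in S$. Consequently the large-$n$ spectrum is governed by a regular Sturm-Liouville problem, and the classical Liouville-transformation estimates apply: the eigenvalues satisfy $\lambda_n\sim C\,n^2$ (so that $\sum_n\lambda_n^{-1}<\infty$ and $\sum_n\lambda_n^{-2}<\infty$), while the $L^2_\mu$-normalized solutions obey the uniform bounds $\norm{\phi_n}_\infty\leq M$ and $\norm{\phi_n'}_\infty\leq M'\sqrt{\lambda_n}$ on $[z_1,z_2]$. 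One subtlety to verify is that the boundary point-masses in the norm \eqref{dmu_inner} do not distort these bounds: solving the boundary condition for $\Phi_n(z_i)$ gives $\lambda_n\,\Phi_n(z_i)=-[a_i\phi_n(z_i)-b_i(p\phi_n')(z_i)]=O(\sqrt{\lambda_n})$, so $\Phi_n(z_i)=O(\lambda_n^{-1/2})\to 0$; the boundary contribution to $\inner{\Phi_n}{\Phi_n}$ is therefore asymptotically negligible, the interior part of $\phi_n$ is asymptotically $L^2$-normalized, and the classical bounds transfer. Only finitely many modes escape the asymptotic pattern, and these contribute finitely many smooth terms that do not affect the tails.

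With these two ingredients the conclusion follows from the Weierstrass $M$-test. For the function series, Cauchy-Schwarz gives the $z$-independent majorant
\[
\sum_n \frac{\abs{c_n}}{\abs{\lambda_n}}\,\abs{\phi_n(z)} \leq M\left(\sum_n \abs{c_n}^2\right)^{1/2}\left(\sum_n \frac{1}{\lambda_n^2}\right)^{1/2} < \infty,
\]
while for the differentiated series the bound $\norm{\phi_n'}_\infty\leq M'\sqrt{\lambda_n}$ yields
\[
\sum_n \frac{\abs{c_n}}{\abs{\lambda_n}}\,\abs{\phi_n'(z)} \leq M'\left(\sum_n \abs{c_n}^2\right)^{1/2}\left(\sum_n \frac{1}{\lambda_n}\right)^{1/2} < \infty.
\]
Both majorants are finite and independent of $z$, so both series in \eqref{uniform-two-series} converge absolutely and uniformly on $[z_1,z_2]$. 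Finally I would identify the limits: completeness in theorem \ref{real-complete} forces the first series to converge to $\psi$ in $L^2$, and uniform convergence then pins the limit to the continuous function $\psi$ pointwise; the standard term-by-term differentiation theorem (uniform convergence of a series whose differentiated series also converges uniformly) identifies the second sum as $\psi'$.

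The main obstacle I anticipate is the rigorous justification of the eigenfunction asymptotics in the present setting, rather than the convergence estimates, which are routine once the asymptotics are in hand. The delicacy is twofold: one must confirm that the $\lambda$-dependent boundary conditions genuinely reduce to a regular Sturm-Liouville problem at large $\lambda$, so that $\lambda_n\sim Cn^2$ and the uniform solution bounds carry over, and one must check that the Pontryagin-space normalization $\inner{\Phi_n}{\Phi_n}=\pm1$ does not corrupt the square-summability of the coefficients — which it does not, precisely because $\kappa$ is finite and the boundary masses $\Phi_n(z_i)=O(\lambda_n^{-1/2})$ decay.
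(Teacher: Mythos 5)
Your argument is sound and reaches the right conclusion, but it follows a genuinely different route from the paper's proof (appendix on extending Fulton's theorem 1 to the left-definite case). You both start the same way, using self-adjointness of $\Lc$ and $\Psi\in D(\Lc)$ to gain one power of $\lambda_n$ in the Fourier coefficients, and both finish with Cauchy--Schwarz. The divergence is in how the $z$-uniform bound on the eigenfunction tails is obtained. You import Weyl-type spectral asymptotics ($\lambda_n\sim Cn^2$, $\norm{\phi_n}_\infty\leq M$, $\norm{\phi_n'}_\infty\leq M'\sqrt{\lambda_n}$), arguing that the $\lambda$-dependent boundary conditions degenerate to Robin conditions at large $\lambda$; this is true and provable (Fulton establishes it in the right-definite case), but it is precisely the ingredient you leave unproven and flag as the main obstacle. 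The paper avoids needing any asymptotics at all: it uses the identity $\phi_n(z)=(\lambda-\lambda_n)\inner{G(z,\cdot,\lambda)}{\Phi_n}$ for the $L^2_\mu$ Green's function and applies a Bessel-type inequality to $G(z,\cdot,\lambda)$ and $\partial_z G(z,\cdot,\lambda)$, yielding the $z$-independent bounds $\sum_n\lambda_n\abs{\phi_n(z)}^2/\abs{\lambda-\lambda_n}^2\leq B_1(\lambda)$ and its derivative analogue essentially for free, with the only left-definite modification being the substitution of the induced positive-definite inner product $\inner{\cdot}{\cdot}_+$ for the indefinite one. So your approach buys a more elementary, asymptotics-driven picture at the cost of a nontrivial lemma you would still have to prove; the paper's buys uniformity directly from the Green's function at the cost of constructing and bounding $G$. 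Your handling of the Pontryagin-space subtleties (finiteness of $\kappa$, decay of the boundary masses $\Phi_n(z_i)$, square-summability via the induced Hilbert norm) is correct and mirrors what the paper does implicitly.
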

\begin{proof}
	See appendix \ref{S-A-left-fulton}.
\end{proof}

If $c_i=1$ and $d_i=0$ for $i\in S$ then we can replace $\Phi_n$ by $\phi_n$ and $\Psi$ by $\psi$ in equation \eqref{uniform-two-series}.

In addition, if both boundary conditions of the eigenvalue problem \eqref{EigenDiff_modes}\textendash\eqref{EigenB2} are $\lambda$-dependent, then both  expansions in equation \eqref{uniform-two-series} converge uniformly on $[z_1,z_2]$ regardless of the boundary conditions $\psi$ satisfies. As discussed in appendix \ref{S-pointwise}, for traditional Sturm-Liouville expansions, an analogous result holds only if $\psi$ satisfies the same boundary conditions as the eigenfunctions. Figure \ref{F-convergence} contrasts the convergence behaviour of such a problem (with continuous eigenfunctions, so $c_i=1$ and $d_i=0$ for $i\in S$) with the convergence behaviour of sine and cosine series. All numerical solutions in this chapter are obtained using a pseudo-spectral code in Dedalus \citep{burns_dedalus_2020}. 

\begin{figure}
  \centerline{\includegraphics[width=0.8\columnwidth]{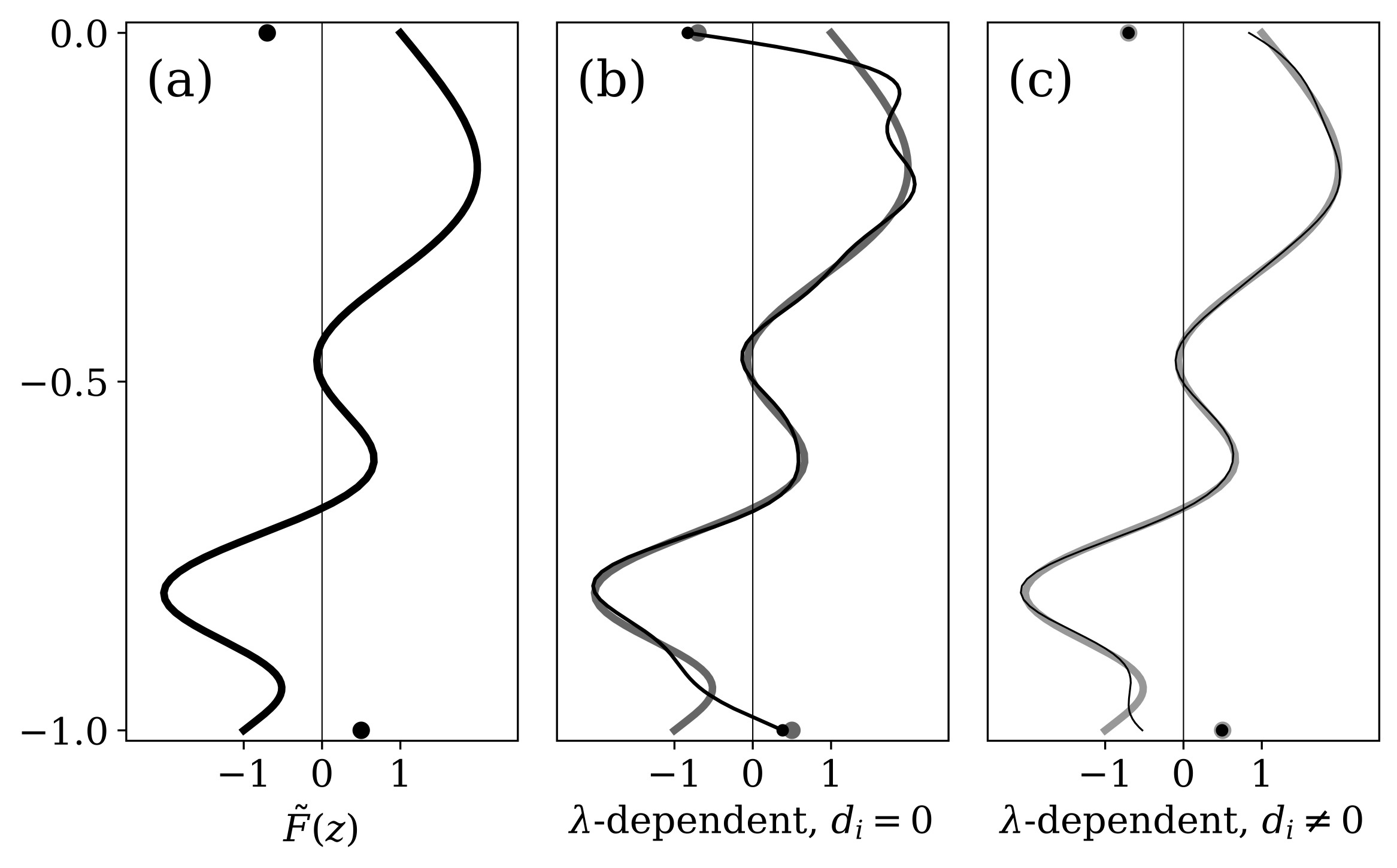}}
  \caption{Convergence to a function $\tilde F$ with finite jump discontinuities at the boundaries by two eigenfunction expansions (with $\lambda$-dependent boundary conditions) of $-\phi'' = \lambda\,\phi$ with fifteen terms, as discussed in section \ref{S-uniform}. The function $\tilde F(z)$ is defined by $\tilde F(z) = F(z)$ for $z\in(z_1,z_2)$ where $F(z)$ is the function defined in figure \ref{F-convergence}, $F(-1) = 0.5$ at the lower boundary, and $F(0)=-0.7$ at the upper boundary. The function $\tilde F$ is shown in panel (a). In panel (b), the boundary coefficients in equations \eqref{EigenB1}\textendash\eqref{EigenB2} are given by  $(a_1,b_1,c_1,d_1) = (-0.5,-5,1,0)$ and $(a_2,b_2,c_2,d_2) =  (0.5,-5,1,0)$ as in figure \ref{F-convergence}. In panel (c), the boundary coefficients are  $(a_1,b_1,c_1,d_1) = (-0.5,-5,1,0.1)$ and $(a_2,b_2,c_2,d_2) =  (0.5,-5,1,-0.1)$. The $\Phi_n$ expansion \eqref{expansion} and the $\phi_n$ expansion \eqref{expansion-phi} are not generally equal at the boundaries $z=-1,0$; this figure shows the $\Phi_n$ expansion. The $\Phi_n$ series \eqref{expansion} converges pointwise to $\tilde F$ on $[-1,0]$, however, the convergence will not be uniform if $d_i=0$ for $i\in S$, as in panel (b). The boundary values of the $\Phi_n$ series \eqref{expansion} are shown with a black dot. In panel (b), the eigenfunctions $\Phi_n$ are continuous and a large number of terms are required for the series to converge to the discontinuous function $\tilde F$. Panel (c) shows that the discontinuous eigenfunction $\Phi_n$ have almost converged to the $\tilde F$\textemdash including at the jump discontinuities; the black dot in panel (c) overlap with the grey dots, which represent the boundary values of $\tilde F$. Although the $\phi_n$ series \eqref{expansion-phi} converges to $\tilde F$ in the interior $(-1,0)$, the $\phi_n$ series does not generally converge to $\tilde F$ at the boundaries but instead converges to the values given in theorem \ref{pointwise}.} 
  \label{F-convergence-disc}
\end{figure}

Another novel property of the eigenfunction expansions is that we obtain pointwise convergence to functions that are smooth in the interior of the interval, $(z_1,z_2)$, but have finite jump discontinuities at $\lambda$-dependent boundaries (see appendix \ref{S-pointwise}). If $d_i\neq 0$ for $i\in S$, the convergence is even uniform \cite[][corollary 2.1]{fulton_two-point_1977}. Figure \ref{F-convergence-disc} illustrates the convergence behaviour for eigenfunction expansions with $\lambda$-dependent boundary conditions in the two cases $d_i=0$ and $d_i\neq 0$. Note the presence of Gibbs-like oscillations in the case $d_i=0$ shown in panel (b). Although the $\Phi_n$ eigenfunction series \eqref{expansion} converges pointwise to the discontinuous function, the $\phi_n$ solution series \eqref{expansion-phi} converges to the values given in theorem \ref{pointwise} at the $\lambda$-dependent boundaries. The ability of these series expansions to converge to functions with boundary jump discontinuities is related to their ability to expand distributions in the \citet{bretherton_critical_1966} ``$\delta$-function formulation'' of a problem.

\subsection{Oscillation theory}\label{S-oscillation}

Recall that for regular Sturm-Liouville problems [i.e., equations \eqref{EigenDiff_modes}\textendash\eqref{EigenB2} with $c_i=d_i=0$] we obtain a countable infinity of real simple eigenvalues, $\lambda_n$, that may be ordered as 
\begin{equation}\label{eig-seq}
	\lambda_0 < \lambda_1 < \lambda_2 < \dots \rightarrow \infty,
\end{equation}
with associated eigenfunctions $\phi_n$. The $n$th eigenfunction $\phi_n$ has $n$ internal zeros in the interval $(z_1,z_2)$ so that no two eigenfunctions have the same number of internal zeros.

However, once the eigenvalue $\lambda$ appears in the boundary conditions, there may be up to $s+1$ linearly independent eigenfunctions with the same number of internal zeros. The crucial parameters deciding the number of zeros is $-b_i/d_i$ for $i\in S$, where $b_i$ and $d_i$ are the boundary coefficients appearing in the boundary conditions \eqref{EigenB1}\textendash\eqref{EigenB2}. The following lemma outlines the possibilities when only one boundary condition is $\lambda$-dependent.

\begin{lemma}[Location of double oscillation count]\label{extra-oscillation}

Suppose that $s=1$, $i\in S$, and let $\kappa$ be the number of negative $D_i$ for the eigenvalue problem \eqref{EigenDiff_modes}\textendash\eqref{EigenB2}. We have the following possibilities.
\begin{itemize}
	\item[(i)] Right-definite, $d_i \neq 0$: The eigenfunction $\Phi_n$ corresponding to the eigenvalue $\lambda_n$ has $n$ internal zeros if $\lambda_n < -b_i/d_i$ and $n-1$ internal zero if $-b_i/d_i \leq \lambda_n$.
	\item[(ii)] Right-definite, $d_i = 0$:  The $n$th eigenfunction has $n$ internal zeros.
	\item[(iii)] Left-definite: If $\kappa=0$ then all eigenvalues are positive, the problem is right-definite, and  either (i) or (ii) applies.
	Otherwise, if $\kappa=1$, then the eigenvalues may be ordered as 
	\begin{equation}
		\lambda_0 < 0 < \lambda_1 < \lambda_2 < \dots  \rightarrow \infty.
	\end{equation}
	Both eigenfunctions $\Phi_0$ and $\Phi_1$ have no internal zeros. The remaining eigenfunctions $\Phi_n$, for $n>1$, have $n-1$ internal zeros.
\end{itemize}
\end{lemma}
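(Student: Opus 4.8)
The plan is to convert the problem into the motion of a Prüfer angle and to read off the internal-zero count from how this angle matches the boundary data as the eigenvalue varies. First I would introduce the Prüfer variables $\phi = \rho\sin\vartheta$, $(p\,\phi') = \rho\cos\vartheta$, under which the interior equation \eqref{EigenDiff_modes} becomes the first-order angle equation
\begin{equation}\label{eq:prufer}
	\vartheta' = \frac{1}{p}\cos^2\vartheta + \left(\lambda \, r - q\right)\sin^2\vartheta .
\end{equation}
At any interior zero of $\phi$ one has $\sin\vartheta = 0$ and hence $\vartheta' = 1/p > 0$, so $\vartheta$ crosses each multiple of $\pi$ strictly upward; the number of internal zeros of a solution equals the number of multiples of $\pi$ lying strictly between its initial and terminal angles. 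By the reflection $z \mapsto z_1 + z_2 - z$, which exchanges the endpoints together with the sign convention $(-1)^{i+1}$ built into the definition \eqref{Di} of $D_i$, I may assume the single $\lambda$-dependent condition sits at $z=z_2$, so that $S=\{2\}$.

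Next I would encode the boundary data. The $\lambda$-independent condition at $z_1$ fixes a starting angle $\vartheta(z_1)=\gamma\in[0,\pi)$ independent of $\lambda$, and integrating \eqref{eq:prufer} yields $\vartheta(z_2;\lambda)$. Rewriting \eqref{EigenB2} as $(a_2+\lambda c_2)\,\phi(z_2) = (b_2+\lambda d_2)\,(p\,\phi')(z_2)$ turns the eigencondition into
\begin{equation}\label{eq:bctan}
	\tan\vartheta(z_2;\lambda) = T(\lambda), \qquad T(\lambda) = \frac{b_2 + \lambda\, d_2}{a_2 + \lambda\, c_2},
\end{equation}
a M\"obius function whose derivative has the constant sign of $a_2 d_2 - b_2 c_2 = -D_2$. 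I would then record two monotonicity facts. Differentiating \eqref{eq:prufer} in $\lambda$ gives a linear equation for $\partial_\lambda\vartheta$ forced by $r\sin^2\vartheta\geq0$, so for fixed $\gamma$ the terminal angle $\vartheta(z_2;\lambda)$ is strictly increasing in $\lambda$, with $\vartheta(z_2;\lambda)\to\infty$ as $\lambda\to+\infty$; and the hypothesis $q\geq0,\ p>0$ controls its finite limit as $\lambda\to-\infty$. Writing $G(\lambda)=\vartheta(z_2;\lambda)-\tau(\lambda)$ with $\tau$ a continuous branch of $\arctan T$, the eigenvalues are exactly the solutions of $G(\lambda)\in\pi\mathbb{Z}$, and the internal-zero count of each eigenfunction is determined by which multiples of $\pi$ lie in $\left(\gamma,\vartheta(z_2;\lambda)\right)$.

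The right-definite cases (i) and (ii) then follow as in Fulton: when $D_2>0$ the target branch $\tau$ sweeps so that $G$ stays strictly increasing, and the special value $\lambda=-b_2/d_2$ (present only when $d_2\neq0$) is precisely where $T=0$, so that \eqref{EigenB2} degenerates to $\phi(z_2)=0$ and $\tau$ passes through a multiple of $\pi$. Crossing this value shifts the match between terminal angle and boundary target by one full $\pi$, converting the count from $n$ internal zeros for $\lambda_n<-b_2/d_2$ to $n-1$ for $\lambda_n\geq-b_2/d_2$; when $d_2=0$ the target tangent never reaches $0$ at finite $\lambda$, no shift occurs, and the count is uniformly $n$. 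Case (iii) with $\kappa=0$ is right-definite by Proposition \ref{right-definite-prop} and is therefore already covered.

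The genuinely new and hardest step is the left-definite case (iii) with $\kappa=1$, where $D_2<0$. Here Theorem \ref{real-complete} and Proposition \ref{eigenvalue-sign} already supply the ordering $\lambda_0<0<\lambda_1<\cdots$ with $\inner{\Phi_0}{\Phi_0}<0$ and $\inner{\Phi_n}{\Phi_n}>0$ for $n\geq1$, so only the oscillation count remains. Since $a_2 d_2 - b_2 c_2 = -D_2$ now has the opposite sign, the direction in which the M\"obius target $\tau$ sweeps is reversed; combined with the pole of $T$ at $\lambda=-a_2/c_2$ and the strict monotonicity of $\vartheta(z_2;\cdot)$, this forces the two lowest eigenvalues $\lambda_0$ and $\lambda_1$ into the same internal-zero class, with both terminal angles lying in $\left(\gamma,\pi\right)$, so that neither $\Phi_0$ nor $\Phi_1$ has an internal zero while each subsequent eigenvalue adds exactly one, giving $n-1$ for $n>1$. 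I expect the delicate bookkeeping to lie exactly here: pinning down the pole location and the $\lambda\to-\infty$ limiting angle so that precisely \emph{one} extra nodeless mode is generated at the bottom of the spectrum rather than in an interior gap, and confirming that the negative-norm mode $\Phi_0$ is the nodeless one demanded by $\inner{\Phi_0}{\Phi_0}<0$. To make this rigorous I would run a homotopy in the coefficients $(c_2,d_2)$ from a right-definite configuration to the given one, tracking the simple (hence non-colliding) eigenvalues continuously and counting how many cross $\lambda=0$ and how the zero counts transfer across the critical value $-b_2/d_2$.
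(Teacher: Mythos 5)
The paper does not actually prove this lemma: its ``proof'' is a one-line attribution of parts (i), (ii), and (iii) to \cite{linden_leightons_1991}, \cite{binding_sturmliouville_1994}, and \cite{binding_left_1999} respectively. Your modified Pr\"ufer-angle argument is, in substance, a reconstruction of the machinery those three references use, so you are not so much diverging from the paper's route as supplying the proof the paper outsources. The right-definite skeleton is sound and correctly assembled: the terminal angle $\vartheta(z_2;\lambda)$ is strictly increasing in $\lambda$ because the variational equation for $\partial_\lambda\vartheta$ is forced by $r\sin^2\vartheta\geq 0$ from a $\lambda$-independent initial angle; the eigencondition is a M\"obius target $T(\lambda)$ with $\operatorname{sgn}T'=\operatorname{sgn}(a_2d_2-b_2c_2)=\operatorname{sgn}(-D_2)$; and the index shift in part (i) occurs exactly at $\lambda=-b_2/d_2$, where the boundary condition degenerates to $\phi(z_2)=0$ and the target branch passes through a multiple of $\pi$. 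This correctly produces the pair $\Phi_M,\Phi_{M+1}$ with equal zero counts that the paper later exploits in section 3.3.

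The genuine gap is in part (iii) with $\kappa=1$, which is precisely the case the dissertation leans on for the topographic Rossby wave problem (section 4.2, the two nodeless modes $\hat\psi_0,\hat\psi_1$ when $\gamma_1<0$). You correctly import the ordering $\lambda_0<0<\lambda_1<\cdots$ and the sign of $\inner{\Phi_0}{\Phi_0}$ from Theorem \ref{real-complete} and Proposition \ref{eigenvalue-sign}, but the claim that \emph{both} $\Phi_0$ and $\Phi_1$ are nodeless --- rather than, say, the extra nodeless mode appearing adjacent to some interior eigenvalue, or $\Phi_0$ acquiring a zero --- is asserted from ``the sweep direction is reversed'' and then deferred to an unexecuted homotopy in $(c_2,d_2)$. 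That homotopy is not innocent: deforming from a right-definite to a left-definite configuration passes through $D_2=0$, where the problem degenerates and an eigenvalue escapes to $-\infty$ (this is how the extra negative eigenvalue is born), so continuity of the labelled eigenvalue branches and of their zero counts across that degeneration is exactly what must be proved, not assumed. Until that bookkeeping is done --- pinning the pole of $T$ at $-a_2/c_2$ and the $\lambda\to-\infty$ limit of $\vartheta(z_2;\cdot)$ against the sign constraints $a_2c_2\geq0$, $b_2d_2\geq0$, $a_2d_2\leq0$ that left-definiteness with $D_2<0$ imposes --- part (iii) is a plan rather than a proof. A minor additional slip: you invoke ``$q\geq0$, $p>0$'' to control the $\lambda\to-\infty$ limit of the terminal angle inside the right-definite discussion, but $q\geq0$ is a left-definite hypothesis and is not available in cases (i)--(ii); the finite limit there follows from standard Pr\"ufer asymptotics alone.
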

\begin{proof}
	Parts (i), (ii) and (ii) are due to \cite{linden_leightons_1991}, \cite{binding_sturmliouville_1994}, and \cite{binding_left_1999}, respectively.
\end{proof}

When both boundary conditions are $\lambda$-dependent, the situation is similar. See \cite{binding_sturmliouville_1994} and \cite{binding_left_1999} for further discussion.

\section{Boussinesq gravity-capillary waves} \label{S-Boussinesq}

Consider a rotating Boussinesq fluid on an $f$-plane with a reference Boussinesq density of $\rho_0$. The fluid is subject to a constant gravitational acceleration $g$ in the downwards, $-\unit z$, direction, and to a surface tension $T$ \citep[with dimensions of force per unit length, see][]{lamb_hydrodynamics_1975} at its upper boundary. The upper boundary of the fluid, given by $z=\eta$, is a free-surface defined by the function $\eta(\vec x,t)$, where $\vec x = \unit x \, x + \unit y \, y$ is the horizontal position vector. The lower boundary of the fluid is a flat rigid surface given by $z=-H$. The fluid region is periodic in both horizontal directions $\unit x$ and $\unit y$. 

\subsection{Linear equations of motion}

The governing equations for infinitesimal perturbations about a background state of no motion, characterized by a prescribed background density of $\rho_B = \rho_B(z)$, are 
	\begin{align} \label{grav-cap-1}
	\partial_t^2  \lap  w + f_0^2 \, \partial_z^2  w + N^2 \, \lap_z w = 0 \quad &\text{for } z \in \left(-H,0\right) \\
	\label{grav-cap-2}
	w = 0 \quad &\text{for } z=-H \\
	\label{grav-cap-3}
	-\partial_t^2 \partial_z w - f_0^2 \, \partial_z w + g_b \, \lap_z  w - \tau \, \laptwo_z  w = 0 \quad &\text{for } z=0,
\end{align}
where $w$ is the vertical velocity, $f_0$ is the constant value of the Coriolis frequency, the prescribed buoyancy frequency $N^2$ is given by
\begin{equation}\label{N2}
	N^2(z) = - \frac{g}{\rho_0} \d{\rho_B(z)}{z},
\end{equation}
the acceleration $g_b$ is the effective gravitational acceleration at the upper boundary
\begin{equation}
	g_b = -\frac{g}{\rho_0} \left[\rho_a - \rho_B(0-)\right]
\end{equation}
where $\rho_a$ is the density of the overlying fluid, and the parameter $\tau$ is given by
\begin{equation}
	\tau = \frac{T}{\rho_0}
\end{equation}
where $T$ is the surface tension. The three-dimensional Laplacian is denoted $\lap = \partial_x^2 + \partial_y^2 + \partial_z^2$, the horizontal Laplacian is denoted by $\lap_z = \partial_x^2 + \partial_y^2$, and the horizontal biharmonic operator is given by $\laptwo_z = \lap_z \, \lap_z$. See equation (1.37) in \cite{dingemans_water_1997} for the surface tension term in \eqref{grav-cap-3}. The remaining terms in equation \eqref{grav-cap-1}\textendash\eqref{grav-cap-3} are standard \citep{gill_atmosphere-ocean_2003}. Consistent with our assumption that $\eta(\vec x,t)$ is small, we evaluate the upper boundary condition at $z=0$ in equation \eqref{grav-cap-3}.
 
\subsection{Non-rotating Boussinesq fluid}\label{S-Boussinesq-non-rot} 

\begin{figure}
  \centerline{\includegraphics[width=1.\columnwidth]{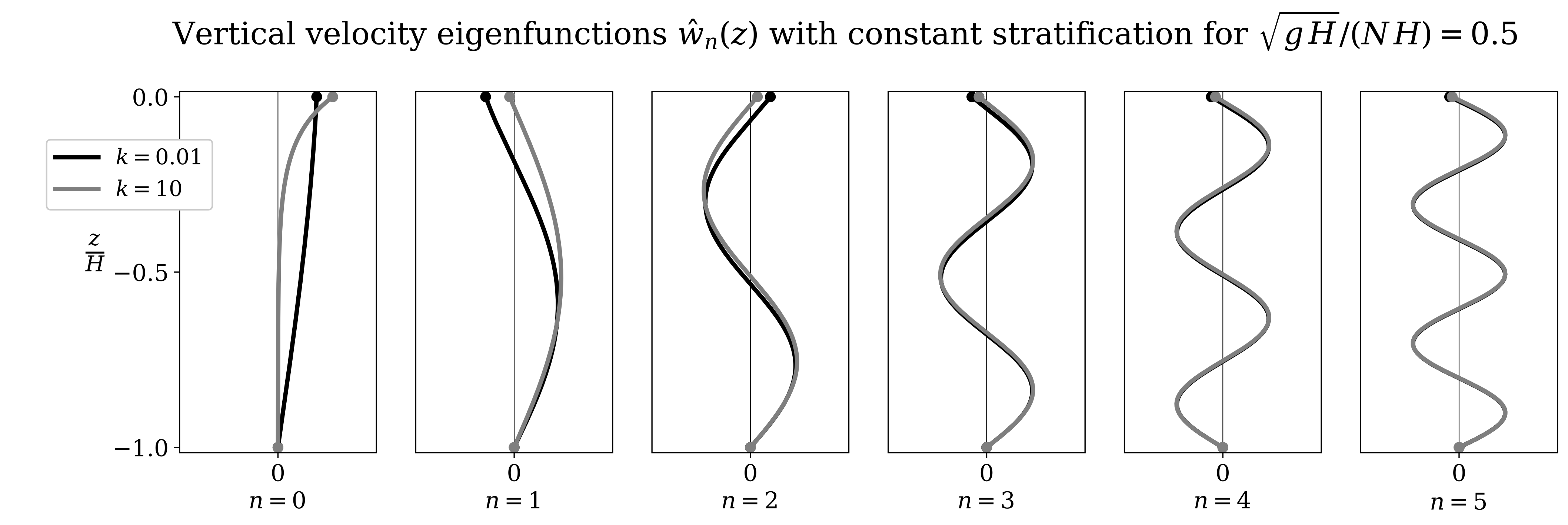}}
  \caption{The vertical velocity eigenfunctions $\hat W_n = \hat w_n$ of the non-rotating Boussinesq eigenvalue problem \eqref{non-rot-SL1}\textendash\eqref{non-rot-SL3} for two distinct wavenumbers with constant stratification, as discussed in section \ref{S-Boussinesq-non-rot}. For both wavenumbers, the $n$th eigenfunction has $n$ internal zeros as in regular Sturm-Liouville theory. The zeroth mode ($n=0$) corresponds to a surface gravity wave and is trapped to the upper boundary for large horizontal wavenumbers. In contrast to the internal wave problem with a rigid lid, the modes $\hat w_n$ now depend on the horizontal wavenumber $k$ through the boundary condition \eqref{non-rot-SL3}, however, this dependence is weak for $n\gg1$, as can be observed in this figure; for $n>2$, the modes for $k=0.01$ (in black)  and for $k=10$ (in grey) nearly coincide. The horizontal wavenumbers $k$ are non-dimensionalized by $H$.}
  \label{F-nonrot-eigenfunctions}
\end{figure}

We assume wave solutions of the form
\begin{equation}\label{w-wave}
	w(\vec x, z ,t) = \hat w(z) \, \mathrm{e}^{\mathrm{i} \left( \vec k \cdot \vec x - \omega t\right)}
\end{equation}
where $\vec k = \unit x \, k_x + \unit y \, k_y$ is the horizontal wavevector and $\omega$ is the angular frequency. Substituting the wave solution \eqref{w-wave} into equations \eqref{grav-cap-1}\textendash\eqref{grav-cap-3} and setting $f_0=0$ yields
\begin{align}
	\label{non-rot-SL1}
	-\hat w'' + k^2 \, \hat w = \sigma^{-2} \, N^2 \, \hat w \quad &\text{for } z \in (-H,0) \\
	\label{non-rot-SL2}
	\hat w = 0 \quad &\text{for } z = -H \\
	\label{non-rot-SL3}
	(g_b + \tau \, k^2)^{-1} \hat w ' =  \sigma^{-2} \, \hat w \quad &\text{for } z = 0,
\end{align}
where $\sigma = \omega/k$ is the phase speed and $k = \abs{\vec k}$ is the horizontal wavenumber. Equations \eqref{non-rot-SL1}\textendash\eqref{non-rot-SL3} are an eigenvalue problem for the eigenvalue $\lambda = \sigma^{-2}$.

\subsubsection*{Definiteness \& the underlying function space}

Equations \eqref{non-rot-SL1}\textendash\eqref{non-rot-SL3} form an eigenvalue problem with one $\lambda$-dependent boundary condition, namely, the upper boundary condition \eqref{non-rot-SL3}. The underlying function space is then
\begin{equation} \label{gravity-functionspace}
	L^2_\mu \cong L^2 \oplus \C.
\end{equation}
We write $\hat W_n$ for the eigenfunctions and $\hat w_n$ for the solutions of the eigenvalue problem \eqref{non-rot-SL1}\textendash\eqref{non-rot-SL3} [see the paragraph containing equation \eqref{discontinuity}]. The eigenfunctions $\hat W_n$ are related to the solutions $\hat w_n$ by equation \eqref{L2_mu-element} with boundary values $\hat W_n(0)$ given by equation \eqref{discontinuity}. However, since $c_2 = 1$ and $d_2=0$ in equation \eqref{non-rot-SL3} [compare with equations \eqref{EigenDiff_modes}\textendash\eqref{EigenB2}] then $\hat W_n = \hat w_n$ on the closed interval $[-H,0]$; thus, the solutions $w_n$ are also the eigenfunctions.

By theorem \ref{real-complete}, the eigenfunctions $\{\hat w_n \}_{n=0}^\infty$ form an  orthonormal basis of $L^2_\mu$. For functions $\varphi$ and $\phi$, the inner product is
\begin{align}\label{grav-inner}
	\inner{\varphi}{\phi} &= \frac{1}{N_0^2 \, H} \left[ \int_{-H}^0 \varphi \, \phi \, N^2 \, \textrm{d}z + (g_b + \tau \, k^2) \varphi(0) \, \phi(0)\right]
\end{align}
obtained from equations \eqref{dmu_inner} and equation \eqref{Di}; we have introduced the factor $1/(N^2_0\, H)$ in the above expression for dimensional consistency in eigenfunction expansions ($N^2_0$ is a typical value of $N^2$). Orthonormality is then given by 
\begin{align}\label{non-rot-ortho}
	\delta_{mn} &= \inner{\hat w_m}{\hat w_n}
\end{align}
and we have chosen the solutions $\hat w_n$ to be non-dimensional (so the Kronecker delta is non-dimensional as well).

One verifies that the eigenvalue problem \eqref{non-rot-SL1}\textendash\eqref{non-rot-SL3} is right-definite using proposition \ref{right-definite-prop} and left-definite using proposition \ref{left-definite-prop}.
Right-definiteness implies that $L^2_\mu$, with the inner product \eqref{grav-inner}, is a Hilbert space. That is, all eigenfunctions $\hat w_n$ satisfy
\begin{equation}
	\inner{\hat w_n}{\hat w_n} > 0.
\end{equation} 
Left-definiteness, along with proposition \ref{eigenvalue-sign}, ensures that all eigenvalues $\lambda_n = \sigma_n^{-2}$ are positive. Indeed, the phase speeds $\sigma_n$ satisfy
\begin{equation}\label{grav-phase-speeds}
	\sigma_0^2 > \sigma_1^2 > \dots > \sigma^2_n > \dots  \rightarrow 0.
\end{equation}

\subsubsection*{Properties of the eigenfunctions}

By lemma \ref{extra-oscillation}, the $n$th eigenfunction $\hat w_n$ has $n$ internal zeros in the interval $(-H,0)$. See figure \ref{F-nonrot-eigenfunctions} for an illustration of the first six eigenfunctions.

The eigenfunctions $\{\hat w_n \}_{n=0}^\infty$ are complete in $L^2$ but do not form a basis in $L^2$; in fact, the basis is overcomplete in $L^2$. The presence of a free-surface provides an additional degree of freedom over the usual  rigid-lid $L^2$ basis of internal wave eigenfunctions. Indeed, the $n=0$ wave in figure \ref{F-nonrot-eigenfunctions} corresponds to a surface gravity wave, while the remaining modes are internal gravity waves (with some surface motion).

\subsubsection*{Expansion properties}

Given a twice continuously differentiable function $\chi (z)$ satisfying $\chi(-H)=0$, then, from theorem \ref{uniform}, we have
\begin{equation}
	\chi(z) = \sum_{n=0}^\infty \inner{\chi}{\hat w_n} \, \hat w_n(z) \quad \textrm{and} \quad \chi'(z) = \sum_{n=0}^\infty \inner{\chi}{\hat w_n}\, w_n'(z),
\end{equation}
with both series converging uniformly on $[-H,0]$ (note that $\chi$ is not required to satisfy any particular boundary condition at $z=0$). If $\chi$ is the vertical structure at time $t=0$ (and at some wavevector $\vec k$) and we assume $\partial_t w(\vec x,z,t=0)=0$, then the subsequent time-evolution is given by
\begin{equation}
	 w(\vec x, z,t) = \sum_{n=0}^\infty \inner{\chi}{\hat w_n} \, w_n(z) \, \cos\left(\sigma_n k t\right) \mathrm{e}^{\mathrm{i}\vec k \cdot \vec x} .
\end{equation}

\subsubsection*{The $f$-plane hydrostatic problem}

Suppose we have hydrostatic gravity waves on an $f$-plane with free surface at the upper boundary, as in \cite{kelly_vertical_2016}. The appropriate inner product is obtained by setting $\tau = 0$ in the inner product \eqref{grav-inner}. All the above results on the eigenfunctions of gravity-capillary waves carry over to the hydrostatic $f$-plane problem provided we set 
\begin{equation}
    \sigma^2 = \frac{\omega^2 - f_0^2}{k^2}.
\end{equation}

\subsection{A Boussinesq fluid with a rotating upper boundary} \label{S-Boussinesq-rot}

\begin{figure}
  \centerline{\includegraphics[width=1.\columnwidth]{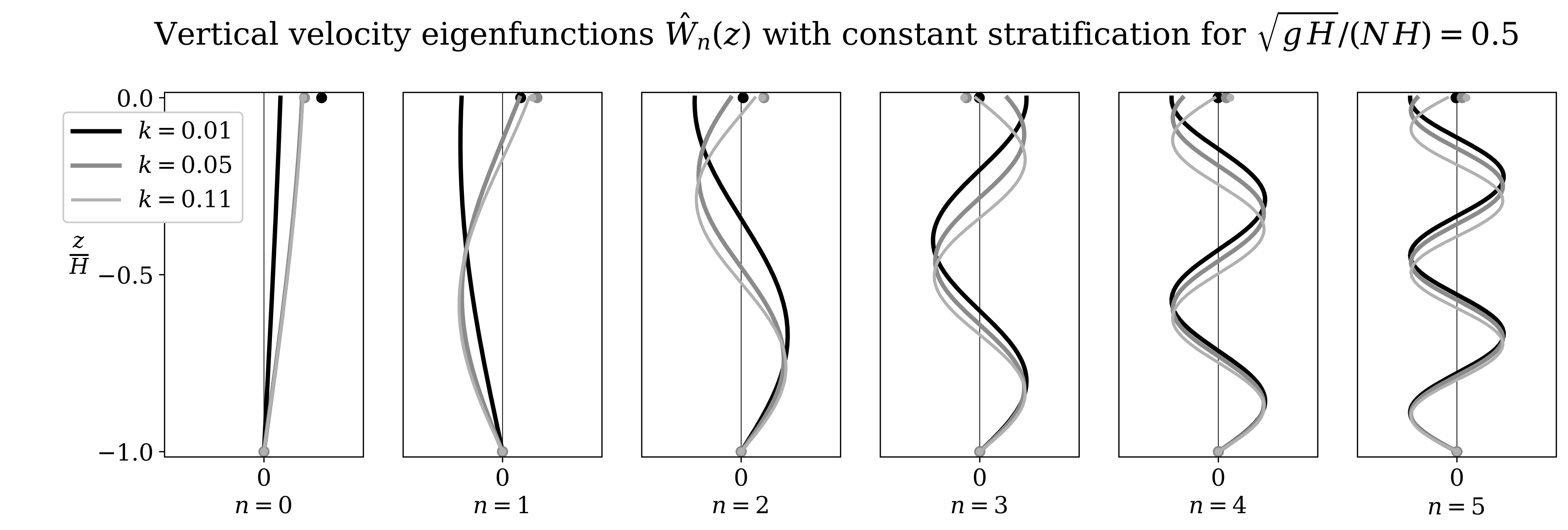}}
  \caption{The vertical velocity eigenfunctions $\hat W_n$ of a Boussinesq fluid with a rotating upper boundary\textemdash eigenvalue problem \eqref{rot-bound-SL1}\textendash\eqref{rot-bound-SL3}. This figure is discussed in section \ref{S-Boussinesq-rot}. The wavenumbers $k$ in the figure are non-dimensionalized by the depth $H$. The dots represent the values of the eigenfunctions at the boundaries. Note that the eigenfunctions have a finite jump discontinuity at $z=0$. For $k \, H=0.01$ (given by the black line) there are two modes with no internal zeros. As $k$ increases, we obtain two modes with one internal zero (at $k\, H =0.05$, the thick grey line) and then two modes with three internal zeros (at $k\, H=0.11$, the thin grey line).}
  \label{F-rotating-eigenfunctions}
\end{figure}
Although this next example is not geophysically relevant, it has the curious property that the resulting eigenfunctions are discontinuous.

Let $N^2_0$ be a typical value of $N^2(z)$. Consider the situation where $f_0^2/N_0^2 \ll 1$ but
\begin{equation}\label{rot-grav-scaling}
	\frac{g_b + \tau \, k^2}{f_0^2 \, H} \sim O(1).
\end{equation}
Accordingly, we may neglect the Coriolis parameter in the interior equation \eqref{grav-cap-1} but not at the upper boundary condition \eqref{grav-cap-3}. Substituting the wave solution \eqref{w-wave} into equations \eqref{grav-cap-1}\textendash\eqref{grav-cap-3} yields
\begin{align}
	\label{rot-bound-SL1}
	-\hat w'' + k^2 \, \hat w = \sigma^{-2} \, N^2 \, \hat w \quad &\text{for } z \in (-H,0) \\
	\label{rot-bound-SL2}
	\hat w = 0 \quad &\text{for } z = -H \\
	\label{rot-bound-SL3}
	(g_b + \tau \, k^2)^{-1} \hat w ' =  \sigma^{-2} \left[ \hat w  + \frac{f_0^2}{k^2} (g_b + \tau \, k^2)^{-1} \hat w' \right]\quad &\text{for } z = 0,
\end{align}
where $\sigma=\omega/k$ is the phase speed. Equations \eqref{rot-bound-SL1}\textendash\eqref{rot-bound-SL3} form an eigenvalue problem for the eigenvalue $\lambda = \sigma^{-2}$.

\subsubsection*{Definiteness \& the underlying function space}

As in the previous case, the eigenvalue problem is both right-definite and left-definite, the underlying function space $L^2_\mu$ is given by equation \eqref{gravity-functionspace}, and the appropriate inner product is equation \eqref{grav-inner}.  By right-definiteness, the space $L^2_\mu$, equipped with the inner product \eqref{grav-inner}, is a Hilbert space; thus, all eigenfunctions $\hat W_n$ satisfy
 \begin{equation}
 	\inner{\hat W_m}{\hat W_n} > 0.
 \end{equation}
 By theorem \ref{real-complete}, all eigenvalues $\lambda_n = \sigma_n^{-2}$ are real and the corresponding eigenfunctions $\{\hat W_n\}_{n=0}^\infty$ form an orthonormal basis of the Hilbert space $L^2_\mu$. By proposition \ref{eigenvalue-sign}, all eigenvalues $\lambda_n = \sigma_n^{-2}$ are positive and satisfy equation \eqref{grav-phase-speeds}.

\subsubsection*{Boundary jump discontinuity of the eigenfunctions}

The main difference between the previous non-rotating problem \eqref{non-rot-SL1}\textendash\eqref{non-rot-SL3} and the above problem \eqref{rot-bound-SL1}\textendash\eqref{rot-bound-SL3} is that, in the present problem, if $f_0 \neq 0$ then $d_2 \neq 0$ [see equation \eqref{EigenB2}]. Thus, by equation \eqref{discontinuity}, the eigenfunctions $\hat W_n$ generally have a jump discontinuity at the upper boundary $z=0$ (see figure \ref{F-rotating-eigenfunctions}) and so are not equal to the solutions $\hat w_n$. The eigenfunctions $\hat W_n$ are defined by $\hat W_n(z) = \hat w_n(z)$ for $z \in[-H,0)$ and
\begin{equation}
	\hat W_n(0) = \hat w_n(0) +  \frac{f_0^2}{k^2} (g_b + \tau \, k^2)^{-1} \, \hat w_n'(0)
\end{equation}
[see equation \eqref{discontinuity}]. It is not difficult to show that
\begin{equation}
	\hat W_n(0) \approx 0 \quad \text{ for } n \text{ sufficiently large,}
\end{equation}
as can be seen in figure \ref{F-rotating-eigenfunctions}. 

Physical motion is given by the solutions $\hat w_n$ which are continuous over the closed interval $[-H,0]$. The jump discontinuity in the eigenfunctions $\hat W_n$ does not correspond to any physical motion; instead, the eigenfunctions $\hat W_n$ are convenient mathematical aids used to obtain eigenfunction expansions in the function space $L^2_\mu$.

\subsubsection*{Number of internal zeros of the eigenfunctions} 

Another consequence of $d_2 \neq 0$ is that by, lemma \ref{extra-oscillation}, there are two distinct solutions $\hat w_M$ and $\hat w_{M+1}$ with the same number  of internal zeros (i.e., $M$) in the interval $(-H,0)$. Noting that  
\begin{equation}
	-\frac{b_2}{d_2} = \frac{k^2}{f_0^2}
\end{equation}
the integer $M$ is determined by 
\begin{equation}
	\sigma_0^2 > \sigma_1^2>\dots > \sigma_M^2 > \frac{f_0^2}{k^2} \geq \sigma^2_{M+1} > \dots > 0.
\end{equation}
A smaller $f_0$ or a larger $k$ implies a larger $M$ and hence that $\hat w_M$ and $\hat w_{M+1}$ have a larger number of internal zeros, as shown in figure \ref{F-rotating-eigenfunctions}.

\subsubsection*{Expansion properties}

As in the previous problem, the eigenfunctions are complete in $L^2_\mu$ but overcomplete in $L^2$ due to the additional surface gravity-capillary wave.

Given a twice continuously differentiable function $\chi(z)$ satisfying $\chi(-H)=0$, we define the discontinuous function $X(z)$ by
\begin{equation}
	X(z) = 
	\begin{cases}
		\chi(z) &\quad \textrm{for } z\in [-H,0) \\
		\chi(0) + \frac{f_0^2}{k^2}  \left( g_b + \tau \, k^2 \right)^{-1} \chi'(0)  &\quad \textrm{for } z = 0
	\end{cases}
\end{equation}
as in theorem \ref{uniform}. Then, by theorem \ref{uniform}, we have the expansions
\begin{equation}
	\chi(z) = \sum_{n=0}^\infty \inner{X}{\hat W_n} \, \hat w_n(z) \quad \textrm{and} \quad \chi'(z) = \sum_{n=0}^\infty \inner{X}{\hat W_n}\, w_n'(z).
\end{equation}
Moreover, if $\chi(z)$ is the vertical structure at $t=0$ (and at some wavevector $\vec k$) and we assume $\partial_t w(\vec x,z,t=0)=0$, then the subsequent time-evolution is given by
\begin{equation}
  w(\vec x, z,t) = \sum_{n=0}^\infty \inner{X}{\hat W_n} \, \hat w_n(z) \, \cos\left(\sigma_n k t\right) \,\mathrm{e}^{\mathrm{i}\vec k \cdot \vec x}.
\end{equation}

\section{Quasigeostrophic waves} \label{S-QG}

\subsection{Linear equations}

Linearizing the quasigeostrophic equations about a quiescent background state with an infinitesimally sloping lower boundary, at $z=-H$, and a rigid flat upper boundary, at $z=0$, renders
\begin{align}
	\label{linear-q}
	\partial_t \left[ \lap_z \psi +  \partial_z \left( S^{-1} \, \partial_z \psi \right) \right]  +  \unit z \cdot \left( \grad_z \psi \times \grad_z f \right) &=0 \quad \text{for } z\in(-H,0)\\
	\label{linear-r1}
	\partial_t \left( S^{-1} \, \partial_z \psi \right)  +  \unit z \cdot \left( \grad_z \psi \times f_0 \, \grad_z h \right) &=0 \quad \text{for } z=-H\\
	\label{linear-r2}
	\partial_t \left( S^{-1} \, \partial_z \psi \right) &= 0 \quad \text{for } z=0.
\end{align}
See \cite{rhines_edge_1970}, \cite{charney_oceanic_1981}, \cite{straub_dispersive_1994} for details.
The streamfunction $\psi$ is defined through $\vec u = \unit z \times \grad_z \psi$ where $\vec u$ is the horizontal velocity and $\grad_z = \unit x \, \partial_x + \unit y \, \partial_y$ is the horizontal Laplacian. The stratification parameter $S$ is given by
\begin{equation}
	S(z) = \frac{N^2(z)}{f_0^2},
\end{equation}
where $N^2$ is the buoyancy frequency and $f_0$ is the reference Coriolis parameter. The latitude dependent Coriolis parameter $f$ is defined by
\begin{equation}
	f(y) = f_0 + \beta \, y.
\end{equation}
Finally, $h(\vec x)$ is the height of the topography at the lower boundary and is a linear function of the horizontal position vector $\vec x$. Consistent with quasigeostrophic theory, we assume that topography $h$ is small and so we evaluate the lower boundary condition at $z=-H$ in equation \eqref{linear-r1}.

\subsection{The streamfunction eigenvalue problem} \label{S-QG-stream}

We assume wave solutions of the form
\begin{equation}\label{QG-psi-wave}
	\psi(\vec x, z,t) = \hat \psi (z) \, \mathrm{e}^{\mathrm{i}(\vec k \cdot \vec x - \omega t)}
\end{equation}
where $\vec k = \unit x \, k_x + \unit y \, k_y$ is the horizontal wavevector and $\omega$ is the angular frequency. 
 
We denote by $\Delta \theta_f$ the angle between the horizontal wavevector $\vec k$ and the gradient of Coriolis parameter $\grad_z f$,
\begin{equation} \label{theta_f}
	\sin{(\Delta \theta_f)} = \frac{1}{k \, \beta } \, \unit z \cdot \left(\vec k \times  {\grad_z f }\right),
\end{equation}
where $k= \abs{\vec k}$ is the horizontal wavenumber. Positive angles are measured counter-clockwise relative to $\vec k$. Thus, $\Delta \theta_f>0$ indicates that $\vec k$ points to the right of $\grad_z f$ while $\Delta \theta_f<0$ indicates that $\vec k$ points to the left of $\grad_z f$.

We define the topographic parameter $\alpha$ by
\begin{equation}
	\alpha = \abs{f_0 \, \grad_z h}.
\end{equation}
In analogy with $\Delta \theta_f$, we define the angle $\Delta \theta_h$ by
\begin{equation} \label{theta_i}
	\sin{(\Delta \theta_h)} =  \frac{1}{k \, \alpha} \, \unit z \cdot \left( \vec k \times  {f_0 \grad_z h} \right) 
\end{equation}
with a similar interpretation assigned to $\Delta \theta_h>0$ and $\Delta \theta_h<0$.

Substituting the wave solution \eqref{QG-psi-wave} into the linear quasigeostrophic equations \eqref{linear-q}\textendash \eqref{linear-r2} and assuming that $\alpha \, \sin(\Delta\theta_h) \neq 0$, $\omega \neq 0$, and $k\neq 0$, we obtain
\begin{align}
		\label{QG-psi-l-eigen1}
	- (S^{-1} \, \hat \psi')' + k^2 \, \hat \psi  = \lambda  \, \hat \psi \quad &\text{for } z\in(-H,0)\\
	\label{QG-psi-l-eigen2}
	-\frac{\beta}{\alpha} \, \frac{\sin{(\Delta \theta_f)}}{ \sin{(\Delta \theta_h)}} \, S^{-1} \, \hat \psi' =  \lambda \,  \psi  \quad &\text{for } z=-H \\
	\label{QG-psi-l-eigen3}
	S^{-1}\, \psi' = 0 \quad &\text{for } z=0,
\end{align}
where we have defined the eigenvalue $\lambda$ by
\begin{equation}\label{qg-dispersion}
	\lambda = -\frac{k \, \beta \, \sin{(\Delta \theta_f)}}{\omega}.
\end{equation}
Since $k\neq 0$ then $\lambda=0$ is not an eigenvalue. The above problem \eqref{QG-psi-l-eigen1}\textendash\eqref{QG-psi-l-eigen3} was recently considered in \cite{lacasce_prevalence_2017}.

\begin{figure}
  \centerline{\includegraphics[width=1.\columnwidth]{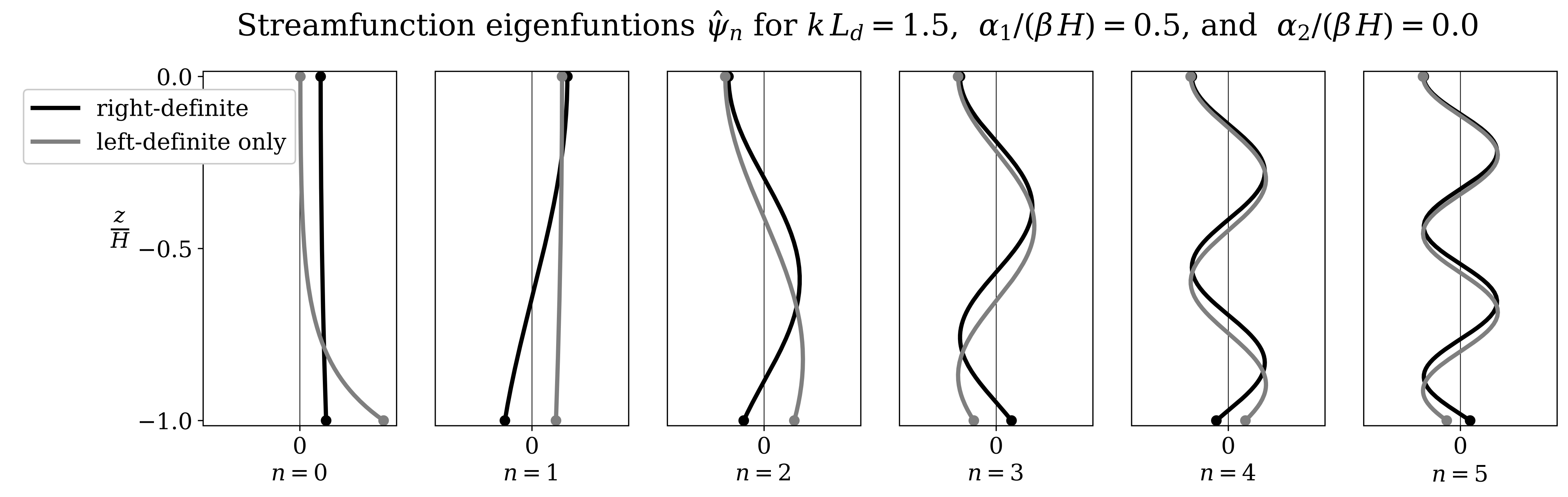}}
  \caption{The streamfunction eigenfunctions $\hat \psi_n$ of the quasigeostrophic eigenvalue problem with a sloping bottom from section \ref{S-QG-stream}. Two cases are shown. The first is with $\Delta \theta_f = -90^\circ$ and $\Delta \theta_1 = -30^\circ$ and is both right-definite and left-definite. The second is with $\Delta \theta_f = -45^\circ$ and $\Delta \theta_1 = 15^\circ$ and is only left-definite. In the right-definite case the $n$th eigenfunction has $n$ internal zero whereas in the left-definite only case there are two eigenfunctions ($n=0,1$) with no internal zeros.}
  \label{F-QG-psi}
\end{figure}

\subsubsection*{Definiteness \& the underlying function space}

The eigenvalue problem has one $\lambda$-dependent boundary condition and so the underlying function space is 
\begin{equation}
	L^2_\mu \cong L^2 \oplus \C.
\end{equation}
The appropriate inner product is obtained from equations \eqref{dmu_inner} and \eqref{Di}
\begin{align}\label{inner-qg}
	\inner{\varphi}{\phi} =  \frac{1}{H} \left[ \int_{-H}^0 \varphi \, \phi \, \mathrm{d}z + \frac{\alpha}{\beta} \, \frac{\sin\left(\Delta \theta_h\right)}{\sin\left(\Delta \theta_f\right)} \, \varphi(-H) \, \phi(-H) \right]
\end{align} 
where we have introduced the factor $1/H$ for dimensional consistency in eigenfunction expansions. By proposition \ref{right-definite-prop}, the problem is right-definite for horizontal wavevectors $\vec k$ satisfying
\begin{equation}\label{qg-right-definite-condition}
	\frac{\sin{(\Delta \theta_h)}}{\sin{(\Delta \theta_f)}} >0
\end{equation}
and, in such cases, $L^2_\mu$ equipped with the inner product \eqref{dmu_inner} is a Hilbert space. However, $L^2_\mu$ is not a Hilbert space for all wavevectors $\vec k$. By proposition \ref{left-definite-prop}, the problem is left-definite for all wavevectors $\vec k$ and so $L^2_\mu$, equipped with the inner product \eqref{dmu_inner}, is generally a Pontryagin space.

We write $\hat \Psi_n$ for the eigenfunctions and $\hat \psi_n$ for the solutions of equations \eqref{QG-psi-l-eigen1}\textendash \eqref{QG-psi-l-eigen3}. The eigenfunctions $\hat \Psi_n$ are related to the solutions $\hat \psi_n$ by \eqref{L2_mu-element} with boundary values $\hat \Psi_n(0)$ given by equation \eqref{discontinuity}. However, since $c_1 = 1$ and $d_1=0$ in equation \eqref{QG-psi-l-eigen2} [compare with equations \eqref{EigenDiff_modes}\textendash\eqref{EigenB2}] then $\hat \Psi_n = \hat \psi_n$ on the closed interval $[-H,0]$. Thus, the solutions $\psi_n$ are also the eigenfunctions.

With theorem \ref{real-complete}, we deduce that all eigenvalues $\lambda_n$ are real and the corresponding eigenfunctions $\{\hat \psi_n\}_{n=0}^\infty$ form an orthonormal basis for $L^2_\mu$. Orthonormality is defined with respect to the inner product given by equation \eqref{inner-qg} and takes the form
\begin{equation}
	\pm \delta_{mn} = \inner{\hat \psi_m}{\hat \psi_n}
\end{equation}
where we have taken the eigenfunctions $\hat \psi_m$ and $\hat \psi_n$ to be non-dimensional.

\subsubsection*{Properties of the eigenfunctions}

By lemma \ref{extra-oscillation}, the number of internal zeros of the eigenfunctions $\{\hat \psi_n\}_{n=0}^\infty$ depends on the propagation direction and hence [by equation \eqref{qg-right-definite-condition}] on the definiteness of the problem (see figure \ref{F-QG-psi}):
\begin{itemize}
	\item [\emph{1.}] if the problem is right-definite then the $n$th eigenfunction has $n$ internal zeros,
	\item [\emph{2.}] if the problem is not right-definite then both $\psi_0$ and $\psi_1$ have no internal zeros; the remaining eigenfunctions $\psi_n$, for $n>1$, have $n-1$ internal zeros.
\end{itemize}

As the problem is left-definite for all wavevectors $\vec k$, we can use proposition \ref{eigenvalue-sign} to determine the sign of the eigenvalues. 
Proposition \ref{eigenvalue-sign} informs us that
\begin{equation}
	\lambda_n \inner{\hat \psi_n}{\hat \psi_n} > 0.
\end{equation}
In the first case, when the problem is right-definite, all eigenvalues are positive and all eigenfunctions $\hat \psi_n$ satisfy $\inner{\hat \psi_n}{\hat \psi_n} >0$.
In the second case, when the problem is only left-definite, then there is one negative eigenvalue $\lambda_0$ and the corresponding eigenfunction $\hat \psi_0$ satisfies $\inner{\hat \psi_0}{\hat \psi_0} < 0$.
The remaining eigenvalues are positive and their corresponding eigenfunctions satisfy 
$ \inner{\hat \psi_n}{\hat \psi_n} >0$.
In fact, from equation \eqref{qg-dispersion}, we see that waves with $\inner{\hat \psi_n}{\hat \psi_n} >0 $ have westward phase speeds $\omega_n /k <0$ while waves with $\inner{\hat \psi_n}{\hat \psi_n} <0 $ have eastward phase speeds $\omega_n/k >0$.

\subsubsection*{Expansion properties}

The eigenfunctions $\{\hat \psi_n\}_{n=0}^\infty$ are complete in $L^2_\mu$ but overcomplete in $L^2$. Physically, there is now an additional eigenfunction corresponding to a topographic Rossby wave ($n=0$ in figure \ref{F-QG-psi}).

Given a twice continuously differentiable function $\phi(z)$ satisfying $\phi'(0)=0$, then from theorem \ref{uniform}, we have
\begin{equation}
	\phi(z) = \sum_{n=0}^\infty \frac{\inner{\phi}{\hat \psi_n}}{\inner{\hat \psi_n}{\hat \psi_n}} \, \hat \psi_n(z) \quad \textrm{and} \quad \phi'(z) = \sum_{n=0}^\infty \frac{\inner{\phi}{\hat \psi_n}}{\inner{\hat \psi_n}{\hat \psi_n}} \, \hat \psi'_n(z),
\end{equation}
with both series converging uniformly on $[-H,0]$ (note that $\phi$ is not required to satisfy any particular boundary condition at $z=-H$). If the vertical structure at time $t=0$  (and at some wavevector $\vec k$) is given by $\phi$, then the subsequent time-evolution is given by
\begin{equation}\label{qg-time-evolution}
	 \psi(\vec x, z,t) = \sum_{n=0}^\infty \frac{\inner{\phi}{\hat \psi_n}}{\inner{\hat \psi_n}{\hat \psi_n}} \, \hat \psi_n(z) \, \cos\left(\omega_n t\right) \, \mathrm{e}^{\mathrm{i}\vec k \cdot \vec x},
\end{equation}
where the angular frequency $\omega_n$ is given by equation \eqref{qg-dispersion}.

\section{A localized perturbation at the boundary}\label{S-step-forcing}

We now consider a localized perturbation at a dynamically-active boundary; we idealize such a perturbation by a boundary step-function $\Theta_i$ (for $i \in S$) given by
	\begin{equation}\label{Theta}
		\Theta_i(z) =
		\begin{cases}
			1 \quad \text{if } z=z_i\\
			0 \quad \text{otherwise.}
		\end{cases}
	\end{equation}
	Using equation \eqref{expansion}, the series expansion of $\Theta_i$ is found to be
	\begin{equation}\label{theta-expansion}
		\Theta_i 
		=  \frac{1}{D_i} \sum_{n=0}^\infty \frac{\, \Phi_n(z_i)}{\inner{\Phi_n}{\Phi_n}} \, \Phi_n(z).
	\end{equation}
	
For the non-rotating Boussinesq problem of section \ref{S-Boussinesq-non-rot}, a step-function perturbation with amplitude $w_0$ (at some wavevector $\vec k$) yields the time-evolution
\begin{equation}
	w(\vec x, z,t) = w_0 \left(\frac{g_b + \tau \, k^2}{N_0^2\, H}\right) \sum_{n=0}^\infty \hat w_n(0) \, \hat w_n(z) \, \cos\left(\sigma_n k t \right) \,\mathrm{e}^{\mathrm{i}\vec k \cdot \vec x}.
\end{equation}
Analogously, for the quasigeostrophic problem of section \ref{S-QG-stream}, a step-function perturbation with amplitude $\psi_0$ (at some wavevector $\vec k$) yields the time-evolution
\begin{equation}
	\psi(\vec x, z,t) = \psi_0 \left[\frac{\alpha \, \sin(\Delta \theta_h) }{H\, \beta \, \sin(\Delta \theta_f)}\right] \sum_{n=0}^\infty \frac{\hat \psi_n(-H)}{\inner{\hat\psi_n}{\hat \psi_n}} \, \hat \psi_n(z) \, \cos\left(\omega_n t\right) \,\mathrm{e}^{\mathrm{i} \vec k \cdot \vec x}.
\end{equation}
That both the above series converge to a step-function at $t=0$ (and $\vec x = \vec 0$) is confirmed by theorem \ref{expansion-theorem} along with theorem 2 in \citet{fulton_two-point_1977}.

We thus see that a step-function perturbation induces wave motion with an amplitude that is proportional to the boundary-confined restoring force (at wavevector $\vec k$). Moreover, the amplitude of each constituent wave in the resulting motion is proportional to the projection of that wave onto the dynamically-active boundary.

\section{Summary and conclusions}\label{S-conclusion_modes}

We have developed a mathematical framework for the analysis of three-dimensional wave problems with dynamically-active boundaries (i.e., boundaries where time derivatives appear in the boundary conditions). The resulting waves have vertical structures that depend on the wavevector $\vec k$: For Boussinesq gravity waves, the dependence is only through the wavenumber $k$ whereas the dependence for quasigeostrophic Rossby waves is on both the wavenumber $k$ and the propagation direction $\vec k/k$. Moreover, the vertical structures of the waves are complete in a space larger than $L^2$, namely, they are complete in $L^2_\mu \cong L^2 \oplus \C^s$ where $s$ is the number of dynamically active boundaries (and the number of boundary-trapped waves). Each dynamically active boundary contributes an additional boundary-trapped wave and hence an additional degree of freedom to the problem. Mathematically, the presence of boundary-trapped waves allows us to expand a larger collection of functions (with a uniformly convergent series) in terms of the modes. The resulting series are term-by-term differentiable and the differentiated series converges uniformly. In fact, the normal modes have the intriguing property converging pointwise to functions with finite jump discontinuities at the boundaries, a property related to their ability to expand distributions in the \cite{bretherton_critical_1966} ``$\delta$-function formulation'' of a physical problem. By considering a step-function perturbation at a dynamically-active boundary, we find that the subsequent time-evolution consists of waves whose amplitude is proportional to their projection at the dynamically-active boundary. Within the mathematical formulation is a qualitative oscillation theory relating the number of internal zeros of the eigenfunctions to physical quantities; indeed, for the quasigeostrophic problem, the number of zeros of the topographic Rossby wave depends on the propagation direction while, for the rotating Boussinesq problem, the ratio of the Coriolis parameter to the horizontal wavenumber determines at which integer $M$ we obtain two modes with $M$ zeros.

Our results also clarify the difference between the traditional quasigeostrophic baroclinic modes and the the $L^2\oplus \C^2$ eigenfunctions of \cite{smith_surface-aware_2012}. Namely, the series expansion of a function in terms of the \cite{smith_surface-aware_2012} eigenfunctions has a term-by-term derivative that converges uniformly over the whole interval regardless of the boundary conditions satisfied by the function. In contrast, an eigenfunction expansion in terms of the baroclinic modes only has this property if the function satisfies the same boundary conditions as the baroclinic modes. One consequence is the following. Suppose we expand an arbitrary quasigeostrophic state, with boundary buoyancy anomalies, in terms of the baroclinic modes. The presence of these boundary buoyancy anomalies implies that this state does not satisfy the same boundary conditions as the baroclinic modes. The resulting series expansion in term of the baroclinic modes is then not differentiable at the boundaries. We are thus unable to recover the value of the boundary buoyancy anomalies from the series expansion and so we have lost information in the expansion process.  This loss of information does not occur with  $L^2\oplus \C^2$ expansions. 

Normal mode decompositions of quasigeostrophic motion play an important role in physical oceanography \cite[e.g.,][]{wunsch_vertical_1997,lapeyre_what_2009,lacasce_prevalence_2017}. Other applications include the extension of equilibrium statistical mechanical calculations \cite[e.g.,][]{bouchet_statistical_2012,venaille_catalytic_2012} to three-dimensional systems with dynamically-active boundaries. Moreover, the mathematical framework developed here is useful for the development of weakly non-linear wave turbulence theories \cite[e.g.,][]{fu_nonlinear_1980,smith_scales_2001,scott_wave_2014} in systems with both internal and boundary-trapped waves.

\begin{subappendices}

\section{Additional properties of the eigenvalue problem}\label{S-additional-properties}

\subsection{Construction of $L^2_\mu$}\label{L2-mu-construction}

First, define the weighted Lebesgue measure $\sigma$ by
\begin{equation}
	\sigma([a,b]) = \int_a^b r \, \mathrm{d}z \quad \text{where } a,b\in[z_1,z_2].
\end{equation}
The measure $\sigma$ induces the differential element
\begin{equation}
	\mathrm{d}\sigma(z) = r(z)\,\mathrm{d}z
\end{equation}
and is the measure associated with $L^2$ [see equations \eqref{L2-equal} and \eqref{L2-inner}].

Now, for $i\in S$ [see equation \eqref{S-set}], define the pure point measure $\nu_i$ by \cite[e.g.,][section I.4, example 2]{reed_methods_1980}
\begin{equation}
	\nu_i([a,b]) = 
		\begin{cases}
			D_i^{-1} \ \ \ &\text{if } z_i \in [a,b] \\
			0 \ \ \ &\text{otherwise,}
		\end{cases}
\end{equation}
where $D_{i}$ is the combination of boundary condition coefficients given by equation \eqref{Di}. 
The pure point measure $\nu_i$ induces the differential element 
\begin{equation}
	\mathrm{d}\nu_i(z) = D_i^{-1}\, \delta(z-z_i) \, \mathrm{d}z,
\end{equation}
where $\delta(z)$ is the Dirac distribution.

Consider now the space $L^2_{\nu_i}$ of ``functions'' $\phi$ satisfying  
\begin{equation}
	\abs{ \intz \abs{\phi}^2\, \mathrm{d}\nu_i} = \abs{D_i^{-1}} \, \intz \abs{\phi}^2 \, \delta(z-z_i) \, \mathrm{d}z = \abs{D_i^{-1}} \,\abs{\phi(z_i)}^2 < \infty.
\end{equation}
Elements of $L^2_{\nu_i}$ are not functions, but rather equivalence classes of functions. Two functions, $\phi$ and $\psi$, on the interval $[z_1,z_2]$ are equivalent in $L^2_{\nu_i}$ if $\phi(z_i) = \psi(z_i)$. In particular, $L^2_{\nu_i}$ is a one-dimensional vector space and is hence isomorphic to the field of complex numbers $\C$
\begin{equation}
	L^2_{\nu_i} \cong \C.
\end{equation}

Now define the measure $\mu$ by
\begin{equation}\label{mu}
	\mu = \sigma + \sum_{i\in S} \nu_i
\end{equation}
with an induced differential element of
\begin{equation} \label{dmu-appendix}
	\mathrm{d}\mu(z) = \left[ r(z) + \sum_{i\in S} D_i^{-1} \, \delta(z-z_i) \right] \mathrm{d}z.
\end{equation}
Then $L^2_\mu$ is the space of equivalence classes of functions that are square-integrable with respect to the measure $\mu$.

Since the measures $\sigma$ and $\nu_i$, for $i\in S$, are mutually singular, we have \cite[][section II.1, example 5]{reed_methods_1980}
\begin{equation}\label{isomorphism}
	L^2_\mu \cong L^2 \oplus \sum_{i\in S} L^2_{\nu_i} \cong L^2 \oplus \C^s
\end{equation}
from which we see that $L^2_\mu$ is ``larger'' by $s$ dimensions.

\subsection{The eigenvalue problem in $L^2_\mu$}\label{S-eigen-in-L2-mu}

We construct here an operator formulation of \eqref{EigenDiff_modes}\textendash\eqref{EigenB2} as an eigenvalue problem in the Pontryagin space $L^2_\mu$.

Define the differential operator $\ell$ acting on a function $\phi$ by
\begin{equation}
	\ell \, \phi = \frac{1}{r} \, \left[  (p \, \phi')' - q \, \phi \right].
\end{equation}
We also define the following boundary operators for $i \in S$, 
\begin{align}
	\Bc_i \phi &= \left[a_i \, \phi(z_i) - b_i \, (p \, \phi')(z_i)\right]\\
	\Cc_i  \phi &= \left[c_i \, \phi(z_i) - d_i \, (p \, \phi')(z_i)\right].
\end{align}
Let $\Phi$ be an element of $L^2_\mu$, as in equation \eqref{L2_mu-element}, with boundary values $\Phi(z_i) = \Cc_i \phi$ for $i\in S$ and equal to $\phi$ elsewhere.
We then define the operator $\Lc$, acting on functions $\Phi$, by
\begin{equation}
	\Lc \, \Phi = 
	\begin{cases}
		-\ell \, \phi  \quad &\text{for } z\in(z_1,z_2)\\
		-\Bc_i \,  \phi  \quad &\text{for } z=z_i \text{ where }i\in S
	\end{cases}
\end{equation}
with a domain $D(\Lc)\subset L^2_\mu$ defined by
\begin{equation}
\begin{aligned}
	D(\Lc) = \{ \Phi \in L^2_\mu \ | & \ \phi \text{ is continuously differentiable, } \ell \, \phi \in L^2,   \ \Phi(z_i) = \Cc_i \, \phi \\\text{ for } i\in S  & \text{ and } \Bc_i \phi = 0 \text{ for } i \in \{1,2\}\setminus S  \}.
\end{aligned}
\end{equation}
Recall that $S$ contains indices of the $\lambda$-dependent boundary conditions, and therefore, $\{1,2\}\setminus S$ contains the indices of the $\lambda$-independent boundary conditions. 

Then, on the subspace $D(\Lc)$ of $L^2_\mu$, the eigenvalue problem \eqref{EigenDiff_modes}\textendash\eqref{EigenB2} may be written as 
\begin{equation}\label{Op-Eigenproblem-app}
	\Lc \, \Phi = \lambda \, \Phi.
\end{equation} 
As shown in \cite{russakovskii_operator_1975,russakovskii_matrix_1997}, $\Lc$ is a self-adjoint operator in the space $L^2_\mu$.

There is a natural quadratic form $Q$, induced by the eigenvalue problem \eqref{EigenDiff_modes}\textendash\eqref{EigenB2}, given by
\begin{equation}\label{quadratic_form}
	Q(\Phi,\Psi) = \inner{\Phi}{\Lc \, \Psi}.
\end{equation}
For elements $\Phi,\Psi \in D(\Lc)$, we obtain
\begin{equation}
\begin{aligned}
		Q(\Phi,\Psi)  &= \intz \left[p \, \ov{\phi'} \, \psi' + q \, \ov{\phi} \, \psi \right] \, \mathrm{d}z 
	+ \sum_{i\in\{1,2\}\setminus S} (-1)^{i+1} \, \frac{a_i}{b_i} \,\ov{\phi(z_i)} \, \psi(z_i) \, \\
	&\quad - \sum_{i\in S} \frac{1}{D_i}
	\ov{
	\left(
	\begin{matrix}
		\psi(z_i) \\
		-(p \, \psi')(z_i)
	\end{matrix}
	\right)
	} \cdot
	\left(
	\begin{matrix}
		a_i \, c_i & a_i \, d_i \\
		a_i \, d_i & b_i \, d_i
	\end{matrix}
	\right)
	\left(
	\begin{matrix}
		\phi(z_i) \\
		-(p \, \phi')(z_i)
	\end{matrix}
	\right)
\end{aligned}
\end{equation}
for $b_i \neq 0$ for $i \in \{1,2\} \setminus S$. If $b_i=0$ for $i \in \{1,2\} \setminus S$ then we replace the term $a_i/b_i$ with zero. 

To develop the reality and completeness theorem \ref{real-complete}, we provide the following definitions.

\begin{definition}[Right-definite]
	The eigenvalue problem \eqref{EigenDiff_modes}\textendash\eqref{EigenB2} is said to be right-definite if $L^2_\mu$ is a Hilbert space or, equivalently, if
	\begin{equation}
		\inner{\Phi}{\Phi} > 0 
	\end{equation} 
	for all non-zero $\Phi \in L^2_\mu$.
\end{definition}

\begin{definition}[Left-definite]
	The eigenvalue problem \eqref{EigenDiff_modes}\textendash\eqref{EigenB2} is said to be left-definite if 
	\begin{equation}\label{left-definite-inequality}
		Q(\Phi,\Phi) \geq 0
	\end{equation}
	for all $\Phi \in D(\Lc)$.
\end{definition}

One can then prove propositions \ref{right-definite-prop} and \ref{left-definite-prop} through straightforward manipulations.

\subsection{Properties of eigenfunction expansions}\label{S-more-eigen-expansions}

The following theorem features some of the novel properties of the basis $\{\Phi_n\}_{n=0}^\infty$ of $L^2_\mu$. Theorem \ref{expansion-theorem} below is a generalization of a theorem first formulated, in the right-definite case, by \citet{walter_regular_1973} and \cite{fulton_two-point_1977}.

\begin{theorem}[Eigenfunction expansions]\label{expansion-theorem} Let $\{\Phi_n\}_{n=0}^\infty$ be the set of eigenfunctions of the eigenvalue problem \eqref{EigenDiff_modes}\textendash\eqref{EigenB2}. Then the following properties hold.
\begin{itemize}
	\item[(i)] Null series:  For $i \in S$, we have 
	\begin{equation}\label{null}
		0 = D_i^{-1} \, \sum_{n=0}^\infty \frac{1}{\inner{\Phi_n}{\Phi_n}} \, \Phi_n(z_i) \, \phi_n(z) 
	\end{equation}
	with equality in the sense of $L^2$.
	\item[(ii)] Unit series: For $i \in S$, we have
	\begin{equation}\label{unit}
		1 = D_i^{-1} \, \sum_{n=0}^\infty \frac{1}{\inner{\Phi_n}{\Phi_n}} \, \abs{\Phi_n(z_i)}^2.
	\end{equation}
	\item[(iii)] $L^2$-expansion: Let $\psi \in L^2$, then
	\begin{equation}\label{L2-expansion}
		\psi =  \sum_{n=0}^\infty \frac{1}{\inner{\Phi_n}{\Phi_n}} \,\left(\intz \ov{\psi} \, \phi_n \, r \, \mathrm{d}z \right) \, \phi_n.
	\end{equation}
	with equality in the sense of $L^2$.
	\item[(iv)] Interior-boundary orthogonality: Let $\psi \in L^2$, then for $i \in S$, we have
	\begin{equation}\label{int-bound-orth}
		0  = \sum_{n=0}^\infty \frac{1}{\inner{\Phi_n}{\Phi_n}}  \left(\intz \ov{\psi} \, \phi_n \, r \, \mathrm{d}z \right) \Phi_n(z_i).
	\end{equation}
\end{itemize}
\end{theorem}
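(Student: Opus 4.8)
The plan is to obtain all four identities from the single master expansion \eqref{expansion}, namely $\Psi = \sum_{n=0}^\infty \inner{\Psi}{\Phi_n}/\inner{\Phi_n}{\Phi_n}\,\Phi_n$, which holds for every $\Psi\in L^2_\mu$ by theorem \ref{real-complete}. The whole content of the theorem is then extracted by feeding this formula two carefully chosen elements of $L^2_\mu$ and reading off the interior and boundary components of each side separately.

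The one structural fact I would record first is that convergence in $L^2_\mu$ splits into interior and boundary convergence. Because $L^2_\mu \cong L^2 \oplus \C^s$ by the isomorphism \eqref{isomorphism}, with the $\C^s$ summand finite-dimensional, the projection onto the interior $L^2$-part and each boundary evaluation $\Psi \mapsto \Psi(z_i)$ for $i\in S$ are continuous linear maps. Hence any $L^2_\mu$-convergent series converges interiorly in $L^2$ and at each active boundary in $\C$, which licenses equating the two sides of \eqref{expansion} component by component. Throughout, I would keep careful track of the distinction between the (possibly discontinuous) eigenfunction $\Phi_n$, whose boundary value $\Phi_n(z_i)$ is given by \eqref{discontinuity}, and its continuous interior restriction $\phi_n$.

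For parts (i) and (ii) I would apply the master expansion to the boundary step-function $\Theta_i$ of \eqref{Theta}, viewed as the element of $L^2_\mu$ with vanishing interior part and boundary data $\Theta_i(z_j)=\delta_{ij}$. Using the inner product \eqref{dmu_inner} the interior integral drops out and only the $j=i$ boundary term survives, giving $\inner{\Theta_i}{\Phi_n} = D_i^{-1}\,\Phi_n(z_i)$; substituting into \eqref{expansion} yields $\Theta_i = D_i^{-1}\sum_n \Phi_n(z_i)/\inner{\Phi_n}{\Phi_n}\,\Phi_n$. Projecting onto the interior, where $\Theta_i$ vanishes and $\Phi_n$ restricts to $\phi_n$, produces the null series (i); reading off the boundary value at $z_i$, where $\Theta_i(z_i)=1$, produces the unit series (ii), using that the $\Phi_n$ are real so that $\Phi_n(z_i)^2 = \abs{\Phi_n(z_i)}^2$. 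For parts (iii) and (iv), given $\psi\in L^2$ I would instead take the element $\Psi\in L^2_\mu$ with interior part $\psi$ and all boundary values set to zero; then the boundary sum in \eqref{dmu_inner} vanishes and $\inner{\Psi}{\Phi_n} = \intz \ov{\psi}\,\phi_n\,r\,\mathrm{d}z$. Inserting this into \eqref{expansion} and projecting onto the interior recovers the $L^2$-expansion (iii), while reading off the (zero) boundary value at $z_i$ gives the interior--boundary orthogonality relation (iv).

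The main obstacle is precisely the component-wise splitting invoked in the second step: justifying that a series converging in the Pontryagin space $L^2_\mu$ may be projected separately onto $L^2$ and onto each boundary coordinate. This rests on the finite-dimensionality of the $\C^s$ summand together with the mutual singularity of the measures $\sigma$ and $\nu_i$ used to assemble $\mu$ in \eqref{mu}, so that the direct-sum decomposition \eqref{isomorphism} is topological rather than merely algebraic; once this is in hand, the remaining computations are the short substitutions sketched above.
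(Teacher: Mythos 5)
Your proposal is correct and follows essentially the same route as the paper, which defers to corollary 1.1 of \cite{fulton_two-point_1977}: that corollary is itself obtained by substituting the boundary ``step'' element and the purely-interior element into the master expansion/Parseval identity and reading off the $L^2$ and $\C^s$ components separately, exactly as you do. Your explicit justification that the projections onto the summands of $L^2_\mu \cong L^2\oplus\C^s$ are continuous (via the induced positive-definite norm) is the right way to license the component-wise reading and is the only step the cited reference leaves implicit in the left-definite case.
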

\begin{proof}
	The proof is similar to the proof of corollary 1.1 in \cite{fulton_two-point_1977}.
\end{proof}

\subsection{Pointwise convergence and Sturm-Liouville series}\label{S-pointwise}

Theorem 3 in \cite{fulton_two-point_1977} states that the $\Phi_n$ series expansion \eqref{expansion} behaves like a Fourier series in the interior of the interval $(z_1,z_2)$ (see appendix \ref{S-math-app} for why this theorem applies in the left-definite case). Since the expansions \eqref{expansion} and \eqref{expansion-phi} in terms of $\Phi_n$ and $\phi_n$ are equal in the interior, then the above theorem applies to the $\phi_n$ series \eqref{expansion-phi} as well. It is at the boundaries points, $z=z_1,z_2$, where the novel behaviour of the series expansions \eqref{expansion} and \eqref{expansion-phi} appears.

For traditional Sturm-Liouville expansions [with eigenfunctions of problem \eqref{EigenDiff_modes}-\eqref{EigenB2} with $c_i,d_i = 0$ for $i=1,2$], eigenfunction expansions behave like the analogous Fourier series on $[z_1,z_2]$ [page 16 in \cite{titchmarsh_eigenfunction_1962}  or chapter 1, section 9, in \cite{levitan_introduction_1975}]. In particular, for a twice continuously differentiable function $\psi$, the eigenfunction expansion of $\psi$ converges uniformly to $\psi$ on $[z_1,z_2]$ so long as the eigenfunctions $\phi_n$ do not vanish at the boundaries. If the eigenfunctions vanish at one of the boundaries, then we only obtain uniform convergence if $\psi$ vanishes at the corresponding boundary as well \citep[][section 22]{brown_fourier_1993}. Under these conditions, the resulting expansion will be differentiable in the interior of the interval, $(z_1,z_2)$, but not at the boundaries $z=z_1,z_2$ [see chapter 8, section 3, in \citet{levitan_introduction_1975} for the equiconvergence of differentiated Sturm-Liouville series with Fourier series and see section 23 in \citet{brown_fourier_1993} for the convergence behaviour of differentiated Fourier series].

Returning to the case of eigenfunction expansions for the eigenvalue problem \eqref{EigenDiff_modes}\textendash\eqref{EigenB2} with $\lambda$-dependent boundaries, the following theorem provides pointwise (as well as uniform, in the case $d_i\neq 0$) convergence conditions for the $\phi_n$ series \eqref{expansion-phi}.

\begin{theorem}[Pointwise convergence]\label{pointwise}
	Let $\psi$ be a twice continuously differentiable function on the interval $[z_1,z_2]$ satisfying any $\lambda$-independent boundary conditions in the eigenvalue problem \eqref{EigenDiff_modes}\textendash\eqref{EigenB2}. Define the function $\Psi$ on $[z_1,z_2]$ by
	\begin{equation}
		\Psi(z) = 
		\begin{cases}
			\Psi(z_i) \quad &\textrm{at } z=z_i, \textrm{ for } i \in S,\\
			\psi(z) \quad &\textrm{otherwise}.
		\end{cases}
	\end{equation}
	where $\Psi(z_i)$ are constants for $i\in S$ (the $\lambda$-dependent boundaries). Then we have the following.
	\begin{itemize}
		\item [(i)] If $d_i \neq 0$ for $i \in S$, then the $\phi_n$ series expansion \eqref{expansion-phi} converges uniformly to $\psi(z)$ on the closed interval $[z_1,z_2]$,
		\begin{equation}
			\sum_{n=0}^\infty \frac{\inner{\Psi}{\Phi_n}}{\inner{\Phi_n}{\Phi_n}} \, \phi_n(z) = \psi(z).
		\end{equation} 
		 Furthermore, for the differentiated series, we have
			\begin{equation}
			\sum_{n=0}^\infty \frac{\inner{\Psi}{\Phi_n}}{\inner{\Phi_n}{\Phi_n}} \, \phi_n'(z) = 
			\begin{cases}
				\left( c_i \, \psi(z_i)- \Psi(z_i) \right)/ d_i \quad &\textrm{at } z=z_i, \textrm{ for } i \in S\\
				\psi'(z) \quad &\textrm{otherwise}.
			\end{cases}
			\end{equation} 
		\item [(ii)] If $d_i =0$, then we have
			\begin{equation}
				\sum_{n=0}^\infty \frac{\inner{\Psi}{\Phi_n}}{\inner{\Phi_n}{\Phi_n}} \, \phi_n=
				\begin{cases}
					\Psi(z_i)/c_i \quad &\textrm{at } z=z_i, \textrm{ for } i \in S\\
					\psi(z) \quad &\textrm{otherwise}.
				\end{cases}
			\end{equation}
	\end{itemize}
\end{theorem}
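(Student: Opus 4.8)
The plan is to exploit the fact that the solution series \eqref{expansion-phi} and the eigenfunction series \eqref{expansion} are built from the same coefficients $a_n = \inner{\Psi}{\Phi_n}/\inner{\Phi_n}{\Phi_n}$, and that $\phi_n$ and $\Phi_n$ coincide on the open interval $(z_1,z_2)$, differing only at the $\lambda$-dependent boundaries through the discontinuity relation \eqref{discontinuity}. Consequently the whole content of the theorem is a statement about the two boundary points $z=z_i$, $i\in S$; everything in the interior is inherited from the eigenfunction expansion. First I would record the interior behaviour: since $\{\Phi_n\}$ is complete in $L^2_\mu$ and $\psi$ is twice continuously differentiable, the interior equiconvergence result of \cite{fulton_two-point_1977} (his Theorem~3, which applies here by the reduction described in appendix \ref{S-pointwise}) shows that the solution series converges to $\psi(z)$ at every interior point and that the differentiated series converges to $\psi'(z)$ there. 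This settles the ``otherwise'' clauses of both (i) and (ii).

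The key boundary input is the following fact, call it $(\star)$: the expansion $\Psi=\sum_n a_n\Phi_n$ holds in $L^2_\mu\cong L^2\oplus\C^s$, and convergence in this space forces convergence of the $\C^s$-coordinate, so that $\sum_{n} a_n\,\Phi_n(z_i)\to\Psi(z_i)$ for $i\in S$. This is precisely the requirement, emphasized in section \ref{S-uniform} following equation \eqref{L2_mu-element}, that the eigenfunction series reproduce the prescribed boundary value. For case (ii), where $d_i=0$ for all $i\in S$, the relation \eqref{discontinuity} collapses to $\Phi_n(z_i)=c_i\,\phi_n(z_i)$ with $c_i\neq 0$; hence the partial sums of the solution series obey $\sum_{n\le N}a_n\,\phi_n(z_i)=c_i^{-1}\sum_{n\le N}a_n\,\Phi_n(z_i)$, and letting $N\to\infty$ and applying $(\star)$ yields the claimed boundary value $\Psi(z_i)/c_i$. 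This completes case (ii) with no further machinery.

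For case (i), $d_i\neq0$, the relation \eqref{discontinuity} no longer isolates $\phi_n(z_i)$, so I would instead obtain uniform convergence of the continuous solution series directly from the uniform convergence result of \cite{fulton_two-point_1977} (corollary~2.1), which holds exactly when $d_i\neq0$. The uniform limit of the continuous partial sums $\sum_{n\le N}a_n\phi_n$ is continuous and, by the interior result above, equals $\psi$ on $(z_1,z_2)$; continuity up to the closed interval then forces the limit to equal $\psi$ on all of $[z_1,z_2]$, giving $\sum_n a_n\phi_n(z_i)\to\psi(z_i)$ and proving the first assertion of (i). For the differentiated series I would solve \eqref{discontinuity} for the derivative, $(p\,\phi_n')(z_i)=d_i^{-1}[c_i\,\phi_n(z_i)-\Phi_n(z_i)]$, form the partial sums, and pass to the limit using $(\star)$ together with the just-established $\sum_n a_n\phi_n(z_i)\to\psi(z_i)$; this produces the boundary value $d_i^{-1}[c_i\,\psi(z_i)-\Psi(z_i)]$ recorded in the theorem, while the interior derivative limit $\psi'(z)$ again comes from the interior result.

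The main obstacle is case (i). I must ensure that it is the continuous solution series $\sum_n a_n\phi_n$, and not merely the discontinuous eigenfunction series $\sum_n a_n\Phi_n$, that converges uniformly up to the boundary, since these two series disagree precisely at the points $z_i$ where uniformity is in question. Pinning this down relies on Fulton's eigenfunction and coefficient asymptotics and on transporting his originally right-definite convergence theorems to the present Pontryagin-space (left-definite, $\kappa\geq1$) setting; the required reduction is the one indicated in appendix \ref{S-pointwise}, which removes the finitely many negative-norm modes so that the right-definite theory applies to the remainder. A secondary check is the boundedness of the boundary-evaluation functional on $L^2_\mu$ needed for $(\star)$, which holds because the majorant Hilbert-space norm of $L^2_\mu$ contains the term $\abs{D_i}^{-1}\abs{\Phi(z_i)}^2$ with $D_i\neq0$ for $i\in S$.
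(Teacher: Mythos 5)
Your proposal is correct and rests on the same pillars as the paper's own (one-line) proof: Fulton's (1977) Corollary 2.1 together with its left-definite extension via the Pontryagin-space machinery of the appendix, with the interior behaviour inherited from Fulton's equiconvergence theorem. Your explicit derivation of the boundary values — using convergence of the $\C^s$-coordinate in $L^2_\mu$ for $(\star)$ and the relation \eqref{discontinuity} to pass between $\Phi_n(z_i)$, $\phi_n(z_i)$, and $(p\,\phi_n')(z_i)$ — is a faithful unpacking of what that corollary contains rather than a different route.
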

\begin{proof}
	This theorem is a generalization of corollary 2.1 in \cite{fulton_two-point_1977}. We provide the extension of the corollary to the left-definite problem in appendix \ref{S-A-left-fulton}.
\end{proof}

The $\Phi_n$ series \eqref{expansion} converges to $\Psi(z_i)$ at $z=z_i$ 
for $i \in S$ (i.e., at $\lambda$-dependent boundaries) but otherwise behaves as in theorem \ref{pointwise}. 

\section{Literature survey and mathematical proofs}\label{S-math-app}

\subsection{Literature survey}\label{S-lit-review}

There is an extensive literature associated with the eigenvalue problem \eqref{EigenDiff_modes}\textendash\eqref{EigenB2} with $\lambda$-dependent boundary conditions \citep[see][ and citations within]{schafke_s-hermitesche_1966,fulton_two-point_1977}. One can use the $S$-hermitian theory of \cite{schafke_s-hermitesche_1965,schafke_s-hermitesche_1966,schafke_s-hermitesche_1968} to show that one obtains real eigenvalues when the problem is either right-definite or left-definite (see section \ref{math-section}) but completeness results in $L^2_\mu$ are unavailable in this theory.

The right-definite theory is well-known \citep{evans_non-self-adjoint_1970,walter_regular_1973,fulton_two-point_1977}. In particular, \cite{fulton_two-point_1977} applies the residue calculus techniques of \cite{titchmarsh_eigenfunction_1962} to the right-definite problem and, in the process, extends some well-known properties of Fourier series to eigenfunction expansions associated with \eqref{EigenDiff_modes}\textendash\eqref{EigenB2}. A recent Hilbert space approach to the right-definite problem, in the context of obtaining a projection basis for quasigeostrophic dynamics, is given by \cite{smith_surface-aware_2012}.

The left-definite problem is less examined. As we show in this chapter, the eigenvalue problem is naturally formulated in a Pontryagin space, and, in such a setting, one can prove, in the left-definite case, that the eigenvalues are real and that the eigenfunctions form a basis for the underlying function space. We prove this result, stated in theorem \ref{real-complete}, in appendix \ref{S-real-proof}.

With these completeness results, we may apply the residue calculus techniques of \cite{titchmarsh_eigenfunction_1962} to extend the results of \cite{fulton_two-point_1977} to the left-definite problem. Indeed, \cite{fulton_two-point_1977} uses a combination of Hilbert space methods as well as residue calculus techniques to prove various convergence results for the right-definite problem. However, only theorem 1 of \cite{fulton_two-point_1977} makes use of Hilbert space methods. If we extend Fulton's theorem 1 to the left-definite problem, then all the results of \cite{fulton_two-point_1977} will apply equally to the left-definite problem. A left-definite analogue of theorem 1 of \cite{fulton_two-point_1977}, along with its proof, is given in appendix \ref{S-A-left-fulton}.

\subsection{A Pontryagin space theorem} \label{S-A-Pontryagin}

A Pontryagin space $\Pi_\kappa$, for a finite non-negative integer $\kappa$, is a Hilbert space with a $\kappa$-dimensional subspace of elements satisfying 
\begin{align}
	\inner{\phi}{\phi}<0.
\end{align}
An introduction to the theory of Pontryagin spaces can be found in \cite{iohvidov_spectral_1960} as well as in the monograph of \cite{bognar_indefinite_1974}. Another resource is the monograph of \cite{azizov_linear_1989} on linear operators in indefinite inner product spaces.

Pontryagin spaces admit a decomposition 
\begin{equation}
	\Pi_{\kappa} = \Pi^+ \oplus \Pi^-
\end{equation}
into orthogonal subspaces $(\Pi^+, +\inner{\cdot}{\cdot})$ and $(\Pi^-, -\inner{\cdot}{\cdot})$. Moreover, one can associate with a Pontryagin space $(\Pi_\kappa,\inner{\cdot}{\cdot})$ a corresponding Hilbert space $(\Pi, \inner{\cdot}{\cdot{}}_+)$ where the positive-definite inner product $\inner{\cdot}{\cdot}_+$ is defined by
\begin{equation}\label{induced_hilbert_inner}
	\inner{\phi}{\psi}_+ = \inner{\phi_+}{\psi_+} - \inner{\phi_-}{\psi_-},  \quad \phi,\psi \in \Pi,
\end{equation}
where $\phi=\phi_++\phi_-$ and $\psi=\psi_++\psi_-$, with $\phi_{\pm},\psi_{\pm}\in \Pi^{\pm}$ \citep{azizov_linear_1981}. 

As a prerequisite to proving theorem \ref{real-complete}, we require the following.

\begin{theorem}[Positive compact Pontryagin space operators]\label{positive-compact}

	Let $\Ac$ be a positive compact operator in a Pontryagin space $\Pi_\kappa$ and suppose that $\lambda=0$ is not an eigenvalue. Then all eigenvalues are real and the corresponding eigenvectors form an orthonormal basis for $\Pi_\kappa$. There are precisely $\kappa$ negative eigenvalues and the remaining eigenvalues are positive. Moreover, positive eigenvalues have positive eigenvectors and negative eigenvalues have negative eigenvectors.
\end{theorem}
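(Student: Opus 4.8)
The plan is to deduce every conclusion from the single structural fact that a positive operator is automatically self-adjoint, and then to split the space into a finite-dimensional negative part, handled by Pontryagin's invariant-subspace theorem, and an infinite-dimensional positive part, handled by the ordinary Hilbert-space spectral theorem. First I would record the elementary consequences of positivity. Since $\inner{\Ac\phi}{\phi}\geq 0$ is real for every $\phi$, polarization shows that the form $(\phi,\psi)\mapsto\inner{\Ac\phi}{\psi}$ is Hermitian, whence $\inner{\Ac\phi}{\psi}=\inner{\phi}{\Ac\psi}$ and $\Ac$ is self-adjoint in the indefinite inner product. The sesquilinear form $[\phi,\psi]:=\inner{\Ac\phi}{\psi}$ is positive semi-definite and therefore obeys Cauchy--Schwarz, $\abs{[\phi,\psi]}^2\leq[\phi,\phi]\,[\psi,\psi]$, even though the ambient product is indefinite. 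Now let $\Ac\phi=\lambda\phi$ with $\phi\neq0$: if $[\phi,\phi]=\inner{\Ac\phi}{\phi}=0$, Cauchy--Schwarz forces $\inner{\Ac\phi}{\psi}=0$ for all $\psi$, and non-degeneracy of the Pontryagin inner product gives $\Ac\phi=0$, impossible since $\lambda=0$ is excluded. Hence $[\phi,\phi]=\lambda\inner{\phi}{\phi}>0$, so $\lambda$ is real and nonzero, the eigenvector is non-neutral, and $\sgn\lambda=\sgn\inner{\phi}{\phi}$. This already yields reality and the stated sign correspondence between each eigenvalue and the norm of its eigenvector. The same computation rules out Jordan chains: if $(\Ac-\lambda)\phi_2=\phi_1$ and $(\Ac-\lambda)\phi_1=0$, then $\inner{\phi_1}{\phi_1}=\inner{(\Ac-\lambda)\phi_2}{\phi_1}=\inner{\phi_2}{(\Ac-\lambda)\phi_1}=0$, contradicting non-neutrality, so every root vector is a genuine eigenvector and distinct eigenspaces are mutually orthogonal and definite.

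Next I would count the negative eigenvalues. The eigenvectors with $\lambda<0$ have negative norm and span a negative-definite subspace, which in $\Pi_\kappa$ has dimension at most $\kappa$; thus there are at most $\kappa$ negative eigenvalues. For the reverse inequality I would invoke Pontryagin's theorem \citep{bognar_indefinite_1974,iohvidov_spectral_1960}: a self-adjoint operator on $\Pi_\kappa$ admits a $\kappa$-dimensional $\Ac$-invariant maximal non-positive subspace $\mathcal{M}_-$, and for a compact operator this subspace is spanned by root vectors. By the previous paragraph these root vectors are eigenvectors, all non-neutral, and a non-positive subspace spanned by non-neutral eigenvectors can contain only negative ones; hence $\mathcal{M}_-$ is negative-definite and is spanned by exactly $\kappa$ negative eigenvectors. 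This pins the number of negative eigenvalues at precisely $\kappa$.

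Finally I would obtain completeness by reduction to the definite case. Since $\mathcal{M}_-$ is $\kappa$-dimensional and negative-definite, it is non-degenerate, hence orthocomplemented, and $\Pi_\kappa=\mathcal{M}_-\,[\oplus]\,\mathcal{M}_-^{[\perp]}$, where $\mathcal{M}_-^{[\perp]}$ is positive-definite (a genuine Hilbert space) and $\Ac$-invariant by self-adjointness. The restriction $\Ac|_{\mathcal{M}_-^{[\perp]}}$ is a compact self-adjoint operator that is strictly positive (because $[\phi,\phi]>0$ for $\phi\neq0$) and injective, so the classical Hilbert-space spectral theorem furnishes a complete orthonormal system of its eigenvectors, all with strictly positive eigenvalues. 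Adjoining an orthonormal basis of $\mathcal{M}_-$ produces a complete system of eigenvectors of $\Ac$ in $\Pi_\kappa$, which after normalization to $\inner{\Phi_m}{\Phi_n}=\pm\delta_{mn}$ is the desired orthonormal basis; the $\kappa$ vectors from $\mathcal{M}_-$ carry the negative eigenvalues and all remaining vectors carry positive eigenvalues.

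The hard part is the existence of the full-dimensional invariant negative subspace $\mathcal{M}_-$, i.e.\ the assertion that there are \emph{at least} $\kappa$ negative eigenvalues. This cannot be reached by the elementary positivity estimates alone, because a positive compact operator on a Pontryagin space of positive index need not exhibit any eigenvalue to a naive variational argument; this is exactly the content of Pontryagin's theorem, which I would cite rather than reprove. Once $\mathcal{M}_-$ is secured, the splitting $\Pi_\kappa=\mathcal{M}_-\,[\oplus]\,\mathcal{M}_-^{[\perp]}$ trivializes the rest by returning the analysis to the standard compact self-adjoint theory on a Hilbert space.
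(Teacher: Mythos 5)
Your proof is correct, but it reaches the conclusion by a genuinely different route from the paper's. The paper outsources essentially every step to the literature: reality of the eigenvalues is Bognar's theorem VII.1.3, definiteness of the eigenspaces (hence semi-simplicity) is his theorem VII.1.2 together with lemma II.3.8, completeness comes from Azizov's density theorem for the span of the root subspaces of a compact operator with $\lambda=0$ not an eigenvalue, and the count of $\kappa$ negative eigenvalues is extracted by applying Bognar's theorem IX.1.4 --- every dense subset of $\Pi_\kappa$ contains a $\kappa$-dimensional negative-definite subspace --- to the dense span of eigenvectors. You instead prove the pointwise facts by hand: Cauchy--Schwarz for the positive semi-definite form $[\phi,\psi]=\inner{\Ac\,\phi}{\psi}$ gives reality, non-neutrality of eigenvectors, the sign correspondence $\lambda\inner{\phi}{\phi}>0$, and the absence of Jordan chains in a few lines; and you obtain both the eigenvalue count and completeness from a single structural input, Pontryagin's invariant-subspace theorem, by splitting off the $\kappa$-dimensional invariant negative-definite subspace $\mathcal{M}_-$ and applying the classical compact self-adjoint spectral theorem on its positive-definite Hilbert-space orthocomplement. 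The trade is clear: your argument is more self-contained in its elementary parts and makes the logical dependence transparent (everything except the existence of $\mathcal{M}_-$ is soft), but it leans on Pontryagin's theorem, which is comparable in depth to the Azizov density result the paper cites; the paper's version is shorter but gives no sense of which steps are easy and which are hard. One cosmetic point: your claim that $\mathcal{M}_-$ is spanned by root vectors \emph{because} $\Ac$ is compact is a red herring --- this already follows from $\mathcal{M}_-$ being finite-dimensional and $\Ac$-invariant.
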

\begin{proof}
	By theorem VII.1.3 in \citet{bognar_indefinite_1974} the eigenvalues are all real. Moreover, since $\lambda=0$ is not an eigenvalue, then all eigenspaces are definite \citep[][theorem VII.1.2]{bognar_indefinite_1974} and hence all eigenvalues are semi-simple \citep[][lemma II.3.8]{bognar_indefinite_1974}.

	Since $\Ac$ is a compact operator and $\lambda=0$ is not an eigenvalue, then the span of the generalized eigenspaces is dense in $\Pi_\kappa$ \cite[][lemma 4.2.14]{azizov_linear_1989}. Since all eigenvalues are semi-simple, then all generalized eigenvectors are eigenvectors and so the span of the eigenvectors is dense in $\Pi_\kappa$. Orthogonality of eigenvectors can be shown as in a Hilbert space.
	
	Let $\lambda$ be an eigenvalue and $\phi$ the corresponding eigenvector. By the positivity of $\Ac$, we have
	\begin{equation}
		\inner{\Ac \, \phi}{\phi} = \lambda \inner{\phi}{\phi} \geq 0.
	\end{equation}
	Since all eigenspaces are definite, it follows that positive eigenvectors must correspond to positive eigenvalues and negative eigenvectors must correspond to negative eigenvalues.
	
	Finally, by theorem IX.1.4 in \cite{bognar_indefinite_1974}, any dense subset of $\Pi_\kappa$ must contain a negative-definite $\kappa$ dimensional subspace. Consequently, there are $\kappa$ negative eigenvectors and hence $\kappa$ negative eigenvalues.
\end{proof}

\subsection{Proof of theorem \ref{real-complete}} \label{S-real-proof}

\begin{proof}
The proof for the left-definite case is essentially the standard proof \citep[e.g.,][section 5.10]{debnath_introduction_2005} with theorem \ref{positive-compact} substituting for the Hilbert-Schmidt theorem. We give a general outline nonetheless.

 First, it is well-known that $\Lc$ is self-adjoint in $L^2_\mu$ \citep[e.g.,][]{russakovskii_operator_1975,russakovskii_matrix_1997}. Since $\lambda =0$ is not an eigenvalue, then the inverse operator $\Lc^{-1}$ exists and is an integral operator on $L^2_\mu$. For an explicit construction, see section 4 in \citet{walter_regular_1973}, \cite{fulton_two-point_1977}, and \cite{hinton_expansion_1979}. The eigenvalue problem for $\Lc$, equation \eqref{Op-Eigenproblem}, is then equivalent to
\begin{equation}
	\Lc^{-1} \, \phi = \lambda^{-1} \, \phi
\end{equation}
and both problems have the same eigenfunctions.

The operator $\Lc^{-1}$ is a positive compact operator and so satisfies the requirements of theorem \ref{positive-compact}. Application of theorem \ref{positive-compact} to $\Lc^{-1}$ then assures that all eigenvalues $\lambda_n$ are real, the eigenfunctions form an orthonormal basis for $L^2_\mu$, and the sequence of eigenvalues $\{\lambda_n\}_{n=0}^\infty$ is countable and bounded from below.

The claim that the eigenvalues are simple is verified in \citet{binding_left_1999} for the left-definite problem. Alternatively, an argument similar to that of \cite{fulton_two-point_1977} and \citep[][page 12]{titchmarsh_eigenfunction_1962} can be made to prove the simplicity of the eigenvalues.
\end{proof}

\subsection{Extending Fulton (1977) to the left-definite problem}\label{S-A-left-fulton}

The following is a left-definite analogue of theorem 1 in \cite{fulton_two-point_1977}. The proof is almost identical to the right-definite case \citep{fulton_two-point_1977, hinton_expansion_1979} with minor modifications. Essentially, since $\inner{\Psi}{\Psi}$ can be negative, we must replace these terms in the inequalities below with the induced Hilbert space inner product $\inner{\Psi}{\Psi}_+$ given by equation \eqref{induced_hilbert_inner}. Our $L^2_\mu$ Green's functions $G$ corresponds to $\tilde G$ in \citet{hinton_expansion_1979}.

\begin{theorem}[A left-definite extension of Fulton's theorem 1]
Let $\Psi \in L^2_\mu$ be defined on the interval $[z_1,z_2]$ by
	\begin{equation}
		\Psi(z) = 
		\begin{cases}
			\Psi(z_i) \quad &\textrm{at } z=z_i, \textrm{ for } i \in S,\\
			\psi(z) \quad &\textrm{otherwise},
		\end{cases}
	\end{equation}
where $\psi \in L^2$ and $\Psi(z_i)$ are constants for $i \in S$. The eigenfunctions $\Phi_n$ are defined similarly (see section \ref{math-section}). 
\begin{itemize}
	\item[(i)] Parseval formula: For $\Psi \in L^2_\mu$, we have
	\begin{equation}\label{parseval-formula}
		\inner{\Psi}{\Psi} = \sum_{n=0}^\infty \frac{\abs{\inner{\Psi}{\Phi_n}}^2}{\inner{\Phi_n}{\Phi_n}}.
	\end{equation}
	\item[(ii)] For $\Psi \in D(\Lc)$, we have
	\begin{equation} \label{expansion-fulton-theorem1}
		\Psi = \sum_{n=0}^\infty \frac{\inner{\Psi}{\Phi_n}}{\inner{\Phi_n}{\Phi_n}}  \Phi_n.
	\end{equation}
	with equality in the sense of $L^2_\mu$. Moreover, we have 
	\begin{equation}\label{expansion-int-series}
		\psi = \sum_{n=0}^\infty \frac{\inner{\Psi}{\Phi_n}}{\inner{\Phi_n}{\Phi_n}}  \phi_n,
	\end{equation}
	which converges uniformly and absolutely for $z \in[z_1,z_2]$ and may be differentiated term-by-term, with the differentiated series converging uniformly and absolutely to $\psi'$ for $z \in [z_1,z_2]$. The boundaries series
	\begin{equation}\label{boundary_series_appendix}
		\Psi(z_i)  = \sum_{n=0}^\infty \frac{\inner{\Psi}{\Phi_n}}{\inner{\Phi_n}{\Phi_n}}  \Phi_n(z_i),
	\end{equation}
	for $i \in S$, is absolutely convergent.
\end{itemize}
\end{theorem}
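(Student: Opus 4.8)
The plan is to follow the right-definite argument of \cite{fulton_two-point_1977} and \cite{hinton_expansion_1979}, modifying only those steps where the indefiniteness of $\inner{\cdot}{\cdot}$ intervenes. The two abstract claims come directly from theorem \ref{real-complete}: since $\{\Phi_n\}_{n=0}^\infty$ is a complete orthonormal basis of the Pontryagin space $L^2_\mu$, the expansion \eqref{expansion} holds for every $\Psi \in L^2_\mu$, with convergence in the Hilbert topology induced by $\inner{\cdot}{\cdot}_+$ [equation \eqref{induced_hilbert_inner}]. Applying the jointly continuous indefinite inner product $\inner{\Psi}{\cdot}$ to \eqref{expansion} and using $\inner{\Phi_m}{\Phi_n}=\pm\delta_{mn}$ yields the Parseval formula \eqref{parseval-formula}, while \eqref{expansion-fulton-theorem1} is just \eqref{expansion} restricted to $\Psi\in D(\Lc)$. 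Throughout I assume, as in the hypotheses of theorem \ref{real-complete}, that $\lambda=0$ is not an eigenvalue, so that $\Lc^{-1}$ exists.

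The reduction to the positive-definite setting is the organizing idea. By proposition \ref{eigenvalue-sign}, exactly $\kappa\le 2$ of the norms $\inner{\Phi_n}{\Phi_n}$ are negative. I would split the basis into the $\kappa$-dimensional negative part $\Pi^-$ and the positive part $\Pi^+$: the $\Pi^-$ contribution to every series below is a finite sum of smooth solutions $\phi_n$ and converges trivially, whereas on $\Pi^+$ the eigenfunctions with $\inner{\Phi_n}{\Phi_n}=+1$ form a genuine orthonormal basis of the Hilbert space $(\Pi^+,\inner{\cdot}{\cdot})$. On this positive part the inequalities of \cite{fulton_two-point_1977} apply verbatim, once each occurrence of $\inner{\Psi}{\Psi}$ in their estimates is replaced by $\inner{\Psi}{\Psi}_+$.

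The substantive content is the uniform convergence of the interior series \eqref{expansion-int-series}. The engine is self-adjointness of $\Lc$: for $\Psi\in D(\Lc)$ one has $\inner{\Lc\Psi}{\Phi_n}=\lambda_n\inner{\Psi}{\Phi_n}$, so $\{\lambda_n\inner{\Psi}{\Phi_n}\}$ are the Fourier coefficients of $\Lc\Psi\in L^2_\mu$ and are square-summable by \eqref{parseval-formula} applied to $\Lc\Psi$. Writing $c_n=\inner{\Psi}{\Phi_n}/\inner{\Phi_n}{\Phi_n}$, Cauchy--Schwarz gives
\[
	\sum_{n=0}^\infty \abs{c_n\,\phi_n(z)} \leq \left(\sum_{n=0}^\infty \lambda_n^2\,\abs{c_n}^2\,\abs{\inner{\Phi_n}{\Phi_n}}\right)^{1/2}\left(\sum_{n=0}^\infty \frac{\abs{\phi_n(z)}^2}{\lambda_n^2\,\abs{\inner{\Phi_n}{\Phi_n}}}\right)^{1/2}.
\]
The first factor equals $\inner{\Lc\Psi}{\Lc\Psi}_+$ and is finite; the second is a diagonal value of the iterated Green's function (the kernel of $\Lc^{-2}$), which, via the explicit $L^2_\mu$ construction of \citep{walter_regular_1973,fulton_two-point_1977,hinton_expansion_1979} and a Mercer-type argument, is continuous and hence bounded in $z$ on $[z_1,z_2]$. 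This yields absolute and uniform convergence; the uniform limit is continuous and coincides with $\psi$ on $(z_1,z_2)$ by the $L^2$-expansion \eqref{L2-expansion}, hence with $\psi$ on all of $[z_1,z_2]$. The differentiated series is treated identically, replacing $\phi_n$ by $\phi_n'$ and bounding $\sum_n\abs{\phi_n'(z)}^2/(\lambda_n^2\abs{\inner{\Phi_n}{\Phi_n}})$, and absolute convergence of the boundary series \eqref{boundary_series_appendix} follows from the same bound evaluated at $z=z_i$ together with boundedness of $\Phi_n(z_i)$ via \eqref{discontinuity}.

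The hard part will be the Mercer-type estimate establishing uniform boundedness of the kernel-diagonal sums $\sum_n\abs{\phi_n(z)}^2/(\lambda_n^2\abs{\inner{\Phi_n}{\Phi_n}})$ and their derivative analogue. This requires the explicit $L^2_\mu$ Green's function and control of the solutions $\phi_n$ near the $\lambda$-dependent boundaries, where the eigenfunctions $\Phi_n$ may carry jump discontinuities. The left-definite wrinkle --- that $\lambda_0$ and finitely many norms $\inner{\Phi_n}{\Phi_n}$ may be negative --- is defused by consistently using $\abs{\lambda_n}$, $\abs{\inner{\Phi_n}{\Phi_n}}$, and $\inner{\cdot}{\cdot}_+$ in place of their signed counterparts; since at most $\kappa\le 2$ terms carry the wrong sign, peeling them off reduces everything to Fulton's positive-definite estimates.
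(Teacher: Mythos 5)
Your proposal is correct and follows essentially the same route as the paper's proof: Parseval and the $L^2_\mu$ expansion come from completeness of $\{\Phi_n\}_{n=0}^\infty$ in the Pontryagin space, and uniform/absolute convergence of the interior and differentiated series comes from a Cauchy--Schwarz pairing of the square-summability of $\lambda_n\inner{\Psi}{\Phi_n}$ (i.e.\ $\inner{\Lc\Psi}{\Lc\Psi}_+<\infty$) against a uniformly bounded Green's-function diagonal, with the induced Hilbert inner product $\inner{\cdot}{\cdot}_+$ absorbing the finitely many negative norms. The only real difference is cosmetic: the paper sources the $z$-independent kernel bound from the resolvent identity $\phi_n(z)=(\lambda-\lambda_n)\inner{G(z,\cdot,\lambda)}{\Phi_n}$ at a non-eigenvalue $\lambda$ together with Hinton's explicit bound $B_1(\lambda)$, rather than from a Mercer-type estimate on the diagonal of the kernel of $\Lc^{-2}$ as you propose.
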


\begin{proof}
	The Parseval formula \eqref{parseval-formula} is a consequence of the completeness of the eigenfunctions $\{\Phi_n\}_{n=0}^\infty$ in $L^2_\mu$, given by theorem \ref{real-complete}, and theorem IV.3.4 in \cite{bognar_indefinite_1974}. Similarly, the expansion \eqref{expansion-fulton-theorem1} is also due to completeness of the eigenfunctions. 
	
	We first prove that the series  \eqref{expansion-int-series} converges uniformly and absolutely for $z \in[z_1,z_2]$. We begin with the identity
	\begin{equation}\label{phi-green}
		\phi_n(z) = (\lambda - \lambda_n) \inner{G(z,\cdot,\lambda)}{\Phi_n}
	\end{equation} 
	where $\lambda \in \C$ is not an eigenvalue of $\Lc$, and $G$ is the $L^2_\mu$ Green's function [see equation (8) in \cite{hinton_expansion_1979}]. Then 
	\begin{equation}\label{green_convergence}
		\sum_{n=0}^\infty \lambda_n \frac{\abs{\phi_n}^2}{\abs{\lambda -\lambda_n}^2} = \sum_{n=0}^\infty \lambda_n \abs{\inner{G(z,\cdot,\lambda)}{\Phi_n}}^2 \leq \inner{G(z,\cdot,\lambda)}{\Lc G(z,\cdot,\lambda)}_+ \leq B_1(\lambda)
	\end{equation}
	where $\inner{\cdot}{\cdot}_+$ is the induced Hilbert space inner product given by equation \eqref{induced_hilbert_inner} and $B_1(\lambda)$ is a $z$ independent upper bound \cite[equation 9 in][]{hinton_expansion_1979}. In addition, since $\Psi \in D(\Lc)$, then $\inner{\Lc \Psi}{\Lc \Psi}_+ < \infty$. Thus, we obtain 
	\begin{equation}\label{Lc_psi_2_convergence}
		\sum_n \lambda_n^2 \abs{\inner{\Psi}{\Phi_n}}^2 = \inner{\Lc \Psi}{\Lc \Psi}_+ < \infty.
	\end{equation}
	
	The uniform and absolute convergence of \eqref{expansion-int-series} follows from
	\begin{align}
		\sum_{n=0}^\infty \abs{\frac{\inner{\Psi}{\Phi_n}}{\inner{\Phi_n}{\Phi_n}}  \phi_n} &= \sum_{n=0}^\infty \abs{ \left( \frac{\phi_n}{\lambda - \lambda_n}\right) \left(\lambda - \lambda_n \right) \frac{\inner{\Psi}{\Phi_n} }{\inner{\Phi_n}{\Phi_n}}} \\
			&\leq \sqrt{\left(\sum_{n=0}^\infty \abs{\frac{\phi_n}{\lambda-\lambda_n}}^2\right) \left( \sum_{n=0}^\infty \abs{\lambda-\lambda_n}^2 \abs{\inner{\Psi}{\Phi_n}}^2 \right)}
	\end{align}
	along with equations \eqref{green_convergence} and \eqref{Lc_psi_2_convergence}. The absolute convergence of the boundary series \eqref{boundary_series_appendix} follows as well.
	
	To show that the series \eqref{expansion-int-series} is term-by-term differentiable, it is sufficient to show that the differentiated series converges uniformly for $z \in [z_1,z_2]$ \citep[][section 6.14, theorem 33]{kaplan_advanced_1993}. The proof of the unform convergence of the differentiated series follows from the identity \citep{hinton_expansion_1979}
	\begin{equation}
		\frac{\phi_n'}{\lambda - \lambda_n} = \frac{\mathrm{d}}{\mathrm{d}z} \inner{G(z,\cdot,\lambda)}{\Phi_n} = \inner{\partial_z G(z,\cdot,\lambda)}{\Phi_n}.
	\end{equation}
	and a similar argument.
	
\end{proof}

\end{subappendices}

\chapter{On the Discrete Normal Modes of Quasigeostrophic Theory}\label{Ch-QG}

\begin{abstractchapter}
	The discrete baroclinic modes of quasigeostrophic theory are incomplete and the incompleteness manifests as a loss of information in the projection process. The incompleteness of the baroclinic modes is related to the presence of two previously unnoticed stationary step-wave solutions of the Rossby wave problem with flat boundaries. These step-waves are the limit of surface quasigeostrophic waves as boundary buoyancy gradients vanish. A complete normal mode basis for quasigeostrophic theory is obtained by considering the traditional Rossby wave problem with prescribed buoyancy gradients at the lower and upper boundaries. The presence of these boundary buoyancy gradients activates the previously inert boundary degrees of freedom. These Rossby waves have several novel properties such as the presence of multiple modes with no internal zeros, a finite number of modes with negative norms, and their vertical structures form a basis capable of representing any quasigeostrophic state with a differentiable series expansion. Using this complete basis, we are able to obtain a series expansion to the potential vorticity of Bretherton (with Dirac delta contributions). We also examine the quasigeostrophic vertical velocity modes and derive a complete basis for such modes as well. A natural application of these modes is the development of a weakly non-linear wave-interaction theory of geostrophic turbulence that takes topography into account.
\end{abstractchapter}

\section{Introduction}

\subsection{Background}

The vertical decomposition of quasigeostrophic motion into normal modes plays an important role in bounded stratified geophysical fluids \citep[e.g.,][]{charney_geostrophic_1971,flierl_models_1978,fu_nonlinear_1980,wunsch_vertical_1997, chelton_geographical_1998, smith_scales_2001, tulloch_quasigeostrophic_2009, lapeyre_what_2009, Ferrari2010a,ferrari_distribution_2010, de_la_lama_vertical_2016, lacasce_prevalence_2017,brink_structure_2019}. Most prevalent are the traditional baroclinic modes \cite[e.g., section 6.5.2 in][]{Vallis2017} that are the vertical structures of Rossby waves in a quiescent ocean with no topography or boundary buoyancy gradients. In a landmark contribution, \cite{wunsch_vertical_1997} partitions the ocean's kinetic energy into the baroclinic modes and finds that the zeroth and first baroclinic modes dominate over most of the extratropical ocean. Additionally, \cite{wunsch_vertical_1997} concludes that the surface signal primarily reflects the first baroclinic mode and, therefore, the motion of the thermocline.

However, the use of  baroclinic modes has come under increasing scrutiny in recent years \citep{lapeyre_what_2009,roullet_properties_2012,scott_assessment_2012,smith_surface-aware_2012}. \cite{lapeyre_what_2009} observes that the vertical shear of the baroclinic modes vanishes at the boundaries, thus leading to the concomitant vanishing of the boundary buoyancy. Consequently, \cite{lapeyre_what_2009} proposes that the baroclinic modes cannot be complete\footnote{A collection of functions is said to be complete in some function space, $\mathcal{F}$, if this collection forms a basis of $\mathcal{F}$. Specifying the underlying function space, $\mathcal{F}$, turns out to be crucial, as we see in section \ref{S-vertical-phase}.}  due to their inability to represent boundary buoyancy. To supplement the baroclinic modes, \cite{lapeyre_what_2009} includes a boundary-trapped exponential surface quasigeostrophic solution \cite[see][]{held_surface_1995} and suggests that the surface signal primarily reflects, not thermocline motion, but boundary-trapped surface quasigeostrophic dynamics \cite[see also][]{lapeyre_surface_2017}.

Appending additional functions to the collections of normal modes as in \cite{lapeyre_what_2009} or \cite{scott_assessment_2012} does not result in a set of normal modes since the appended functions are not  orthogonal to the original modes. It is only with \cite{smith_surface-aware_2012} that a set of normal modes capable of representing arbitrary surface buoyancy is derived.

Yet it is not clear how the normal modes of \cite{smith_surface-aware_2012} differ from the baroclinic modes or what these modes correspond to in linear theory. Indeed, \cite{rocha_galerkin_2015}, noting that the baroclinic series expansion of any sufficiently smooth function converges uniformly to the function itself,  argues that the incompleteness of the baroclinic modes has been ``overstated''. Moreover, \cite{de_la_lama_vertical_2016} and \cite{lacasce_prevalence_2017}, motivated by the observation that the leading empirical orthogonal function of \cite{wunsch_vertical_1997} vanishes near the ocean bottom, propose an alternate set of modes\textemdash the surface modes\textemdash that have a vanishing pressure at the bottom boundary.

We thus have a variety of proposed normal modes and it is not clear how their properties differ. Are the baroclinic modes actually incomplete? What about the surface modes? What does completeness mean in this context? 
The purpose of this paper is to  answer these questions. 

\subsection{Normal modes and eigenfunctions}

A normal mode is a linear motion in which all components of a system move coherently at a single frequency. Mathematically, a normal mode has the form
\begin{equation}\label{physical-mode}
	 \Phi_a( x,y,z) \, \mathrm{e}^{-\mathrm{i} \omega_a t},
\end{equation} 
where $\Phi_a$ describes the spatial structure of the mode and $\omega_a$ is its angular frequency. The function $\Phi_a$ is obtained by solving a differential eigenvalue problem and hence is an eigenfunction. The collection of all eigenfunctions forms a basis of some function space relevant to the problem. 

By an abuse of terminology, the spatial structure, $\Phi_a$, is often called a normal mode (e.g., the term ``Fourier mode'' is often used for $\mathrm{e}^{\mathrm{i} k \, x}$ where $k$ is a wavenumber). In linear theory, this misnomer is often benign as each $\Phi_a$ corresponds to a frequency $\omega_a$. For example, given some initial condition $\Psi(x,y,z)$, we decompose $\Psi$ as a sum of modes at $t=0$,
\begin{equation}\label{sum-of-eigen}
	\Psi(x,y,z) = \sum_{a}  c_a \, \Phi_a(x,y,z),
\end{equation}
where the $c_a$ are the Fourier coefficients, and the time evolution is then given by
\begin{equation}\label{sum-of-eigen-time}
	\sum_{a} c_a \, \Phi_a(x,y,z) \, \mathrm{e}^{-\mathrm{i} \omega_a t}.
\end{equation}

However, with non-linear dynamics, this abuse of terminology can be confusing. Given some spatial structure, $\Psi(x,y,z)$, in a fluid whose flow is non-linear, we can still exploit the basis properties of the eigenfunctions $\Phi_a$ to decompose $\Psi$ as in equation \eqref{sum-of-eigen}. Whereas in a linear fluid only wave motion of the form \eqref{physical-mode} is possible, a non-linear flow admits a larger collection of solutions (e.g., non-linear waves and coherent vortices) and so the linear wave solution \eqref{sum-of-eigen-time} no longer follows from the decomposition \eqref{sum-of-eigen}. 

For this reason, we call the linear solution \eqref{physical-mode} a \emph{physical} normal mode to distinguish it from the spatial structure $\Phi_a$, which is only an eigenfunction. Otherwise, we will use the terms ``normal mode'' and ``eigenfunction'' interchangeably to refer to the spatial structure $\Phi_a$, as is prevalent in the literature.

Our strategy here is then the following. We find the \emph{physical} normal modes [of the form \eqref{physical-mode}] to various Rossby wave problems  and examine the basis properties of their constituent eigenfunctions $\Phi_a$. Our goal is to find a collection of eigenfunctions (i.e., ``normal modes'' in the prevalent terminology) capable of representing every possible quasigeostrophic state.

\subsection{Contents of this chapter}

This chapter constitutes an examination of all collections of discrete (i.e., non-continuum\footnote{Continuum modes appear once a sheared mean-flow is present, e.g., \cite{drazin_rossby_1982}, \cite{balmforth_normal_1994,balmforth_singular_1995}, and \cite{brink_structure_2019}. }) quasigeostrophic normal modes. We include the baroclinic modes, the surface modes of \cite{de_la_lama_vertical_2016} and \cite{lacasce_prevalence_2017}, the surface-aware mode of \cite{smith_surface-aware_2012}, as well as various generalizations. To study the completeness of a set of normal modes, we must first define the underlying space in question. From general considerations, we introduce in section \ref{S-phase} the quasigeostrophic phase space, defined as the space of all possible quasigeostrophic states. Subsequently, in section \ref{S-linear} we use the general theory of differential eigenvalue problems with eigenvalue dependent boundary conditions, as developed in chapter \ref{Ch-modes}, to study Rossby waves in an ocean with prescribed boundary buoyancy gradients (e.g., topography, see section \ref{SS-pv}). Intriguingly, in an ocean with no topography, we find that, in addition to the usual baroclinic modes, there are two additional stationary step-mode solutions that have not been noted before. The stationary step-modes are the limits of boundary-trapped surface quasigeostrophic waves as the boundary buoyancy gradient vanishes. 

Our study of Rossby waves then leads us  examine  all possible discrete collections of normal modes in section \ref{S-expansions}.
As shown in this section,  the baroclinic modes are incomplete, as argued by \cite{lapeyre_what_2009}, and we point out that the incompleteness leads to a \emph{loss} of information after projecting a function onto the baroclinic modes.
In contrast, modes such as those suggested by \cite{smith_surface-aware_2012} are complete in the quasigeostrophic phase space so that projecting a function onto such modes provides an \emph{equivalent} representation of the function. 

We offer discussion of our analysis in Section \ref{S-discussion} and conclusions in Section \ref{S-conclusion}. Appendix A summarizes the key mathematical results pertaining to eigenvalue 
problems where the eigenvalue appears in the boundary conditions.  Appendix B then summarizes the polarization relations as well as the vertical velocity eigenvalue problem.

\section{Mathematics of the quasigeostrophic phase space}\label{S-phase}

\subsection{The potential vorticity}\label{SS-pv}

Consider a three-dimensional region $\Dc$ of the form 
\begin{equation}\label{region}
	\Dc = \Dc_0 \times \left[z_1,z_2\right].
\end{equation}
The area of the lower and upper boundaries is denoted by $\Dc_0$ and is a rectangle of area $A$ while $z_1$ (lower boundary) and $z_2$ (upper boundary) are constants. The horizontal boundaries are either rigid or periodic. 

The state of a quasigeostrophic fluid in $\Dc$ is determined by a charge-like quantity known as the quasigeostrophic potential vorticity \citep{hoskins_use_1985,schneider_boundary_2003}. If the potential vorticity is distributed throughout the three-dimensional region $\Dc$, we are concerned with the volume potential vorticity density, $Q$, with $Q$  related to the geostrophic streamfunction $\psi$ by [e.g., section 5.4 of \cite{Vallis2017}]
\begin{equation}\label{q-psi}
	Q = f + \lap \psi + \pd{}{z} \left(\frac{f_0^2}{N^2} \pd{\psi}{z} \right).
\end{equation}
Here, the latitude dependent Coriolis parameter is 
\begin{equation}
	f = f_0 + \beta\,y,
\end{equation}
$N(z)$ is the prescribed background buoyancy frequency,  $\nabla^{2}$ is the horizontal Laplacian operator, and 
\begin{equation}
\vec u = \unit z \times \grad \psi
\end{equation}
is the horizontal geostrophic velocity, $\vec u = (u,v)$. 

Additionally, the potential vorticity may be distributed over a two-dimensional region, say the lower and upper boundaries $\Dc_0$, to obtain surface potential vorticity densities $R_1$ and $R_2$.  The surface potential vorticity densities are related to the streamfunction by
\begin{equation}\label{r-psi}
	R_j = (-1)^{j+1} \left[ g_j + \left( \frac{f_0^2}{N^2} \, \pd{\psi}{z}\right)\Bigg |_{z=z_j}  \right]
\end{equation}
where $g_j$ is an imposed surface potential vorticity density at the lower or upper boundary and $j=1,2$. The density $g_j$ corresponds to a prescribed buoyancy 
\begin{equation}
 b_j =	\frac{N^2}{f_0} g_j
\end{equation}
at the $j$th boundary [see equation \eqref{buoyancy-boundary-evolution}]. Alternatively, $g_j$ may be thought of as an infinitesimal topography through
\begin{equation}\label{g_j_topography}
	g_j = f_0 h_j
\end{equation}
where $h_j$ represents infinitesimal topography at the $j$th boundary. Whereas $Q$ has dimensions of inverse time, $R_{j}$ has dimensions of length per time.


\subsection{Defining the quasigeostrophic phase space}

We define the quasigeostrophic phase space to be the space of all possible quasigeostrophic states, with a quasigeostrophic state  determined by the potential vorticity densities, $Q, R_1$, and $R_2$. Note that the volume potential vorticity density, $Q$, is defined throughout the whole fluid region $\Dc$, so that $Q = Q(x,y,z,t)$. In contrast, the surface potential vorticity densities, $R_1$ and $R_2$, are only defined on the two-dimensional lower and upper boundary surfaces, $\Dc_0$, so that $R_j = R_j(x,y,t)$.

It is useful to restate the previous paragraph with some added mathematical precision. For that purpose, let $L^2[\Dc]$ be the space of square-integrable functions\footnote{The definition of $L^2[\Dc]$ is   more subtle than presented here. Namely, elements of $L^2[\Dc]$ are not functions, but rather equivalence classes of functions leading to the unintuitive properties seen in this section. See chapter \ref{Ch-modes} and citations within for more details.} in the fluid volume $\Dc$, and let $L^2[\Dc_0]$ be the space of square-integrable functions on the boundary area $\Dc_0$. Elements of $L^2[\Dc]$ are functions of three spatial coordinates whereas elements of $L^2[\Dc_0]$ are functions of two spatial coordinates. Hence, $Q \in L^2[\Dc]$ and $R_1,R_2 \in L^2[\Dc_0]$.

Define the space $\Pb$ by
\begin{equation}\label{phase-space}
	\Pb = L^2[\Dc]\oplus L^2[\Dc_0] \oplus L^2[\Dc_0],
\end{equation} 
where $\oplus$ is the direct sum. Equation \eqref{phase-space} states that any element of $\Pb$ is a tuple  $(Q,R_1,R_2)$ of three functions, where $Q=Q(x,y,z,t)$ is a function on the volume $\Dc$ and hence element of $L^2[\Dc]$, while the functions $R_j=R_j(x,y,t)$, for $j=1,2$, are functions on the area $\Dc_0$ and hence are elements of $L^2[\Dc_0]$. We conclude that $(Q,R_1,R_2) \in \Pb$ and that $\Pb$ is the space of all possible quasigeostrophic states. We thus call $\Pb$ the quasigeostrophic phase space.

\subsection{The phase space in terms of the streamfunction}


Given an element $\Qf \in \Pb$, we can reconstruct a continuous function $\psi$ that contains the same dynamical information as $\Qf$. By inverting the problem
\begin{align}\label{int-pv-psi}
\begin{split}
	Q-f =  \lap \psi_{\textrm{int}}  + \pd{}{z} \left( \frac{f_0}{N^2} \pd{ \psi_{\textrm{int}}}{z} \right) \quad &\text{for } z \in (z_1,z_2) \\
	R_1 - g_1 = \frac{f_0^2}{N^2}\, \pd{ \psi_{\textrm{low}} }{z} \quad &\text{for } z=z_1\\
	R_2 + g_2  = -\frac{f_0^2}{N^2} \, \pd{ \psi_{\textrm{upp}}}{z} \quad &\text{for } z=z_2
\end{split}
\end{align}
we obtain a function $\psi(x,y,z)$ that is unique up to a gauge transformation \cite[see][]{schneider_boundary_2003}. Conversely, given a function $\psi(x,y,z)$, we can differentiate $\psi$ as in equations \eqref{int-pv-psi} to obtain $\Qf\in\Pb$. Thus, we can also consider the quasigeostrophic phase space $\Pb$ to be the space of all possible streamfunctions $\psi$.

Equations \eqref{int-pv-psi} motivate the definition of the relative potential vorticity densities, $q=Q-f$ and $r_j= R_j - (-1)^{j+1}\, g_j$, which are the portions of the potential vorticity providing a source for a streamfunction. Explicitly, the relative potential vorticity densities are 
\begin{subequations}
\begin{alignat}{2}
	q &= \lap \psi  + \pd{}{z} \left( \frac{f^2_0}{N^2} \pd{ \psi}{z}\right) 
	\quad &&\text{for } z \in (z_1,z_2)  
	\\
	r_1 &= \frac{f_0^2}{N^2} \pd{\psi}{z} 
	 \quad &&\text{for } z=z_1 
	\\
	 r_2 &= -\frac{f_0^2}{N^2} \pd{\psi }{z} 
	 \quad &&\text{for } z=z_2.
\end{alignat}
\end{subequations}

\subsection{The vertical structure phase space}\label{S-vertical-phase}

Since the fluid region, $\Dc$, is separable, we can expand the potential vorticity density distribution, $(q,r_1,r_2)$, and  the streamfunction $\psi$ in terms of the eigenfunctions, $e_{\vec k}$, of the horizontal Laplacian. For a horizontal domain $\Dc_0$, the eigenfunction $ e_{\vec k}(\vec x)$ satisfies
\begin{equation}
	- \lap e_{\vec k} = k^2\, e_{\vec k}.
\end{equation}
where $\vec x=(x,y)$ is the horizontal position vector, $\vec k=(k_x,k_y)$ is the horizontal wavevector, and $k=|\vec k|$ is the horizontal wavenumber. For example, in a horizontally periodic domain the eigenfunctions $e_{\vec k}(\vec x)$ are proportional to complex exponentials, $\mathrm{e}^{\mathrm{i}\vec k \cdot \vec x}$. 

Projecting the relative potential vorticity density distribution, $(q,r_1,r_2)$, onto the horizontal eigenfunctions, $e_{\vec k}$, yields 
\begin{subequations}
\begin{alignat}{2}
	q(\vec x,z,t) &= \sum_{\vec k} q_{\vec k}(z,t) \, e_{\vec k}(\vec x), \quad &&\text{for } z\in(z_1,z_2)\\
	r_j(\vec x,t) &= \sum_{\vec k} r_{j \vec k}(t) \, e_{\vec k}(\vec x) \quad &&\text{for } j=1,2.
\end{alignat}
\end{subequations}
Thus the Fourier coefficients of $(q,r_1,r_2)$ are $(q_{\vec k}, r_{1\vec k}, r_{2\vec k})$ where $q_{\vec k}$ is a function of $z$ and $r_{1 \vec k}$ and $r_{2\vec k}$ are independent of $z$. Hence, $q_{\vec k}$ is an element of $L^2[(z_1,z_2)]$ whereas  $r_{1 \vec k}$ and $ r_{2 \vec k}$ are elements of the space of complex numbers\footnote{Since all physical fields must be real, only a single degree of freedom is gained from $\C$. Furthermore, when complex notation is used (e.g., complex exponentials for the horizontal eigenfunctions $e_{\vec k}$) it is only the real part of the fields that is physical. } , $\C$.

We conclude that the vertical structure of the potential vorticity, given by $(q_{\vec k}, r_{1\vec k}, r_{2\vec k})$, is an element of
\begin{equation}\label{vertical-phase-space}
	\wh \Pb = L^2[(z_1,z_2)] \oplus \C \oplus \C,
\end{equation}
so that the vertical structures of the potential vorticity distribution are determined by a function, $q_{\vec k}$, in $L^2[(z_1,z_2)]$ and two $z$-independent elements, $r_{1 \vec k}$ and $r_{2  \vec k}$, of $\C$.  Similarly, the streamfunction can be represented as 
\begin{equation}\label{horizontal_fourier_amplitudes}
	\psi(\vec x,z,t) = \sum_{\vec k} \psi_{\vec k} (z,t) \, e_{\vec k} (\vec x),
\end{equation}
where $\psi_{\vec k}$ and $(q_{\vec k}, r_{1\vec k}, r_{2\vec k})$ are related by
\begin{subequations}\label{pv-hor-transform}
\begin{align}
	q_{\vec k} = -k^2 \, \psi_{\vec k} + \pd{}{z} \left(\frac{f_0^2}{N^2} \pd{\psi_{\vec k}}{z}\right) \\
	r_{j \vec k} = (-1)^{j+1} \left(\frac{f_0^2}{N^2} \pd{\psi_{\vec k}}{z} \right)\Bigg |_{z=z_j}.
\end{align}
\end{subequations}

As before, knowledge of the vertical structure of the streamfunction, $\psi_{\vec k}(z)$, is equivalent to knowing the vertical structure of the potential vorticity distribution, $(q_{\vec k}, r_{1\vec k}, r_{2\vec k})$. Thus $\wh \Pb$ is also the space of all possible streamfunction vertical structures.

That $\psi_{\vec k}$ belongs to $\wh \Pb$ and not $L^2[(z_1,z_2)]$ underlies much of the confusion over baroclinic modes. Assertions of completeness, based on Sturm-Liouville theory, assume that $\psi$ is an element of $L^2[(z_1,z_2)]$. However, as we have shown, that is an incorrect assumption. That $\psi$ belongs to $\wh \Pb$ will have consequences for the convergence and differentiability of normal mode expansions, as discussed in section \ref{S-expansions}. In the context of quasigeostrophic theory, the space $\wh \Pb$ first appeared in \cite{smith_surface-aware_2012}. More generally, $\wh \Pb$  appears in the presence of  non-trivial boundary dynamics (chapter \ref{Ch-modes}).

We call $\wh \Pb$ the vertical structure phase space, and for convenience we denote $L^2[(z_1,z_2)]$ by $L^2$ for the remainder of the chapter. The vertical structure phase space $\wh \Pb$ is then written as the direct sum 
\begin{equation}
	\wh \Pb = L^2 \oplus \C^2.
\end{equation}

\subsection{Representing the energy and potential enstrophy}
\label{subsection:decomposing-energy-enstrophy}

We find it convenient to represent several quadratic quantities in terms of the eigenfunctions of the horizontal Laplacian, $e_{\vec k}(\vec x)$. The energy per unit mass in the volume $\Dc$ is given by
\begin{align}
	E = \frac{1}{V}\int_{\Dc} \left[ \left |\grad \psi \right |^2 + \frac{f_0^2}{N^2} \left |\pd{\psi}{z} \right |^2 \right] \mathrm{d}A \, \mathrm{d}z = \sum_{\vec k} E_{\vec k},
\end{align}
where the horizontal energy mode is given by the vertical integral 
\begin{equation}
	E_{\vec k} =  \frac{1}{H}\intz \left[ k^2 \left| \psi_{\vec k} \right|^2 + \frac{f_0^2}{N^2}  \left |\pd{\psi_{\vec k}}{z} \right |^2 \right] \mathrm{d}z,
\end{equation}
with $V=A\,H$ the domain volume and $H =z_2-z_1$ the domain depth.

Similarly, for the relative volume potential enstrophy density, $Z$, we have
\begin{align}
	Z &= \frac{1}{V} \int_{\Dc} |q|^2  \mathrm{d}A \, \mathrm{d}z = \sum_{\vec k} Z_{\vec k},
\end{align}
where 
\begin{equation}
	Z_{\vec k} = \frac{1}{H} \intz  \left|q_{\vec k} \right|^2 \mathrm{d}z.
\end{equation}
Finally, analogous to $Z$, we have the relative surface potential enstrophy densities, $Y_j$, on the area $\Dc_0$
\begin{equation}\label{surface-enstrophy}
	Y_j = \frac{1}{A} \int_{\Dc_0} \left| r_j \right|^2 \mathrm{d}A = \sum_{\vec k} Y_{j\vec k},
\end{equation}
where 
\begin{equation}\label{surface-enstrophy-rjk}
	Y_{j\vec k} =  \left | r_{j\vec k} \right|^2.
\end{equation}

\section{Rossby waves in a quiescent ocean}\label{S-linear}

\begin{figure}
	\centerline{\includegraphics[width=\textwidth]{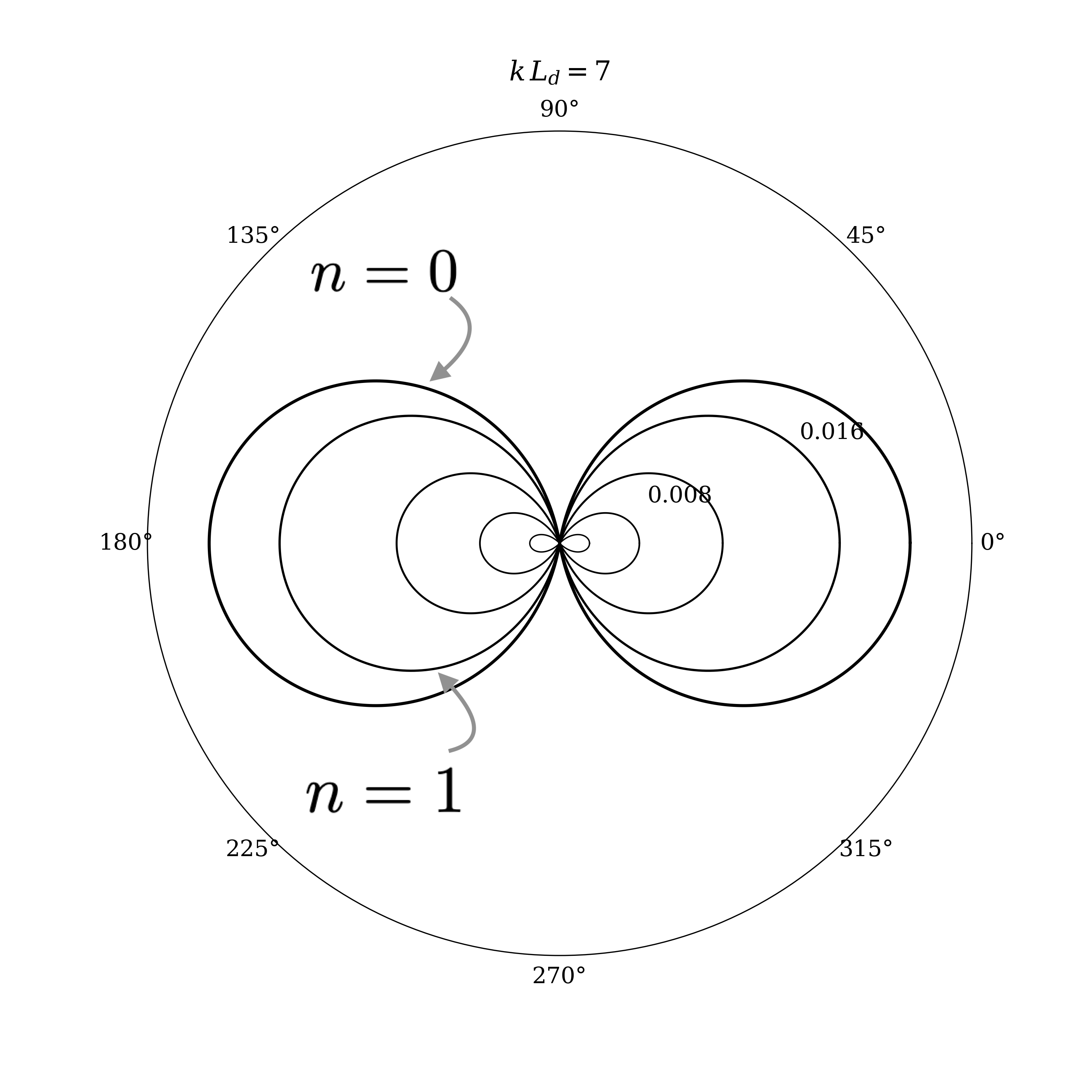}}
	\caption{Polar plots of the absolute value of the non-dimensional angular frequency $|\omega_n|/(\beta L_d)$ of the first five modes of the traditional eigenvalue problem (section \ref{SS-trad}) as a function of the wave propagation direction, $\vec k/|\vec k|$, for constant stratification. The outer most ellipse, with the largest absolute angular frequency, represents the angular frequency of the barotropic ($n=0$) mode. The higher modes have smaller absolute frequencies and are thus concentric and within the barotropic angular frequency curve. Since the absolute value of the angular frequency of the barotropic mode becomes infinitely large at small horizontal wavenumbers $k$, we have chosen a large wavenumber $k$, given by $k\,L_d = 7$, so that the angular frequency of the first five modes can be plotted in the same figure. We have chosen $f_0 = 10^{-4}~\textrm{s}^{-1},  \beta = 10^{-11}~\textrm{m}^{-1}~\textrm{s}^{-1}, N_0 = 10^{-2}~\textrm{s}^{-1}$ and $H = 1~\textrm{km}$ leading to a deformation radius $L_d = N_0 \, H/f_0= 100~\textrm{km}$. Numerical solutions to all eigenvalue problems in this paper are obtained using \texttt{Dedalus} \citep{burns_dedalus_2020}.}
	\label{F-baroclinic_angle}
\end{figure}

In this section, we study Rossby waves in an otherwise quiescent ocean; in other words, we examine the \emph{physical} normal modes of a quiescent ocean. The linear equations of motion are
\begin{subequations}\label{linear-time-evolution}
\begin{align}
	\label{linear-q-equation}
	\pd{q}{t} + \beta \, v = 0 \quad &\textrm{for } z\in(z_1,z_2)\\
	\label{linear-r-equation}
	\pd{r_{j}}{t} + \vec u  \cdot \grad\left[(-1)^{j+1} \, g_j\right]  = 0 \quad &\textrm{for } z=z_j.
\end{align}
\end{subequations}
We assume that the prescribed surface potential vorticity densities at the lower and upper boundaries, $g_1$ and $g_2$, are linear, which ensures the resulting eigenvalue problem is separable. Moreover, as the ocean is quiescent, $g_1$ and $g_2$ must refer to topographic slopes, as in equation \eqref{g_j_topography}.


The importance of the linear problem \eqref{linear-time-evolution} is that it provides all possible discrete Rossby wave normal modes in a quasigeostrophic flow. Substituting a wave ansatz of the form [compare with equation \eqref{physical-mode} for \emph{physical} normal modes]
\begin{equation}
	\psi(\vec x, z,t) = \hat \psi(z) \, e_{\vec k}(\vec x) \, \mathrm{e}^{-\mathrm{i}\omega t}
\end{equation}
into the  linear problem \eqref{linear-time-evolution} renders
\begin{equation}\label{q-eigen}
	\left(-\mathrm{i}\, \omega\right) \left[ -k^2 \, \hat \psi + \d{}{z}\left(\frac{f_0^2}{N^2} \d{\hat \psi}{z} \right) \right] + \mathrm{i} \, k_x \, \beta \, \hat \psi = 0,
\end{equation}
for $z\in(z_1,z_2)$, and 
\begin{equation}
	\label{r-eigen}
	\left(-\mathrm{i} \, \omega \right) \left( \frac{f_0^2}{N^2} \d{\hat \psi}{z} \right) + \mathrm{i} \, \unit z \cdot \left( \vec k \times \grad g_j \right) \hat \psi = 0,
\end{equation}
for $z=z_1,z_2$.

\subsection{Traditional Rossby wave problem}\label{SS-trad}

We first examine the traditional case of linear fluctuations to a quiescent ocean with isentropic lower and upper boundaries i.e., with no topography. Setting $\grad g_1=\grad g_2 = 0$ in the eigenvalue problem \eqref{q-eigen}\textendash\eqref{r-eigen} gives
\begin{subequations}\label{trad-rossby}
\begin{align}
	\label{trad-rossby-interior}
		\omega \left[ -k^2 \,  F + \d{}{z} \left(\frac{f_0^2}{N^2} \d{ F}{z}\right) \right] - \beta \, k_x \,  F = 0 \\
	\label{trad-rossby-boundary}
		\omega \left( \frac{f_0^2}{N^2} \d{F}{z}\right)\Bigg |_{z=z_j} = 0,
\end{align} 
\end{subequations}
where $\hat \psi(z) = \hat \psi_0 \, F(z)$ and $F$ is a non-dimensional function.
There are two cases to consider depending on whether $\omega$ vanishes.

\subsubsection{Traditional baroclinic modes} 

Assuming $\omega \neq 0$ in the eigenvalue problem \eqref{trad-rossby} renders a Sturm-Liouville eigenvalue problem in $L^2$
\begin{subequations}\label{trad-SL}
\begin{align}
	\label{trad-SL-interior}
	- \d{}{z}\left(\frac{f_0^2}{N^2}\d{ F}{z}\right) = \lambda \,  F \quad &\textrm{for } z\in (z_1,z_2) \\
	\label{trad-SL-boundary}
	\frac{f_0^2}{N^2} \d{ F}{z} = 0 \quad &\textrm{for } z=z_1,z_2,
\end{align}
\end{subequations}
where the eigenvalue, $\lambda$, is given by 
\begin{equation}\label{eigenvalue}
	\lambda = -k^2 - \frac{\beta \, k_x}{\omega}.
\end{equation}
See figure \ref{F-baroclinic_angle} for an illustration of the dependence of $|\omega|$ on the wavevector $\vec k$.

From Sturm-Liouville theory \cite[e.g.,][]{brown_fourier_1993}, the eigenvalue problem \eqref{trad-SL} has infinitely many eigenfunctions,  $F_0, \, F_1, \, F_2, \dots$ with distinct and ordered eigenvalues, $\lambda_n$, satisfying
\begin{equation}
	0 = \lambda_0 < \lambda_1 < \cdots \rightarrow \infty.
\end{equation}
The $n$th mode, $F_n$, has $n$ internal zeros in the interval $(z_1,z_2)$. 
The eigenfunctions are orthonormal with respect to
the inner product, $\left[\cdot ,\cdot\right]$, given by the vertical integral
\begin{equation}\label{trad-inner}
	\left[F , G\right] = \frac{1}{H} \intz F \, G \, \mathrm{d}z,
\end{equation}
with orthonormality meaning that  
\begin{equation}\label{trad-ortho}
	\delta_{mn} = \left[F_m,F_n\right]
\end{equation}
where $\delta_{mn}$ is the Kronecker delta. A powerful and commonly used result of Sturm-Liouville theory is that the set $\{F_n\}_{n=0}^\infty$ forms an orthonormal basis of $L^2$.

\subsubsection{Stationary step-modes}\label{SSS-step}

There are two additional solutions to the Rossby wave eigenvalue problem \eqref{trad-rossby} not previously noted in the literature. If $\omega = 0$ then the eigenvalue problem \eqref{trad-rossby} becomes
\begin{subequations}\label{trad-step}
\begin{align}
	\label{trad-step-interior}
		\beta \, k_x \, F = 0 \quad &\textrm{for } z\in(z_1,z_2) \\
	\label{trad-step-boundary}
		0 = 0 \quad &\textrm{for } z=z_1,z_2.
\end{align}
\end{subequations}
Consequently, if $k_x \neq 0$, then  $ F (z) =0$ for $z \in (z_1,z_2)$. That is, $F$ must vanish in the interior of the interval. However, since $\omega=0$ in \eqref{trad-rossby-boundary}, we obtain tautological boundary conditions \eqref{trad-step-boundary}. As a result, $F$ can take arbitrary values at the lower and upper boundaries. Thus two solutions are 
\begin{equation}\label{step-mode}
	F^\textrm{step}_{j}(z) = 
	\begin{cases}
		1  \quad \text{for } z=z_j\\
		0  \quad \text{otherwise.}
	\end{cases}
\end{equation}
The two step-mode solutions \eqref{step-mode} are independent of the traditional baroclinic modes, $F_{n}(z)$. An expansion of the step-mode $F^\textrm{step}_j$ in terms of the baroclinic modes will fail and produce a series that is identically zero.

The two stationary step-modes, $F^\textrm{step}_1$ and $F^\textrm{step}_2$, correspond to the two inert degrees of freedom in the eigenvalue problem \eqref{trad-rossby}. These two solutions are neglected in the traditional eigenvalue problem \eqref{trad-SL} through the assumption that $\omega \neq 0$. Although dynamically trivial, we will see that these two step-waves are obtained as limits of boundary-trapped modes as the boundary buoyancy gradients $N^2\,\grad g_j/f_0$ become small.

\subsubsection{The general solution}

For a wavevector $\vec k$ with $k_x \neq 0$, the vertical structure of the streamfunction must be of the form 
\begin{equation}\label{trad-vertical}
	\Psi(z) + \sum_{j=1}^2 \Psi_j \, F_j^\textrm{step}(z) = \psi_{\vec k} (z,t=0), 
\end{equation}
where $\Psi(z)$ is a twice differentiable function satisfying $\mathrm{d}\Psi(z_j)/\mathrm{d}z = 0$ for $j=1,2$ and $\Psi_1,\Psi_2$ are arbitrary constants. We can represent $\Psi$ according to the expansion,
\begin{equation}\label{trad-expansion}
	\Psi = \sum_{n=0}^\infty \left[\Psi,F_n\right] F_n,
\end{equation}
and so the time-evolution is 
\begin{equation}
	\psi_{\vec k} (z, t)= \sum_{n=0}^\infty  \left[\Psi,F_n\right] F_n \, \mathrm{e}^{-\mathrm{i}\omega_n t} + \sum_{j=1}^2 \Psi_j \, F^\textrm{step}_j.
\label{eq:linear-traditional-time-evolution}
\end{equation}
It is this time-evolution expression, which is valid only in linear theory for a quiescent ocean, that gives the baroclinic modes a clear physical meaning. More precisely, equation \eqref{eq:linear-traditional-time-evolution} states that the vertical structure $\Psi(z)$ disperses into its constituent Rossby waves with vertical structures $F_n$. Outside the linear theory of this section, baroclinic modes do not have a physical interpretation, although they remain a mathematical basis for $L^2$.

\subsection{The Rhines problem}\label{SS-rhines-bottom}

We now examine the case with a sloping lower boundary, $\grad g_1\neq 0$, and an isentropic upper boundary, $\grad g_2 =0$. The special case of a meridional bottom slope and constant stratification was first investigated by \cite{rhines_edge_1970}. Subsequently, \cite{charney_oceanic_1981} extended the analysis to realistic stratification and \cite{straub_dispersive_1994} examined the dependence of the waves on the propagation direction. Chapter \ref{Ch-modes} applies the mathematical theory of eigenvalue problems with $\lambda$-dependent boundary conditions and obtains various completeness and expansion results as well as a qualitative theory for the streamfunction modes. Below, we generalize these results, study the two limiting boundary conditions, and consider the corresponding vertical velocity modes.

\subsubsection{The eigenvalue problem}

Let $\hat \psi(z) = \hat \psi_0 \, G(z)$ where $G$ is a non-dimensional function. We then manipulate the eigenvalue problem \eqref{q-eigen}\textendash\eqref{r-eigen} to obtain (assuming $\omega \neq 0$)
\begin{subequations}\label{Rhines-eigen_bottom}
\begin{alignat}{2}	
		- \d{}{z} \left(\frac{f_0^2}{N^2} \d{G}{z}\right) &= \lambda \, G \quad &&\textrm{for } z\in(z_1,z_2) \\
			-k^2 G - \gamma_1^{-1} \left( \frac{f_0^2}{N^2}\d{G}{z} \right) &= \lambda \, G \quad &&\textrm{for } z=z_1,\\
			\d{G}{z} &= 0  \quad &&\textrm{for } z=z_2,
\end{alignat}
\end{subequations}
where the length-scale $\gamma_j$ is given by
\begin{align}\label{gamma}
\begin{split}
	\gamma_j &= (-1)^{j+1} \frac{\unit z\cdot\left(\vec k \times \grad g_j\right)}{\unit z\cdot\left(\vec k \times \grad f\right)}\\ &= (-1)^{j+1}  \left( \frac{\alpha_j \, k}{\beta \, k_x} \right) \, \sin\left(\Delta \theta_j\right)
	\end{split}
\end{align}
 where $\alpha_j = |\grad{g_j}|$ and $\Delta \theta_j$ is the angle between the wavevector $\vec k$ and $\grad g_j$ measured counterclockwise from $\vec k$. The parameter $\gamma_j$ depends only on the direction of the wavevector $\vec k$ and not its magnitude $k = |\vec k|$. If $\gamma_j = 0$, then the $j$th boundary condition can be written as a $\lambda$-independent boundary condition [as in the upper boundary condition at $z=z_{2}$ of the eigenvalue problem \eqref{Rhines-eigen_bottom}]. For now, we assume that $\gamma_1\neq 0$. 
 
 Since the eigenvalue, $\lambda$, appears in the differential equation and  one boundary condition in the eigenvalue problem \eqref{Rhines-eigen_bottom}, the eigenvalue problem takes place in $L^2\oplus \C$.
 
 \subsubsection{Characterizing the eigen-solutions}
The following is obtained by applying the theory summarized in appendix A to the eigenvalue problem \eqref{Rhines-eigen_bottom}.\footnote{To apply the theory of chapter \ref{Ch-modes}, summarized in Appendix A, let $\tilde \lambda = \lambda - k^2$ be the eigenvalue in place of $\lambda$; the resulting eigenvalue problem for $\tilde \lambda$ will then satisfy the positiveness conditions, equations \eqref{left-definite-1} and \eqref{left-definite-2}, of Appendix A.}
 
The eigenvalue problem \eqref{Rhines-eigen_bottom} has a countable infinity of eigenfunctions $G_0,\,G_1,\, G_2, \dots$ with ordered and distinct non-zero eigenvalues $\lambda_n$ satisfying
\begin{equation}\label{eigen-order}
	\lambda_0 < \lambda_1 < \lambda_2 < \cdots \rightarrow \infty.
\end{equation}
The inner product $\inner{\cdot}{\cdot}$ induced by the eigenvalue problem 
\eqref{Rhines-eigen_bottom}
is 
\begin{equation}\label{Rhines-inner}
	\inner{F}{G}= \frac{1}{H}\left( \intz F \, G \, \mathrm{d}z + \gamma_1 \, F(z_1)\, G(z_1) \right),
\end{equation}
which depends on the direction of the horizontal wavevector $\vec k$ through $\gamma_1$. Moreover, $\gamma_1$ is not necessarily positive\footnote{That $\gamma_1$ is not positive prevents us from applying the eigenvalue theory outlined in the appendix of \cite{smith_surface-aware_2012}.}, with one consequence being that some functions $G$ may have a negative square, $\inner{G}{G}<0$. Orthonormality  of the modes $G_n$ then takes the form
\begin{equation}
	\pm \delta_{mn} = \inner{G_m}{G_n},
\end{equation}
where at most one mode, $G_n$, satisfies $\inner{G_n}{G_n}=-1$. The eigenfunctions $\{G_n\}_{n=0}^\infty$ form an orthonormal basis of $L^2\oplus \C$ under the inner product $\eqref{Rhines-inner}$.

Appendix A provides the following inequality,
\begin{equation}
	\left(k^2 + \lambda_n \right) \inner{G_n}{G_n} > 0,
\end{equation}
which, using the dispersion relation \eqref{eigenvalue}, implies that modes $G_n$ with $\inner{G_n}{G_n} >0$ correspond to waves with a westward phase speed while modes $G_n$ with $\inner{G_n}{G_n}<0$ correspond to waves with an eastward phase speed (assuming $\beta>0$).

\begin{figure*}
	\centerline{\includegraphics[width=\textwidth]{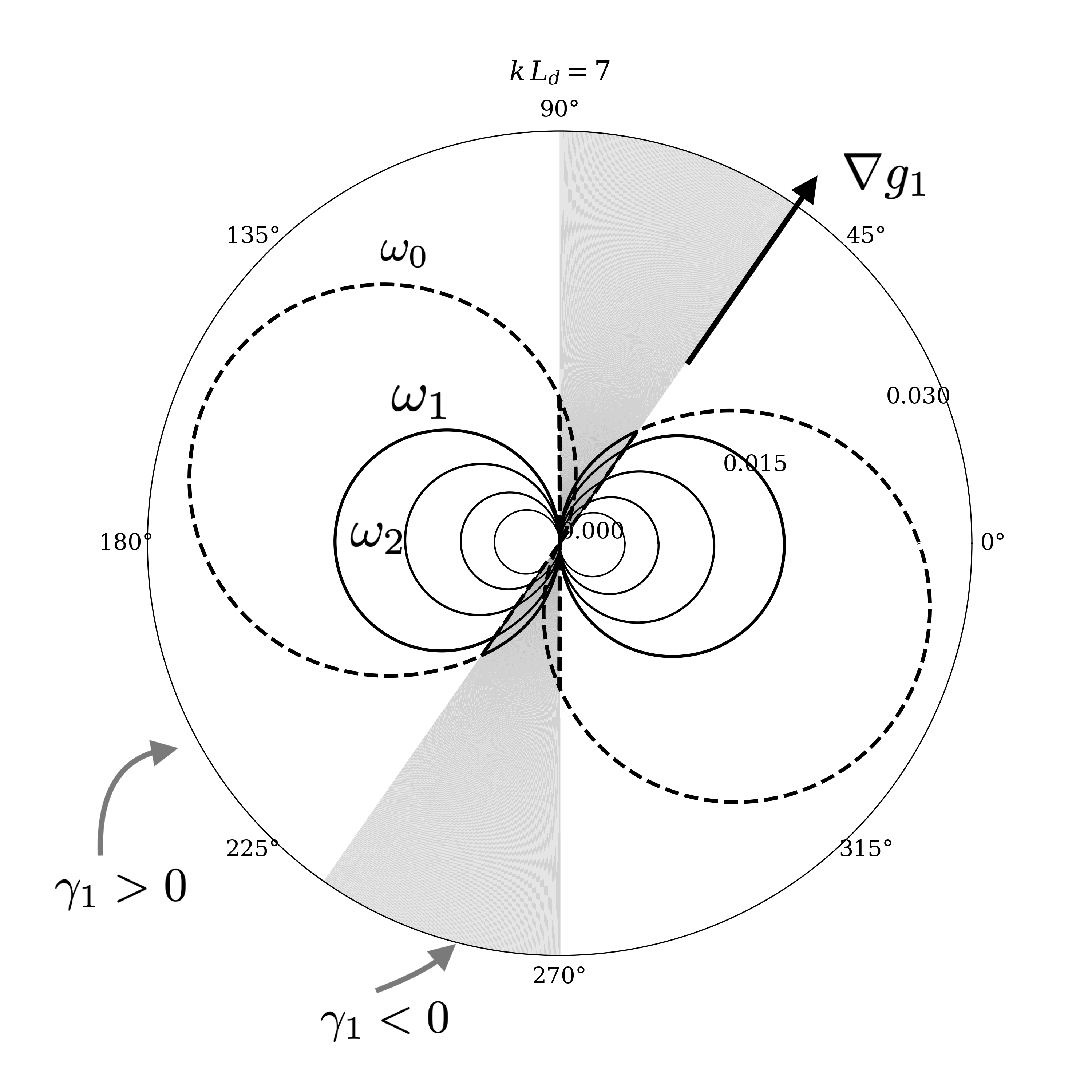}}
	\caption{Polar plots of the absolute value of the non-dimensional angular frequency $|\omega_n|/(\beta L_d)$ of the first five modes from section \ref{SS-rhines-bottom} as a function of the wave propagation direction $\vec k/|\vec k|$ for a horizontal wavenumber given by $k\, L_d = 7$ in constant stratification. The dashed line corresponds to $\omega_0$, this mode becomes boundary trapped at large wavenumbers $k = |\vec k|$. The remaining modes, $\omega_n$ for $n=1,2,3,4$, are shown with solid lines. White regions are angles where $\gamma_1>0$. All Rossby waves with a propagation direction lying in the white region have negative angular frequencies $\omega_n$ and so have a westward phase speed. Gray regions are angles where $\gamma_1<0$. Here, $\omega_0$ is positive while the remaining angular frequencies $\omega_n$ for $n>0$ are negative. Consequently, in the gray regions, $\omega_0$ corresponds to a Rossby wave with an eastward phase speed whereas the remaining Rossby waves have westward phase speeds.  The lower boundary buoyancy gradient, proportional to $\grad g_1$, points towards $55^\circ$ and corresoponds to a bottom slope of $|\grad h_1| = 1.5\times 10^{-5}$ leading to $\gamma_1/H = 0.15$. The remaining parameters are as in figure \ref{F-baroclinic_angle}.}
	\label{F-angle1}
\end{figure*}

We distinguish the following cases depending on the sign of $\gamma_1$. In the following, we assume $k\neq0$.
\begin{itemize}
	\item [i.] $\gamma_1 > 0$. All eigenvalues satisfy $\lambda_n > -k^2$, all modes satisfy $\inner{G_n}{G_n}>0$, and all waves propagate westward. The $n$th mode, $G_n$, has $n$ internal zeros \citep{binding_sturmliouville_1994}. See the regions in white in figure \ref{F-angle1}.
	
	\item [ii.]  $\gamma_1 < 0$. There is one mode, $G_0$, with a negative square, $\inner{G_0}{G_0} <0$, corresponding to an eastward propagating wave. The eastward propagating wave nevertheless travels pseudowestward (to the left of the upslope direction for $f_0>0$). The associated eigenvalue, $\lambda_0$, satisfies $\lambda_0 < -k^2$. The remaining modes, $G_n$ for $n>1$, have positive squares, $\inner{G_n}{G_n} >0$, corresponding to westward propagating waves and have eigenvalues, $\lambda_n$, satisfying $\lambda_n>-k^2$. Both $G_0$ and $G_1$ have no internal zeros whereas the remaining modes, $G_n$, have $n-1$ internal zeros for $n>1$ \citep{binding_sturmliouville_1994}. See the stippled regions in figures \ref{F-angle1}.
\end{itemize}


To elucidate the meaning of $\lambda_n < -k^2$, note that a pure surface quasigeostrophic mode\footnote{A pure surface quasigeostrophic mode is the mode found after setting $\beta =0$ with an upper boundary at $z_2=\infty$.}  has $\lambda = -k^2$. Thus $\lambda_0 < -k^2$ means that the bottom-trapped mode decays away from the boundary more rapidly than a pure surface quasigeostrophic wave. Indeed, the limit of $\lambda_0 \rightarrow -\infty$ yields the bottom step-mode \eqref{step-mode} of the previous subsection.

\begin{figure}
	\centerline{\includegraphics[width=27pc]{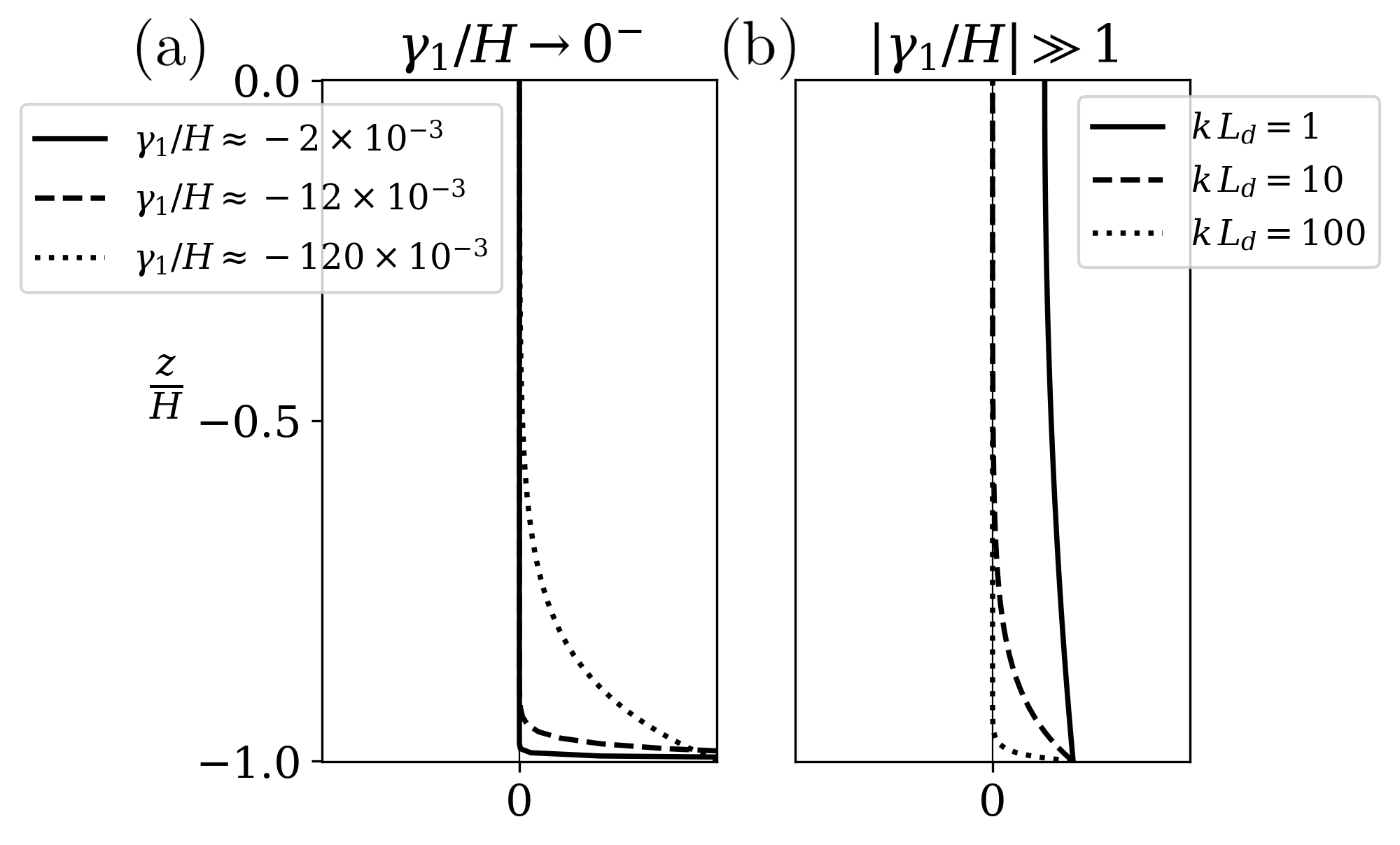}}
	\caption{The two limits of the boundary-trapped surface quasigeostrophic waves, as discussed in section \ref{SS-rhines}. (a) Convergence to the step mode given in equation \eqref{step-mode} with $j=1$ as $\gamma_1 \rightarrow 0^-$ for three values of $\gamma_1$ at a  wavenumber $k=|\vec k|$ given by $k \, L_d =1$. The phase speed approaches zero in the limit $\gamma_1 \rightarrow 0^{-}$. (b) Here, $\gamma_1/H \approx 10$ for the three vertical structures $G_n$ shown. Consequently, the bottom trapped wave has $\lambda \approx -k^2$ and the phase speeds are large. The vertical structure, $G$, for three values of $k \, L_d$ are shown, illustrating the dependence on $k$ of this mode, which behaves as a boundary-trapped exponential mode with an $\mathrm{e}$-folding scale of $|\lambda|^{-1/2} = k^{-1}$. In both (a) and (b), the wave propagation direction $\theta = 260^\circ$. All other parameters are identical to figure \ref{F-angle1}.}
	\label{F-step}
\end{figure}

The step-mode limit is obtained as $\gamma_1 \rightarrow 0^-$. This limit is found as either $|\grad g_1| \rightarrow 0$ for propagation directions in which $\gamma_1 < 0$ or as $\vec k$ becomes parallel or anti-parallel to $\grad g_1$ (whichever limit satisfies $\gamma_1 \rightarrow 0^-$). In this limit, we obtain a step-mode exactly confined at the boundary (that is, $|\lambda|^{-1/2}=0$) with zero phase speed [see figure \ref{F-step}(a)]. The remaining modes then satisfy the isentropic boundary condition
\begin{equation}\label{non-zero-boundary}
	\left(\frac{f_0^2}{N^2} \, \d{G_n}{z}\right)\Bigg |_{z=z_1}= 0.
\end{equation}

The other limit is that of $|\gamma_1| \rightarrow \infty$ which is obtained as the buoyancy gradient becomes large, $|\grad g_1| \rightarrow \infty$. In this limit, the eigenvalue $\lambda_0 \rightarrow -k^2$ [see figure \ref{F-step}(b)]. Moreover, the phase speed of the bottom-trapped wave becomes infinite, an indication that the quasigeostrophic approximation breaks down. Indeed, the large buoyancy gradient limit corresponds to steep topographic slopes and so we obtain the topographically-trapped internal gravity wave of \cite{rhines_edge_1970}, which has an infinite phase speed in quasigeostrophic theory. The remaining modes then satisfy the vanishing pressure boundary condition
\begin{align}\label{zero-boundary}
	G(z_1) = 0
\end{align}
as in the surface modes of \cite{de_la_lama_vertical_2016} and \cite{lacasce_prevalence_2017}.



\subsubsection{The general time-dependent solution}

At some wavevector $\vec k$, the observed vertical structure now has the form
\begin{equation}\label{rhines-vertical}
	\Psi(z)  = \psi_{\vec k} (z,t=0), 
\end{equation}
where $\Psi$ is a twice continuously differentiable function satisfying $d\Psi(z_2)/dz=0$. For such functions we can write (see appendix A)
\begin{equation}
	\Psi = \sum_{n=0}^\infty \frac{\inner{\Psi}{G_n}}{\inner{G_n}{G_n}} G_n,
\end{equation}
so that the time-evolution is
\begin{equation}\label{rhines-evolution}
	\psi_{\vec k} (z,t) = \sum_{n=0}^\infty \frac{\inner{\Psi}{G_n}}{\inner{G_n}{G_n}} G_n(z) \, \mathrm{e}^{-\mathrm{i}\omega_n t}.
\end{equation}
Again, it is the above expression, which is valid only in linear theory with a quiescent background state, that gives the generalized Rhines modes $G_n$ physical meaning. Outside the linear theory of this section, the generalized Rhines modes do not have any physical interpretation and instead merely serve as a mathematical basis for $L^2\oplus \C$.

Recall from section \ref{SS-trad} that an expansion of a step-mode \eqref{step-mode} in terms of the baroclinic modes $\{F_n\}_{n=0}^\infty$  produces a series that is identically zero. It follows that the step-modes are independent of the baroclinic modes\textemdash they constitute independent degrees of freedom. However, with the inclusion of bottom boundary dynamics, we may now expand the bottom step-mode, $F^\mathrm{step}_1(z)$, in terms of the $L^2\oplus\C^1$ modes, $\{G_n\}_{n=0}^\infty$, with the expansion given by
\begin{equation}
	F^\mathrm{step}_1(z) = \frac{\gamma_1}{H} \sum_{n=0}^\infty \frac{G_n(z_1)}{\inner{G_n}{G_n}} G_n(z).
\end{equation}

\begin{figure*}
	\centerline{\includegraphics[width=\textwidth]{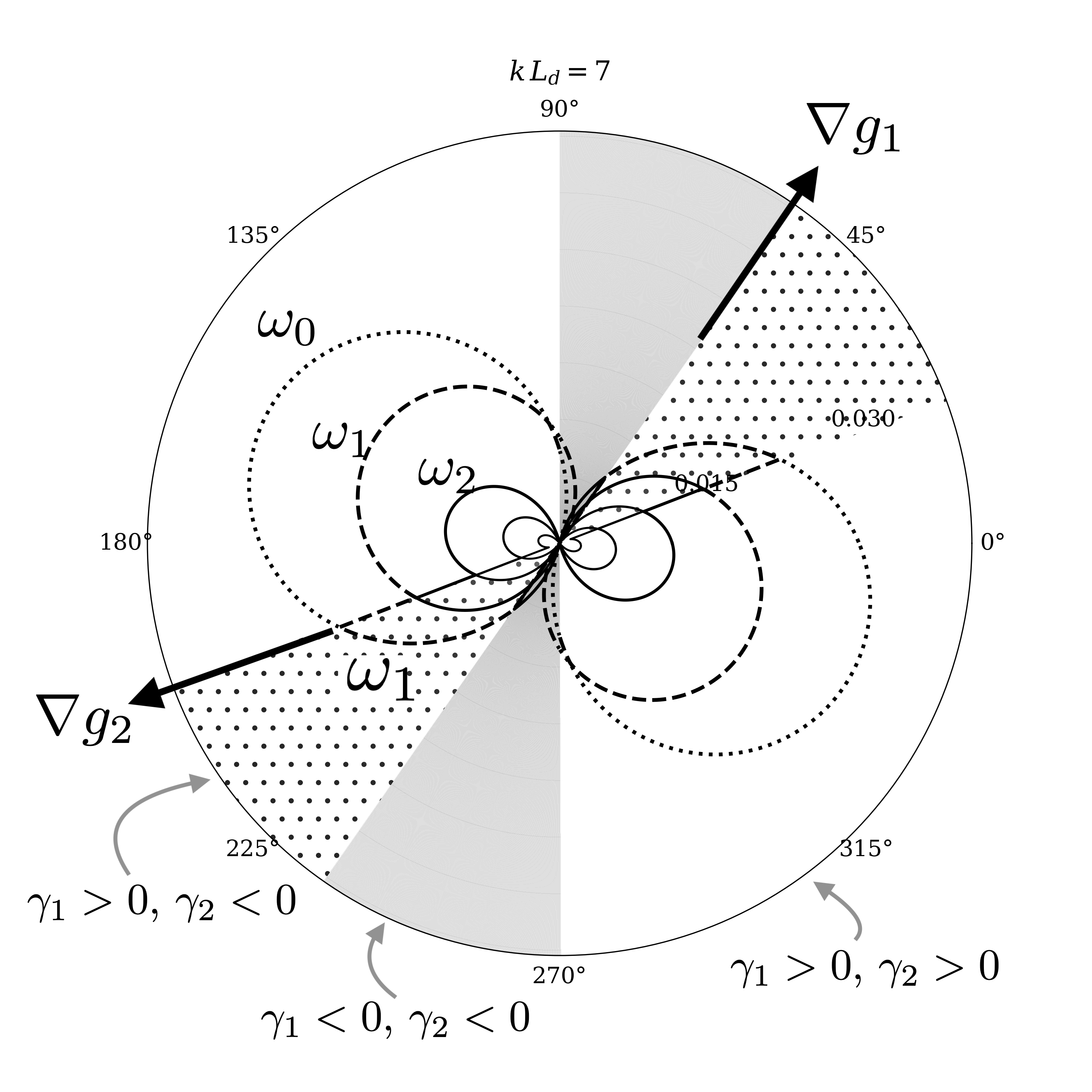}}
	\caption{As in figure \ref{F-angle1} but now with an upper slope $|\grad h_2| = 10^{-5}$ in the direction $200^\circ$ in addition to the bottom slope in figure \ref{F-angle1}. The upper slope corresponds to $\gamma_2/H =0.1$. The dotted line corresponds to $\omega_0$, the dashed line to $\omega_1$, with these two modes becoming boundary trapped at large wavenumbers $k$. The remaining modes, $\omega_n$ for $n=2,3,4$, are shown with solid lines. White regions are angles where $\gamma_1>0$ and $\gamma_2>0$. All Rossby waves with a propagation direction lying in the white region have negative angular frequencies $\omega_n$ and so have a westward phase speed. Gray regions are angles where $\gamma_1<0$ and $\gamma_2<0$. The two gravest angular frequencies $\omega_0$ and $\omega_1$ are both positive while the remaining angular frequencies $\omega_n$ for $n>1$ are negative. Consequently, in the gray regions, $\omega_0$ and $\omega_1$ each correspond to a Rossby waves with an eastward phase speed whereas the remaining Rossby waves have westward phase speeds . Stippled regions are angles where $\gamma_1>0$ and $\gamma_2<0$. In the stippled region, $\omega_0$ is positive and has an eastward phase speed. The remaining Rossby waves in the stippled region have negative angular frequencies and have westward phase speeds.}
	\label{F-angle2}
\end{figure*}

\subsection{The generalized Rhines problem}\label{SS-rhines}

The general problem with topography at both the upper and lower boundaries is
\begin{subequations}\label{Rhines-eigen}
\begin{align}	
		\label{Rhines-eigen-interior}		
		- \d{}{z} \left(\frac{f_0^2}{N^2} \d{G}{z}\right) = \lambda \, G \quad &\textrm{for } z\in(z_1,z_2)\\
		\label{Rhines-eigen-boundary}
			-k^2 G + (-1)^j \gamma_j^{-1} \left( \frac{f_0^2}{N^2}\d{G}{z} \right)= \lambda \, G \quad &\textrm{for } z=z_j,
\end{align}
\end{subequations}
for $j=1,2$, where the length-scale $\gamma_j$ is given by equation \eqref{gamma}. As the eigenvalue, $\lambda$, appears in both boundary conditions, the eigenvalue problem \eqref{Rhines-eigen} takes place in $L^2\oplus \C^2$. The inner product now has the form
\begin{equation}\label{Rhines-inner-general}
	\inner{F}{G}= \frac{1}{H}\left( \intz F \, G \, \mathrm{d}z + \sum_{j=1}^2 \gamma_j \, F(z_j)\, G(z_j) \right)
\end{equation}
which reduces to equation \eqref{Rhines-inner} when $\gamma_2=0$. Under this inner product, the eigenfunctions $\{G_n\}_{n=0}^\infty$ form a basis of $L^2\oplus \C^2$.

\begin{figure*}
	\noindent \includegraphics[width=\textwidth]{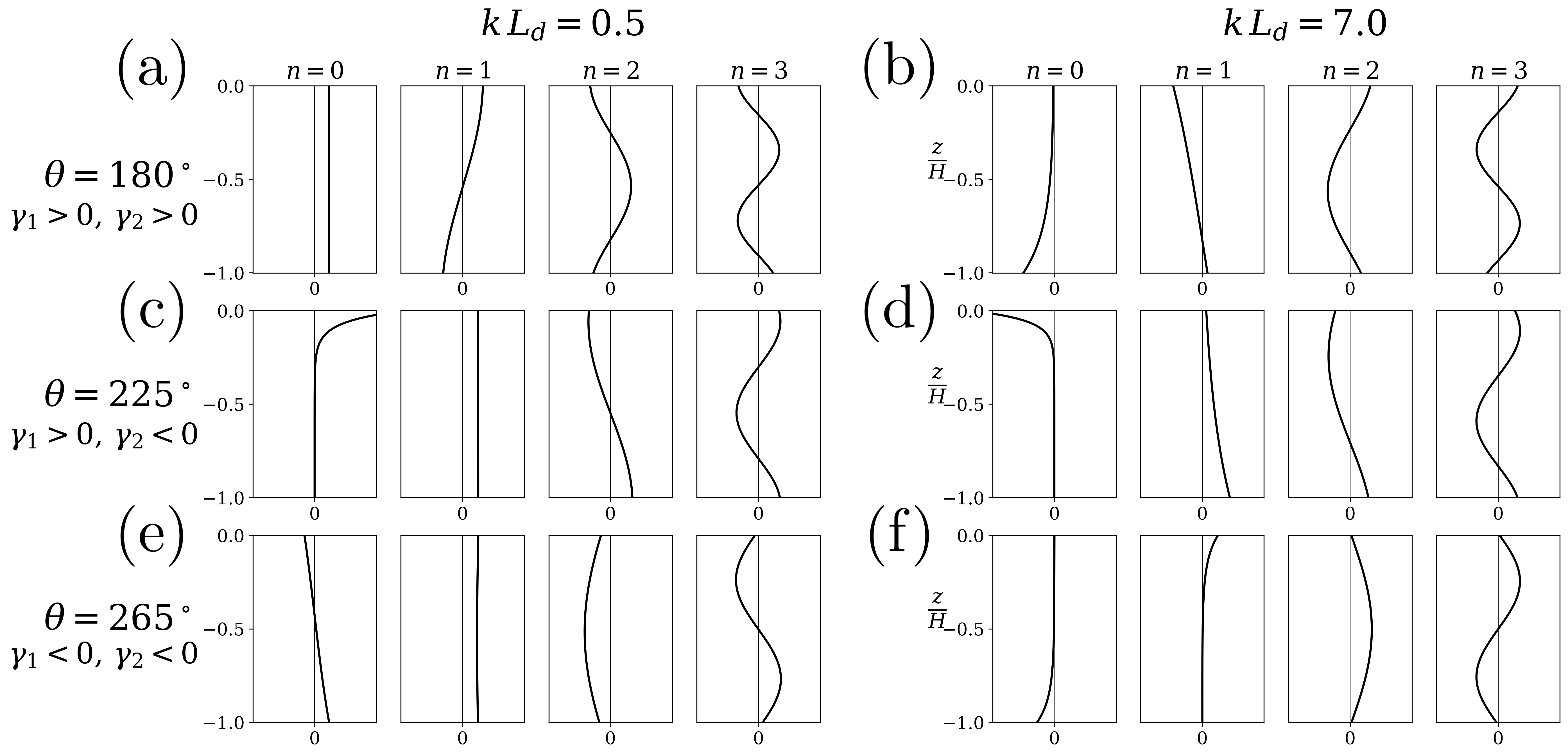} \\
	\caption{This figure illustrates the dependence of the vertical structure $G_n$ of the streamfunction to the horizontal wavevector $\vec k$ as discussed in section \ref{SS-rhines}. Three propagation directions are shown $\theta = 180^\circ, \, 225^\circ, \, 265^\circ$ and correspond to the rows in the figure [e.g., the row containing (a) and (b) are the vertical structures of waves at $\theta = 180^\circ$]; two wavenumbers $k\, L_d = 0.5, 7$ are shown (where $k=|\vec k|$) and they correspond to the columns in the above figure [e.g., (b), (d) and (f) are the vertical structure of waves with $k \, L_d = 7$]. The parameters for the above figure are identical to figure \ref{F-angle1}. We emphasize two features in this figure. First, note how the boundary modes ($n=0,1$) are typically only boundary-trapped at small horizontal scales (i.e., for $k \, L_{d} = 7$). At larger horizontal scales, we typically obtain a depth-independent mode along with another mode with large-scale features in the vertical. Second, note that for $\gamma_1,\gamma_2>0$, as in panels (a) and (b), the $n$th mode has $n$ internal zeros, as in Sturm-Liouville theory; for $\gamma_1>0,\gamma_2<0$, as in panels (c) and (d), the first two modes $(n=0,1)$ have no internal zeros; and for $\gamma_1,\gamma_2<0$, the zeroth mode $G_0$ has one internal zero, the first and second modes, $G_1$ and $G_2$ have no internal zeros, and the third mode $G_2$ has one internal zero. The zero-crossing for the $n=0$ mode in panel (f) is difficult to observe because the amplitude of $G_0$ is small near the zero-crossing.
	}
	\label{F-little}
\end{figure*}
 
There are now three cases depending on the signs of $\gamma_1$ and $\gamma_2$ and as depicted in figures \ref{F-angle2} and \ref{F-little}. In the following, we assume $k\neq0$. 
\begin{itemize}
	\item [i.] $\gamma_1 > 0$ and $\gamma_2 > 0$. Corresponds to case (i) in section \ref{SS-rhines-bottom}. See the regions in white in figure \ref{F-angle2} and plots (a) and (b) in figure \ref{F-little}.
	
	\item [ii.] $\gamma_1 \, \gamma_2 < 0$. This corresponds to case (ii) in section \ref{SS-rhines-bottom}. See the stippled regions in figure  and \ref{F-angle2} and plots (c) and (d) in figure \ref{F-little}.
	
	\item [iii.] $\gamma_1 < 0$ and $\gamma_2 < 0$. There are two modes $G_0$ and $G_1$ with negative squares, $\inner{G_n}{G_n}<0$, that propagate eastward and have eigenvalues, $G_n$, satisfying $G_n<-k^2$ for $n=1,2$. The remaining modes, $G_n$, for $n>1$ have positive squares, $\inner{G_n}{G_n}>0$, propagate westward, and have eigenvalues, $\lambda_n$, satisfying $\lambda_n>-k^2$. The zeroth mode, $G_0$, has one internal zero, the first and second modes, $G_1$ and $G_2$, have no internal zeros, and the remaining modes, $G_n$, have $n-2$ internal zeros for $n>2$ \citep{binding_left_1999}. See the shaded regions in figures \ref{F-angle1} and \ref{F-angle2} and panels (e) and (f) in figure \ref{F-little}.
\end{itemize}

\subsection{The vertical velocity eigenvalue problem}\label{SS-vertical-velocity}

Let $\hat w(z) = \hat w_0 \, \chi(z)$ where $\chi(z)$ is a non-dimensional function. For the Rossby waves with isentropic boundaries of section \ref{SS-trad} (the traditional baroclinic modes), the corresponding vertical velocity modes satisfy
\begin{equation}\label{vertical-baroclinic-interior}
	-\dd{\chi}{z} = \lambda  \left(\frac{N^2}{f_0^2}\right) \chi
\end{equation}
with vanishing vertical velocity boundary conditions
\begin{equation}\label{vertical-baroclinic-boundary}
	\chi(z_j) = 0
\end{equation}
(see appendix B for details). The resulting modes $\{\chi_n\}_{n=0}^\infty$ form an orthonormal basis of $L^2$ with orthonormality given by
\begin{equation}
	\delta_{mn} = \frac{1}{H} \intz \chi_m\, \chi_n \left(\frac{N^2}{f_0^2}\right)  \mathrm{d}z.
\end{equation}
One can obtain the eigenfunctions, $\chi_n$, by solving the eigenvalue problem \eqref{vertical-baroclinic-interior}\textendash\eqref{vertical-baroclinic-boundary} or by differentiating the streamfunction modes $F_n$ according to equation \eqref{psiz-w}.

\subsubsection*{Quasigeostrophic boundary dynamics}

As seen earlier, boundary buoyancy gradients activate  boundary dynamics in the quasigeostrophic problem. In this case, boundary conditions for the quasigeostrophic vertical velocity problem \eqref{vertical-baroclinic-interior} become
\begin{equation}\label{vertical-velocity-boundary}
	- (-1)^j \, \gamma_j \, k^2 \, \d{\chi}{z}\Big|_{z_j} = \lambda \left[\chi|_{z_j} + (-1)^j \, \gamma_j \, \d{\chi}{z}\Big|_{z_j} \right]
\end{equation}
(see the appendix B). The resulting modes $\{\chi_n\}_{n=0}^\infty$ satisfy a peculiar orthogonality relation given by equation \eqref{vertical-velocity-ortho}.

\begin{figure*}
	\noindent \includegraphics[width=\textwidth]{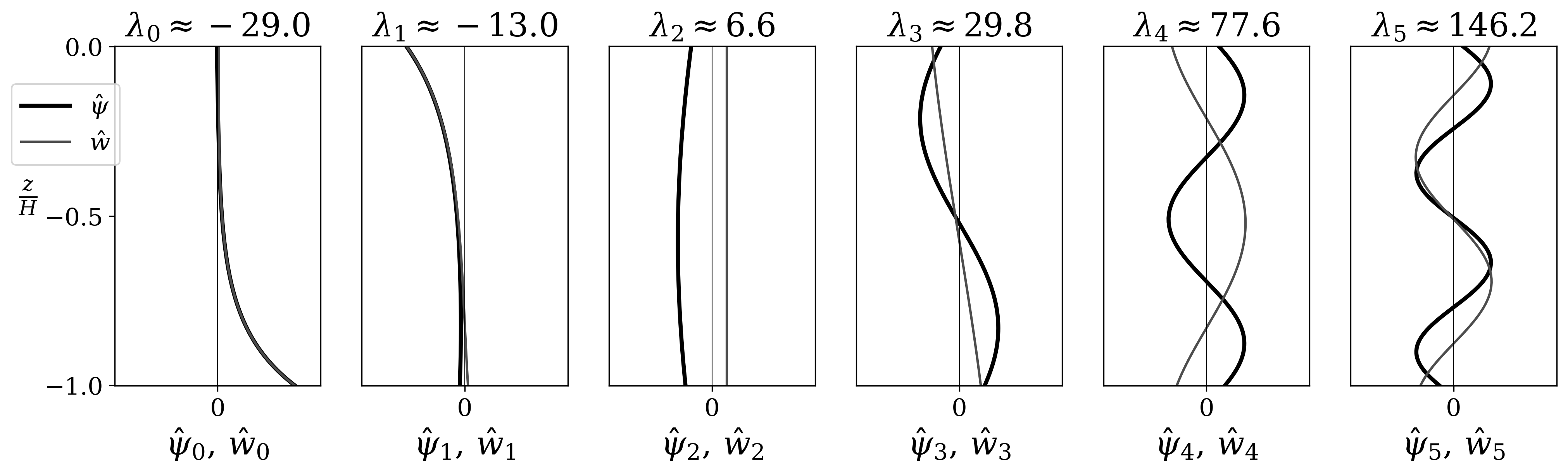} \\
	\caption{The first six vertical velocity normal modes $\chi_n$ (thin grey lines) and streamfunction normal modes $G_n$ (black lines) (see section \ref{SS-vertical-velocity}). The propagation direction is $\theta = 75^\circ$ with a wavenumber of $k\,L_d = 2$. The remaining parameters are as in figure \ref{F-angle1}. Note that $\chi_n$ and $G_n$ are nearly indistinguishable from the boundary-trapped modes $n=0,1$ while they are related by a vertical derivative for the internal modes $n>1$. The eigenvalue in the figure is non-dimensionalized by the deformation radius $L_d$.}
	\label{F-vertical-velocity}
\end{figure*}


\section{Eigenfunction expansions}\label{S-expansions}

Motivated by the Rossby waves of the previous section, we now investigate various sets of normal modes for quasigeostrophic theory. Let $\{F_n\}_{n=0}^\infty$ be a collection of $L^2$ normal modes, and assume $\psi_{\vec k}(z,t)$ is twice continuously differentiable in $z$. Define the eigenfunction expansion $\psi_{\vec k}^\textrm{exp}$ of $\psi$ by
\begin{equation}\label{F-expandion}
	\psi_{\vec k}^\textrm{exp}(z,t) = \sum_{n=0}^\infty \psi_{\vec k n}(t) \, F_n(z),
\end{equation}
where 
\begin{equation}
	\psi_{\vec k n} = \left[\psi_{\vec k},F_n\right].
\end{equation}
Because $\{F_n\}_{n=0}^\infty$ is a basis of $L^2$, the eigenfunction expansion $\psi_{\vec k}^\textrm{exp}$ satisfies \citep[e.g.,][] {brown_fourier_1993}
\begin{equation}\label{mean-square-equality}
	\intz |\psi_{\vec k}(z) - \psi_{\vec k}^\textrm{exp}(z)|^2 \mathrm{d}z = 0.
\end{equation}
Significantly, the vanishing of the integral \eqref{mean-square-equality} does not imply $\psi_{\vec k} = \psi_{\vec k}^\textrm{exp}$ because the two functions can still differ at some points $z \in [z_1,z_2]$.

In the following, we will only consider eigenfunctions expansions that diagonalize the energy and potential enstrophy integrals of section \ref{subsection:decomposing-energy-enstrophy}.

\subsection{The four possible $L^2$ modes}\label{SS-L2-modes}

There are only four $L^2$ bases in quasigeostrophic theory that diagonalize the energy and potential enstrophy integrals. All four sets of corresponding normal modes satisfy the differential equation
\begin{equation}
	- \d{}{z} \left(\frac{f_0^2}{N^2} \d{F}{z} \right) = \lambda \, F 
	 \quad z \in (z_1,z_2),
\end{equation}
but differ in boundary conditions according to the following (recall that $z_{1}$ is the bottom and $z_{2}$ the surface). 
\begin{itemize}
	\item \textit{Baroclinic modes}: Vanishing vertical velocity at both boundaries (Neumann),
		\begin{equation}
			\d{F(z_1)}{z} =0, \quad \d{F(z_2)}{z} = 0.
		\end{equation}
		\item \textit{Anti-baroclinic modes}: Vanishing pressure\footnote{Recall that the geostrophic streamfunction $\psi$ is proportional to pressure \cite[e.g.,][section 5.4]{Vallis2017}.} at both boundaries (Dirichlet),
		\begin{equation}
			F(z_1)=0, \quad F(z_2) = 0.
		\end{equation} 
	\item \textit{Surface modes}: (mixed Neumann/Dirichlet)
		\begin{equation}
			F(z_1) = 0, \quad \d{F(z_2)}{z} = 0.
		\end{equation}
	\item \textit{Anti-surface modes}: (mixed Neumann/Dirichlet)
		\begin{equation}
			\d{F(z_1)}{z} = 0, \quad F(z_2) = 0.
		\end{equation}
\end{itemize}
All four sets of modes are missing two modes. Each boundary condition of the form
\begin{equation}
	\d{F(z_j)}{z} = 0,
\end{equation}
implies a missing step-mode while a boundary condition of the form
\begin{equation}
	F(z_j) = 0,
\end{equation}
implies a missing boundary-trapped exponential mode [see the $\gamma_1\rightarrow \infty$ limit leading to equation \eqref{zero-boundary}].

\subsection{Expansions with $L^2$ modes}\label{SS-expansions-L2}

We here examine the pointwise convergence and the term-by-term differentiability of eigenfunction expansions in terms of $L^2$ modes. 
These properties of $L^2$ Sturm-Liouville expansions may be found in \cite{brown_fourier_1993} and \cite{levitan_introduction_1975}.\footnote{In particular, chapters 1 and 8 in \cite{levitan_introduction_1975} show that eigenfunction expansions have the same pointwise convergence and differentiability properties as the Fourier series with the analogous boundary conditions. The behaviour of Fourier series is discussed in \cite{brown_fourier_1993}.}

\subsubsection{Pointwise equality on $[z_1,z_2]$}

For all four sets of $L^2$ modes, if $\psi_{\vec k}$ is twice continuously differentiable in $z$, we obtain pointwise equality in the interior  
\begin{equation}
	\psi_{\vec k}(z) = \psi_{\vec k}^\textrm{exp}(z) \quad \textrm{for } z\in(z_1,z_z).
\end{equation}
The behaviour at the boundaries depends on the boundary conditions the modes $F_n$ satisfy. If the $F_n$ satisfy the vanishing pressure boundary condition at the $j$th boundary
\begin{equation}
	F_n(z_j) = 0 
\end{equation}
then
\begin{equation}
	\psi_{\vec k}^\textrm{exp}(z_j) = 0
\end{equation}
regardless of the values of $\psi_{\vec k}(z_j)$. It follows that $\psi_{\vec k}^\textrm{exp}$ will be continuous over $(z_1,z_2)$ and will generally have a jump discontinuity at the boundaries [unless $\psi_{\vec k}(z_j)=0$ for $j=1,2$]. In contrast, if the $F_n$ satisfy a zero vertical velocity boundary condition at the $j$th boundary
\begin{equation}
	\d{F_n(z_j)}{z} = 0
\end{equation} 
then 
\begin{equation}
	 \psi_{\vec k}(z_j) = \psi_{\vec k}^\textrm{exp}(z_j).
\end{equation}
Consequently, of the four sets of $L^2$ modes, only with the baroclinic modes do we obtain the pointwise equality $\psi_{\vec k}(z) = \psi_{\vec k}^\textrm{exp}(z)$ on the \emph{closed} interval $[z_1,z_2]$. 

However, even though $\psi_{\vec k}^\textrm{exp}$ converges pointwise to $\psi_{\vec k}$ when the baroclinic modes are used, we are unable to represent the corresponding velocity $w_{\vec k}$ in terms of the vertical velocity baroclinic modes since the modes vanish at both boundaries. Analogous considerations show that only the anti-baroclinic vertical velocity modes (see appendix B) can represent arbitrary vertical velocities.

\subsubsection{Differentiability of the series expansion}

Although  we obtain pointwise equality on the whole interval $[z_1,z_2]$ with the streamfunction baroclinic modes, we have lost two degrees of freedom in the expansion process. Recall that the degrees of freedom in the quasigeostrophic phase space are determined by the potential vorticity. The volume potential vorticity, $q_{\vec k}$, is associated with the $L^2$ degrees of freedom while the surface potential vorticities, $r_{1\vec k}$ and $r_{2 \vec k}$, are associated with the $\C^2$ degrees of freedom.

The series expansion $\psi_{\vec k}^\textrm{exp}$ of $\psi_{\vec k}$ in terms of the baroclinic modes is differentiable in the interior $(z_1,z_2)$. Consequently, we can differentiate the $\psi_{\vec k}^\textrm{exp}$ series for $z\in(z_1,z_2)$ to recover $q_{\vec k}$, that is,
\begin{equation}\label{q-trad-expansion}
	q_{\vec k}  = \sum_{n=0}^\infty  q_{\vec k n}\, F_n,
\end{equation}
where 
\begin{equation}\label{qn-trad-expansion}
	q_{\vec k n} = - (k^2 + \lambda_n) \, \psi_{\vec k n}.
\end{equation}
However, $\psi_{\vec k}^\textrm{exp}$ is not differentiable at the boundaries, $z=z_1,z_2$, so we are unable to recover the surface potential vorticities, $r_{1 \vec k}$ and $r_{2 \vec k}$. Two degrees of freedom are lost by projecting onto the baroclinic modes.\footnote{To see that $\psi_{\vec k}^\textrm{exp}$ is non-differentiable at $z=z_1,z_2$, suppose that the series $\psi_{\vec k}^\textrm{exp}$ is differentiable and that $\mathrm{d}\psi_{\vec k}(z_j)/\mathrm{d}z \neq 0$ for $j=1,2$. But then $$ 0 \neq \d{\psi_{\vec k}(z_j)}{z} = \sum_{n=0}^\infty \psi_{\vec k n} \d{F_n(z_j)}{z} = 0,$$ which is a contradiction.}

The energy at wavevector $\vec k$ is indeed partitioned between the modes,
\begin{align}\label{trad-energy}
	E_{\vec k} = \sum_{n=0}^\infty (k^2 + \lambda_n) \, \psi_{\vec k n},
\end{align}
and similarly for the potential enstrophy,
\begin{align}\label{trad-enstrophy-expansion}
		Z_{\vec k} = \sum_{n=0}^\infty (k^2 + \lambda_n)^2 \psi_{\vec k n}.
\end{align}
However, as we have lost $r_{1 \vec k}$ and $r_{2 \vec k}$ in the projection process, the surface potential enstrophies $Y_{1 \vec k}$ and $Y_{2 \vec k}$, defined in equation \eqref{surface-enstrophy-rjk}, are not partitioned.

\subsection{Quasigeostrophic $L^2 \oplus \C^2$ modes}

Consider the eigenvalue problem
\begin{subequations}\label{SV-eigen}
\begin{align}
			- \d{}{z} \left(\frac{f_0^2}{N^2} \d{G}{z}\right) = \lambda \, G \quad \textrm{for } z\in(z_1,z_2)&\\
			-k^2 G + (-1)^j D_j^{-1} \, \left(\frac{f_0^2}{N^2}\d{G}{z}\right) = \lambda \, G \quad \textrm{for } z=z_j&
\end{align}
\end{subequations}
where $D_1$ and $D_2$ are non-zero real constants. This eigenvalue problem differs from the generalized Rhines eigenvalue problem \eqref{Rhines-eigen} in that $D_j$ are generally not equal to the $\gamma_j$ defined in equation \eqref{gamma}. The inner product $\inner{\cdot}{\cdot}$ induced by the eigenvalue problem \eqref{SV-eigen} is given by equation \eqref{Rhines-inner-general} with the $\gamma_j$ replaced by the $D_j$.

\cite{smith_surface-aware_2012} investigate an equivalent eigenvalue problem to \eqref{SV-eigen} and conclude that, when $D_1$ and $D_2$ are positive, the resulting eigenfunctions form a basis of $L^2\oplus \C^2 $. However, such a completeness result is insufficient for the Rossby wave problem of  section \ref{SS-rhines}, in which case $D_j=\gamma_j$ and $\gamma_j$ can be negative. 


\subsection{Expansion with $L^2\oplus \C^2$ modes}

When $D_1,D_2$ in the eigenvalue problem \eqref{SV-eigen} are finite and non-zero, the resulting eigenmodes $\{G_n\}_{n=0}^\infty$ form a basis for the vertical structure phase space $L^2\oplus \C^2$. Thus, the projection 
\begin{equation}
	\psi_{\vec k}^\textrm{exp}(z) = \sum_{n=0}^\infty \psi_{\vec k n} \, G_n(z)
\end{equation}
where
\begin{equation}
	\psi_{\vec k n} =  \frac{\inner{\psi_{\vec k}}{G_n}}{\inner{G_n}{G_n}}
\end{equation}
is an \emph{equivalent} representation of $\psi_{\vec k}$. Not only do we have pointwise equality
\begin{equation}
	\psi_{\vec k}(z) = \psi_{\vec k}^\textrm{exp}(z) \quad \textrm{for } z\in[z_1,z_2],
\end{equation}
but the series $\psi_{\vec k}^\textrm{exp}$ is also differentiable on the \emph{closed} interval $[z_1,z_2]$ [the case of $D_j>0$ is due to \cite{fulton_two-point_1977} whereas the case of $D_j<0$ is from chapter  \ref{Ch-modes}.]. Thus given $\psi_{\vec k}^\textrm{exp}$, we can differentiate to obtain both $q_{\vec k}$ and $r_{j \vec k}$ and thereby recover all quasigeostrophic degrees of freedom. Indeed, we have 
\begin{align}
	q_{\vec k}(z,t) &= \sum_{n=0}^\infty q_{\vec k n}(t) \, G_n(z),\\
	r_{j \vec k}(t) &= \sum_{n=0}^\infty r_{j \vec k n}(t) \, G_n(z_j),
\end{align}
where
\begin{align}
	q_{\vec k n} &= - (k^2+\lambda_n) \frac{\inner{\Psi}{G_n}}{\inner{G_n}{G_n}}, \\
	\label{rj_q_relation}
	r_{j \vec k n} &= D_j \, q_{\vec k n},
\end{align}
for $j=1,2$. 

In addition, the energy, $E_{\vec k}$, volume potential enstrophy, $Z_{\vec k}$, and surface potential enstrophies, $Y_{1\vec k}$ and $Y_{2\vec k}$, are partitioned (diagonalized) between the modes
\begin{align}\label{Rhines-energy}
	E_{\vec k} &=  \sum_{n=0}^\infty (k^2 + \lambda_n) \psi_{\vec k n}, \\
	Z_{\vec k} + \frac{1}{H} \sum_{j=1}^2 \frac{1}{D_j} Y_{j\vec k} &= \sum_{n=0}^\infty (k^2 + \lambda_n)^2 \psi_{\vec k n}.
\end{align}

\section{Discussion}\label{S-discussion}

The traditional baroclinic modes are useful since they are the vertical structures of linear Rossby waves in a resting ocean and they can be used for wave-turbulence studies such as in \cite[e.g.,][]{hua_numerical_1986,smith_scales_2001}. Therefore, any basis we choose should not only be complete in $L^2\oplus \C^2$, but should also represent the vertical structure of Rossby waves in the linear (quiescent ocean) limit. Such a basis would then amenable to wave-turbulence arguments and can permit a dynamical interpretation of field observations. The basis suggested by \cite{smith_surface-aware_2012} does not correspond to Rossby waves in the linear limit. It is a mathematical basis with two-independent parameters $D_1,D_2>0$ that diagonalizes the energy and potential enstrophy integrals. 

The Rhines modes of section \ref{SS-rhines-bottom} offer a basis of $L^2\oplus \C$ that corresponds to Rossby wave over topography in the linear limit. These Rhines modes do not contain any free parameters. Indeed, if we set $D_2=0$ in the eigenvalue problem \eqref{SV-eigen} and let $D_1=\gamma_1$, we then obtain the Rhines modes. Note that since  $D_1=\gamma_1=\gamma_1(\vec k)$ may be negative, the \cite{smith_surface-aware_2012} modes do not apply. Instead, the case of negative $D_j$ is examined in this chapter and in chapter \ref{Ch-modes}.

However, the Rhines modes, as a basis of $L^2\oplus \C$ are not a basis of the whole vertical structure phase space $L^2\oplus \C^2$ since they exclude surface buoyancy anomalies at the upper boundary. To solve this problem, we can use the modes of the eigenvalue problem \eqref{SV-eigen} with $D_1=\gamma_1$ but leaving $D_2$ arbitrary as in \cite{smith_surface-aware_2012}. Although this basis now only has one free parameter, $D_2$, it still does not correspond to Rossby waves in the linear limit. We can even eliminate this free parameter by interpreting surface buoyancy gradients as topography e.g., by defining 
\begin{equation}
	g_\textrm{buoy} = \left[\frac{f_0^2}{N^2} \d{\psi_B}{z}\right]_{z=z_2}
\end{equation}
where $\psi_B$ corresponds to the background flow, and using $g_\textrm{buoy}$ in place of $g_2$ in the generalized Rhines modes of section \ref{SS-rhines}. However the waves resulting from topographic gradients generally differ from those resulting from vertically-sheared mean-flows (in particular, one must take into account advective continuum modes) and so this resolution is artificial.

\subsection*{Galerkin approximations with $L^2$ modes}
Both the $L^2$ baroclinic modes and the $L^2\oplus \C^2$ modes have infinitely many degrees of freedom. In contrast, numerical simulations only contain a finite number of degrees of freedom. Consequently, it should be possible to use baroclinic modes to produce a Galerkin approximation to quasigeostrophic theory with non-trivial boundary dynamics. Such an approach has been proposed by \cite{rocha_galerkin_2015}.

Projecting $\psi_{\vec k}$ onto the baroclinic modes produces a series expansion, $\psi_{\vec k}^\textrm{exp}$, that is differentiable in the interior but not at the boundaries. By differentiating the series \emph{in the interior} we obtain equation \eqref{qn-trad-expansion} for $q_{\vec k n}$. If instead we integrate by parts twice and avoid differentiating $\psi^\textrm{exp}_{\vec k}$, we obtain
\begin{equation}\label{rocha-qn2}
	q_{\vec k n} = -(k^2+\lambda_n) \psi_{\vec k n} - \frac{1}{H} \sum_{j=1}^2 r_{j\vec k} \, F_n(z_j).
\end{equation}
The two expressions \eqref{qn-trad-expansion} and \eqref{rocha-qn2} are only equivalent when $r_{1\vec k}=r_{2\vec k}=0$. For non-zero $r_{1\vec k}$ and $r_{2\vec k}$, the singular nature of the expansion means we have a choice between equations \eqref{qn-trad-expansion} and \eqref{rocha-qn2}.

By choosing equation \eqref{rocha-qn2} and avoiding the differentiation of $\psi^\textrm{exp}_{\vec k}$, \cite{rocha_galerkin_2015} produced a least-squares approximation to quasigeostrophic dynamics that conserves the surface potential enstrophy integrals \eqref{surface-enstrophy}. This is a conservation property underlying their approximation's success.

\section{Conclusion}
\label{S-conclusion}

In this chapter, we have studied all possible non-continuum collections of streamfunction normal modes that diagonalize the energy and potential enstrophy. There are four possible $L^2$ modes: the baroclinic modes, the anti-baroclinic modes, the surface modes, and the anti-surface modes. Additionally, we explored the properties of the family of $L^2\oplus\C^2$ bases introduced by \cite{smith_surface-aware_2012} which contain two free parameters $D_1,D_2$ and generalized the family to allow for $D_1,D_2<0$. This generalization is necessary for Rossby waves in the presence of bottom topography. If $D_j=\gamma_j$, where $\gamma_j$ is given by equation \eqref{gamma} for $j=1,2$, the resulting modes are the vertical structure of Rossby waves in a quiescent ocean with prescribed boundary buoyancy gradients (i.e., topography). We have also examined the associated $L^2$ and $L^2\oplus \C^2$ vertical velocity modes.

For the streamfunction $L^2$ modes, only the baroclinic modes are capable of converging pointwise to any quasigeostrophic state on the interval $[z_1,z_2]$, whereas for the vertical velocity $L^2$ modes, only the anti-baroclinic modes are capable. However, in both cases, the resulting eigenfunction expansion is not differentiable at the boundaries, $z=z_1,z_2$. Consequently, while we can recover the volume potential vorticity density, $q_{\vec k}$, we cannot recover the surface potential vorticity densities, $r_{1\vec k}$ and $r_{2 \vec k}$. Thus, we lose two degrees of freedom when projecting onto the baroclinic modes. In contrast, $L^2\oplus \C^2$ modes provide an equivalent representation of the function in question. Namely, the eigenfunction expansion is differentiable on the closed interval $[z_1,z_2]$ so that we can recover $q_{\vec k}$, $r_{1 \vec k}$, $r_{2 \vec k}$ from the series expansion.

We have also introduced a new set of modes, the Rhines modes, that form a basis of $L^2\oplus \C$ and correspond to the vertical structures of Rossby waves over topography. A natural application of these normal modes is to the study of weakly non-linear wave-interaction theories of geostrophic turbulence found in \cite{fu_nonlinear_1980} and \cite{smith_scales_2001}, extending their work to include bottom topography.
 
\begin{subappendices}

\section{Sturm-Liouville eigenvalue problems with $\lambda$-dependent boundary conditions}\label{A-math}

Consider the differential eigenvalue problem
\begin{equation} \label{EigenDiff}
	-\d{}{z}\left(p \, \d{F}{z}\right) + q \, F = \lambda \, r \, F, 
\end{equation}
in the interval $(z_1,z_2)$ with boundary conditions
\begin{equation}\label{EigenB}
	- \left[a_j F - b_j  \left(p  \d{F}{z}\right)(z_j)\right] =
		 \lambda \left[c_j  F(z_j) - d_j \left(p \d{F}{z}\right)(z_j)\right]
\end{equation}
for $j=1,2$, where $1/p(z), q(z), r(z)$ are real-valued integrable functions and $a_j,b_j,c_j,d_j$ are real numbers. Moreover, we assume $p>0, r>0$, that $p$ and $r$ are twice continuously differentiable, that $q$ is continuous, and that $(a_j,b_j)\neq(0,0)$. 

Define the two boundary parameters $D_j$ for $j=1,2$ by
\begin{equation}
	D_j = (-1)^{j+1} \left(a_j\,d_j - b_j\,c_j\right).
\end{equation}
Then the natural inner product for the eigenvalue problem is given by
\begin{equation}\label{appendix-inner}
	\inner{F}{G} = \intz F\,G\,\mathrm{d}z + \sum_{j=1}^{2} D_j^{-1}\left(\mathcal{C}_jF\right) \,\left( \mathcal{C}_j G\right)
\end{equation}
where the boundary operator $\mathcal{C}_j$ is defined by
\begin{equation}
	\mathcal{C}_j F = c_j \, F(z_j) - d_j \,\left( p\,\d{F}{z}\right)(z_j).
\end{equation}

The eigenvalue problem takes place in the space $L^2\oplus \C^N$ where $N$ is the number of non-zero $D_j$. Assume for the following that $N=2$; the case when $N=1$ is similar. If 
\begin{equation}\label{D_j_positive}
	D_j>0
\end{equation}
for $j=1,2$ then the inner product \eqref{appendix-inner} is positive definite\textemdash that is, all non-zero $F$ satisfy $\inner{F}{F}>0$. Therefore $L^2\oplus \C^2$, equipped with the inner product \eqref{appendix-inner}, is a Hilbert space. In this Hilbert space settings, the eigenfunctions $\{F_n\}_{n=0}^\infty$ form and orthonormal basis of $L^2\oplus \C^2$ and that the eigenvalues distinct and bounded below as in equation \eqref{eigen-order} \citep{evans_non-self-adjoint_1970,walter_regular_1973,fulton_two-point_1977}. The appendix of \cite{smith_surface-aware_2012} also proves this result in the case when $d_1=d_2=0$. The convergence properties of normal mode expansions in this case are due to \cite{fulton_two-point_1977}.

However, as we observe in section \ref{S-linear}, the $D_j>0$ case is not sufficient for the Rossby wave problem with topography. In general, the space $L^2\oplus \C^2$ with the indefinite inner product \eqref{appendix-inner} is a Pontryagin space \citep[see][]{iohvidov_spectral_1960,bognar_indefinite_1974}. Pontryagin spaces are analogous to Hilbert spaces except that they have a finite-dimensional subspace of elements satisfying $\inner{F}{F}<0$. If $\Pi$ is a Pontryagin space with inner product $\inner{\cdot}{\cdot}$, then $\Pi$ admits a decomposition 
\begin{equation}
	\Pi = \Pi^{+} \oplus \Pi^{-},
\end{equation}
where $\Pi^{+}$ is a Hilbert space under the inner product $\inner{\cdot}{\cdot}$ and $\Pi^{-}$ is a finite-dimensional Hilbert space under the inner product $-\inner{\cdot}{\cdot}$. If $\{G_n\}_{n=0}$ is an orthonormal basis for the Pontryagin space $\Pi$, then an element $\Psi\in\Pi$ can be expressed
\begin{equation}\label{pont-exp}
	\Psi = \sum_{n=0}^{\infty} \frac{\inner{\Psi}{G_n}}{\inner{G_n}{G_n}}.
\end{equation}
Even though $\{G_n\}_{n=0}^\infty$ is normalized, the presence of $\inner{G_n}{G_n}=\pm 1$ in the denominator of equation \eqref{pont-exp} is essential since this term may be negative.

One can rewrite the eigenvalue problem \eqref{EigenDiff}\textendash\eqref{EigenB} in the form $\mathcal{L}\, F = \lambda \, F$ for some operator $\mathcal{L}$ \cite[e.g.,][]{langer_spectral_1991}. The operator $\mathcal{L}$ is a positive operator if 
\begin{itemize}
	\item  for the $\lambda$-dependent boundary conditions, we have 
			\begin{equation}\label{left-definite-1}
				\frac{a_i \, c_i}{D_i} \leq 0, \quad \frac{b_i \, d_i}{D_i} \leq 0, \quad (-1)^i \frac{a_i \, d_i}{D_i} \geq 0
			\end{equation} 
	\item for the $\lambda$-independent boundary conditions, we have 
			\begin{equation}\label{left-definite-2}
				b_i = 0  \quad \text{or} \quad  (-1)^{i+1}\frac{a_i}{b_i} \geq 0 \quad \text{ if } b_i \neq 0.
			\end{equation}
\end{itemize}
Chapter \ref{Ch-modes} shows that, if $\mathcal{L}$ is positive, the eigenfunctions $\{F_n\}_{n=0}^\infty$ of the eigenvalue problem \eqref{EigenDiff}\textendash\eqref{EigenB} form an orthonormal basis of $L^2\oplus{\C}^2$, that the eigenvalues are real, and that the eigenvalues are ordered as in equation \eqref{eigen-order}. Moreover, since $\mathcal{L}$ is positive, we have the relationship
\begin{equation}
	\lambda \inner{F}{F} = \inner{\mathcal{L}F}{F} \geq 0.
\end{equation}
Finally, chapter \ref{Ch-modes} shows that the normal mode expansion results of \cite{fulton_two-point_1977} extend to this case as well.

\section{Polarization relations and the vertical velocity eigenvalue problem}

\subsection{Polarization relations}
 

The linear quasigeostrophic vorticity and buoyancy equations, computed about a resting background state, are
\begin{align}
	\label{zeta-equation-linear}
	\pd{\zeta}{t} + \beta \, \pd{\psi}{x} &= f_0 \pd{w}{z}, \\
	\label{b-equation-linear}
	\pd{b}{t} &= -N^2 \, w,
\end{align}
in the interior $z\in(z_1,z_2)$. The vorticity, $\zeta$, and buoyancy, $b$, are given in terms of the geostrophic streamfunction via 
\begin{align}
	\label{zeta}
	\zeta = \lap \psi \\
	\label{b}
	b = f_0 \pd{\psi}{z}.
\end{align} 
The no-normal flow at the lower and upper boundaries implies
\begin{equation}\label{vv-boundary}
	f_0 \, w = \vec u \cdot \grad g_j,
\end{equation}
for $j=1,2$. Substituting equation \eqref{vv-boundary} into the linear buoyancy equation \eqref{b-equation-linear}, yields the boundary conditions
\begin{equation}\label{buoyancy-boundary-evolution}
	\partial_t b + \vec u \cdot \grad \left(\frac{N^2}{f_0} g_j\right) = 0 \quad \textrm{for } z=z_j.
\end{equation}

We now assume solutions of the form 
\begin{equation}
	\psi= \hat \psi(z) \, e_{\vec k}(\vec x) \, \mathrm{e}^{-\mathrm{i}\omega t},
\end{equation}
and similarly for $w$. Substituting such solutions into equations \eqref{zeta-equation-linear}\textendash\eqref{b-equation-linear} and using $\vec u = \unit z \times \grad \psi$ gives
\begin{align}
	\label{psiz-w}
	\d{\hat \psi}{z} &= -\mathrm{i} \, \frac{N^2}{f_0\, \omega} \hat w\\
	\label{wz-psi}
	\d{\hat w}{z} &= \mathrm{i} \, \frac{\omega}{f_0} \left[k^2 + \frac{\beta \, k_x}{\omega} \right] \hat \psi,
\end{align}
for $z \in (z_1,z_2)$. 
At the boundaries $z=z_1,z_2$, we use equations \eqref{vv-boundary} and \eqref{buoyancy-boundary-evolution} to obtain
\begin{align}
	\label{bhat-boundary}
	\hat b &= - \frac{N^2}{f_0 \, \omega} \, \hat{\vec u} \cdot \grad g_j \\
	\label{what-boundary}
	\hat w &= \mathrm{i} \frac{1}{f_0} \, \hat{\vec u} \cdot \grad g_j.
\end{align}

\subsection{The vertical velocity eigenvalue problem}

Taking the vertical derivative of \eqref{wz-psi} and using \eqref{psiz-w} yields
\begin{equation}\label{vertical-velocity-interior}
	- \dd{\chi}{z} = \lambda \left(\frac{N^2}{f_0^2}\right)  \chi,
\end{equation}
where $\hat w = w_0 \, \chi(z)$ and $\chi$ is non-dimensional. The boundary conditions at $z = z_{j}$ are
\begin{equation}\label{vertical-velocity-boundary-again}
	- (-1)^j \, \gamma_j \, k^2 \, \d{\chi}{z} = \lambda \left[\chi + (-1)^j \, \gamma_j \, \d{\chi}{z} \right],
\end{equation}
as obtained by using equations \eqref{wz-psi} and \eqref{psiz-w} in boundary conditions \eqref{Rhines-eigen-boundary}. The orthonormality condition is
\begin{align}\label{vertical-velocity-ortho}
	\pm \delta_{mn} = \frac{1}{H} \left[ \intz \chi_m \, \chi_n \left( \frac{N^2}{f_0^2}\right) \mathrm{dz}   -\frac{1}{k^2} \sum_{j=1}^2 \frac{1}{\gamma_j} \left(\mathcal{C}_j \chi_m \right) \left(\mathcal{C}_j \chi_n\right)  \right],
\end{align}
where 
\begin{equation}
	\mathcal{C}_j \chi = \chi(z_j) + (-1)^j \, \gamma_j \d{\chi(z_j)}{z}.
\end{equation}

When only one boundary condition is $\lambda$-dependent (e.g., $\gamma_2=0$) the eigenvalue problem \eqref{vertical-velocity-interior}\textendash\eqref{vertical-velocity-boundary-again} satisfies equation \eqref{D_j_positive} when $\gamma_1>0$ and equations \eqref{left-definite-1} and \eqref{left-definite-2} when $\gamma_1<0$; thus the reality of the eigenvalues and the completeness results follow. However, when both boundary conditions are $\lambda$-dependent the problem no longer satisfies these conditions for all $\vec k$. Instead, in this case, one exploits the relationship between the vertical velocity eigenvalue problem \eqref{vertical-velocity-interior}\textendash\eqref{vertical-velocity-boundary-again} and the streamfunction problem \eqref{Rhines-eigen-interior}\textendash\eqref{Rhines-eigen-boundary} given by equations \eqref{psiz-w} and \eqref{wz-psi} to conclude that the two problem have the identical eigenvalues (for $\omega \neq 0$) and then use the simplicity of the eigenvalues to conclude that no generalized eigenfunctions can arise.

\subsection{The vertical velocity $L^2$  modes} 
 
Analogously with the streamfunction $L^2$ modes, we have the following sets of vertical velocity $L^2$ modes.

\begin{itemize}
	\item \textit{Baroclinic modes}: Vanishing vertical velocity at both boundaries,
	\begin{equation}
		\chi(z_1) = 0, \quad \chi(z_2) = 0.
	\end{equation}
	\item \textit{Anti-baroclinic modes}: Vanishing pressure at both boundaries,
	\begin{equation}
		\d{\chi(z_1)}{z} = 0, \quad \d{\chi(z_2)}{z}=0.
	\end{equation}
	\item \textit{Surface modes}:
	\begin{equation}
		\d{\chi(z_1)}{z} = 0, \quad \chi(z_2) = 0.
	\end{equation}
	\item \textit{Anti-surface modes}:
	\begin{equation}
		\chi(z_1) =0, \quad \d{\chi(z_2)}{z} = 0. 
	\end{equation} 
\end{itemize} 

\end{subappendices}

\chapter{Conclusion}\label{Ch-conc}

\section{Modal truncations with non-isentropic boundaries}\label{SS-nogo_thm}

We now show that no energy conserving modal truncation of the quasigeostrophic equations is possible in the presence of non-isentropic boundaries. Consider a fluid with some linear bottom topography, $h_1$, but with an isentropic upper boundary. Then the appropriate vertical modes are given by the Rhines eigenvalue problem \eqref{Rhines-eigen_bottom}. We obtain modes $\varphi_{\vec k 0}, \, \varphi_{
\vec k 1}, \, \varphi_{\vec k 2},\, \dots$ with corresponding eigenvalues
\begin{equation}
	\lambda_{\vec k 0}  < \lambda_{\vec k 1} < \lambda_{\vec k 2} < \cdots \rightarrow \infty.
\end{equation}
The eigenfunctions are orthonormal with respect to the inner product 
\begin{equation}\label{eq:inner_conc}
	\inner{F}{G}_{\vec k} = \frac{1}{H}\left( \intz F \, G \, \mathrm{d}z + \gamma_1(\vec k) \, F(z_1)\, G(z_1) \right),
\end{equation}
where the lower boundary parameter is
\begin{align}
	\gamma_1(\vec k) &= \frac{\unit z\cdot\left(\vec k \times f_0\grad  \,h_1\right)}{\unit z\cdot\left(\vec k \times \grad f\right)}.
\end{align}
Given a streamfunction satisfying $\partial_z \psi = 0$ at the upper boundary, we have the expansion
\begin{equation}\label{rhines-evolution}
	\psi_{\vec k} (z,t) = \sum_{n=0}^\infty \psi_{\vec k n}(t) \, \varphi_{\vec k n}(z),
\end{equation}
where $\psi_{\vec k}(z,t)$ is the amplitude of the horizontal Fourier expansion \eqref{horizontal_fourier_amplitudes}, and 
\begin{equation}
	\psi_{\vec k n} = \frac{\inner{\psi_{\vec k}}{\varphi_{\vec k n}}_{\vec k}}{\inner{\varphi_{\vec k n}}{\varphi_{\vec k n}}_{\vec k}}
\end{equation}
is the amplitude of the vertical mode $n$ in the expansion of the vertical structure $\psi_{\vec k}(z)$.

Substituting the horizontal Fourier expansion \eqref{horizontal_fourier_amplitudes} into the time-evolution equations
	\begin{align}
	\label{eq:time_q}
		\pd{q}{t} + \beta \pd{\psi}{x} +  \J{\psi}{q}  &= 0 \quad \text{for } z\in(z_1,z_2),\\
	\label{eq:time_r}
		\pd{r_1}{t} + \unit z\cdot\left(f_0 \grad  \, h_1 \times \grad \psi\right) +  \J{\psi}{r_1}  &= 0 \quad \text{at } \text{for } z=z_1,
\end{align}
we obtain
	\begin{align}
		\label{eq:time_q_f}
		\pd{q_{\vec k} }{t} + \mathrm{i}\, \unit z \cdot \left(\vec k  \times \grad f  \right) \psi_{\vec k} +  \sum_{\vec a \vec b} A_{\vec a \vec b \vec k} \, \psi_{\vec a} \,  q_{\vec b} = 0  &\quad \text{for } z\in(z_1,z_2),\\
		\label{eq:time_r_f}
		\pd{r_{1 \vec k}}{t} + \mathrm{i} \, \unit z \cdot \left(\vec k \times  f_0 \grad  h_1  \right) \psi_{\vec k} +  \sum_{\vec a \vec b} A_{\vec a \vec b \vec k} \, \psi_{\vec a} \,  r_{1 \vec b} = 0 &\quad \text{at } z=z_1,
	\end{align}
where the horizontal coupling coefficient is given by
\begin{equation}
	A_{\vec a \vec b \vec k} = - \unit z \cdot \left(\vec a \times \vec b\right) \, \delta_{\vec a + \vec b, \vec k}.
\end{equation}
To combine the two Fourier space time-evolution equations \eqref{eq:time_q_f}  and  \eqref{eq:time_r_f} into a single equation for the modal amplitudes, we expand the interior potential vorticity as 
	\begin{equation}
		q_{\vec k n}(z) = \sum_{n=0}^\infty q_{\vec k n} \, \varphi_{\vec k n}(z) \quad \text{for } z\in(z_1,z_2),
	\end{equation}
	and the surface potential vorticity as 
	\begin{equation}
		r_{1 \vec k n} = \sum_{n=0}^\infty r_{1 \vec k n} \, \varphi_{\vec k n}(z_1),
	\end{equation}
	where $q_{\vec k}$ and $r_{1 \vec k}$ are related to $\psi_{\vec k}$ through the Fourier transforms of their physical space diagnostic relation [equations \eqref{pv-hor-transform}], and where
	\begin{equation}\label{eq:qkn_conc}
		 q_{\vec k n} =- \lambda_{\vec k n} \, \psi_{\vec k n}
	\end{equation}
	is the modal amplitude of the interior potential vorticity
	 and  
	 \begin{equation}\label{eq:rkn_conc}
	 	r_{1 \vec k n} = - \gamma_1(\vec k) \, \lambda_{\vec k n} \, \psi_{\vec k n}
	 \end{equation}
	is the modal amplitude of the surface potential vorticity. 
Then substituting these two series expansions into the Fourier space time-evolution equations \eqref{eq:time_q_f} and \eqref{eq:time_r_f} and using the identities \eqref{eq:qkn_conc} and \eqref{eq:rkn_conc}, we obtain
	\begin{equation}
		\sum_{n=0}^\infty \left[ \d{q_{\vec k n}}{t} + \mathrm{i} \, \beta \, k_x \psi_{\vec k n} \right] \varphi_{\vec k n} + \sum_{\vec a \vec b} \sum_{l m } A_{\vec a \vec b \vec k}\, \psi_{\vec a l}\, q_{\vec b m}\, \varphi_{\vec a l}\,  \varphi_{\vec b m} = 0,
	\end{equation}
	if $\gamma_1(\vec k ) \neq 0$. Applying the inner product $\inner{\varphi_{\vec k n}}{\cdot}_{\vec k}$ [equation \eqref{eq:inner_conc}] to this equation then gives the time-evolution equation for modal amplitudes
	\begin{equation}\label{eq:time_modal_boundary}
		\inner{ \varphi_{\vec k n}}{\varphi_{\vec k n}}_{\vec k} \left( \d{q_{\vec k n}}{t} + \mathrm{i} \, \beta \, k_x \, \psi_{\vec k n} \right)+  \sum_{\vec a , \vec b} \sum_{ l m} A_{\vec a \vec b \vec k} \, \varepsilon_{l m n}^{\vec a \vec b \vec k} \, \psi_{\vec a l} \, q_{\vec b m} = 0,
	\end{equation}
	where the vertical coupling coefficient is
	\begin{equation}\label{eq:vertical_coupling_boundary}
		\varepsilon_{l m n}^{\vec a \vec b \vec k} = \inner{\varphi_{\vec a l} \, \varphi_{\vec b m}}{\varphi_{\vec k n}}_{\vec k} = \frac{1}{H} \left( \int_{z_1}^{z_2} \varphi_{\vec a l}\,  \varphi_{\vec b m} \, \varphi_{\vec k n} \mathrm{d}z  + \gamma_1(\vec k) \, \left[\varphi_{\vec a l}\,  \varphi_{\vec b m} \, \varphi_{\vec k n} \right]|_{z=z_1}  \right).
	\end{equation}

With isentropic boundaries, the vertical coupling coefficient \eqref{eq:vertical_coupling_boundary} is independent of the wavectors of the interacting modes. However, with non-isentropic boundaries, the vertical coupling coefficient depends on both the propagation direction as well as the horizontal length scale of the interacting modes. Multiplying the modal time-evolution equation \eqref{eq:time_modal_boundary} by the complex conjugate, $\psi_{\vec k n}^*$, taking the real part, and then summing over $\vec k$ and $n$ gives the energy equation
	\begin{equation}
		\d{}{t} \left( \sum_{\vec k n} \frac{1}{2} \lambda_{\vec k n} \inner{\varphi_{\vec k n}}{\varphi_{\vec kn}}_{\vec k}  \abs{\psi_{\vec k n}}^2  \right)
		+ \sum_{\vec a \vec b \vec k }\sum_{l m n} A_{\vec a \vec b \vec k} \, \varepsilon_{l m n}^{\vec a \vec b \vec k}\,  \Re\left\{\psi_{\vec a l} \, q_{\vec k - \vec a \, m} \psi_{\vec k n}^* \right\} = 0.
	\end{equation}
	If we truncate at $n=N$, the nonlinear sum does not vanish because the modal interaction
	\begin{equation}
		(\vec a, l) + (\vec b, m) \rightarrow (\vec k, n),
	\end{equation} 
	no longer provides the opposite contribution to the energy as the modal interaction 
	\begin{equation}
		(\vec k, n) + (\vec b, m) \rightarrow (\vec a, l)
	\end{equation} 
	because $\varepsilon_{lmn}^{\vec a \vec b \vec k} \neq  \varepsilon_{nml}^{\vec k \vec b \vec a}$.
	Therefore, modal truncations do not conserve a truncated form of the energy. 
	
Physically, the inability of modally truncated models to conserve a truncated energy means the following. Suppose we initialize a quasigeostrophic state so that there is energy only in the lowest $N$ vertical modes. For a quasigeostrophic system with isentropic boundaries, the energy will remain in the lowest $N$ modes for all time; we can view this trapping of the energy in the lowest modes as a consequence of the vertical inverse cascade \citep{charney_geostrophic_1971}. As a result, if we truncate the model at some $n=N$, the truncated model conserves a truncated energy. In contrast, for a quasigeostrophic system with non-isentropic boundaries, the energy does not necessarily remain in the lowest $N$ vertical modes and energy exchanges with the higher modes are possible. Because of these energy exchanges with the higher modes, any truncation at $n=N$ does not conserve energy.

We can further examine the nature of these energy exchanges by considering the form of the vertical coupling coefficient, $\varepsilon_{l m n}^{\vec a \vec b \vec k}$ in equation \eqref{eq:vertical_coupling_boundary}. The energy exchanges between the lowest $N$ modes and the higher modes is a consequence of the term multiplying $\gamma_1(\vec k)$, which couples the vertical modes at the lower boundary. As $n$ becomes large, then $\lambda_{\vec k n} \rightarrow \infty$ and so we obtain an approximate bottom boundary condition of $\varphi_{\vec k n}\approx 0$ in the Rhines eigenvalue problem \eqref{Rhines-eigen_bottom}. Therefore, for high vertical modes (those with large $n$), the term multiplying  $\gamma_1(\vec k)$ in the vertical coupling coefficient \eqref{eq:vertical_coupling_boundary} is negligible. It is for the lowest modes that the energy exchange is greatest; these low modes describe the interactions of the gravest potential vorticity induced modes with the boundary buoyancy induced mode. Thus, the possibility of these energy exchanges indicates that there are non-trivial energetic interactions between boundary buoyancy induced dynamics and interior potential vorticity induced dynamics.

\section{Summary}

This dissertation consisted of two parts. The first part, consisting of chapters \ref{Ch-SQG} and \ref{Ch-jets}, explored how variable stratification modifies the geostrophic turbulence of boundary buoyancy anomalies. The second part, consisting of chapters \ref{Ch-modes} and \ref{Ch-QG}, explored the properties of normal modes in the presence of boundary confined restoring forces (e.g., boundary buoyancy anomalies in quasigeostrophy), with the ultimate aim of creating a modal truncation of the quasigeostrophic equations that account for boundary buoyancy anomalies. However, we showed in section \ref{SS-nogo_thm} that such a generalization is not possible as quasigeostrophic modal truncations in the presence of non-isentropic boundaries do not conserve energy.

Chapter \ref{Ch-SQG} showed that the vertical stratification controlled the interaction range of surface buoyancy anomalies. Over vertically decreasing stratification, $N'(z) \leq 0$, surface buoyancy anomalies generate long range velocity fields whereas over vertically increasing stratification, $N'(z) \geq 0$, they generate short range velocity fields. Consequently, the vertical stratification controls the shape of the surface kinetic energy spectrum in surface quasigeostrophic turbulence. 

We therefore suggested that variable stratification may be what accounts for the discrepancy between the expected surface kinetic energy spectrum from surface quasigeostrophic theory and the observed surface kinetic energy spectrum.
Observations and numerical modelling suggest that the surface geostrophic velocity over wintertime extratropical currents are largely induced by surface buoyancy anomalies \citep{isernfontanet_three-dimensional_2008,lapeyre_what_2009,gonzalez-haro_global_2014,qiu_reconstructability_2016,qiu_reconstructing_2020,miracca-lage_can_2022}. For horizontal scales between 1-100 km, we expect a dual cascade: the energetically dominant pycnocline baroclinic instability forces the surface flow at larger scales whereas the faster mixed-layer baroclinic instability forces the surface flow at smaller scales. Uniformly stratified surface quasigeostrophic theory then predicts a surface kinetic energy spectrum between $k^{-1}$ and $k^{-5/3}$ \citep{blumen_uniform_1978}, which is too shallow to be consistent with the observed $k^{-2}$ spectrum \citep{mensa_seasonality_2013,sasaki_impact_2014,callies_seasonality_2015}. However, in chapter \ref{Ch-SQG}, we found that over mixed-layer like stratification, we expect a surface kinetic energy spectrum between $k^{-4/3}$ and $k^{-7/3}$, which is consistent with the $k^{-2}$ spectrum.

In chapter \ref{Ch-jets}, we investigated surface quasigeostrophic dynamics in the presence of a latitudinal buoyancy gradient, which allows for the propagation of westward propagating, surface-trapped Rossby waves. We found a close connection between the spatial locality of the flow and the dispersion of Rossby waves. Over decreasing stratification [$N'(z) \leq 0$] , the flow is spatially non-local, with long range vortices, and highly dispersive Rossby waves. In contrast, over increasing stratification [$N'(z) \geq 0$], the flow is spatially local, with short range vortices, and weakly dispersive Rossby waves. The interaction of Rossby waves with turbulence results in latitudinally inhomogeneous mixing that, in the presence of a sufficiently strong latitudinal buoyancy gradient, results in a staircase structure consisting of homogenized zones of surface buoyancy punctuated by sharp surface buoyancy gradients. Eastward jets are centred at the sharp buoyancy gradients with weaker westward flows in between. The dynamics of these jets depends on the vertical stratification. Over decreasing stratification we obtain straight jets perturbed by highly dispersive, eastward propagating, along jet waves, similar to $\beta$-plane barotropic turbulence. In contrast, over increasing stratification, we obtain meandering jets whose shape evolves in time due to the westward propagation of weakly dispersive along jet waves. In addition, the energy spectrum in the staircase limit depends on the vertical stratification, with a steeper energy spectrum over decreasing stratification [$N'(z)\leq 0$] than over increasing stratification [$N'(z)\geq 0$].

In the next two chapters, chapters \ref{Ch-modes} and \ref{Ch-QG},  we investigated normal modes in the presence of both volume-permeating and boundary-confined restoring forces with the ultimate aim of creating a modal truncation of the quasigeostrophic equations that takes non-isentropic boundaries into account. This aim was motivated by the four mode model of \cite{tulloch_quasigeostrophic_2009}; their model consists of two interior modes (a barotropic and a first baroclinic mode) coupled to a surface quasigeostrophic mode at the upper boundary and a surface quasigeostrophic mode at the lower boundary. However, because these modes are not orthogonal, the model does not conserve energy. To obtain an orthogonal set of modes, we consider linear wave problems with dynamically active boundaries.  In chapter \ref{Ch-modes}, we investigated geophysical waves in the presence of both volume-permeating and boundary-confined restoring forces, with a special emphasis on the mathematical properties of the resulting vertical modes. Then in chapter \ref{Ch-QG}, we applied this formalism to obtain all possible discrete normal modes in quasigeostrophy that diagonalize the energy and the potential enstrophy. However, although we obtained normal modes that account for boundary buoyancy anomalies and form an orthogonal set, the vertical coupling between the modes became dependent on the wavevector. As a consequence, energy is not conserved after any finite modal truncation, and so there are no energy conserving modal truncations of the quasigeostrophic equations that diagonalize the energy and surface potential enstrophy in the presence of non-isentropic boundaries.

\section{Future work} 

\subsection{Geostrophic turbulence with non-isentropic boundaries}
Geostrophic turbulence with isentropic boundaries is characterized by two properties. The first is its energy cycle in which baroclinic instability  extracts energy from a background vertical shear and cascades it downscale towards the deformation radius where it is then transferred into the barotropic mode; the barotropic mode then cascades the energy back to larger horizontal scales where it is then dissipated through bottom drag. The second property is that the barotropic mode dominates the large-scale dynamics, with the time-evolution of the baroclinic mode reduced to the advection of a nearly passive scalar by the barotropic mode. This property is a combined consequence of the long interaction range of the barotropic mode along with the short interaction range of the baroclinic modes. 

The main open question here is how these properties are modified in the presence of boundary buoyancy anomalies, which generate their own velocity fields. First, over sufficiently steep topography, both the upper surface quasigeostrophic flow and the interior quasigeostrophic flow will nearly vanish at the bottom boundary \cite[chapter \ref{Ch-SQG},][]{lacasce_prevalence_2017}. The bottom boundary has its own surface quasigeostrophic flow; for weak bottom friction, the inverse cascade in the bottom surface quasigeostrophic mode can lead to a nearly depth-independent bottom buoyancy induced flow at sufficiently large horizontal scales, and so we recover a barotropic-like mode.
  However, if the inverse cascade in the bottom surface quasigeostrophic mode is arrested by bottom friction before the bottom mode extends significantly upwards into the water column, then we expect the bottom surface quasigeostrophic mode to be nearly decoupled from the flow at the surface and in the interior. 
  The surface and interior flow then are insulated from the direct effects of bottom friction; instead, energy leaks from the surface and interior through interactions with the bottom mode. 
  In this case, we expect the effective damping rate on the surface and interior flows to be determined by nonlinear interactions with the bottom mode instead of by bottom friction.
   Moreover, the absence of a depth-independent flow in this regime then implies that the mode with the longest interaction range at the surface is generally the upper surface quasigeostrophic mode, and it may dominate the large scale dynamics in a similar manner to the barotropic mode. These considerations indicate that bottom topography may alter both the details of the energy cycle in quasigeostrophic turbulence as well as the large-scale dynamics.

 \subsection{The geostrophic turbulence of surface modes}
 If we neglect upper surface buoyancy anomalies, then we can derive a two mode model for quasigeostrophic turbulence in the steep topography limit. 
 As shown in chapter \ref{Ch-intro}, we can think of the two-layer model \eqref{eq:time-barotropic}\textendash\eqref{eq:time-baroclinic} as a two mode truncation of the potential vorticity time-evolution equation \eqref{eq:time_q} over isentropic boundaries. The model consists of two time-evolution equations: one for the barotropic mode
\begin{equation}
	q_0 = \lap \psi_0,
\end{equation} 
and another for the first baroclinic mode
\begin{equation}
	q_1 = (\lap - \lambda_1) \psi_1,
\end{equation}
where $L_1 = 1/\sqrt{\lambda_1}$ is the first mode deformation radius.
However, \cite{lacasce_prevalence_2017} argues that steep bottom topography prevents a barotropic mode from forming at horizontal scales relevant for quasigeostrophic dynamics. In the strong slope limit, we obtain the surface modes instead, which vanish at the bottom. As a consequence, the potential vorticity in the surface modes is
\begin{equation}
	q_n = (\lap -\lambda_n) \psi_n,
\end{equation}
where $\lambda_n>0$ for all $n$ (because there is no barotropic mode). Therefore, the gravest surface mode has a finite interaction range determined by the deformation radius $L_0 = 1/\sqrt{\lambda_0} < \infty$. Truncating the potential vorticity time-evolution equation \eqref{eq:time_q} at $n=1$ gives
	\begin{align}
		\label{eq:time-barotropic_surface}
		\pd{q_0}{t} + \beta \, \pd{\psi_0}{x} + \varepsilon_{000}\, \J{\psi_0}{q_0} +\varepsilon_{010}\, \J{\psi_0}{q_1} +  \varepsilon_{100}\, \J{\psi_1}{q_0}+ \varepsilon_{110}\, \J{\psi_1}{q_1} =0, \\
		\label{eq:time-baroclinic_surface}
		\pd{q_1}{t} + \beta \, \pd{\psi_1}{x} + \varepsilon_{001}\, \J{\psi_0}{q_0} +\varepsilon_{011}\, \J{\psi_0}{q_1} +  \varepsilon_{101}\, \J{\psi_1}{q_0}+ \varepsilon_{111}\, \J{\psi_1}{q_1} =0.
	\end{align}
Previously, the barotropic mode imposed the selection rule \eqref{eq:coulping_barotropic} for modal interactions ($\varepsilon_{mn0}=\delta_{mn}$),  which prevented off-diagonal interactions with the gravest mode (i.e., $\varepsilon_{01n} = \varepsilon_{10n}=0$ for $n=0,1$). With the surface modes, off-diagonal interactions are now possible. Although there are dynamics at the bottom boundary in the steep slope limit, this model filters out these dynamics, and so energy loss to the bottom mode must be parametrized.

\subsection{Energy transfers from weakly nonlinear wave theory}

One way to examine the energy transfers between the potential vorticity induced dynamics and the boundary buoyancy induced dynamics is through weakly nonlinear wave interaction theory \citep{nazarenko_wave_2011}. In this theory, the strength of the interactions between different modes is determined by the vertical coupling coefficient \eqref{eq:vertical_coupling_boundary}. The vertical coupling between different modes was considered in \cite{fu_nonlinear_1980} in the case of isentropic boundaries and  surface-intensified stratification; weakly nonlinear theory predicts the concentration of energy in the first mode, and this prediction was later verified by \cite{smith_scales_2001,smith_scales_2002} using nonlinear simulations. The presence of a bottom slope complicates the problem, with the vertical coupling of wave triads depending on both their propagation directions as well as their wavelength.  However, such an approach may provide an estimate of the energy loss of the interior modes to the bottom-trapped dynamics.

\subsection{Jets and non-isentropic boundaries}

There is also the question of jet formation in the presence of bottom topographic gradients, upper surface buoyancy gradients, and the planetary $\beta$ effect. With isentropic boundaries, the dynamics depend on the value of the bottom friction. For weak bottom friction,  the inverse cascade reaches the barotropic mode and so jet dynamics are similar to $\beta$-plane barotropic turbulence; otherwise, if the inverse cascade is arrested by bottom friction before significant energy reaches the barotropic mode, then jet dynamics are similar to  an equivalent barotropic model with a finite deformation radius. For bottom topographic slopes, the numerical simulations reported in \cite{lacasce_geostrophic_2000} indicate that a bottom slope may result in bottom-trapped along slope structures.  In contrast, the characteristics of surface jets in the presence of both upper surface buoyancy gradients and the planetary $\beta$ effect will depend on the properties of Rossby waves in vertical shear; we anticipate that their propagation direction, their vertical structure, as well as their dispersion will control the dynamics of the resulting jets.


\subsection{Coherent structures in the ocean}

Another question concerns the nature of quasigeostrophic turbulence in the ocean. 
Vertical decompositions of oceanic motion into vertical modes can be misleading. For example, both \cite{wunsch_vertical_1997} and \cite{de_la_lama_vertical_2016} found that the leading empirical orthogonal structure of ocean currents typically is a monotonic function that decays away from the ocean surface and nearly vanishes at the bottom. \cite{wunsch_vertical_1997} interpreted this vertical structure as the sum of a barotropic and baroclinic mode whereas \cite{de_la_lama_vertical_2016} and \cite{lacasce_prevalence_2017} interpreted this vertical structure a surface mode over steep topography. These two interpretations imply distinct dynamics. Wunsch's interpretation implies the existence of coherent barotropic motion whereas the surface mode interpretation does not.

To distinguish between these two interpretations, we can use the spectral proper orthogonal decomposition method to identify coherent structures in the turbulence \citep{taira_modal_2017,towne_spectral_2018}. This method identifies an empirical orthogonal basis for the flow that, for a given number of modes, captures the largest fraction of the flow variance. These modes depend on both space \emph{and} time and are orthogonal with respect to a \emph{spacetime} dependent inner product; consequently, they optimally express the spatiotemporal coherence in the flow \citep{schmidt_guide_2020}. 
One can apply this method to a high resolution numerical ocean model to form a census of three-dimensional oceanic coherent structures. With this approach, we can empirically determine the nature of oceanic geostrophic turbulence.

\bibliographystyle{abbrvnat}
\bibliography{references}

\end{document}